\pgfplotsset{compat=1.11}
\newcommand\bR{\mathbb{R}}
\newcommand\bRp{\bR_{+}}
\newcommand\bC{\mathbb{C}}
\newcommand\sL{\mathcal{L}}
\newcommand\sM{\mathcal{M}}
\newcommand\sH{\mathcal{H}}
\newcommand\sA{\mathcal{A}}
\newcommand\sC{\mathcal{C}}
\newcommand\sE{\mathcal{E}}
\newcommand\sT{\mathcal{T}}
\newcommand\mM{M}
\newcommand\cl[1]{\overline{#1}}
\newcommand{\tconv}{\mathrm{conv}}
\newcommand{\tspan}{\mathrm{span}}
\newcommand{\tker}{\mathrm{ker}}
 \newcommand{\supp}{\mathtt{supp}}
  \newcommand{\eig}{\mathtt{eig}}
    \newcommand{\rank}{\mathtt{rank}}
 \newcommand{\dom}{\mathtt{dom}}
  \newcommand{\diag}{\mathtt{diag}}
 \newcommand{\vol}{\mathrm{vol}}
 \newcommand{\ls}{\langle}
 \newcommand{\rs}{\rangle}
   \newcommand{\re}{\mathcal{R}e}
 \newcommand{\llbracket}{[}
 \newcommand{\rrbracket}{]}
\newcommand{\defin}{:=}
\newtheorem{fact}{Fact}[section]
\newcommand{\ie}{{\em i.e.,~}}
\newcommand{\eg}{{\em e.g.,~}}
\newcommand{\specialcell}[2][c]{%
  \begin{tabular}[#1]{@{}l@{}}#2\end{tabular}}
\DeclareMathAlphabet{\mathcal}{OMS}{cmsy}{m}{n}
\begin{document}

\title{A theory of optimal convex regularization for low-dimensional recovery}
 \author{{\sc Yann  Traonmilin$^{1,*}$, Rémi Gribonval$^2$ and Samuel Vaiter$^3$}\\
$^1$Univ. Bordeaux, Bordeaux INP, CNRS,  IMB, UMR 5251,F-33400 Talence, France.\\
        $^2$Univ Lyon, ENS de Lyon, UCBL, CNRS, Inria, LIP, F-69342 Lyon, France.\\
$^3$CNRS, Université Côte d'Azur, LJAD, Nice, France.\\
$^*$ Contact author: \email{yann.traonmilin@math.u-bordeaux.fr}
}

\shortauthorlist{Y. Traonmilin, R. Gribonval and S. Vaiter}

\maketitle

\begin{abstract}
{We consider the problem of recovering elements of a low-dimensional model from under-determined linear measurements. To perform recovery, we consider the minimization of a convex regularizer subject to a data fit constraint. Given a model, we ask ourselves what is the ``best'' convex regularizer to perform its recovery. To answer this question, we define an optimal regularizer as a function that maximizes a compliance measure with respect to the model. We introduce and study several notions of compliance. We give analytical expressions for compliance measures based on the best-known recovery guarantees with the restricted isometry property. These expressions permit to show the optimality of the $\ell^1$-norm for sparse recovery and of the nuclear norm for low-rank matrix recovery for these compliance measures. We also investigate the construction of an optimal convex regularizer using the examples of sparsity in levels and of  sparse plus low-rank models.}
{inverse problems, convex regularization, low dimensional modeling,  sparse recovery, low rank matrix recovery}
\end{abstract}

\section{Introduction}
In a finite-dimensional  Hilbert space $\sH$ (with associated inner product $\ls \cdot, \cdot \rs$, and norm $\|\cdot\|_\sH$), we consider the observation model:
\begin{equation}
 y = Mx_0
\end{equation}
where $y$ is an $m$-dimensional vector of measurements, $M$ is an under-determined linear operator (from $\sH = \bC^n$, or $\bR^n$, to $\bC^m$), and $x_0 \in \sH$ is the unknown vector we want to recover. The problem of recovering $x_0$ from $y$ is typically ill-posed. It is thus necessary to use prior information on $x_0$ to recover it with a guarantee of success.

In this work, we suppose that $x_0$ belongs to a low-dimensional cone $\Sigma$ (a positively homogeneous set, \ie for every $x \in \Sigma$ and $\lambda \geq 0$, $\lambda x \in \Sigma$) that models known properties of the unknown.  Examples of such models include sparse as well as low-rank models and many of their generalizations. Note that in these examples the models belong to the slightly less general class of models that are (finite or infinite) unions of subspaces (homogeneous sets).

To recover $x_0$, a classical method is to solve the constrained minimization problem
\begin{equation} \label{eq:minimization1}
 x^* \in \arg \min_{Mx=y} R(x)
\end{equation}
where $R$ is a function -- the regularizer -- that aims to enforce some structure on the solution $x^*$.

Many works~\cite{Donoho_2006,Candes_2006b,Recht_2010,Candes_2010}  give practical regularizers ensuring that $x^* = x_0$ for several low-dimensional models (in particular sparse and low-rank models, see \cite{Foucart_2013} for a most complete review of these results).
A practical regularizer  is a function that permits the effective calculation of $x^*$. Without computational constraint, the best possible regularizer  would be $R = \iota_\Sigma$: the characteristic function of $\Sigma$ defined by $\iota_\Sigma(x) = 0$ if $x \in \Sigma$, $\iota_\Sigma(x) = +\infty$ otherwise (see \Cref{sec:comp_general} for a review of this fact). Unfortunately, this function is generally not convex (unless $\Sigma$ itself is a convex set) and can lead to an intractable optimization problem in general, even though recent works show that using $R = \iota_\Sigma$ and a dedicated minimization technique is a possible route for certain particular low-dimensional models that can be smoothly embedded in $\bR^n$ \cite{Chi_2019,Traonmilin_2020inverse,Traonmilin_2020b}.

In this work, we focus on continuous convex regularizers that guarantee the existence of a minimizer $x^*$ and the existence of practical optimization algorithms to perform minimization~\eqref{eq:minimization1} such as the primal-dual method~\cite{Chambolle_2011} (provided their proximity operators can be calculated). Note that convexity in itself is not sufficient to guarantee the practical feasibility of minimization $\eqref{eq:minimization1}$ ($R(x)$ could be $NP$-hard to calculate, \eg the nuclear norm for tensors \cite{Friedland_2018}, and/or the proximal operator of $R$ could be $NP$-hard to compute).

Under conditions on the measurement operator $M$ that typically involve the number of measurements and its structure (\eg random for compressed sensing), the fact that $x_0 \in \Sigma$ permits to give recovery guarantees when the convex regularizer  $R$ is well-chosen. For example, when $\Sigma= \Sigma_k$ is the set of $k$-sparse vectors in $\bR^n$ and $R(\cdot) = \|\cdot\|_1$ ($\ell^1$-norm), or when $\Sigma= \Sigma_r$ is the set of matrices of rank lower than $r$ in $\bR^{p\times p}$ and $R(\cdot) = \|\cdot\|_*$ (nuclear norm), $x_0$ can be recovered as long as the number of measurements is of the order of the dimension of the model (up to some log factors)~: $m \geq O(k \log(n/k))$  for sparse recovery or $m \geq O(rp)$ for low rank recovery.

The conventional approach to provide these results is to exhibit a regularizer $R$ for a given model set $\Sigma$ and to give the best possible recovery guarantees for the pair $(R,\Sigma)$. Recent works aim at giving guidelines to obtain guarantees as tight as possible for general sparse models  and convex regularizers \cite{Chandrasekaran_2012,Amelunxen_2014,vershynin2015estimation,Traonmilin_2016,Amelunxen_2020,Marz_2020}.  With such frameworks, it becomes possible to compare the performance of different regularizers. This leads naturally to the following question which we address in this work: \textbf{what is the ``best'' convex regularizer  to recover a given low-dimensional model $\Sigma$?}

To tackle this problem, it is necessary to define the notion of ``best'' based on recovery guarantees. We propose different possibilities and follow one route that permits us to give optimality results in the sparse and low-rank cases and shows the difficulties that arise when considering more complex generalized sparsity models. This work can be viewed as a way to give meaning to the expression ``convexification'' of a low-dimensional model, that is often used and rarely defined.

\subsection{Related works}

\paragraph{Low-complexity models induced by convex regularization.}
Many regularizers encountered in signal processing and machine learning are known to induce a specific type of model.
Without aiming for exhaustivity, the use of the $\ell^{1}$ norm~\cite{Chen1998AtomicDecompositionBasis} induces a sparse pattern in the solution, while group regularization with mixed $\ell^{1}-\ell^{2}$ norms restricts this sparse pattern to satisfy a specific block structure~\cite{Yuan2006Modelselectionestimation}.
More advanced model sets, such as low-rank matrices are linked to the use of the nuclear norm~\cite{Fazel2001rankminimizationheuristic}.
For a wide class of regularizers, including decomposable norms~\cite{Candes2013Simpleboundsrecovering}, decomposable $M$-estimator~\cite{Negahban2012UnifiedFrameworkHighDimensional}, atomic norms~\cite{Chandrasekaran_2012} and partly smooth functions~\cite{Vaiter2015Modelselectionlow,Vaiter2017ModelConsistencyPartly}, the connection between nonsmooth convexity and model space can be made explicit. Note that all these works take the following stance: given a convex regularizer $R$, what is the model set $\Sigma$ induced by minimizing $R(x)$? Conversely, in this paper, we study the question of finding the best regularizer for a given low-dimensional model $\Sigma$.

\paragraph{Convexification of combinatorial functions.}
Given a real function $f$, it is known that its biconjugate $f^{**}$ is a convex and closed function, whatever the initial properties of $f$.
For instance, if $f$ is the constant function equal to 1 except in 0 -- that is the counting function $\ell^{0}$ in dimension 1 -- restricted to $[-1,1]$, \ie
\begin{equation*}
  f(x) =
  \begin{cases}
    1 & \text{if } x \in [-1,1] \setminus \{0\}, \\
    0 & \text{if } x = 0, \\
    + \infty & \text{otherwise,}
  \end{cases}
\end{equation*}
then its biconjugate is the absolute value $| \cdot |$ restricted to $[-1,1]$.
Unfortunately, this construction is harder to generalize on an unbounded domain or in higher dimension.
For instance, the biconjugate of the $\ell^{0}$ counting function not restricted to a bounded set is the constant 0.
Of interest, we can mention convex closures of submodular functions (functions of $\{0,1\}^p$)  that can be calculated explicitly using the Lovász extension~\cite{bach2013learning} and convex closure of almost convex functions~\cite{jach2008convexenvelope}.

\paragraph{Convexification of objective function}

Many works intent to find a convex proxy to a non-convex objective function. In \cite{bertsekas1979convexification}, adding a Lagrangian term to the regularization of a constrained non-convex minimization  permits to build an equivalent minimization problem that is convex locally. Another possibility is to try to perform a regularization by infimal regularization \cite{bougeard1991towards} for lower semicontinuous objective functionals. In~\cite{Pock_2010},  in a function space setting, Pock \emph{et al.} propose a high dimensional lifting of the Lagrangian formulation of~\eqref{eq:minimization1} where the data-fit functional is non-convex. In the context of non-convex polynomial optimization, Lasserre's hierarchies \cite{lasserre2018moment} are used to recast the original problem in a hierarchy of convex semi-definite positive problems which provide global convergence results. The drawback of this method is the computational cost that makes it impractical for high-dimensional problems. Finally, convex closure of submodular functions also permits to cast sparsity inducing objective functions (where the regularizer is a submodular function of the support) into convex problems~\cite{bach2013learning}.
Note that if one aims to find a non-convex, but continuous, regularization, several works of interest may be cited, such as the use of $\ell^{p}$ minimization~\cite{foucart2009sparsest}, SCAD~\cite{fan2001variable}, or CEL0~\cite{Soubies_2015}.
Nevertheless, in this paper, we focus on convex functions.

\subsection{Contributions}\label{sec:contributions}

In this paper, we define notions of \emph{compliance measures} between a low-dimensional model and a regularizer  in finite dimension. The compliance of a function $R$ for  a model $\Sigma$ is a function
\begin{equation}
R \mapsto A_\Sigma(R)
\end{equation}
that  quantifies the recovery capabilities of $\Sigma$ with $R$ and minimization~\eqref{eq:minimization1}.

An optimal  regularizer for a model $\Sigma$ is defined as a regularizer that maximizes the compliance measure.  In this article, we  focus on the maximization of compliance measures on the set $\sC$ of coercive continuous convex regularizers over $\sH$. Note that this idea was first mentioned in the preliminary  work \cite{Traonmilin_2018} where optimal regularizers for sparse recovery were considered among weigthed $\ell^1$-norms.

\begin{itemize}
 \item We introduce compliance measures in \textbf{\Cref{sec:comp_general}} using tight recovery guarantees: a good regularizer is a regularizer that permits the recovery of $\Sigma$ as often as possible. We discuss the elementary properties of these measures and show that optimal coercive continuous convex regularizers can be found in the smaller class of atomic norms with atoms included in the model set. While such compliance measures can be optimized in basic toy examples, they require to be approximated in order to deal with sparse and low-rank models.

\item We  propose in \textbf{\Cref{sec:RIP_compliance}} compliance measures exploiting best known uniform recovery guarantees based on the restricted isometry property (RIP). We give explicit formulations of such recovery guarantees and show that, indeed, the $\ell^1$-norm and the nuclear norm are optimal for sparse and low-rank recovery (respectively) among coercive continuous convex regularizers.

\item We  study the case of two generalized sparsity models in \textbf{\Cref{sec:sparsity_in_levels}}: sparsity in levels and sparse plus low-rank  models. We show how an optimal regularizer can be explicitly constructed in a small family of convex regularizers ($\ell^1$-norm weighted by levels and mixed weighted $\ell^1$-nuclear norm respectively). While giving optimal weighting schemes for mixed regularizations, these examples also show the difficulty  of calculating optimal regularizers for new low-dimensional models and opens many questions for the study of optimal regularizers.
\end{itemize}
 We give an overview of the different compliance measures and the nature of results considered in this paper in~\Cref{tab:comp_measures}.

\begin{table}[h!]
\centering
\footnotesize

\begin{tabular}{|p{2.5cm}|p{3.5cm}|p{0.6cm}|p{4.5cm}|p{6cm}}
\hline
\specialcell{Compliance\\ (type of recovery)} & Definition & Section & Results \\
\hline
\specialcell{ \textbf{Based on descent cone}\\  (uniform)} & $f(\sT_R(\Sigma))$ & Sec.~\ref{sec:comp_general}& \specialcell{Optimality of atomic norms (Th~\ref{th:simpl_max_adeq}); \\ Equivariance (Lem. \ref{lem:transf_sigma_gen}); \\Invariance  (Cor.~\ref{cor:InvariantCompliance})}  \\
\hline
\specialcell{ $A^{U}_\Sigma(R)$: \textbf{Volume} \\(uniform)} &  $1 - \frac{\vol\left(\sT_R(\Sigma) \cap S(1)\right)}{\vol(S(1))}$ & Sec.~\ref{sec:comp_general}& \specialcell{ Monotonicity (Lem.~\ref{lem:compmono});\\ Invariance  (Cor.~\ref{cor:compinv})} \\
 \hline
\specialcell{ $\delta_\Sigma^{\mathtt{suff}}(R)$:  \textbf{Sufficient RIP} \\ (uniform)}  &  $\frac{1}{ \sqrt{ \underset{z\in \sT_R(\Sigma) \setminus \{0\} }{\sup}  \frac{\|z -P_\Sigma(z)\|_\Sigma^2}{\|P_\Sigma(z)\|_2^2} + 1 }}$ & Sec.~\ref{sec:RIP_compliance}& \specialcell{Characterization (Lem.~\ref{lem:charac_suff_RIP}, Cor.~\ref{cor:suffcond}); \\ 
 Optimal sparse reg. and optimal \\ low-rank reg. (Th.~\ref{th:RIP_suff_atom}, Th.~\ref{th:RIP_suff_atom_nuclear}); \\
  Sharp bound for near-optimal reg. for\\ sparsity in levels (Th.~\ref{th:RIP_suff_in_levels})} \\
 \hline
 \specialcell{ $\delta^{\mathtt{nec}}_\Sigma(R)$: \textbf{Necessary RIP} \\(uniform)}  & $\inf_{z \in \sT_R(\Sigma) \setminus \{0\}} \delta_{\Sigma}(I-\Pi_z )$ & Sec.~\ref{sec:RIP_compliance} & \specialcell{Characterization (Lem.~\ref{lem:expr_RC_sparse}, Cor.~\ref{cor:RCnecUoS});\\
  Optimal sparse reg. and optimal \\ low-rank reg. (Th.~\ref{th:RIP_nec_atom}, Th.~\ref{th:RIP_nec_atom_rank});  \\ Opt. weights for sparsity in levels (Th. \ref{th:opt_in_levels}); \\Opt. weights for sparse+low-rank  (Th.~\ref{th:sparse_LR})}   \\
 \hline
\specialcell{ $\delta^{\mathtt{sharp}}_\Sigma(R)$: \textbf{Sharp RIP} \\ (uniform)}   &  $ \inf_{M : \ker M \cap \sT_R(\Sigma) \neq \{0\} } \delta_{\Sigma}(M)$ & Sec.~\ref{sec:RIP_compliance} & \specialcell{Characterization (Prop.~\ref{prop:explicit}); \\ Invariance (Lem.~\ref{lem:transf_sigma_RC}); \\ Bound by  $\delta^{\mathtt{nec}}$ and $\delta_\Sigma^{\mathtt{suff}}(R)$ (Eq.~\ref{eq:IneqRIPConstants})}\\
 \hline
\specialcell{ $A_\Sigma^{NU}(R)$: Volume \\(non-uniform)}& $1 - \sup_{x \in \Sigma} \frac{\vol\left(\sT_R(x) \cap S(1)\right)}{\vol(S(1))} .$  & Sec.~\ref{sec:comp_general} & - \\
 \hline
 Kinematic formula (non-uniform)&  \specialcell{$\sup_{x \in\Sigma}\mathbb{P}\big(\tker M \cap \sT_R(x) \neq \{0\}\big)$,\\ $M$ Gaussian} & Sec.~\ref{sec:comp_general} &  -\\
 \hline
 Statistical dimension (non-uniform)& $\sup_{x\in\Sigma}\mathtt{statdim} (\sT_R(x)) $ & Sec.~\ref{sec:comp_general} & - \\
 \hline
\end{tabular}
\normalsize
\caption{A summary of different compliances measures and results. Compliances for which some results are given in this article are in bold (we focused on uniform recovery guarantees). Explicit maximization is performed on compliances based on necessary and sufficient conditions with the restricted isometry property (RIP) yielding  bounds on sharp RIP-based compliances.}\label{tab:comp_measures}
\end{table}

\subsection{Notations} \label{sec:notations}

In $\sH$, we denote $S(1) := \{ z \in \sH :  \|z\|_\sH =1 \}$ the unit sphere with respect to $\|\cdot\|_\sH$. Given a linear operator $M : \sH \to \bC^m$, we denote $M^H$ its Hermitian adjoint. 

For $\Sigma \subseteq \sH$ an arbitrary set, we denote $\iota_{\Sigma}$  its characteristic function defined by $\iota_{\Sigma}(x) = 0$ if $x \in \Sigma$, $\iota_{\Sigma}(x) = +\infty$ otherwise. We denote $\sE(\Sigma) \defin \bRp \cdot \cl{\tconv}(\Sigma)$, where $\cl{\tconv}(\Sigma)$ is the closure of the convex hull of $\Sigma$.
We define $\bar\bR \defin \bR \cup \{+\infty\}$.
Given a function $f: \sH \to \bar \bR$, we denote by $\dom(f)$ its domain, \ie the set $\dom(f) := \{ x \in \sH: f(x) < + \infty \}$.

\section{Optimal regularizer for a low dimensional model}\label{sec:comp_general}
In this section, starting from the characterization of exact recovery of a model $\Sigma$, we introduce the notion of compliance measure and associated optimal convex regularizer. 

\subsection{Characterization of exact recovery using descent cones} 

Before defining an optimal regularizer, we must characterize when $\Sigma$ can be recovered by solving~\eqref{eq:minimization1}. The fact that a given $x_0 \in \Sigma$ is recovered by solving~\eqref{eq:minimization1} with regularizer $R$ (\ie that the solution $x^{*}$ of~\eqref{eq:minimization1} is unique and satisfies $x^{*} = x_{0}$ when $y := Mx_{0}$) is equivalent to the fact that $R(x_0+z) > R(x_0)$ for every
 $z \in \ker (M) \setminus \{ 0 \}$ (see \eg \cite{Chandrasekaran_2012}). To summarize this, we use the following definition of symmetrized descent cones.
 \begin{definition}[(Symmetrized) descent cones.]\label{def:DescentCone}
 Consider a regularizer $R: \sH \to \bar\bR$.
 For any $x \in \dom(R)$, the \textbf{descent cone} of $R$ at $x$ is
 \begin{equation}
 \sT_{R}(x) \defin \left\{ \gamma z : \gamma \in \bR, z \in \sH,  R(x+z) \leq R(x) \right\}.
 \end{equation}
 For any set $\Sigma \subset \dom(R)$, we define $\sT_R(\Sigma):=\bigcup_{x\in \Sigma} \sT_R(x)$.
\end{definition}

 Other definitions of descent cones (\eg non-symmetric like in \cite{Chandrasekaran_2012}) could be used. The symmetrization facilitates technical derivations in the following and permits to characterize recovery as well. For ease of reading, in the following, symmetrized descent cones will be referred to as descent cones.
 
\noindent Recovery guarantees with a regularizer  $R$ for a linear operator $M$ generally come in two flavors (recall that $x^*$ is the result of minimization~\eqref{eq:minimization1}): 
\begin{itemize}
 \item {\bf Non-uniform recovery}: If $x_0 \in \Sigma$, then $x^* =x_0$  is equivalent to $\sT_R(x_0) \cap \ker M = \{0\}$.
 \item {\bf Uniform recovery}:   ``For all $x_0 \in \Sigma$, $x^* =x_0$'' is equivalent to 
 \begin{equation}\label{eq:charac_exact_rec}
\sT_R(\Sigma) \cap \ker M =\{0\}.
 \end{equation}
\end{itemize}

In the literature, recovery guarantees are obtained when the measurement operator $M$ fulfills sufficient conditions that imply these characterizations. Distinguishing these two types of recovery guarantees especially makes sense in the context of compressed sensing when $M$ is chosen at random. Typical results are then of the form:
\begin{itemize}
 \item {\bf Non-uniform recovery}: Given $x_0 \in \Sigma$, with high probability on the draw of $M$,  $x^* =x_0$.
 \item {\bf Uniform recovery}: With high probability on the draw of $M$,  $x^* =x_0$ for all $x_0 \in \Sigma$.
\end{itemize}
The main techniques to obtain recovery guarantees using a condition on the number of measurements differ largely between these two cases (see \Cref{sec:RIP_compliance}). In this work, we mostly focus on uniform recovery guarantees to stay in a fully deterministic setting. For such uniform recovery guarantees, we see that the only interactions that matter between the model set $\Sigma$, the regularizer $R$, and the measurement operator $M$ are summarized by equation~\eqref{eq:charac_exact_rec}.

\subsection{Compliance measures and optimal regularization}

To define a notion of optimal regularizer, we simply propose to say that an optimal regularizer is a function that optimizes a (hopefully well-chosen) criterion. We call such a criterion, a \emph{compliance measure} and specifically aim at defining it such that it should be maximized. The objective is to define a compliance measure that \emph{quantifies} the recovery capabilities of a given regularizer $R$ given a model set~$\Sigma$.

Starting from the characterization of exact recovery,  we can make the following remark. If the descent sets of a regularizer $R_1$ are included in the descent sets of another regularizer $R_2$, then the recovery capability of $R_1$ are greater in the following way:  success of recovery with $R_2$ implies success of recovery with $R_1$. Any ``reasonable'' 
compliance measure quantifying recovery capabilities \emph{needs} to fulfill the following axiom:
\begin{center}
 \emph{A compliance measure must be monotonously decreasing with respect to the inclusion of descent sets.}
\end{center}

We also see that the kernel of $M$ heavily influences the recovery capability of $R$. If we had some knowledge that $M \in \sM$ where $\sM$ is a set of linear operators, we would want to define a compliance measure $A_{\Sigma,\sM}(R)$ that tells us how good is a regularizer  in these situations and to maximize it.  Such maximization might yield a function $R^*$ that depends on $\sM$ (\eg in \cite{Soubies_2015}, when looking for tight continuous relaxation of the $\ell^0$ penalty a dependency on $M$ appears). 
In the following, we propose a more \emph{universal} notion of optimal convex regularizer  that does not depend on a particular class of linear operators $\sM$: we propose compliance measures $A_{\Sigma}(R)$ that depend only on the set $\Sigma$ and on the regularizer  $R$, and consider their maximization on some set of convex functions $\sC$
 (that are  coercive and continuous, see \Cref{sec:def_convex_function_set}):
\begin{equation}\label{eq:DefPrincipleCompliance}
\sup_{R \in \sC} A_{\Sigma}(R). 
\end{equation}
Of course, the existence  of a maximizer of  $A_{\Sigma}(R)$ is in itself a general question of interest: we could ask ourselves what conditions on $A_\Sigma(R)$ and $\sC$ are necessary and sufficient for the existence of a maximizer, which is out of the scope of this article -- we notably expect potential difficulties when normalized atoms defining the model set are not a compact set. In the sparse recovery and low-rank matrix recovery examples studied in this article, the existence of a maximizer of the considered  compliance measures will be verified.

To build a compliance measure that does not depend on $M$, we define the optimal regularizer as the regularizer which guarantees recovery of $\Sigma$ in as many situations as possible, \ie for ``as many linear operators $M$ as possible''. Intuitively, a regularizer $R$ is ``good'' if  the set $\sT_R(\Sigma)$ ``leaves a lot of space'' for $\ker M$ to not intersect it (trivially), see \Cref{fig:reconstruction_cond}.
Among non-convex regularizers, the optimal one is the characteristic function of the model set $\Sigma$.

\begin{figure}[!t]
  \centering
\includegraphics[width=0.5\linewidth]{./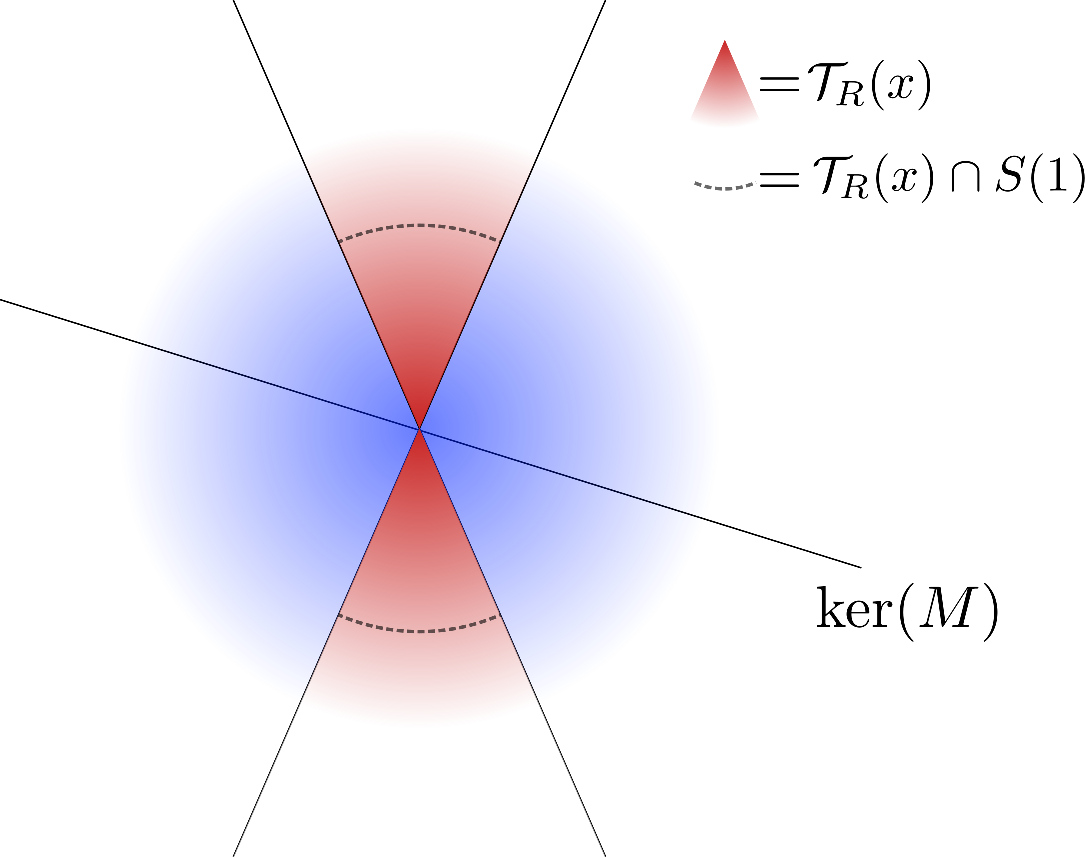}
\caption{A representation of recovery guarantees based on descent cones of a convex function. Recovery of $x \in \Sigma$ fails if $\ker(M)$ intersects $\sT_R(x)$ non trivially. The bigger is the descent cone (red) the more likely recovery will fail. The bigger  the space left by the descent cone (blue), the more likely recovery will succeed} 
  \label{fig:reconstruction_cond}
\end{figure}

\begin{lemma}[Optimality of the characteristic function.]\label{lem:ideal_decoder_descent}
Consider an arbitrary non-empty set $\Sigma \subseteq \sH$ and denote $\iota_\Sigma$ its characteristic function.
The corresponding descent cone is 
\[
\sT_{\iota_{\Sigma}}(\Sigma) = \{\gamma z: \gamma \in \bR, z \in \Sigma-\Sigma\} \supseteq \Sigma-\Sigma
\]
where $\Sigma-\Sigma$ is the so-called secant set of $\Sigma$.
For any regularizer  $R$ such that $\Sigma \subseteq \mathtt{dom}(R)$ we have $\sT_{\iota_{\Sigma}}(\Sigma) \subseteq \sT_{R}(\Sigma)$.
 Finally, if $\Sigma$ is positively homogeneous then $\sT_{\iota_\Sigma}(\Sigma) = \Sigma -\Sigma$.
\end{lemma}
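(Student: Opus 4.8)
The plan is to unpack the definition of the descent cone $\sT_{\iota_\Sigma}(\Sigma)=\bigcup_{x\in\Sigma}\sT_{\iota_\Sigma}(x)$ and compute each $\sT_{\iota_\Sigma}(x)$ directly. Fix $x\in\Sigma$. Since $\iota_\Sigma(x)=0$, the condition $\iota_\Sigma(x+z)\le\iota_\Sigma(x)=0$ forces $\iota_\Sigma(x+z)=0$, i.e. $x+z\in\Sigma$, i.e. $z\in\Sigma-x$. Hence $\sT_{\iota_\Sigma}(x)=\{\gamma z:\gamma\in\bR,\ z\in\Sigma-x\}$, and taking the union over $x\in\Sigma$ gives $\sT_{\iota_\Sigma}(\Sigma)=\{\gamma z:\gamma\in\bR,\ z\in\Sigma-\Sigma\}$, which is the first displayed identity. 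The inclusion $\supseteq\Sigma-\Sigma$ is immediate by taking $\gamma=1$.

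For the second claim, let $R$ be any regularizer with $\Sigma\subseteq\dom(R)$. I would show $\sT_{\iota_\Sigma}(x)\subseteq\sT_R(x)$ for each $x\in\Sigma$, which then gives the inclusion after taking unions. Take $z$ in $\sT_{\iota_\Sigma}(x)$, so $z=\gamma w$ with $w\in\Sigma-x$, i.e. $x+w\in\Sigma$. The point is that $\sT_R(x)$ is, by construction, closed under scalar multiplication (it is defined as a set of the form $\{\gamma' u:\ldots\}$), so it suffices to show $w\in\sT_R(x)$; then $z=\gamma w\in\sT_R(x)$ as well. But $x+w\in\Sigma\subseteq\dom(R)$, so $R(x+w)<+\infty$ — however this alone does not give $R(x+w)\le R(x)$. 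This is the one genuinely delicate point: the inclusion $\sT_{\iota_\Sigma}(\Sigma)\subseteq\sT_R(\Sigma)$ uses the union over \emph{all} $x\in\Sigma$, not a pointwise inclusion. Given $w$ with $x+w\in\Sigma$, compare the two points $x$ and $x+w=:x'$, both in $\Sigma$: one of $R(x')\le R(x)$ or $R(x)\le R(x')$ must hold. In the first case $w=x'-x\in\sT_R(x)\subseteq\sT_R(\Sigma)$; in the second case $-w=x-x'\in\sT_R(x')\subseteq\sT_R(\Sigma)$, and since $\sT_R(\Sigma)$ is symmetric (a union of symmetrized descent cones, each invariant under $z\mapsto-z$ because $\gamma$ ranges over all of $\bR$) we again get $w\in\sT_R(\Sigma)$. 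Scaling by $\gamma$ then yields $z\in\sT_R(\Sigma)$, proving $\sT_{\iota_\Sigma}(\Sigma)\subseteq\sT_R(\Sigma)$.

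Finally, suppose $\Sigma$ is positively homogeneous. We already have $\Sigma-\Sigma\subseteq\sT_{\iota_\Sigma}(\Sigma)=\{\gamma z:\gamma\in\bR,z\in\Sigma-\Sigma\}$, so it remains to show the reverse inclusion, i.e. that $\Sigma-\Sigma$ is stable under multiplication by any $\gamma\in\bR$. For $\gamma\ge0$ and $z=a-b$ with $a,b\in\Sigma$, positive homogeneity gives $\gamma a,\gamma b\in\Sigma$, so $\gamma z=\gamma a-\gamma b\in\Sigma-\Sigma$. For $\gamma<0$, write $\gamma z=(-\gamma)(b-a)$ with $-\gamma>0$ and $b-a\in\Sigma-\Sigma$, reducing to the previous case. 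Hence $\sT_{\iota_\Sigma}(\Sigma)=\Sigma-\Sigma$.

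The main obstacle, as flagged above, is resisting the temptation to claim the pointwise inclusion $\sT_{\iota_\Sigma}(x)\subseteq\sT_R(x)$, which is false in general (e.g. $R$ could be larger at some neighbor $x'\in\Sigma$ of $x$); the correct statement is only the inclusion of the \emph{unions}, and the argument must exploit both that every point of $\Sigma$ lies in $\dom(R)$ and that the symmetrized descent cone $\sT_R(\Sigma)$ is symmetric, so that a descent direction witnessed from either endpoint of a secant suffices.
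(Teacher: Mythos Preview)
Your proof is correct and follows essentially the same approach as the paper: compute $\sT_{\iota_\Sigma}(x)$ directly, then for the inclusion $\sT_{\iota_\Sigma}(\Sigma)\subseteq\sT_R(\Sigma)$ use the same case split on whether $R(x')\le R(x)$ or $R(x)\le R(x')$ at the two endpoints of a secant, exploiting the symmetry of the cone. Your explicit discussion of why the pointwise inclusion $\sT_{\iota_\Sigma}(x)\subseteq\sT_R(x)$ fails is a nice pedagogical addition but does not change the underlying argument.
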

\begin{proof}
 See \Cref{sec:proof_ideal_comp}
\end{proof}

This Lemma shows that $\iota_\Sigma$ is at least as successful as any regularizer $R$ for the exact recovery of $\Sigma$  (without any consideration of compliance measure). Moreover, $\sT_{\iota_{\Sigma}}(\Sigma)$ is the smallest possible descent cone with respect to inclusion.
Hence $\iota_\Sigma$ can be considered as the ideal regularizer \cite{Bourrier_2014} and indeed the optimal one with respect to any compliance measure defined as $A_{\Sigma}(R) = f(\sT_{R}(\Sigma))$ where $f$ is some function on subsets of $\sH$ that is monotonic with respect to set inclusion. This is why the search for optimal regularizers only makes sense under some constraint on $R$. 

\subsection{A first compliance measure}
As a first concrete example, we define here a theoretical compliance measure that reflects the idea that smaller descent cones are better. However, this compliance measure does not lead to analytical expressions for the general study of sparse recovery. Our core results in the next sections rely on compliance measures based on best known uniform recovery guarantees using the restricted isometry property (RIP).

For convex functions, first, observe that, as only the \emph{directions} of the descent cones and the kernel play a role in recovery guarantees, the size of descent cones can be measured by considering only their intersection with the unit sphere $S(1)$.  
Choosing the norm $\|\cdot\|_{\sH}$ to define the unit sphere is natural (although also somewhat arbitrary) as this is the only metric introduced so far in the considered setting. It will also appear to define RIP constants soon.
Second, if we want to consider a measure that is invariant by rotation, the uniform  measure on the  unit sphere $S(1)$ comes somewhat naturally. It is indeed the unique Haar measure. The uniqueness is essentially guaranteed when it is a measure in the sense of measure theory (additive, non-negative function over a $\sigma$-algebra). In our setting, using this measure is a way of considering that we do not have prior information on the orientation of the kernel of $M$, or on the orientation of the model set $\Sigma$.

Using this measure, given a convex function $R$, the ``amount of space left for the kernel of $M$'' can be quantified by the ``volume'' of the intersection $\sT_R(\Sigma) \cap S(1)$ of the descent cone with the unit sphere. Hence, a compliance measure for \emph{uniform recovery} can be defined as 

\begin{equation}\label{eq:DefUniformComplianceMeasure}
A_\Sigma^U(R) := 1 - \frac{\vol\left(\sT_R(\Sigma) \cap S(1)\right)}{\vol(S(1))}.
\end{equation}
Here, the volume $\vol(E)$ of a set $E$ is the measure of $E$ with respect to the uniform measure on the sphere $S(1)$ (\ie the $n-1$-dimensional Hausdorff measure of $\sT_R(\Sigma) \cap S(1)$, when $\sH$ is $n$-dimensional). This measure is well-defined as the descent cones of convex functions are symmetrized convex cones.

When looking at \emph{non-uniform recovery} for random Gaussian measurements, the quantity defined by $\frac{\vol\left(\sT_R(x_0) \cap S(1)\right)}{\vol(S(1))}$ represents the probability that a randomly oriented kernel of  dimension $1$, defined as the span of a random vector uniformly distributed on the sphere $S(1)$, intersects (non trivially) $\sT_R(x_0)$. The highest probability of intersection with respect to $x_0$  quantifies the lack of compliance of $R$, hence we could define:
\begin{equation}\label{eq:anusr}
A_\Sigma^{NU}(R) := 1 - \sup_{x \in \Sigma} \frac{\vol\left(\sT_R(x) \cap S(1)\right)}{\vol(S(1))} .
\end{equation}

This can be linked with the Gaussian width and statistical dimension theory of non-uniform sparse recovery \cite{Chandrasekaran_2012,Amelunxen_2014}. 
Indeed, if $M$ is a random Gaussian matrix of size $(n-1) \times n$, we have 
\begin{equation}
 \mathbb{P}\big(\tker M \cap \sT_R(x_0) \neq \{0\}\big) =  \frac{\vol\left(\sT_R(x_0) \cap S(1)\right)}{\vol(S(1))}.
\end{equation}
As shown in \cite{Amelunxen_2014}, for a random Gaussian matrix $M$ of size $m \times n$ with any number of measurements $m$,
the probability $\mathbb{P}\big(\tker M \cap \sT_R(x_0) \neq \{0\}\big)$
can be guaranteed to be small if $m$ is greater than the statistical dimension of the descent cones. The kinematic  formula (Crofton's formula in this case) gives the exact value
\begin{equation}
 \mathbb{P}\big(\tker M \cap \sT_R(x_0) \neq \{0\}\big) =\sum_{j=m+1,\ j\ \text{even}}^{n} v_j(\sT_R(x_0))
\end{equation}
where $ v_j(K)$ is the $j$-th intrinsic volume of a cone $K$. For a polyhedral cone it is the probability that the orthogonal projection on $K$ of a Gaussian vector lies in a $j$-dimensional face of $K$. The statistical dimension of a descent cone $\sT$ is defined by \cite[Definition 2.2]{Amelunxen_2014}
\begin{equation}\label{eq:DefStatDim}
\mathtt{statdim} (K) =\sum_{j=0}^{n} jv_j(K).
\end{equation}
As it is used to bound the number of measurements in the non uniform case, its supremum over all the descent cones $K = \sT_{R}(x_{0}), x_{0} \in \Sigma$ could be used as a compliance measure. Moreover, it was shown that the statistical dimension is a measure of the ``size'' of the convex cones that is additive, invariant by rotation, and monotonous.

The above compliance measures are completely dependent on the metric defining $S(1)$ (here the Hilbert norm $\|\cdot\|_\sH$), other choices could be considered especially if  measurement operators $M$ showing a particular structure were considered.

In this article, we concentrate on compliance measures based on uniform recovery guarantees.

\begin{remark}
These compliance measures implicitly force $\Sigma \subset \dom(R)$, unless $A_{\Sigma}(R) = 0$. Indeed, suppose there exists $x\in\Sigma$ such that $R(x)=+\infty$, then for all $z \in \sH$, we have $R(x+z)\leq +\infty = R(x)$. This implies $\sT_R(x) = \sH$ and  $A_{\Sigma}^U(R)=A_{\Sigma}^{NU}(R) = 0 $.

\end{remark}

\begin{remark}
When studying convex regularization for low dimensional recovery in infinite dimensional separated Hilbert spaces, the noiseless recovery only depends on the behavior of the regularizer $R$ on $\sE(\Sigma)$ (defined in \Cref{sec:notations}). The behavior of $R$ outside $\sE(\Sigma)$ is only studied to obtain properties of robustness to modeling error \cite{Traonmilin_2016}. In many examples of generalized sparsity and low-dimensional modeling in infinite dimension, the space $\sE(\Sigma)$ has a finite dimension~\cite{Adcock_2013b}.  Our framework still applies in this case.

It is an open question to generalize our framework for low-dimensional recovery in more general settings such as Banach spaces (\eg for off-the-grid super-resolution).

\end{remark}

\begin{remark}
In the uniform recovery case, the compliance measure $A_{\Sigma}^{U}$ defined in~\eqref{eq:DefUniformComplianceMeasure}
is monotonous with respect to the partial ordering of descent cones defined by the inclusion property. However, it does not (at least explicitly) take into account potential effects of the dimension of the kernel of $M$, which may be higher than one. For a given 
dimension $\ell$ of the
kernel of $M$, the uniform measure 
on the corresponding Grassmanian manifold (of all subspaces of dimension $\ell$)
would be more natural as it would directly quantify the probability of 
intersection with
a random kernel of fixed dimension. 
This measure for kernels of dimension $\ell$ and  a  descent cone $K$ is the following:
\begin{equation}
V_\ell(K):=\mu_{O(n)} \left(\{Q \in O(n) : (Q E) \cap K \neq \{ 0\}  \} \right)
\end{equation}
where $\mu_{O(n)}$ is the uniform measure on the orthogonal group and $E$ is 
an arbitrary fixed $\ell$-dimensional subspace. The measure $V_{\ell}$ is invariant by rotation and
for $\ell=1$ it matches the Haar measure used in~\eqref{eq:DefUniformComplianceMeasure}-\eqref{eq:anusr}.

Given a set $\Sigma$, and assuming the existence of a maximizer $R^{*}$ of $A^{U}_{\Sigma}$ (within a prescribed family of possible regularizers), there are only two possibilities: either all maximizers of  $A_{\Sigma}^U(R)$  also minimize $V_\ell(\sT_{R}(\Sigma))$, or not. In this last case, it would mean that there is $R^*$ maximizing $A_{\Sigma}^U$  and not minimizing $V_\ell$. It is an interesting challenge, left to future work, to understand whether this case can indeed happen.

\end{remark}

We remind  the reader that compliances considered in this article are summarized in~\Cref{sec:contributions}.
\subsection{Coercive continuous convex functions} \label{sec:def_convex_function_set}

As mentioned before we look for practical regularizers. 
We define $\sC$ the set of all functions $R: \sH \to \bR$ (\ie with $\dom(R) = \sH$) that are convex, continuous, and coercive.

The coercivity condition is typical in the context of convex regularization in order to avoid constant functions.

With any proper lower semi-continuous regularizer (hence, with any regularizer in $\sC$) the convergence of  the primal dual algorithm is guaranteed \cite{Chambolle_2011}. This guarantees the existence of practical algorithms  (for the problem  $\min_x \tfrac{1}{2}\|Mx-y\|^{2} + \lambda R(x)$ ) when the proximity operator
\begin{equation}
 y \mapsto \mathtt{prox}_{\lambda R}(y) := \arg \min \tfrac{1}{2} \|u-y\|_\sH^2 + \lambda R(u)
\end{equation}
can be approximated efficiently.

\subsection{Elementary properties and reduction to atomic ``norms''}\label{sec:atomic}

As  compliance measures  based on uniform recovery guarantees can be written as functions of descent cones $\sT_{R}(\Sigma)$, we have the following immediate Lemma.

\begin{lemma}[The compliance measure $A_\Sigma^{U}$ is monotonic.]\label{lem:compmono}
 Let $R_1,R_2$ be two functions such that $\sT_{R_1}(\Sigma) \subset \sT_{R_2}(\Sigma)$ then $A_\Sigma^{U}(R_1) \geq A_\Sigma^{U}(R_2)$.
\end{lemma}
In other words, the compliance measure is decreasing with respect to the inclusion of descent cones.
We also verify that multiplying a regularizer  by a scalar does not change the compliance measure which is consistent with recovery guarantees. 
\begin{lemma}[The compliance measures $A_\Sigma^{U}$ and $A_\Sigma^{NU}$ are 0-homogeneous.]
Let $\lambda >0$. Then, 
\begin{equation}
\begin{split}
 A_\Sigma^U( \lambda R)&= A_\Sigma^U(R), \\
 A_\Sigma^{NU}( \lambda R)&= A_\Sigma^{NU}(R). \\
 \end{split}
\end{equation}
\end{lemma}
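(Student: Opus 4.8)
The plan is to show that the descent cones are invariant under positive scaling of the regularizer, and then invoke the fact that both compliance measures are functions of these descent cones alone. Concretely, I would first prove the elementary identity
\[
\sT_{\lambda R}(x) = \sT_{R}(x) \quad \text{for all } x \in \dom(R) \text{ and all } \lambda > 0,
\]
and similarly $\dom(\lambda R) = \dom(R)$, so that taking unions over $x \in \Sigma$ gives $\sT_{\lambda R}(\Sigma) = \sT_{R}(\Sigma)$.

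The key step is the pointwise descent cone identity. For $\lambda > 0$, the condition defining membership in $\sT_{\lambda R}(x)$ involves $(\lambda R)(x+z) \leq (\lambda R)(x)$, i.e.\ $\lambda R(x+z) \leq \lambda R(x)$; since $\lambda > 0$ this is equivalent to $R(x+z) \leq R(x)$, which is exactly the condition appearing in the definition of $\sT_R(x)$. Hence the two sets $\{ \gamma z : \gamma \in \bR,\ z \in \sH,\ (\lambda R)(x+z) \leq (\lambda R)(x)\}$ and $\{ \gamma z : \gamma \in \bR,\ z \in \sH,\ R(x+z) \leq R(x)\}$ coincide verbatim. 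One small care point: if $R(x) = +\infty$, i.e.\ $x \notin \dom(R)$, then $x \notin \dom(\lambda R)$ either, so neither descent cone is defined at $x$ (or, following the remark in the paper, one may treat $\sT_R(x) = \sH$ in that degenerate case, and the equality still holds); in any event $\dom(\lambda R) = \dom(R)$ because $\lambda R(x) < +\infty \iff R(x) < +\infty$ for $\lambda > 0$.

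Once $\sT_{\lambda R}(\Sigma) = \sT_{R}(\Sigma)$ is established, the conclusion is immediate: $A_\Sigma^U(R) = 1 - \vol(\sT_R(\Sigma) \cap S(1))/\vol(S(1))$ depends on $R$ only through $\sT_R(\Sigma)$, so $A_\Sigma^U(\lambda R) = A_\Sigma^U(R)$; and $A_\Sigma^{NU}(R) = 1 - \sup_{x \in \Sigma} \vol(\sT_R(x) \cap S(1))/\vol(S(1))$ depends on $R$ only through the family $\{\sT_R(x)\}_{x \in \Sigma}$, which is likewise unchanged, giving $A_\Sigma^{NU}(\lambda R) = A_\Sigma^{NU}(R)$.

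There is really no hard part here — the statement is a bookkeeping consequence of the definitions. The only thing to be careful about is the edge case $x \notin \dom(R)$ when forming $\sT_R(\Sigma)$, and the fact that this is already handled consistently by the remark preceding \Cref{sec:def_convex_function_set}; for $R \in \sC$ (where $\dom(R) = \sH$) the issue does not even arise. If anything, the ``obstacle'' is purely presentational: deciding whether to state the lemma for all functions or just for $R \in \sC$, and whether to route the argument through the first displayed lemma of \Cref{sec:atomic} (monotonicity of $A_\Sigma^U$ under descent-cone inclusion, applied in both directions since $\sT_{\lambda R}(\Sigma) \subseteq \sT_R(\Sigma)$ and $\sT_{R}(\Sigma) \subseteq \sT_{\lambda R}(\Sigma)$) or to argue directly from the scaling identity. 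I would argue directly, as it is cleaner and also covers $A_\Sigma^{NU}$.
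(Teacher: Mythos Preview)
Your proposal is correct and follows essentially the same approach as the paper: both argue that $\sT_{\lambda R}(x) = \sT_R(x)$ because $\lambda R(x+z) \leq \lambda R(x) \iff R(x+z) \leq R(x)$ for $\lambda>0$, and then conclude directly that $A_\Sigma^{NU}$ and $A_\Sigma^U$ are unchanged. Your extra care about the domain edge case is fine but unnecessary here, since the paper's proof simply works pointwise for $x \in \Sigma$.
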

\begin{proof}
Let $x \in \Sigma$. We remark that, the tangent cone is invariant by scalar multiplication:
\begin{equation}
\begin{split}
 \sT_{\lambda R}(x) &= \{ \gamma z : \gamma \in \bR, \lambda R(x+z) \leq \lambda R(x)\} \\
  &= \{ \gamma z : \gamma \in \bR; R(x+z) \leq  R(x)\} \\
  &=  \sT_{R}(x).
 \end{split}
\end{equation}
This shows directly that $ A_\Sigma^{NU}( \lambda R)= A_\Sigma^{NU}(R) 
$. This also implies that $\sT_{\lambda R}(\Sigma) = \sT_{ R}(\Sigma)  $ and $ A_\Sigma^U( \lambda R)= A_\Sigma^U(R) $.

\end{proof}

More generally, any operation on $R$ that leaves $\sT_R(\Sigma)$ invariant does not change the compliance measure. This is typically the case of the post-composition of $R$ with an increasing function.

We now recall the notion of atomic ``norm'' and show that optimal regularizers can be found in a set of atomic norms.

\begin{definition}[Atomic norm.]
 The \textbf{atomic ``norm''} induced by a set $\sA$ is defined as: 
\begin{equation}
 \|x\|_\sA \defin \inf \left\{ t \in \bRp:  x \in t\cdot\cl{ \tconv}(\sA) \right\}
\end{equation}
where $\cl{ \tconv}(\sA)$ is the closure of the convex hull $\tconv(\sA)$ in $\sH$. This ``norm'' is finite only on 
\begin{equation}
\sE(\sA) \defin \bRp \cdot \cl{\tconv}(\sA) = \{x = t\cdot y, t \in \bRp,y \in \cl{\tconv}(\sA)\} \subset \sH.
\end{equation}
It is extended to $\sH$ by setting $\|x\|_\sA = +\infty$ if $x \notin \sE(\sA)$. 
\end{definition}

Classical convex regularizer for sparse and low rank models are atomic norms:
\begin{itemize}
 \item  The $\ell^1$-norm $\|\cdot\|_1$ is the atomic norm induced by signed canonical basis vectors.
 \item  The nuclear norm $\|\cdot\|_*$ is the atomic norm induced by unitary rank-one matrices.
\end{itemize}

Atomic norms are convex gauges induced by the convex hull of atoms. Their properties can be linked with the properties of the set $\sA$  with classical results on convex gauge functions
 (see \Cref{sec:summary_prev_results}). 

It is possible to define  an atomic norm, denoted $\|\cdot\|_\Sigma$, specifically induced by the model $\Sigma$.

\begin{definition}[Atomic norm induced by the model.] \label{def:modelnorm}
 Given a cone $\Sigma$, we define the \textbf{atomic norm induced} by $\Sigma$ as 
\begin{equation}
 \|\cdot\|_\Sigma \defin \|\cdot\|_{\Sigma \cap S(1)}.
\end{equation}
\end{definition}
This norm is known as the $k$-support norm for sparse model, it is not adapted to perform uniform recovery. In particular, it cannot recover $1$-sparse vectors.

In \cite[Lemma 2.1]{Traonmilin_2016}, it was shown that there is always an atomic norm with smaller descent cones than the descent sets of a proper coercive continuous function  with convex level sets. We give a version of this result for our definition of cones and specify the properties of the obtained atomic norm. 

\begin{lemma}[Optimality of atomic norms for a given model.]\label{lem:atomic_necessary}

Let $\Sigma$ be a cone such that $\sE(\Sigma) = \sH$ and $R$ be a coercive continuous convex function. Then there exists $\sA \subset \Sigma$ such that: 
\begin{equation}
\sT_{\|\cdot\|_\sA}(\Sigma) \subseteq \sT_R(\Sigma).
\end{equation}
and $\|\cdot\|_\sA$ is a  coercive, continuous, positively homogeneous convex function.
\end{lemma}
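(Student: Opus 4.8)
The plan is to construct the atom set $\sA$ explicitly from the regularizer $R$ and then verify the inclusion of descent cones together with the announced properties. The natural candidate is the (rescaled) sublevel set of $R$ intersected with the model: concretely, since $R$ is convex, continuous and coercive, the set $C \defin \{x \in \sH : R(x) \leq R(0)\}$ is a closed convex bounded neighborhood of $0$ (boundedness comes from coercivity, and $0$ is interior because $R$ is continuous). I would then define $\sA \defin \Sigma \cap \partial C$, or more robustly $\sA \defin \{ x/R(x) : x \in \Sigma \setminus \{0\}\}$ after normalizing $R$ so that $R(0)=0$ is impossible — instead I normalize by working with $g \defin R - \inf R$ or simply use the gauge of $C$. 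Let me phrase it via the gauge: let $J$ be the gauge (Minkowski functional) of $C$; then $J$ is a coercive continuous positively homogeneous convex function with $\sT_J(x) = \sT_R(x)$ for all $x$, because $J$ and $R$ have the same sublevel sets up to reparametrization and hence the same descent cones (this uses the remark in the excerpt that operations leaving $\sT_R(\Sigma)$ invariant — here post-composition with an increasing function — do not change things; more precisely $R(x+z)\le R(x) \iff J(x+z) \le J(x)$ when $x \neq 0$, and the case $x=0$ contributes $\sT_R(0) = \sH = \sT_J(0)$ to both sides). So it suffices to prove the statement for the positively homogeneous $J$ in place of $R$.

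Now with $J$ positively homogeneous, set $\sA \defin \{ x \in \Sigma : J(x) \le 1\} = \Sigma \cap \{J \le 1\}$; this is a subset of $\Sigma$ as required. The key comparison step is then: $\|\cdot\|_\sA \geq J$ on $\sH$, or at least on $\sE(\Sigma)$, with equality on $\Sigma$. Indeed $\sA \subseteq \{J \le 1\}$ which is closed convex, so $\cl\tconv(\sA) \subseteq \{J\le 1\}$, hence $t \cdot \cl\tconv(\sA) \subseteq \{J \le t\}$, which gives $\|x\|_\sA \le t \Rightarrow J(x) \le t$, i.e.\ $J(x) \le \|x\|_\sA$ for all $x$. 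On the other hand, for $x \in \Sigma$ we have $x/J(x) \in \sA$ (using positive homogeneity and $x \in \Sigma$ a cone), so $\|x\|_\sA \le J(x)$, giving equality $\|x\|_\sA = J(x)$ for $x\in\Sigma$. From $J \leq \|\cdot\|_\sA$ everywhere and $J = \|\cdot\|_\sA$ on $\Sigma$, the descent cone inclusion follows: for $x \in \Sigma$ and $z$ with $\|x+z\|_\sA \le \|x\|_\sA$, we get $J(x+z) \le \|x+z\|_\sA \le \|x\|_\sA = J(x)$, so $z \in \sT_J(x)$; taking the union over $x \in \Sigma$ and the $\bR$-scalings gives $\sT_{\|\cdot\|_\sA}(\Sigma) \subseteq \sT_J(\Sigma) = \sT_R(\Sigma)$.

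It remains to check that $\|\cdot\|_\sA$ is coercive, continuous and positively homogeneous convex — positive homogeneity and convexity are automatic for any atomic norm (it is the gauge of the convex set $\cl\tconv(\sA)$), so the content is coercivity and (finite-valued) continuity, i.e.\ that $\sE(\sA) = \sH$ and that $\cl\tconv(\sA)$ has $0$ in its interior and is bounded. Boundedness of $\cl\tconv(\sA)$: since $\sA \subseteq \{J \le 1\}$ and $J$ is coercive, $\{J\le 1\}$ is bounded, hence so is its closed convex subset; this yields coercivity of $\|\cdot\|_\sA$. That $0$ is interior to $\cl\tconv(\sA)$, equivalently $\sE(\sA)=\sH$ and $\|\cdot\|_\sA$ is bounded on a neighborhood of $0$: this is exactly where the hypothesis $\sE(\Sigma) = \sH$ is used. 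The main obstacle, and the step I would spend the most care on, is precisely this: from $\sE(\Sigma) = \bRp \cdot \cl\tconv(\Sigma) = \sH$ I must deduce $\bRp \cdot \cl\tconv(\sA) = \sH$ where $\sA = \Sigma \cap \{J \le 1\}$ is a ``slice'' of $\Sigma$. The idea is that every $x \in \Sigma\setminus\{0\}$ can be rescaled into $\sA$ by dividing by $J(x) \in (0,\infty)$, so $\bRp \cdot \sA \supseteq \Sigma$, whence $\bRp \cdot \cl\tconv(\sA) \supseteq \bRp \cdot \cl\tconv(\Sigma) = \sH$ (one must be slightly careful commuting $\bRp\cdot$ with closure and convex hull, but $\bRp \cdot \cl\tconv(\sA)$ is already a convex cone containing $\Sigma$, hence contains $\cl\tconv(\Sigma)$ and all its nonnegative multiples — continuity/closedness of scalar multiplication on the cone suffices, and the finite-dimensionality of $\sH$ makes the closures well-behaved). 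Hence $\|\cdot\|_\sA$ is finite everywhere and, being a finite convex function on $\bR^n$, automatically continuous, completing the proof. I would double-check the degenerate normalization issues (the value $R(0)$, whether $J$ can vanish off $0$ — it cannot, by coercivity) in the written-out version, but these are routine.
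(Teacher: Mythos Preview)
Your reduction to the gauge $J$ has two genuine gaps. First, the sublevel set $C = \{R \leq R(0)\}$ need not have $0$ in its interior: if $R$ is any norm then $C=\{0\}$, and if $R(x)=\|x-x_0\|_\sH^2$ with $x_0\neq 0$ then $0$ lies on the boundary of $C$. The paper uses a level $t>R(0)$, which fixes this. Second, and more importantly, your claim that $R(x+z)\le R(x) \iff J(x+z)\le J(x)$ because ``$J$ and $R$ have the same sublevel sets up to reparametrization'' is false: the sublevel sets $\{J\le s\}=sC$ are dilations of the \emph{single} set $C$, not a reparametrization of the family $\{R\le r\}_r$. For instance with $R(x_1,x_2)=x_1^2+x_2^4$ and $t=1$, taking $x=(0,1/2)$ and $x+z=(1/2,0)$ gives $J(x+z)=J(x)=1/2$ but $R(x+z)=1/4>1/16=R(x)$. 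What \emph{does} hold is the inclusion $\sT_J(\Sigma)\subseteq\sT_R(\Sigma)$ at the level of the union over $\Sigma$, and only after a rescaling that exploits that $\Sigma$ is a cone: given $J(x+z)\le J(x)$ with $x\in\Sigma\setminus\{0\}$, set $x'=x/J(x)\in\Sigma$; then $J(x')=1$ puts $x'$ on the boundary of $C$, whence $R(x')=t$ by continuity of $R$, and $J(x'+z/J(x))\le 1$ gives $R(x'+z/J(x))\le t=R(x')$. This rescaling is precisely the paper's key step (finding $\lambda_0$ with $R(\lambda_0 x)=t$), and it is the idea missing from your reduction.

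Once these points are repaired your argument is essentially the paper's, with $J$ serving as an organizing device (your inequality $J\le\|\cdot\|_\sA$ is exactly the level-set inclusion $\sL(\|\cdot\|_\sA,1)\subseteq\sL(R,t)$ used there). Your continuity argument, however, is cleaner than the paper's: from $\bRp\cdot\sA\supseteq\Sigma$ and the convexity of $\sE(\sA)$ you get $\sE(\sA)\supseteq\tconv(\Sigma)$, and $\tconv(\Sigma)=\sH$ because $\sE(\Sigma)=\sH$ forces $\cl{\tconv}(\Sigma)=\sH$, and a convex set in finite dimension whose closure is the whole space already equals the whole space. The paper instead proves that $0$ is interior to $\tconv(\sA)$ by an explicit basis construction carried out orthant by orthant.
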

\begin{proof}
See~\Cref{sec:reduction_proof}.
\end{proof}

With \Cref{lem:atomic_necessary}, for all  coercive continuous  convex functions  $R$ (i.e. elements of $\sC$), it is possible to find an atomic norm $R'$ with atoms included in $\Sigma$ such that  $\sT_{R'}(\Sigma) \subset \sT_{R}(\Sigma)$. We define $\sC_\Sigma$ as the set of coercive continuous positively homogeneous atomic ``norms'' whose atoms $\sA$ are included in $\Sigma$:
\begin{equation}
 \sC_\Sigma := \{ \|\cdot\|_\sA \; : \sA \subset \Sigma , \|\cdot\|_\sA \; \in \sC, \forall x \in \sH, \lambda >0,  \|\lambda x\|_\sA = \lambda \|x\|_\sA \}.
\end{equation}
Note that positive homogeneity is guaranteed if $0$ is in the interior of $\cl{\tconv}(\sA)$ (see~\Cref{sec:summary_prev_results}).
As a consequence of this Lemma, we have the following  immediate result.
\begin{theorem}[Optimization of compliance measures over $\sC_\Sigma$.]\label{th:simpl_max_adeq}
Let $\Sigma$ be a cone such that $\sE(\Sigma) = \sH$.  Suppose $A_\Sigma$ is a compliance measure that is a decreasing function of $\sT_R(\Sigma)$ with respect to set inclusion. Then 
\begin{equation}
  \sup_{R\in\sC} A_\Sigma(R) =  \sup_{R\in\sC_\Sigma} A_\Sigma(R).
\end{equation}

In particular 
\begin{equation}
  \sup_{R\in\sC} A_\Sigma^U(R) =  \sup_{R\in\sC_\Sigma} A_\Sigma^U(R).
\end{equation}
\end{theorem}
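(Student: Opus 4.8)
The plan is to deduce \Cref{th:simpl_max_adeq} directly from \Cref{lem:atomic_necessary} by a standard two-sided inequality argument on suprema. First I would establish the easy inclusion: since $\sC_\Sigma \subseteq \sC$ (every coercive continuous positively homogeneous atomic ``norm'' $\|\cdot\|_\sA$ with $\sA \subset \Sigma$ is in particular a coercive continuous convex function with full domain, hence lies in $\sC$), taking a supremum over the smaller set can only decrease it, so
\begin{equation*}
 \sup_{R\in\sC_\Sigma} A_\Sigma(R) \leq \sup_{R\in\sC} A_\Sigma(R).
\end{equation*}
One subtlety worth a line: the atomic norms produced by \Cref{lem:atomic_necessary} have full domain because $\sE(\Sigma) = \sH$ by hypothesis, so they genuinely belong to $\sC$ and this inclusion is legitimate.

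For the reverse inequality, I would fix an arbitrary $R \in \sC$ and invoke \Cref{lem:atomic_necessary}: there exists $\sA \subset \Sigma$ with $\|\cdot\|_\sA$ coercive, continuous, positively homogeneous convex, hence $R' := \|\cdot\|_\sA \in \sC_\Sigma$, and $\sT_{R'}(\Sigma) \subseteq \sT_R(\Sigma)$. Since $A_\Sigma$ is assumed to be a decreasing function of $\sT_R(\Sigma)$ with respect to set inclusion, this inclusion of descent cones yields $A_\Sigma(R') \geq A_\Sigma(R)$. Therefore
\begin{equation*}
 A_\Sigma(R) \leq A_\Sigma(R') \leq \sup_{R''\in\sC_\Sigma} A_\Sigma(R'').
\end{equation*}
Taking the supremum over all $R \in \sC$ on the left gives $\sup_{R\in\sC} A_\Sigma(R) \leq \sup_{R\in\sC_\Sigma} A_\Sigma(R)$, which combined with the easy inclusion proves the first displayed equality.

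The final assertion then follows by specialization: it suffices to check that $A_\Sigma^U$ satisfies the hypothesis of the theorem, i.e. that it is a decreasing function of $\sT_R(\Sigma)$ with respect to set inclusion. But this is precisely the content of the Lemma stated just before \Cref{def:modelnorm} (if $\sT_{R_1}(\Sigma) \subset \sT_{R_2}(\Sigma)$ then $A_\Sigma^{U}(R_1) \geq A_\Sigma^{U}(R_2)$), together with the fact that $A_\Sigma^U(R)$ depends on $R$ only through $\sT_R(\Sigma)$ via its definition as $1 - \vol(\sT_R(\Sigma)\cap S(1))/\vol(S(1))$. Applying the first part with $A_\Sigma = A_\Sigma^U$ concludes the proof.

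There is no real obstacle here — the theorem is labeled ``immediate'' in the text and the argument is a routine monotonicity-and-supremum manipulation. The only points that require care are bookkeeping ones: confirming the inclusion $\sC_\Sigma \subseteq \sC$ uses the hypothesis $\sE(\Sigma)=\sH$ (so that the atomic norms are real-valued everywhere and thus continuous on all of $\sH$), and one should state explicitly that $A_\Sigma^U$ factors through the map $R \mapsto \sT_R(\Sigma)$ so that the ``decreasing function of $\sT_R(\Sigma)$'' hypothesis is meaningful and verified. The genuine mathematical work has already been done in \Cref{lem:atomic_necessary}.
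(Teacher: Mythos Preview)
Your proposal is correct and follows essentially the same approach as the paper: invoke \Cref{lem:atomic_necessary} to find, for each $R\in\sC$, an atomic norm in $\sC_\Sigma$ with smaller descent cone, then use the monotonicity of $A_\Sigma$ to conclude. You are simply more explicit than the paper in spelling out the easy direction $\sC_\Sigma\subseteq\sC$ and in checking that $A_\Sigma^U$ meets the monotonicity hypothesis.
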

\begin{proof}
Let  $R \in \sC$, with \Cref{lem:atomic_necessary}, there exists $\|\cdot\|_\sA \in \sC_\Sigma$ such that $ \sT_{\|\cdot\|_\sA}(\Sigma) \subset\sT_R(\Sigma) $.  This implies $ \sT_{\|\cdot\|_\sA}(\Sigma) \cap S(1) \subset\sT_R(\Sigma) \cap S(1)$ and $A_\Sigma(R) \leq  A_\Sigma(\|\cdot\|_\sA)$.

\end{proof}

\noindent  \Cref{th:simpl_max_adeq} shows that we can limit ourselves to atomic norms if our only objective is to maximize the compliance measure.

With such measures, we can calculate optimal regularizers for elementary linear transformations of  models.

\begin{lemma}[Compliance measures as functions of descent cones are equivariant to linear transformations.]\label{lem:transf_sigma_gen}
Consider a compliance measure defined as: $A_\Sigma(R) := f(\sT_R(\Sigma))$ with $f$ some scalar valued function defined on non-empty subsets of $\sH$. For any invertible linear map $F$ on $\sH$, any model set $\Sigma$ and any regularizer  $R$, we have
\begin{align}
\sT_R(F\Sigma) &= F(\sT_{R\circ F}(\Sigma))\\
 A_{F\Sigma}(R)  &=  f [ F (\sT_{R \circ F}(\Sigma))].
\end{align}
\end{lemma}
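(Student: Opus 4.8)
The statement splits into two claims: the identity $\sT_R(F\Sigma) = F(\sT_{R\circ F}(\Sigma))$ between descent cones, and the identity $A_{F\Sigma}(R) = f[F(\sT_{R\circ F}(\Sigma))]$ for the compliance measure. The second claim is an immediate consequence of the first together with the definition $A_\Sigma(R) = f(\sT_R(\Sigma))$: once we know $\sT_R(F\Sigma) = F(\sT_{R\circ F}(\Sigma))$, we simply substitute to get $A_{F\Sigma}(R) = f(\sT_R(F\Sigma)) = f[F(\sT_{R\circ F}(\Sigma))]$. So essentially all the work is in the first identity, and there is no real obstacle — it is a direct unfolding of definitions using invertibility of $F$.

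\textbf{Key steps for the cone identity.} First I would recall that $\sT_R(F\Sigma) = \bigcup_{x' \in F\Sigma} \sT_R(x')$ and, writing $x' = Fx$ with $x$ ranging over $\Sigma$, reduce to showing the pointwise identity $\sT_R(Fx) = F(\sT_{R\circ F}(x))$ for each $x \in \Sigma$; then taking unions over $x \in \Sigma$ on both sides (and using that $F$ commutes with unions, being a bijection) yields the claim. For the pointwise identity, by definition $\sT_R(Fx) = \{\gamma w : \gamma \in \bR,\ w \in \sH,\ R(Fx + w) \le R(Fx)\}$. Given such a $w$, set $z := F^{-1}w$, so that $w = Fz$ and $R(Fx + Fz) = R(F(x+z)) = (R\circ F)(x+z) \le (R\circ F)(x) = R(Fx)$, hence $z \in \sT_{R\circ F}(x)$ and $\gamma w = \gamma F z = F(\gamma z) \in F(\sT_{R\circ F}(x))$; this gives the inclusion $\sT_R(Fx) \subseteq F(\sT_{R\circ F}(x))$. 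Conversely, for $\gamma z \in F^{-1}$-image, i.e. any element $F(\gamma z)$ with $z \in \sT_{R\circ F}(x)$, I would run the same computation in reverse: $R(Fx + F(\gamma z)) = (R\circ F)(x + \gamma z)$, and since $z \in \sT_{R\circ F}(x)$ means $R\circ F$ does not increase from $x$ in the direction $z$ up to the sign/scaling allowed by the symmetrized definition, one checks $F(\gamma z)$ lies in $\sT_R(Fx)$. Here one must be slightly careful with the scalar $\gamma$: the symmetrized descent cone allows an arbitrary real multiple, so an element of $\sT_{R\circ F}(x)$ has the form $\gamma z$ with $R((R\circ F)(x+z)) \le (R\circ F)(x)$, and $F(\gamma z) = \gamma (Fz)$ with $R(Fx + Fz) \le R(Fx)$, which is exactly the membership condition for $\gamma(Fz) \in \sT_R(Fx)$. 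Linearity of $F$ — specifically $F(x+z) = Fx + Fz$ and $F(\gamma z) = \gamma Fz$ — is what makes both directions go through, and invertibility is what lets us pass freely between $w$ and $z = F^{-1}w$.

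\textbf{Main obstacle.} Honestly there is no substantial obstacle; the only point requiring a moment's attention is bookkeeping the scalar multiple $\gamma$ in the symmetrized descent cone definition and making sure the set equality (not just inclusion) holds, which it does because $F$ being a bijection means $F(A) \subseteq F(B) \iff A \subseteq B$ and $F$ distributes over arbitrary unions. I would therefore present the proof compactly: establish the pointwise identity by the two short inclusions above, take unions over $\Sigma$, and then read off the compliance identity from the definition of $A$.
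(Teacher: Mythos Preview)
Your proposal is correct and follows essentially the same approach as the paper: both arguments boil down to the change of variables $w = Fz$ (equivalently $z = F^{-1}w$) and the observation that $R(Fx + Fz) = (R\circ F)(x+z)$, with invertibility of $F$ ensuring the equivalence goes both ways. The only cosmetic difference is that you first establish the pointwise identity $\sT_R(Fx) = F(\sT_{R\circ F}(x))$ and then take unions over $x\in\Sigma$, whereas the paper works directly at the level of the union $\sT_R(F\Sigma)$ in a single equivalence chain; neither organization buys anything over the other.
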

\begin{proof}
First $\gamma z \in \sT_R(F\Sigma)$ if, and only if, there exists $x \in \Sigma$ such that 
$R(Fx  + z) \leq R(F x)$,
\ie such that
$(R\circ F)(x  + F^{-1}z) \leq (R \circ F)( x)$.
This is in turn equivalent to $\gamma F^{-1}z \in \sT_{R\circ F}(\Sigma)$, \ie $\gamma z \in F(\sT_{R\circ F}(\Sigma))$.
Second, it follows that $A_{F\Sigma}(R)  = f(\sT_R(F\Sigma)) = f[F(\sT_{R \circ F}(\Sigma))]$.
\end{proof}

Thanks to \Cref{lem:transf_sigma_gen}, we can build optimal regularizers from other optimal regularizers when the model set is obtained from another one by applying a linear isometry.
\begin{corollary}[Compliance measures as functions of descent cones are invariant under invariant maps.]\label{cor:InvariantCompliance}
Consider a compliance measure defined as: $A_\Sigma(R) := f(\sT_R(\Sigma))$ with $f$ some scalar valued function on subsets of $\sH$.  Assume that $f$ is invariant to a family $\mathcal{F}$ of invertible linear maps on $\sH$, \ie for any $F \in \mathcal{F}$ and any non-empty set $\sT \subseteq \sH$, $f(F\sT) = f(\sT)$. Then, for any $F \in \mathcal{F}$, any regularizer  $R$ and any model set $\Sigma$, we have
 \begin{equation}
   A_{F\Sigma}( R\circ F^{-1})  =  A_{\Sigma}( R).
\end{equation}
\end{corollary}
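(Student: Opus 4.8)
The plan is to derive Corollary~\ref{cor:InvariantCompliance} directly from Lemma~\ref{lem:transf_sigma_gen} by an appropriate choice of regularizer. First I would apply Lemma~\ref{lem:transf_sigma_gen} with the regularizer $R$ replaced by $R\circ F^{-1}$: the lemma gives
\begin{equation*}
A_{F\Sigma}(R\circ F^{-1}) = f\big[F\big(\sT_{(R\circ F^{-1})\circ F}(\Sigma)\big)\big].
\end{equation*}
The key observation is then that $(R\circ F^{-1})\circ F = R\circ(F^{-1}\circ F) = R\circ \mathrm{id} = R$, so the right-hand side simplifies to $f\big[F\big(\sT_{R}(\Sigma)\big)\big]$.

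Next I would invoke the invariance hypothesis on $f$: since $F\in\mathcal{F}$ and $\sT_R(\Sigma)$ is a non-empty subset of $\sH$ (non-emptiness holds because $0\in\sT_R(\Sigma)$, or one simply restricts to the regularizers for which descent cones are defined), we have $f\big[F\big(\sT_R(\Sigma)\big)\big] = f\big(\sT_R(\Sigma)\big)$. But $f\big(\sT_R(\Sigma)\big)$ is by definition $A_\Sigma(R)$. Chaining the three equalities yields $A_{F\Sigma}(R\circ F^{-1}) = A_\Sigma(R)$, which is exactly the claim.

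There is essentially no obstacle here: the statement is a two-line formal consequence of the already-established transformation formula plus the invariance assumption, and the only point requiring the slightest care is the bookkeeping of the composition $(R\circ F^{-1})\circ F = R$, together with checking that $F$ is indeed invertible (which is part of the hypothesis $F\in\mathcal{F}$, a family of invertible linear maps) so that $F^{-1}$ makes sense and Lemma~\ref{lem:transf_sigma_gen} applies. One might also remark that the result is the natural ``equivariance'' statement: transporting both the model and the regularizer by the same isometry-like map $F$ leaves the compliance unchanged, so optimal regularizers for $\Sigma$ transport to optimal regularizers for $F\Sigma$.

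\begin{proof}
Apply \Cref{lem:transf_sigma_gen} with the regularizer $R$ replaced by $R \circ F^{-1}$. Since $F$ is invertible, this is legitimate, and we obtain
\begin{equation*}
A_{F\Sigma}(R \circ F^{-1}) = f\big[F\big(\sT_{(R\circ F^{-1})\circ F}(\Sigma)\big)\big].
\end{equation*}
Now $(R\circ F^{-1})\circ F = R\circ (F^{-1}\circ F) = R$, hence
\begin{equation*}
A_{F\Sigma}(R \circ F^{-1}) = f\big[F\big(\sT_{R}(\Sigma)\big)\big].
\end{equation*}
The set $\sT_R(\Sigma)$ is non-empty (it contains $0$), so the invariance of $f$ to the family $\mathcal{F}$ and $F \in \mathcal{F}$ give $f\big[F\big(\sT_R(\Sigma)\big)\big] = f\big(\sT_R(\Sigma)\big) = A_\Sigma(R)$. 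Combining these equalities yields $A_{F\Sigma}(R\circ F^{-1}) = A_\Sigma(R)$.
\qed
\end{proof}
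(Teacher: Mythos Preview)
Your proof is correct and follows essentially the same approach as the paper: apply \Cref{lem:transf_sigma_gen} with $R$ replaced by $R\circ F^{-1}$, simplify $(R\circ F^{-1})\circ F = R$, and invoke the invariance of $f$. The only difference is that you explicitly justify the non-emptiness of $\sT_R(\Sigma)$, which the paper leaves implicit.
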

\begin{proof}
By \Cref{lem:transf_sigma_gen},
\(
A_{F\Sigma}(R \circ F^{-1}) = f[F(\sT_{(R \circ F^{-1}) \circ F}(\Sigma))] = f(F\sT_{R}(\Sigma)) = f(\sT_{T}(\Sigma)) = A_{\Sigma}(R).
\)

\end{proof}

\begin{corollary}[Compliance measures $A_{\Sigma}^U$ are invariant by isometries.]\label{cor:compinv}
Consider  $F$ an isometry on $\sH$, $R$ a regularizer  and $\Sigma$ a model set. We have 
 \begin{equation}
   A_{F\Sigma}^U( R\circ F^{-1})  =  A_{\Sigma}^U( R).
\end{equation}
\end{corollary}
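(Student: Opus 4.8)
The plan is to derive this as an immediate specialization of \Cref{cor:InvariantCompliance}. First I would recall that the uniform compliance measure $A_\Sigma^U(R)$ can be written in the form $f(\sT_R(\Sigma))$, where $f(\sT) \defin 1 - \vol(\sT \cap S(1))/\vol(S(1))$ and $\vol$ denotes the $(n-1)$-dimensional Hausdorff measure on the sphere $S(1)$ (this is exactly how $A_\Sigma^U$ was introduced). So the only thing to check is that this particular $f$ is invariant under the family $\sF$ of linear isometries of $\sH$, and then apply \Cref{cor:InvariantCompliance} with $F$ an isometry.

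The key step is the invariance claim: for any isometry $F$ on $\sH$ and any non-empty $\sT \subseteq \sH$, $f(F\sT) = f(\sT)$. Here I would argue as follows. Since $F$ preserves $\|\cdot\|_\sH$, it maps $S(1)$ onto $S(1)$ bijectively, and it maps $\sT \cap S(1)$ onto $(F\sT) \cap S(1)$. Moreover $F$ restricted to $S(1)$ is an isometry of the sphere for the geodesic (or equivalently chordal) metric, so it preserves the $(n-1)$-dimensional Hausdorff measure; in particular $\vol\big((F\sT)\cap S(1)\big) = \vol(F(\sT \cap S(1))) = \vol(\sT \cap S(1))$ and $\vol(S(1))$ is unchanged. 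Hence $f(F\sT) = f(\sT)$.

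Once invariance is established, \Cref{cor:InvariantCompliance} gives directly $A_{F\Sigma}^U(R \circ F^{-1}) = A_\Sigma^U(R)$, which is the claim. I would also note in passing that the remark about $\Sigma \subseteq \dom(R)$ is not needed: if $R(x) = +\infty$ for some $x \in \Sigma$ then both sides are $0$, and the identity holds trivially.

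I do not expect a genuine obstacle here; the statement is essentially a corollary of a corollary. The only mild technical point is the measure-theoretic fact that a linear isometry of $\sH$ preserves the uniform (Hausdorff) measure on $S(1)$ — but this is standard, since such an $F$ is an orthogonal transformation (up to the complex/real distinction, a unitary map, which is in particular a Euclidean isometry of the underlying real space) and Hausdorff measure is invariant under Euclidean isometries. So the proof is just: write $A_\Sigma^U = f \circ \sT_\bullet(\Sigma)$, check $f$ is $\sF$-invariant for $\sF$ the isometry group, invoke \Cref{cor:InvariantCompliance}.
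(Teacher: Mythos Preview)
Your proposal is correct and follows essentially the same route as the paper: write $A_\Sigma^U(R)=f(\sT_R(\Sigma))$, observe that $f$ is invariant under linear isometries because the spherical volume is, and invoke \Cref{cor:InvariantCompliance}. The paper's proof is just the one-line version of what you wrote.
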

\begin{proof}
The volume is invariant to isometries, hence $A^{U}_{\Sigma}(R) = f^{U}(\sT_{R}(\Sigma))$ where $f^{U}(\cdot)$ is invariant to  isometries. 

\end{proof}

\subsection{An exact result in 3D: the most we can do?}  

Compliance measures $A_\Sigma^{U}(R)$ and $A_\Sigma^{NU}(R)$ were effectively optimized \cite{Traonmilin_2018} in the case of $1$-sparse recovery in dimension 3, \ie for $\Sigma = \Sigma_1$ the set of $1$-sparse vectors in $\bR^3$.  In this case, $\sC_\Sigma = \{ \|\cdot\|_\sA : \sA \subset \Sigma_1 \}$. It was shown that for the set $\sC_\Sigma' = \{ \| \cdot\|_\sA : \sA \subset \Sigma_1 , \sA = -\sA \} $ (which is the set of weighted $\ell^1$-norms) that
\begin{equation}
  \begin{split}
   &\arg \max_{R \in C_\Sigma'} A_\Sigma^{U}(R)=\arg \max_{R \in C_\Sigma'}  A_\Sigma^{NU}(R)= \{ \lambda \|\cdot\|_1 : \lambda > 0 \}. \\
  \end{split}
 \end{equation}

To show this, the solid angles of all descent cones of arbitrary weighted $\ell^1$-norms were calculated exactly, and their size minimized with respect to the weights. 

\begin{figure}[!h]
  \centering
\includegraphics[width=0.3\linewidth]{./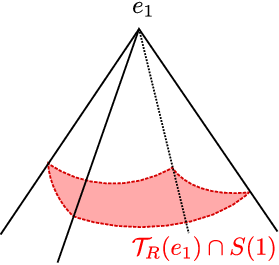}
\caption{Solid angle of  a half descent cone of a  weighted $\ell^1$-norm }
  \label{fig:3d-1-sparse}
\end{figure}

Unfortunately, calculating exactly these solid angles in dimension $d$ seems out of reach for any sparsity and atomic norm in $\sC_\Sigma$ even if some progress in bounds of these quantities \cite{Marz_2020} in some particular cases (non-uniform recovery with $\ell^1$-norm in probability with random matrices). To the best of our knowledge, no general bound is available for the volume of descent cones of arbitrary atomic norms in $\sC_\Sigma$ for sparse recovery. To build a compliance measure that we could optimize, we would need to first to establish such general bounds with some tightness.

In the next section,  we propose  to build compliance measures based on best-known uniform recovery guarantees that have some ``tightness'' properties. This will enable us to manipulate analytical expressions and give results for sparse recovery and low-rank recovery.

\section{Compliance measures based on the restricted isometry property}
\label{sec:RIP_compliance}

The most used tool for the study of uniform recovery is  the restricted isometry property (RIP). This property is adequate for multiple models~\cite{Traonmilin_2016}, to be tight in some sense~\cite{Davies_2009} for sparse and low-rank recovery, to be necessary in some sense~\cite{Bourrier_2014}, and to be well adapted to the study of random operators \cite{Puy_2015}.
In \cite{Traonmilin_2016}, given a regularizer  $R$, an explicit constant $\delta_\Sigma(R)$ is given,  such that $\delta_{\Sigma}(M)<\delta_\Sigma(R)$ guarantees the exact recovery of elements of $\Sigma$ by minimization~\eqref{eq:minimization1}. Hence, using $\delta_\Sigma(R)$ as a compliance measure, the higher
the value of $\delta_\Sigma(R)$, the less stringent the RIP condition on $M$ to ensure recovery of all
elements of $\Sigma$ using $R$ as a regularizer.

To formalize further this idea, we start by recalling definitions and results about RIP recovery guarantees then apply our methodology. We also give results that emphasize the relevant quantity (depending on the geometry of the regularizer  and the model) to optimize. 
\begin{definition}[RIP constant.] \label{def:RIP}
Consider an arbitrary non-empty set $\Sigma \subset \sH$ and $M$ a linear map from $\sH$ to $\bC^{m}$. The \textbf{RIP constant} of $M$ is defined as
 \begin{equation}
  \delta_{\Sigma}(M) = \sup_{x \in \Sigma-\Sigma} \left| \frac{\|Mx\|_2^2}{\|x\|_\sH^2}-1 \right|,
 \end{equation}
 where $\Sigma-\Sigma$ (differences of elements of $\Sigma$) is called the secant set. 
For brevity, we will simply denote $\delta(M)$ when the model set $\Sigma$ is clear from context.
 \end{definition}
 \noindent
 This coincides with the usual notion of RIP for sparse recovery when $\Sigma = \Sigma_{k}$ is the set of vectors with at most $k$ nonzero entries (and $\Sigma-\Sigma = \Sigma_{2k}$); and with the RIP for low-rank recovery when $\Sigma = \Sigma_{r}$ is the set of matrices of rank at most $r$ (then, $\Sigma-\Sigma =\Sigma_{2r}$).

A natural and sharp RIP-based compliance measure is $A^{RIP,\mathtt{sharp}}_{\Sigma}(R) = \delta^{\mathtt{sharp}}_\Sigma(R)$ defined as: 
\begin{equation}\label{eq:DefSharpRIPCst}
 \delta^{\mathtt{sharp}}_\Sigma(R) := \inf_{M : \ker M \cap \sT_R(\Sigma) \neq \{0\} } \delta_{\Sigma}(M).
\end{equation}
This is the smallest RIP constant of measurement operators where uniform recovery fails  \cite{Davies_2009}, hence the following immediate theorem.
\begin{theorem}[The compliance measure $\delta^{\mathtt{sharp}}_\Sigma(R)$ is sharp.]
Consider an arbitrary non-empty set $\Sigma \subseteq \sH$. Suppose $M$ has RIP with constant $\delta_{\Sigma}(M)<  \delta^{\mathtt{sharp}}_\Sigma(R)$, then  for all $x_0 \in \Sigma$ and $x^*$ the result of minimization~\eqref{eq:minimization1} satisfies
 \begin{equation}
  x^* = x_0.
 \end{equation}
Conversely, there exists $M$ with $\delta_{\Sigma}(M) \geq  \delta^{\mathtt{sharp}}_\Sigma(R)$ and $x_0 \in \Sigma$ such that $x^* \neq x_0$.
\end{theorem}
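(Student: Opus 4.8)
The plan is to unpack both directions directly from the characterization of uniform recovery in terms of descent cones, namely equation~\eqref{eq:charac_exact_rec}, and the very definition~\eqref{eq:DefSharpRIPCst} of $\delta^{\mathtt{sharp}}_\Sigma(R)$ as an infimum over the ``bad'' operators.

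For the forward (sufficiency) direction, suppose $\delta_\Sigma(M) < \delta^{\mathtt{sharp}}_\Sigma(R)$. I would argue by contradiction: if $\ker M \cap \sT_R(\Sigma) \neq \{0\}$, then $M$ belongs to the set over which the infimum defining $\delta^{\mathtt{sharp}}_\Sigma(R)$ is taken, so $\delta_\Sigma(M) \geq \delta^{\mathtt{sharp}}_\Sigma(R)$, contradicting the hypothesis. Hence $\ker M \cap \sT_R(\Sigma) = \{0\}$, and by the uniform recovery characterization~\eqref{eq:charac_exact_rec} this is exactly equivalent to: for all $x_0 \in \Sigma$, the solution $x^*$ of~\eqref{eq:minimization1} with $y := M x_0$ satisfies $x^* = x_0$. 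One subtlety worth a sentence is that~\eqref{eq:charac_exact_rec} presupposes $\Sigma \subseteq \dom(R)$; if some $x_0 \in \Sigma$ has $R(x_0) = +\infty$ then $\sT_R(\Sigma) = \sH$, so $\ker M \cap \sT_R(\Sigma) = \ker M \neq \{0\}$ (as $M$ is under-determined), which would force $\delta^{\mathtt{sharp}}_\Sigma(R) \leq \delta_\Sigma(M)$ for every $M$ and make the hypothesis vacuous — so the implication holds trivially in that degenerate case.

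For the converse, I need to exhibit an operator $M$ with $\delta_\Sigma(M) \geq \delta^{\mathtt{sharp}}_\Sigma(R)$ and a point $x_0 \in \Sigma$ with $x^* \neq x_0$. The natural candidate is an $M$ that nearly attains the infimum in~\eqref{eq:DefSharpRIPCst}, i.e. with $\ker M \cap \sT_R(\Sigma) \neq \{0\}$; any such $M$ automatically has $\delta_\Sigma(M) \geq \delta^{\mathtt{sharp}}_\Sigma(R)$ by definition of the infimum. Given such an $M$, pick $z \in (\ker M \cap \sT_R(\Sigma)) \setminus \{0\}$; by definition of $\sT_R(\Sigma)$ there is $x_0 \in \Sigma$ and a scalar $\gamma \neq 0$ with $z = \gamma z'$ and $R(x_0 + z') \leq R(x_0)$, and after rescaling we may take $z$ itself (replacing $z$ by $z'$) to satisfy $R(x_0 + z) \leq R(x_0)$ with $z \in \ker M \setminus \{0\}$. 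Then $x_0 + z$ is feasible for~\eqref{eq:minimization1} with $y = Mx_0$ and has objective value no larger than $x_0$, while $x_0 + z \neq x_0$; hence either $x_0$ is not the unique minimizer or it is not a minimizer at all, so $x^* \neq x_0$ (here I use the convention, consistent with the non-uniform characterization stated earlier, that recovery means $x^*$ is the \emph{unique} solution equal to $x_0$). The one genuine obstacle is that the infimum in~\eqref{eq:DefSharpRIPCst} need not be attained, so I cannot literally take ``the'' minimizing $M$; but I do not need attainment — any $M$ in the (nonempty, as $M$ under-determined makes $\ker M \neq \{0\}$ and $\sT_R(\Sigma) \supseteq \Sigma - \Sigma \ni 0$... more carefully, nonempty whenever $\Sigma - \Sigma$ is nontrivial) feasible set of that infimum works, and the statement as phrased only asserts existence of \emph{some} such $M$, which is exactly membership in that feasible set.

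In summary, the whole argument is a two-line logical manipulation of the defining infimum together with the descent-cone characterization of uniform recovery; the only thing requiring care is the handling of non-attainment of the infimum (resolved by noting the claim is purely existential) and the degenerate case $\Sigma \not\subseteq \dom(R)$ (resolved by noting the hypothesis becomes vacuous). I would present it as: (i) recall~\eqref{eq:charac_exact_rec}; (ii) prove sufficiency by contraposition against the infimum; (iii) prove the converse by choosing $M$ in the feasible set of the infimum and extracting the offending pair $(x_0, z)$.
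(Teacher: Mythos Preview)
Your proposal is correct and is precisely the argument the paper has in mind: the paper does not give a proof at all, introducing the theorem with ``hence the following immediate theorem'' right after defining $\delta^{\mathtt{sharp}}_\Sigma(R)$ as the infimum of $\delta_\Sigma(M)$ over operators for which $\ker M \cap \sT_R(\Sigma) \neq \{0\}$. Your unpacking --- contraposition against the infimum for sufficiency, and picking any $M$ in the feasible set of the infimum for the converse --- is exactly that immediate argument, together with the characterization~\eqref{eq:charac_exact_rec}.
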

Despite this sharp property with respect to recovery, $\delta_{\Sigma}^{\mathtt{sharp}}(R)$ is not endowed with any known analytic expression more explicit than its definition, and it is an open question to derive closed-form formulations of this constant for a general $R$, even for the particular case of sparse or low-rank models. This limits the possibility to conduct an exact optimization with respect to $R$, and motivates the investigation of more explicit RIP-based compliance measures, with two flavors:  
\begin{itemize}
 \item  Compliance measures $\delta^{\mathtt{nec}}_\Sigma(R)$ based on necessary RIP conditions \cite{Davies_2009} which yield sharp recovery constants for particular set of operators $M$, \eg
\begin{equation}\label{eq:DefRIPCstNec}
 \delta^{\mathtt{nec}}_\Sigma(R) := \inf_{z \in \sT_R(\Sigma) \setminus \{0\}} \delta_{\Sigma}(I-\Pi_z ).
\end{equation}
 where $\Pi_{z}$ is the orthogonal projection onto the one-dimensional subspace $\tspan(z)$  (other intermediate necessary RIP constants can be defined). Such measures upper bound $\delta_\Sigma^{\mathtt{sharp}}(R)$ ($\delta^{\mathtt{nec}}_\Sigma(R)$ characterizes RIP recovery guarantees of measurement operators having the shape $I-\Pi_z$).
 \item Compliance measures $\delta^{\mathtt{suff}}_\Sigma(R)$ based on sufficient RIP constants for recovery (\eg the explicit sufficient RIP constant
 $\delta_\Sigma(R)$ from \cite{Traonmilin_2016}, which is explained in \Cref{sec:suff_RIP}), which are lower bounds to $\delta_\Sigma^{\mathtt{sharp}}(R)$.
\end{itemize}
Note that we have the relation 
 \begin{equation}\label{eq:IneqRIPConstants}
\delta_\Sigma^{\mathtt{suff}}(R) \leq  \delta^{\mathtt{sharp}}_\Sigma(R)  \leq \delta_\Sigma^{\mathtt{nec}}(R). 
\end{equation}

The next sections aim at showing that the $\ell^1$-norm (resp. the nuclear norm) maximizes the (best known) upper and lower bounds of $\delta_\Sigma^{\mathtt{sharp}}(R)$ for $k$-sparse model (resp. low rank models).

First, in \Cref{sec:RC}, we recall   that when $\Sigma$ is a non-trivial cone, the compliance measures associated to RIP constants can be cast to equivalent compliance measures associated to a restricted conditioning (RC), which can be written more explicitly. 

Second, in \Cref{sec:nec_RIP}, we use the expression of the RC-based compliance measure associated to $\delta_{\Sigma}^{\mathtt{nec}}(\cdot)$ (from Equation~\eqref{eq:DefRIPCstNec}) to show that the $\ell^{1}$ norm (resp. the trace-norm) optimizes $\delta_{\Sigma}^{\mathtt{nec}}(\cdot)$ for $k$-sparse vectors (resp. for matrices of rank at most $r$), with $\delta_{\Sigma}^{\mathtt{nec}}(R^{\star})\approx 1/\sqrt{2}$ when $n$ is large enough. 

Finally, in \Cref{sec:suff_RIP}, we give a characterization of $\delta_\Sigma^{\mathtt{suff}}(R)$ and show the optimality of the $\ell^1$-norm (resp. the nuclear norm) with $\delta_{\Sigma}^{\mathtt{suff}}(R^{\star})= 1/\sqrt{2}$.

\subsection{Restricted conditioning as a compliance measure} \label{sec:RC}
When working with a model set $\Sigma$ that is a cone, instead of working with actual RIP constants, it is easier to use (equivalently) the restricted conditioning.

\begin{definition}[Restricted conditioning.]
Consider a cone $\Sigma \subset \sH$ and a linear operator $M$ from $\mathbb{R}^{n}$ to $\mathbb{C}^{m}$.
We define the \textbf{restricted conditioning} of $M$ as
\begin{equation}
 \gamma_{\Sigma}(M) := \frac{\sup_{x\in (\Sigma-\Sigma)\cap S(1)} \|Mx\|_2^2}{\inf_{x\in (\Sigma-\Sigma)\cap S(1)} \|Mx\|_2^2} \in [1,\infty]
 \end{equation}
where by convention here $a/0=+\infty$ for any $a \geq 0$.
For brevity we will simply denote $\gamma(M)$ when $\Sigma$ is clear from context.
\end{definition}
As shown below, the RIP constant $\delta_{\Sigma}(M)$ is a monotonically increasing function of $\gamma_{\Sigma}(M)$. The advantage of the latter is that it is invariant by rescaling $M$ (rescaling leaves unchanged the recovery properties when measuring $x_0$ with  $M$). 

\begin{lemma}[The RIP constant $\delta_{\Sigma}(M)$ is monotone in $\gamma_{\Sigma}(M)$.]\label{lem:link_RIP_RC}
Consider a cone $\Sigma \subseteq \sH$.
For any $M$ such that $\gamma_{\Sigma}(M) < \infty$, there is a unique $\lambda>0$ such that
\begin{equation}
\gamma_{\Sigma}(M)= \frac{1+\delta_{\Sigma}(\lambda M)}{1-\delta_{\Sigma}(\lambda M)}
\end{equation}
or equivalently
\begin{equation}
\delta_{\Sigma}(\lambda M) = \frac{\gamma_{\Sigma}(M) -1}{\gamma_{\Sigma}(M)+1}.
\end{equation}
\end{lemma}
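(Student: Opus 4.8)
The plan is to unwind the definitions of $\delta_\Sigma(M)$ and $\gamma_\Sigma(M)$ and show the claimed algebraic relation is precisely the statement that rescaling $M$ by a suitable $\lambda$ centers the quantity $\|Mx\|_2^2$ between $1-\delta$ and $1+\delta$ symmetrically. Concretely, write $a := \inf_{x\in(\Sigma-\Sigma)\cap S(1)} \|Mx\|_2^2$ and $b := \sup_{x\in(\Sigma-\Sigma)\cap S(1)} \|Mx\|_2^2$, so that $\gamma_\Sigma(M) = b/a$ with $0 \le a \le b$, and the hypothesis $\gamma_\Sigma(M) < \infty$ forces $a > 0$ (and hence $b < \infty$, since $\Sigma-\Sigma$ is a cone and $\|Mx\|_2^2$ is bounded on the unit sphere when $a>0$; more simply, $b/a<\infty$ with $a>0$ gives $b<\infty$). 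For any $\lambda > 0$, by homogeneity $\inf_{x}\|\lambda Mx\|_2^2 = \lambda^2 a$ and $\sup_x \|\lambda M x\|_2^2 = \lambda^2 b$, so
\begin{equation}
\delta_\Sigma(\lambda M) = \sup_{x\in(\Sigma-\Sigma)\cap S(1)} \left| \lambda^2 \|Mx\|_2^2 - 1 \right| = \max\left( |\lambda^2 b - 1|, |\lambda^2 a - 1| \right),
\end{equation}
using that the supremum of $|\,t-1|$ over $t \in [\lambda^2 a, \lambda^2 b]$ is attained at one of the endpoints, together with the observation (from the original \Cref{def:RIP}) that $\delta_\Sigma(M)$ restricted to $\Sigma-\Sigma$ only depends on the values $\|Mx\|_2^2/\|x\|_\sH^2$ for $x$ on the unit sphere by homogeneity.

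Next I would choose $\lambda$ to equalize the two endpoint deviations: set $\lambda^2 := \frac{2}{a+b}$, so that $\lambda^2 a = \frac{2a}{a+b} \le 1 \le \frac{2b}{a+b} = \lambda^2 b$, and then
\begin{equation}
\delta_\Sigma(\lambda M) = \lambda^2 b - 1 = \frac{2b}{a+b} - 1 = \frac{b-a}{b+a} = 1 - \lambda^2 a.
\end{equation}
Writing $\delta := \delta_\Sigma(\lambda M) = \frac{b-a}{b+a}$, one checks immediately that $\frac{1+\delta}{1-\delta} = \frac{(b+a)+(b-a)}{(b+a)-(b-a)} = \frac{2b}{2a} = \frac{b}{a} = \gamma_\Sigma(M)$, which is the first displayed identity; the second displayed identity $\delta_\Sigma(\lambda M) = \frac{\gamma_\Sigma(M)-1}{\gamma_\Sigma(M)+1}$ is the same equation solved for $\delta$. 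For uniqueness of $\lambda > 0$: the map $\lambda \mapsto \delta_\Sigma(\lambda M) = \max(|\lambda^2 b - 1|, |\lambda^2 a - 1|)$ is strictly decreasing on $(0, \sqrt{2/(a+b)}\,]$ (where it equals $1-\lambda^2 a$) and strictly increasing on $[\sqrt{2/(a+b)}, \infty)$ (where it equals $\lambda^2 b - 1$), so the value $\frac{b-a}{b+a}$ is attained at exactly one point, namely $\lambda = \sqrt{2/(a+b)}$ — unless $a = b$, in which case $\delta = 0$ and the only $\lambda$ with $\delta_\Sigma(\lambda M)=0$ is again $\lambda = \sqrt{2/(a+b)} = 1/\sqrt{a}$. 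This establishes uniqueness in all cases.

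The only mild subtlety — and the step I would be most careful about — is justifying that $\sup_{x\in(\Sigma-\Sigma)\cap S(1)} |\lambda^2\|Mx\|_2^2 - 1|$ really equals $\max(|\lambda^2 b - 1|, |\lambda^2 a - 1|)$ when the $\inf$ and $\sup$ defining $a, b$ are not attained: the set $\{\lambda^2\|Mx\|_2^2 : x \in (\Sigma-\Sigma)\cap S(1)\}$ need not be closed, but its closure is contained in $[\lambda^2 a, \lambda^2 b]$ and contains both endpoints as limit points, so $\sup|\,t - 1|$ over the set equals $\sup|\,t-1|$ over $[\lambda^2 a, \lambda^2 b]$, which is the claimed maximum; the continuity of $t \mapsto |t-1|$ makes this passage to the closure harmless. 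A separate small point is that $\gamma_\Sigma(M) < \infty$ with the convention $a/0 = +\infty$ indeed forces $a > 0$, and also that $\Sigma$ non-trivial (so $(\Sigma-\Sigma)\cap S(1) \ne \emptyset$) is needed for $a, b$ to be well-defined; this is covered by the cone hypothesis together with the implicit non-triviality. Everything else is the elementary algebra above.
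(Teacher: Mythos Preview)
Your proof is correct and follows essentially the same approach as the paper: introduce the infimum $a$ and supremum $b$ of $\|Mx\|_2^2$ over the normalized secant set, choose $\lambda^2 = 2/(a+b)$ to symmetrize the bounds, and read off the algebraic identity $\gamma_\Sigma(M) = b/a = (1+\delta)/(1-\delta)$. Your treatment of uniqueness via the strict monotonicity of $\lambda \mapsto \max(|\lambda^2 b-1|,|\lambda^2 a-1|)$ on either side of the minimizer is in fact more explicit than the paper's one-line remark.
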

\begin{proof}
  See \Cref{sec:link_RIP_RC}.
\end{proof}

 Thus, for cones, RIP-based compliance measures have equivalent RC-based compliance measures such that
\begin{equation}\label{eq:TranslateRIPtoRC}
\gamma_{\Sigma}(R) = \frac{1+\delta_{\Sigma}(R)}{1-\delta_{\Sigma}(R)}\qquad\text{and}\qquad
\delta_{\Sigma}(R) = \frac{\gamma_{\Sigma}(R)-1}{\gamma_{\Sigma}(R)+1}.
\end{equation}
The sharp RIP constant~\eqref{eq:DefSharpRIPCst}, the necessary RIP constant~\eqref{eq:DefRIPCstNec} and the sufficient RIP constant $\delta^{\mathtt{suff}}_\Sigma(R)$ of \cite{Traonmilin_2016} are associated to

\begin{align}\label{eq:ComplianceRIPisRCsharp}
  \gamma^{\mathtt{sharp}}_\Sigma(R)
  &:= \inf_{M : \ker M \cap \sT_R(\Sigma) \neq \{0\} } \gamma_{\Sigma}(M)
 = \frac{1+\delta^{\mathtt{sharp}}_\Sigma(R)}{1-\delta^{\mathtt{sharp}}_\Sigma(R)},\\
\label{eq:ComplianceRIPisRCnec}
  \gamma^{\mathtt{nec}}_\Sigma(R)
  &:=\inf_{z \in  \sT_R(\Sigma) \setminus \{0\} } \gamma_{\Sigma}(I-\Pi_z)
 = \frac{1+\delta_\Sigma^{\mathtt{nec}}(R)}{1-\delta^{\mathtt{nec}}_\Sigma(R)},\\
\label{eq:ComplianceRIPisRCsuff}
  \gamma^{\mathtt{suff}}_\Sigma(R)
  &:= \frac{1+\delta_\Sigma^{\mathtt{suff}}(R)}{1-\delta^{\mathtt{suff}}_\Sigma(R)}.
\end{align}
We deduce from~\eqref{eq:IneqRIPConstants} the inequalities 
\begin{equation}\label{eq:IneqRCConstants}
\gamma_\Sigma^{\mathtt{suff}}(R) \leq  \gamma^{\mathtt{sharp}}_\Sigma(R)  \leq \gamma_\Sigma^{\mathtt{nec}}(R). 
\end{equation}

The following result shows that  $\gamma^{\mathtt{sharp}}_\Sigma(R)$ can be simplified.

\begin{proposition}[Explicit expression of $\gamma^{\mathtt{sharp}}_\Sigma(R)$.]\label{prop:explicit}
Consider a cone $\Sigma \subseteq \sH$.
Let $\mathcal{P}$ be the set of symmetric positive semi-definite (PSD) linear operators on $\sH$: $N \in \mathcal{P}$ if and only if $N^{H} = N$ and $N \succeq 0$. For $z \in \sH \setminus \{0\}$ define
\begin{equation}
f^{RC}_{\Sigma}(z) := \inf_{N \in \mathcal{P}: \ker N = \tspan(z) } \gamma_{\Sigma}(N)
\end{equation}
and for any non-empty set $\mathcal{T} \subseteq \sH$ such that $\mathcal{T} \neq \{0\}$ define
\begin{equation}
f^{RC}_{\Sigma}(\mathcal{T}) := \inf_{z \in \mathcal{T}\backslash\{0\}} f^{RC}_{\Sigma}(z).
\end{equation}
We have
\begin{equation}
\inf_{M: \ker M \cap \mathcal{T} \neq \{0\}} \gamma_{\Sigma}(M) = 
f^{RC}_{\Sigma}(\mathcal{T}).
\end{equation}
\end{proposition}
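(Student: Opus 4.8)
The plan is to prove the equality by establishing the two inequalities separately, the key point being that the infimum over all measurement operators $M$ whose kernel meets $\mathcal{T}$ nontrivially can be reduced, without loss, to operators that are PSD with one-dimensional kernel. First I would observe that for any $M$ with $\ker M \cap \mathcal{T} \neq \{0\}$, one may replace $M$ by $N := M^H M$: this is PSD, and for every $x$ we have $\|Mx\|_2^2 = \langle Nx, x\rangle$, so $\gamma_\Sigma(M)$ depends on $M$ only through $N$ (indeed $\gamma_\Sigma(M) = \gamma_\Sigma(N^{1/2})$, and more directly $\sup/\inf$ of $\|Mx\|_2^2$ over $(\Sigma-\Sigma)\cap S(1)$ equals $\sup/\inf$ of $\langle Nx,x\rangle$). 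Moreover $\ker N = \ker M$, so the constraint $\ker M \cap \mathcal{T} \neq \{0\}$ is unchanged. Hence the left-hand infimum equals $\inf\{\gamma_\Sigma(N) : N \in \mathcal{P},\ \ker N \cap \mathcal{T} \neq \{0\}\}$.

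Next I would reduce the kernel of $N$ to a single line. Pick $z \in \ker N \cap \mathcal{T}$, $z \neq 0$, and compare $N$ with the operator $N' := N + P$, where $P$ is the orthogonal projection onto $(\tspan z)^\perp \cap \ker N$, scaled appropriately — more simply, let $N' := N + \varepsilon\,(I - \Pi_z)\Pi_{\ker N}$ for small $\varepsilon>0$, or even cleaner, write $N = \sum_i \mu_i v_i v_i^H$ in a spectral decomposition, keep the positive eigenvalues, and add back all the null directions except $z$ with a tiny eigenvalue; call the result $N'$. Then $\ker N' = \tspan(z)$, so $N'$ is admissible in $f^{RC}_\Sigma(z)$, and since $N' \succeq N$ we get $\langle N' x, x\rangle \geq \langle N x, x\rangle$ pointwise; one checks this forces $\gamma_\Sigma(N') \le \gamma_\Sigma(N)$ — actually this monotonicity needs a small argument since $\gamma$ is a ratio, so I would instead argue directly: for $\varepsilon$ small the numerator $\sup_{(\Sigma-\Sigma)\cap S(1)}\langle N' x,x\rangle \to \sup \langle Nx,x\rangle$ and the denominator $\inf \langle N'x,x\rangle \to \inf\langle Nx,x\rangle$, and in fact the denominator can only increase while we keep $z$ in the kernel so that $\inf$ stays $0$ if it was $0$; taking $\varepsilon \to 0$ yields $f^{RC}_\Sigma(z) \le \liminf \gamma_\Sigma(N') \le \gamma_\Sigma(N)$. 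Since $z \in \mathcal{T}$, this gives $f^{RC}_\Sigma(\mathcal{T}) \le \inf_{N \in \mathcal{P},\ \ker N \cap \mathcal{T} \neq \{0\}} \gamma_\Sigma(N)$, hence $f^{RC}_\Sigma(\mathcal{T}) \le \inf_{M : \ker M \cap \mathcal{T} \neq \{0\}} \gamma_\Sigma(M)$.

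For the reverse inequality I would note that it is essentially immediate: any $N \in \mathcal{P}$ with $\ker N = \tspan(z)$ for some $z \in \mathcal{T}\setminus\{0\}$ satisfies $\ker N \cap \mathcal{T} \neq \{0\}$, and $N$ itself is a valid measurement operator (a linear map $\sH \to \sH \subseteq \bC^m$ after identification, or one takes $M = N^{1/2}$ which has the same kernel and same $\gamma$). Therefore $\inf_{M : \ker M \cap \mathcal{T} \neq \{0\}} \gamma_\Sigma(M) \le \inf_{z \in \mathcal{T}\setminus\{0\}} \inf_{N \in \mathcal{P} : \ker N = \tspan(z)} \gamma_\Sigma(N) = f^{RC}_\Sigma(\mathcal{T})$. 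Combining the two inequalities gives the claimed identity.

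The main obstacle I anticipate is the perturbation step in the second paragraph: one must make sure that shrinking $\ker N$ to a line does not decrease $\gamma_\Sigma(N)$ (so that the reduced problem still gives a valid lower bound), and because $\gamma_\Sigma$ is a quotient of a $\sup$ by an $\inf$ over the possibly non-compact set $(\Sigma-\Sigma)\cap S(1)$, the continuity/monotonicity argument requires care — in particular one should handle the case $\inf_{(\Sigma-\Sigma)\cap S(1)}\langle Nx,x\rangle = 0$ (so $\gamma_\Sigma(N) = \infty$) versus $>0$ separately, and possibly pass to the closure of $\Sigma - \Sigma$ to ensure the supremum and infimum are attained. Everything else is bookkeeping with $\|Mx\|_2^2 = \langle M^H M x, x\rangle$ and the definition of $\gamma_\Sigma$.
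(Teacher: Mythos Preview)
Your proposal is correct and matches the paper's approach: replace $M$ by the PSD operator $(M^{H}M)^{1/2}$ (the paper does this via the SVD, writing $N=\sum_i \sigma_i v_i v_i^{H}$), then add $\varepsilon$ times a projection onto $\ker N \ominus \tspan(z)$ and bound $\gamma_\Sigma(N_\varepsilon) \leq \gamma_\Sigma(M) + \varepsilon^{2}/\inf_{(\Sigma-\Sigma)\cap S(1)}\|Mx\|_2^{2}$ directly from the sandwich $\|Mx\|_2^{2} \leq \|N_\varepsilon x\|_2^{2} \leq \|Mx\|_2^{2}+\varepsilon^{2}$. One minor slip: in the perturbation step you write $\langle N'x,x\rangle$ where you need $\|N'x\|_2^{2}$ (since $\gamma_\Sigma$ is defined through the latter), but the computation goes through identically, and the monotonicity worry you flag about the ratio is resolved precisely by this sandwich bound, with the infinite-$\gamma_\Sigma(M)$ case handled separately as you anticipate.
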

\begin{proof}
This is an immediate consequence of \Cref{lem:best_op_nec_RIP_dim1} in \Cref{sec:link_RIP_RC}.

\end{proof}

Using $\sT = \sT_{\Sigma}(R)$, the sharp RC (or RIP) constant is the smallest RC constant of positive symmetric definite PSD operators with kernels of dimension 1 for which recovery of $\Sigma$ fails:
\begin{equation}
 \gamma^{\mathtt{sharp}}_\Sigma(R)  = f^{RC}_{\Sigma}( \sT_R(\Sigma)).
\end{equation}
Since $I-\Pi_{z} \in \mathcal{P}$ for any nonzero $z$, we have $f^{RC}_{\Sigma}(z) \leq \gamma_{\Sigma}(I-\Pi_{z})$ hence we recover the inequality
\[
 \gamma^{\mathtt{sharp}}_\Sigma(R) \leq  \inf_{z \in \sT_R(\Sigma) \setminus \{0\}} \gamma_{\Sigma}(I-\Pi_z ) = \gamma^{\mathtt{nec}}_{\Sigma}(R),
 \]
however it is an open question to determine whether this is an equality in particular cases or in general. The constant $\gamma^{\mathtt{nec}}_{\Sigma}$ is already  sharp in the following sense: for sparse recovery with the $\ell^1$-norms, as well as for low-rank recovery with the nuclear norm, it is the biggest possible RIP constant ($\delta_\Sigma^{\mathtt{suff}}(R) =\frac{1}{\sqrt{2}}$) that guarantees uniform recovery with  $\|\cdot\|_1$ (respectively with the nuclear norm) for {\em all} sparsities  $k$  (for all rank levels $r$ respectively)~\cite{Davies_2009}.

Similarly, to the compliance measures from \Cref{sec:comp_general}, we can deduce an optimal regularizer after an isometric linear  transformation of the model.

\begin{lemma}[Invariance of $\gamma^{\mathtt{sharp}}_\Sigma(R)$ under linear isometries.]\label{lem:transf_sigma_RC}
Consider a cone $\Sigma \subseteq \sH$, an arbitrary regularizer  $R$ such that $\Sigma \subseteq \dom(R)$, and a (linear) isometry $F$.
We have
\begin{equation}
\gamma_{F\Sigma}^{\mathtt{sharp}}(R \circ F^{-1}) =   \gamma_{\Sigma}^{\mathtt{sharp}}(R).
\end{equation}
Hence, for any class $\sC'$ of regularizers,
 \begin{equation}
  R^* \in \arg \max_{R \in \sC'} \gamma_{\Sigma}^{\mathtt{sharp}}(R)  
  \Leftrightarrow
  R^*\circ F^{-1}  \in \arg \max_{R' \in \sC'} \gamma_{F\Sigma}^{\mathtt{sharp}}(R') .
\end{equation}
\end{lemma}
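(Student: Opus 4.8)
The statement to prove is \Cref{lem:transf_sigma_RC}: for a cone $\Sigma$ with $\Sigma\subseteq\dom(R)$ and a linear isometry $F$, we have $\gamma_{F\Sigma}^{\mathtt{sharp}}(R\circ F^{-1})=\gamma_{\Sigma}^{\mathtt{sharp}}(R)$, and the $\arg\max$ equivalence follows. The natural route is to mimic the proof of \Cref{cor:InvariantCompliance}: first establish a change-of-variables identity on descent cones, then push it through the definition~\eqref{eq:ComplianceRIPisRCsharp} of $\gamma^{\mathtt{sharp}}$, using that $F$ is an isometry at the one place where the ``metric'' enters (namely $\gamma_\Sigma(M)$, which involves $S(1)$ and $\|\cdot\|_2$).

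First I would invoke \Cref{lem:transf_sigma_gen} with the invertible linear map $F$ and the regularizer $R\circ F^{-1}$: it gives $\sT_{R\circ F^{-1}}(F\Sigma)=F\big(\sT_{(R\circ F^{-1})\circ F}(\Sigma)\big)=F\big(\sT_R(\Sigma)\big)$. So the descent cone of the transported regularizer at the transported model is just the $F$-image of the original descent cone. Next I would analyze the infimum defining $\gamma^{\mathtt{sharp}}_{F\Sigma}(R\circ F^{-1})=\inf_{M:\ker M\cap F\sT_R(\Sigma)\neq\{0\}}\gamma_{F\Sigma}(M)$. The substitution $M\mapsto MF$ is a bijection of linear operators on $\sH$, and $\ker(MF)\cap\sT_R(\Sigma)\neq\{0\}$ iff $\ker M\cap F\sT_R(\Sigma)\neq\{0\}$ (since $F$ is invertible and maps $\sT_R(\Sigma)$ onto $F\sT_R(\Sigma)$). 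Hence the index set of the infimum is transported correctly, and it remains to check $\gamma_{F\Sigma}(M)=\gamma_{\Sigma}(MF)$.

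For that last identity I would unfold the definition of restricted conditioning: $\gamma_{F\Sigma}(M)$ is a ratio of $\sup$ and $\inf$ of $\|Mx\|_2^2$ over $x\in(F\Sigma-F\Sigma)\cap S(1)$. Since $F$ is linear, $F\Sigma-F\Sigma=F(\Sigma-\Sigma)$; since $F$ is an isometry, $x\in S(1)$ iff $F^{-1}x\in S(1)$, so writing $x=Fu$ the constraint $x\in F(\Sigma-\Sigma)\cap S(1)$ is equivalent to $u\in(\Sigma-\Sigma)\cap S(1)$, and $\|Mx\|_2=\|MFu\|_2$. Thus both the numerator and denominator of $\gamma_{F\Sigma}(M)$ equal those of $\gamma_\Sigma(MF)$, giving $\gamma_{F\Sigma}(M)=\gamma_\Sigma(MF)$ (with the convention $a/0=+\infty$ respected on both sides). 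Combining the three pieces yields $\gamma^{\mathtt{sharp}}_{F\Sigma}(R\circ F^{-1})=\inf_{N:\ker N\cap\sT_R(\Sigma)\neq\{0\}}\gamma_\Sigma(N)=\gamma^{\mathtt{sharp}}_\Sigma(R)$, where $N=MF$. Finally, the $\arg\max$ equivalence is a formal consequence: the map $R'\mapsto R'\circ F$ is a bijection of $\sC'$ onto itself (when $\sC'$ is, e.g., one of the classes considered, stable under isometric reparametrization — or more simply, the equivalence is read as a bijection between the two $\arg\max$ sets via $R^*\leftrightarrow R^*\circ F^{-1}$), and $\gamma^{\mathtt{sharp}}_\Sigma(R)=\gamma^{\mathtt{sharp}}_{F\Sigma}(R\circ F^{-1})$ shows the two optimization problems have identical values and correspondingly matched maximizers.

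The only mildly delicate point is bookkeeping with the $a/0=+\infty$ convention and the possibility that $\Sigma$ is trivial or that $(\Sigma-\Sigma)\cap S(1)$ behaves degenerately; but since $F$ is a bijective isometry this is transported verbatim and poses no real obstacle. I do not anticipate a hard part — the lemma is a change-of-variables statement, and the proof is a one-line computation once \Cref{lem:transf_sigma_gen} and the isometry-invariance of $\|\cdot\|_2$ and $S(1)$ are combined.
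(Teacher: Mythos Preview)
Your proposal is correct and follows essentially the same route as the paper's proof: both use \Cref{lem:transf_sigma_gen} to obtain $\sT_{R\circ F^{-1}}(F\Sigma)=F\sT_R(\Sigma)$, then perform the change of variable $M\mapsto MF$ in the infimum defining $\gamma^{\mathtt{sharp}}$, using that $F$ is an isometry so that $F^{-1}S(1)=S(1)$ and hence $\gamma_{F\Sigma}(M)=\gamma_\Sigma(MF)$. The paper packages the latter computation through an auxiliary function $G(\Sigma,E,M)$, but the substance is identical; your remark that the $\arg\max$ equivalence tacitly assumes $\sC'$ is stable under the reparametrization $R\mapsto R\circ F^{-1}$ is a fair observation left implicit in the paper.
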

\begin{proof}
 See \Cref{sec:link_RIP_RC}.
\end{proof}

\subsection{Compliance measures based on necessary RC conditions}\label{sec:nec_RIP}
In this section, we characterize the compliance measure
\begin{equation}
\gamma^{\mathtt{nec}}_\Sigma(R) =  \inf_{z \in \sT_R(\Sigma) \setminus \{0\}} \gamma_{\Sigma}(I-\Pi_z ).
\end{equation}

 To show optimality of the $\ell^1$-norm for sparse recovery and of the nuclear norm for low-rank recovery, we will use the following characterization of 
$ \gamma^{\mathtt{nec}}_\Sigma(R) $
when $\Sigma$ is a cone.

\begin{lemma}[Characterization of $\gamma^{\mathtt{nec}}_\Sigma(R)$ for a cone.]\label{lem:expr_RC_sparse}
Consider a cone $\Sigma \subseteq \sH$ such that $ \Sigma \neq \{0\} $
 and $R$ an arbitrary regularizer  such that $\Sigma \subseteq \dom(R)$. 

\begin{enumerate}
\item If there is $x \in  \sH$ such that $\Sigma \subseteq \tspan(x)$, then 
\begin{equation}
\gamma^{\mathtt{nec}}_\Sigma(R)=
\begin{cases}
+\infty \; &\text{if}\; \sT_R(\Sigma) \subseteq \Sigma, \\
1 \; & \text{otherwise.} \\
\end{cases}
\end{equation}
\item If $\Sigma \subsetneq \tspan(x)$ for every $x \in \sH$, then
\begin{equation}\label{eq:TranslateRCtoB}
\gamma^{\mathtt{nec}}_\Sigma(R)=
\frac{ 1 }{ 1- \inf_{ z \in \sT_R(\Sigma) \setminus \{0\}} \sup_{x\in (\Sigma-\Sigma)\cap S(1)}\frac{\ls x,z\rs^2}{\|z\|_\sH^2}}.
\end{equation}
\end{enumerate}
\end{lemma}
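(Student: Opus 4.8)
The statement decomposes into the degenerate case where $\Sigma$ spans a line and the generic case, so I would treat these separately. The starting point in both cases is the identity $\gamma^{\mathtt{nec}}_\Sigma(R) = \inf_{z \in \sT_R(\Sigma)\setminus\{0\}} \gamma_\Sigma(I - \Pi_z)$, together with the observation that $\gamma_\Sigma(I-\Pi_z)$ only depends on the direction of $z$, so I may as well take $\|z\|_\sH = 1$. The key computation is to evaluate $\gamma_\Sigma(I-\Pi_z)$ explicitly. Writing $\Pi_z$ for the orthogonal projection onto $\tspan(z)$ with $\|z\|_\sH=1$, for any $x$ we have $\|(I-\Pi_z)x\|_2^2 = \|x\|_\sH^2 - \ls x, z\rs^2$, hence on the secant sphere $(\Sigma-\Sigma)\cap S(1)$ this equals $1 - \ls x, z\rs^2$. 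Therefore
\begin{equation*}
\gamma_\Sigma(I - \Pi_z) = \frac{\sup_{x \in (\Sigma-\Sigma)\cap S(1)}\left(1 - \ls x, z\rs^2\right)}{\inf_{x \in (\Sigma-\Sigma)\cap S(1)}\left(1 - \ls x, z\rs^2\right)} = \frac{1 - \inf_{x}\ls x, z\rs^2}{1 - \sup_{x}\ls x, z\rs^2},
\end{equation*}
where the suprema and infima are over $x \in (\Sigma-\Sigma)\cap S(1)$.

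\textbf{The generic case.} When $\Sigma \subsetneq \tspan(x)$ for every $x$, the secant set $\Sigma - \Sigma$ is not contained in any line, so it contains two linearly independent vectors; I would argue that then for any $z \in S(1)$ there exists $x \in (\Sigma-\Sigma)\cap S(1)$ with $\ls x, z\rs = 0$ (indeed, since $\Sigma-\Sigma$ is a cone that is not a line, it cannot be contained in the open half-spaces on either side of $z^\perp$, and by a connectedness/intermediate-value argument on $(\Sigma-\Sigma)\cap S(1)$ — using that $\Sigma-\Sigma$ is symmetric and $\Sigma$ is a cone — some secant direction is orthogonal to $z$; alternatively, pick two independent secant directions and take an appropriate combination). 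Hence $\inf_{x}\ls x,z\rs^2 = 0$ for every relevant $z$, and the formula for $\gamma_\Sigma(I-\Pi_z)$ collapses to $1/(1 - \sup_{x}\ls x, z\rs^2)$. Taking the infimum over $z \in \sT_R(\Sigma)\setminus\{0\}$ (normalized) and noting that the map $t \mapsto 1/(1-t)$ is increasing on $[0,1)$ so the infimum over $z$ moves inside as an infimum of $\sup_x \ls x,z\rs^2/\|z\|_\sH^2$, I obtain exactly~\eqref{eq:TranslateRCtoB}. I should also check the borderline subcase where $\sup_x \ls x, z\rs^2 = 1$ for some admissible $z$ (forcing a zero denominator, i.e. $\gamma = +\infty$ for that $z$, which is consistent since the convention $a/0 = +\infty$ was made, but then the infimum over $z$ is still governed by the other directions) — this needs a short remark rather than real work.

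\textbf{The degenerate case.} If $\Sigma \subseteq \tspan(x)$ for a single $x$, then $\Sigma - \Sigma \subseteq \tspan(x)$ as well, and since $\Sigma \neq \{0\}$ we get $(\Sigma-\Sigma)\cap S(1) = \{\pm x/\|x\|_\sH\}$ (a single unit direction up to sign). For $z \in \sT_R(\Sigma)\setminus\{0\}$ there are two possibilities. If $z \in \tspan(x)$, then $(I-\Pi_z)x = 0$, so both numerator and denominator of $\gamma_\Sigma(I-\Pi_z)$ vanish, giving $\gamma_\Sigma(I-\Pi_z) = 0/0 = +\infty$ by convention — actually here I should be careful: $\sup$ and $\inf$ over the secant sphere both equal $\|(I-\Pi_z)x/\|x\|_\sH\|_2^2 = 0$, so $\gamma = +\infty$. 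If $z \notin \tspan(x)$, then $\ls x, z\rs^2/(\|x\|_\sH^2\|z\|_\sH^2) < 1$ strictly, so $\gamma_\Sigma(I-\Pi_z)$ equals the single ratio $1$ (numerator and denominator coincide since there is only one secant direction up to sign). Hence $\gamma^{\mathtt{nec}}_\Sigma(R) = +\infty$ precisely when every $z \in \sT_R(\Sigma)\setminus\{0\}$ lies in $\tspan(x) \supseteq \Sigma$ — i.e. when $\sT_R(\Sigma)\subseteq \tspan(x)$; since $\Sigma \subseteq \tspan(x)$ and $\tspan(x)$ is $1$-dimensional this is the same as $\sT_R(\Sigma) \subseteq \Sigma$ whenever $\Sigma$ itself spans the line (one has to note $\Sigma$ may be a half-line, but a descent cone of any function is symmetric, so $\sT_R(\Sigma)\subseteq\Sigma$ forces $\Sigma = \tspan(x)$, and conversely), and equals $1$ otherwise. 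Matching these two outcomes with the case split in the Lemma statement completes the proof.

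\textbf{Main obstacle.} The routine part is the projection computation; the delicate point is the orthogonality claim in the generic case — namely that $\inf_{x\in(\Sigma-\Sigma)\cap S(1)}\ls x,z\rs^2 = 0$ for every $z$, which is what makes the numerator's inf drop out and yields the clean formula. This requires using that $\Sigma-\Sigma$ is a cone spanning at least a $2$-plane together with a connectedness argument on the sphere; one must make sure $(\Sigma-\Sigma)\cap S(1)$ is rich enough (it need not be closed, but its closure is a symmetric cone intersected with the sphere, and that suffices for the infimum to be attained in the limit, which is all that is needed). I would also double-check the matching of the degenerate-case condition "$\sT_R(\Sigma)\subseteq\Sigma$" against "$\sT_R(\Sigma)\subseteq\tspan(x)$", handling the half-line possibility via symmetry of descent cones.
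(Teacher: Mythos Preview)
Your overall structure matches the paper's exactly: compute $\|(I-\Pi_z)x\|_2^2 = 1 - \ls x,z\rs^2/\|z\|_\sH^2$ on $S(1)$, obtain $\gamma_\Sigma(I-\Pi_z) = (1-\inf_x)/(1-\sup_x)$, then split into the degenerate and generic cases. Your treatment of the degenerate case is essentially the same as the paper's.

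For the generic case you correctly isolate the crux --- showing $\inf_{x\in(\Sigma-\Sigma)\cap S(1)}\ls x,z\rs^2=0$ for every nonzero $z$ --- but neither of your two proposed routes works as written. The connectedness/IVT argument fails because $(\Sigma-\Sigma)\cap S(1)$ need not be connected: take $\Sigma=\bR_+ e_1\cup\bR_+ e_2\subset\bR^2$, so $\Sigma-\Sigma=\{(a,b):ab\le 0\}$, whose intersection with $S(1)$ is two disjoint closed arcs. Your alternative (``pick two independent secant directions and take an appropriate combination'') is also not enough as stated, because a linear combination of elements of $\Sigma-\Sigma$ generally does \emph{not} lie in $\Sigma-\Sigma$.

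The paper's construction is sharper and uses the cone structure of $\Sigma$ itself, not of $\Sigma-\Sigma$: pick $x_1\in\Sigma\setminus\{0\}$ and, by the non-line hypothesis, $x_2\in\Sigma\setminus\tspan(x_1)$. If $\ls x_1,z\rs=0$ take $x=x_1$; otherwise set $\lambda=\ls x_2,z\rs/\ls x_1,z\rs$ and $x=\lambda x_1-x_2$, which is nonzero and satisfies $\ls x,z\rs=0$. The point is that because $x_1,x_2\in\Sigma$ (not merely $\Sigma-\Sigma$) and $\Sigma$ is a cone, $\lambda x_1\in\Sigma$ whenever $\lambda\ge 0$, whence $x=\lambda x_1-x_2\in\Sigma-\Sigma$ directly. (When $\lambda<0$ one fixes signs; this is immediate when $\Sigma$ is homogeneous, as in all applications of the paper.) So the idea you are missing is to exploit positive homogeneity of $\Sigma$ to keep the orthogonal vector inside $\Sigma-\Sigma$, rather than combining arbitrary secant directions or relying on topology of the secant sphere.
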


\begin{proof}
See \Cref{sec:proofuos}.
\end{proof}

To go further, we exploit two assumptions related to orthogonal projections on certain sets.
\begin{definition}[Orthogonal projection.]
For any set $E$ we define, for all $z \in \sH$
 \begin{equation}
  P_E(z) =  \arg\min_{y \in E} \|z-y\|_\sH .
 \end{equation}
 This is a set-valued operator is called the \textbf{orthogonal projection}, and $P_{E}(z)$ may be empty if the minimum is not achieved. 
\end{definition}
Some assumptions on $E$ ensure that $P_{E}(z)$ is not empty for any $z$.
\begin{lemma}[Existence of the projection.]\label{lem:ProjectionExists}
Consider a union of subspaces $E \subseteq \sH$, and assume that $E \cap S(1)$ is compact. Then for every $z \in \sH$, $P_{E}(z) \neq \emptyset$. Moreover, for every $x,x' \in P_E(z)$ we have $\|z-x\|_{\sH}^{2} = \|z-x'\|_{\sH}^{2}$ and $\langle z,x\rangle = \|x\|_{\sH}^{2} = \|x'\|_{\sH}^{2}  = \langle z,x'\rangle$, hence the notations $\|z-P_{E}(z)\|_{\sH}^{2}$, $\langle z,P_{E}(z)\rangle$ and $\|P_{E}(z)\|_{\sH}^{2}$ are unambiguous.  We  also have $\|z\|_{\sH}^{2} = \|z-P_{E}(z)\|_{\sH}^{2}+\|P_{E}(z)\|_{\sH}^{2}$ and 
\[
\langle z,P_{E}(z)\rangle = \|P_{E}(z)\|_{\sH}^{2} = \sup_{x \in E \cap S(1)} |\langle x,z\rangle|^{2}.
\]
\end{lemma}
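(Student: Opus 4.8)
The plan is to exploit the structure of $E$ as a union of subspaces with $E\cap S(1)$ compact. First I would establish existence of a minimizer: fix $z\in\sH$; the function $y\mapsto\|z-y\|_\sH$ is continuous, and although $E$ itself need not be compact, I can reduce to a compact set. Indeed $0\in E$ (each subspace contains $0$, or if $E=\{0\}$ the claim is trivial), so $\inf_{y\in E}\|z-y\|_\sH\le\|z\|_\sH$, and any near-minimizer lies in the closed ball $\bar B(z,\|z\|_\sH)$; since $E$ is a union of subspaces and $E\cap S(1)$ is compact, $E$ is closed (the cone generated by a compact set not containing any line issue — here $E=\bRp\cdot(E\cap S(1))\cup\{0\}$ is closed because $E\cap S(1)$ is compact), hence $E\cap\bar B(z,\|z\|_\sH)$ is compact and the continuous function attains its minimum there. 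This gives $P_E(z)\neq\emptyset$.

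Next I would prove the orthogonality/Pythagoras relations. Let $x\in P_E(z)$ and let $V$ be a subspace of the union containing $x$ (such $V$ exists since $E=\bigcup_i V_i$). Then $x$ minimizes $\|z-\cdot\|_\sH$ over $V$ as well (since $V\subseteq E$), so $x=\Pi_V z$ is the ordinary orthogonal projection of $z$ onto $V$. Classical Hilbert space facts then give $\langle z-x,x\rangle=0$, i.e. $\langle z,x\rangle=\|x\|_\sH^2$, and $\|z\|_\sH^2=\|z-x\|_\sH^2+\|x\|_\sH^2$. If $x'\in P_E(z)$ is another minimizer, then $\|z-x\|_\sH=\|z-x'\|_\sH$ by definition of argmin, and combining with the two Pythagoras identities yields $\|x\|_\sH^2=\|z\|_\sH^2-\|z-x\|_\sH^2=\|z\|_\sH^2-\|z-x'\|_\sH^2=\|x'\|_\sH^2$, whence also $\langle z,x\rangle=\langle z,x'\rangle$. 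This justifies that the notations $\|z-P_E(z)\|_\sH^2$, $\langle z,P_E(z)\rangle$, $\|P_E(z)\|_\sH^2$ are unambiguous.

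Finally I would identify $\|P_E(z)\|_\sH^2$ with $\sup_{x\in E\cap S(1)}|\langle x,z\rangle|^2$. For the inequality $\ge$: for any $x\in E\cap S(1)$, letting $V$ be a subspace in the union containing $x$, Cauchy–Schwarz on $V$ (or directly) gives $|\langle x,z\rangle|^2=|\langle x,\Pi_V z\rangle|^2\le\|\Pi_V z\|_\sH^2\le\|P_E(z)\|_\sH^2$, the last step because $P_E(z)$ realizes the \emph{smallest} distance, hence $\|z-P_E(z)\|_\sH\le\|z-\Pi_V z\|_\sH$, so by Pythagoras $\|\Pi_V z\|_\sH\le\|P_E(z)\|_\sH$. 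For the reverse inequality: if $P_E(z)=0$ both sides are $0$ (taking $V$ any subspace through a point of $E\cap S(1)$, $\|\Pi_V z\|_\sH\le\|P_E(z)\|_\sH=0$, so $\langle x,z\rangle=0$ for all $x\in E$); otherwise set $x_0:=P_E(z)/\|P_E(z)\|_\sH\in E\cap S(1)$ and compute $|\langle x_0,z\rangle|^2=\langle z,P_E(z)\rangle^2/\|P_E(z)\|_\sH^2=\|P_E(z)\|_\sH^4/\|P_E(z)\|_\sH^2=\|P_E(z)\|_\sH^2$, using $\langle z,P_E(z)\rangle=\|P_E(z)\|_\sH^2$. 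Hence the supremum is attained and equals $\|P_E(z)\|_\sH^2$.

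The main obstacle is the first step, namely arguing cleanly that $E$ is closed (so that the minimum is actually attained): this is exactly where the hypothesis that $E\cap S(1)$ is compact is used, and one must be careful that a union of infinitely many subspaces need not be closed in general — it is the compactness of the slice $E\cap S(1)$ that rescues us, via $E=\{0\}\cup\bRp\cdot(E\cap S(1))$ and a standard sequential-compactness argument. Everything after that is routine Hilbert-space geometry.
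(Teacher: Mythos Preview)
Your proof is correct, but it takes a different route from the paper's. The paper does \emph{not} argue that $E$ is closed and then minimize over a compact ball; instead it works the other way around. It first uses compactness of $E\cap S(1)$ to pick $\tilde x\in E\cap S(1)$ maximizing $|\langle\tilde x,z\rangle|^{2}$, then sets $x:=\langle\tilde x,z\rangle\,\tilde x\in E$ (using homogeneity of $E$), and shows directly that for any $y\in E\setminus\{0\}$, writing $\tilde y=y/\|y\|_\sH$, one has $\|z-y\|_\sH^{2}\geq\|z-\langle z,\tilde y\rangle\tilde y\|_\sH^{2}=\|z\|_\sH^{2}-|\langle z,\tilde y\rangle|^{2}\geq\|z\|_\sH^{2}-|\langle z,\tilde x\rangle|^{2}=\|z-x\|_\sH^{2}$. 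Thus existence and the identification with the supremum come out simultaneously, and the remaining identities follow from the equality cases in this chain.

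Your approach---closedness of $E$ via $E=\{0\}\cup\bRp\cdot(E\cap S(1))$, then standard orthogonal-projection facts on the containing subspace $V$, then the $\sup$ identity via $\|\Pi_V z\|_\sH\le\|P_E(z)\|_\sH$---is arguably more modular and leans on familiar Hilbert-space geometry, at the cost of an extra topological step. The paper's approach is shorter and never needs to prove $E$ is closed, since it builds the minimizer explicitly from the maximizer on the compact slice; this also makes the link $\|P_E(z)\|_\sH^{2}=\sup_{x\in E\cap S(1)}|\langle x,z\rangle|^{2}$ immediate rather than a separate two-sided verification.
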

\begin{proof}
 See \Cref{sec:proofuos}.
\end{proof}

Even if $E$ is a union of subspaces and $E \cap S(1)$ is compact, $P_{E}(z)$ may not always be a singleton. For example, consider $E$ the set of $k$-sparse vectors and $z$ the vector with all entries equal to one.

Thanks to \Cref{lem:ProjectionExists}, we have the following characterization of the maximizers of $\delta_\Sigma^{\mathtt{nec}}$.

\begin{corollary}[Characterization of $\delta_\Sigma^{\mathtt{nec}}$.]\label{cor:RCnecUoS}
Consider a cone $\Sigma \subset \sH$ and assume that $\Sigma-\Sigma$ is a union of subspaces, $(\Sigma-\Sigma) \cap S(1)$ is compact, and  $\Sigma \neq \tspan(x)$ for each $x \in \Sigma$.
For any class $\sC'$
 of regularizers such that $\Sigma \subseteq \mathtt{dom}(R)$ for every $R \in \sC'$, the set of maximizers of $\delta_\Sigma^{\mathtt{nec}}(\cdot)$ satisfies (whether this set of maximizers is empty)
 \begin{equation}\label{eq:DefBSigmaRGeneric}
\arg \max_{R\in\sC'} \delta_\Sigma^{\mathtt{nec}}(R)=   \arg\min_{R \in \sC'} B_{\Sigma}(R)\qquad \text{with}\qquad 
B_\Sigma(R) := \underset{z\in \sT_R(\Sigma) \setminus \{0\} }{\sup}  \frac{\|z-P_{\Sigma-\Sigma}(z)\|_\sH^2}{\|P_{\Sigma-\Sigma}(z)\|_\sH^2}.
\end{equation}
For any  regularizer $R$ we have
\begin{equation}
\delta_{\Sigma}^{\mathtt{nec}}(R) = (1+2B_{\Sigma}(R))^{-1}.
\end{equation}
\end{corollary}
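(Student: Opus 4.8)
The plan is to combine Lemma~\ref{lem:expr_RC_sparse} (case 2, which applies here since the hypotheses on $\Sigma$ rule out $\Sigma \subseteq \tspan(x)$) with the projection identities of Lemma~\ref{lem:ProjectionExists}, and then translate between the RC constant $\gamma^{\mathtt{nec}}_\Sigma$ and the RIP constant $\delta^{\mathtt{nec}}_\Sigma$ via~\eqref{eq:TranslateRIPtoRC}. First, for any fixed regularizer $R$ with $\Sigma \subseteq \dom(R)$, I would apply case 2 of Lemma~\ref{lem:expr_RC_sparse} to rewrite
\[
\gamma^{\mathtt{nec}}_\Sigma(R) = \frac{1}{1 - \inf_{z \in \sT_R(\Sigma)\setminus\{0\}} \sup_{x \in (\Sigma-\Sigma)\cap S(1)} \frac{\ls x,z\rs^2}{\|z\|_\sH^2}}.
\]
Since $\Sigma-\Sigma$ is a union of subspaces with $(\Sigma-\Sigma)\cap S(1)$ compact, Lemma~\ref{lem:ProjectionExists} applies with $E = \Sigma-\Sigma$, giving $\sup_{x\in(\Sigma-\Sigma)\cap S(1)}|\ls x,z\rs|^2 = \|P_{\Sigma-\Sigma}(z)\|_\sH^2$ and $\|z\|_\sH^2 = \|z - P_{\Sigma-\Sigma}(z)\|_\sH^2 + \|P_{\Sigma-\Sigma}(z)\|_\sH^2$. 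Substituting, the inner quantity $\sup_x \ls x,z\rs^2/\|z\|_\sH^2$ becomes $\|P_{\Sigma-\Sigma}(z)\|_\sH^2 / (\|z-P_{\Sigma-\Sigma}(z)\|_\sH^2 + \|P_{\Sigma-\Sigma}(z)\|_\sH^2)$.

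Next I would show that taking the infimum over $z$ of this ratio is equivalent to minimizing $B_\Sigma(R)$. Writing $a = \|z - P_{\Sigma-\Sigma}(z)\|_\sH^2 \geq 0$ and $b = \|P_{\Sigma-\Sigma}(z)\|_\sH^2$ (note $b > 0$: if $b=0$ then $z$ is orthogonal to $\Sigma-\Sigma$, but $P_{\Sigma-\Sigma}(z)=0$ contributes the value $0$ to the supremum over $z$ defining $B_\Sigma$, which is the minimal possible contribution, and one should check it corresponds to $\gamma=1$, i.e. the non-recoverable degenerate case — actually here the hypothesis $\Sigma \neq \tspan(x)$ and $\Sigma \subseteq \dom(R)$ together with $\sT_R(\Sigma) \supseteq \Sigma - \Sigma \neq\{0\}$ ensures there is always some $z$ with $b>0$, so the relevant infimum is attained on such $z$), the inner ratio is $b/(a+b) = 1/(1 + a/b)$, which is a strictly decreasing function of $a/b$. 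Hence $\inf_z b/(a+b) = 1/(1 + \sup_z a/b) = 1/(1 + B_\Sigma(R))$. Plugging back,
\[
\gamma^{\mathtt{nec}}_\Sigma(R) = \frac{1}{1 - \frac{1}{1+B_\Sigma(R)}} = \frac{1 + B_\Sigma(R)}{B_\Sigma(R)} = 1 + \frac{1}{B_\Sigma(R)}.
\]
This is a strictly decreasing function of $B_\Sigma(R)$, so maximizing $\gamma^{\mathtt{nec}}_\Sigma$ over $\sC'$ is equivalent to minimizing $B_\Sigma$, and since $\delta^{\mathtt{nec}}_\Sigma$ is a strictly increasing function of $\gamma^{\mathtt{nec}}_\Sigma$ by~\eqref{eq:TranslateRIPtoRC}, the same $R^\star$ are the maximizers of $\delta^{\mathtt{nec}}_\Sigma$; this gives the first displayed equation, valid whether or not the argmax set is empty (the equivalence is between the two optimization problems, not a claim of attainment).

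Finally, for the value $\delta^{\mathtt{nec}}_\Sigma(R^\star)$: from $\gamma^{\mathtt{nec}}_\Sigma(R^\star) = 1 + 1/B_\Sigma(R^\star)$ and $\delta^{\mathtt{nec}}_\Sigma = (\gamma^{\mathtt{nec}}_\Sigma - 1)/(\gamma^{\mathtt{nec}}_\Sigma + 1)$, a direct computation gives $\delta^{\mathtt{nec}}_\Sigma(R^\star) = \frac{1/B_\Sigma(R^\star)}{2 + 1/B_\Sigma(R^\star)} = \frac{1}{1 + 2B_\Sigma(R^\star)} = (1 + 2B_\Sigma(R^\star))^{-1}$. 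The main obstacle I anticipate is handling the degenerate branch carefully: ensuring the $b = 0$ case (i.e. $z \perp \Sigma-\Sigma$) does not spuriously drive the infimum defining $\gamma^{\mathtt{nec}}$, and checking that under the stated hypotheses the supremum/infimum manipulations are legitimate (in particular that $\sup_x \ls x,z\rs^2/\|z\|_\sH^2 < 1$ for the relevant $z$, so that $\gamma^{\mathtt{nec}}_\Sigma(R)$ in~\eqref{eq:TranslateRCtoB} is well-defined and positive). These are exactly the points where the compactness of $(\Sigma-\Sigma)\cap S(1)$ and the union-of-subspaces structure are used, via Lemma~\ref{lem:ProjectionExists}.
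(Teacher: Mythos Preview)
Your proposal is correct and follows essentially the same route as the paper: invoke case~2 of \Cref{lem:expr_RC_sparse}, use \Cref{lem:ProjectionExists} to rewrite $\sup_{x\in(\Sigma-\Sigma)\cap S(1)}\ls x,z\rs^2/\|z\|_\sH^2$ as $\|P_{\Sigma-\Sigma}(z)\|_\sH^2/\|z\|_\sH^2$, exploit the Pythagorean decomposition and monotonicity to get $\gamma^{\mathtt{nec}}_\Sigma(R)=1+1/B_\Sigma(R)$, and then translate via~\eqref{eq:TranslateRIPtoRC}. The only difference is that you spend more effort on the degenerate case $b=\|P_{\Sigma-\Sigma}(z)\|_\sH^2=0$; the paper simply does not single this out, since in that situation $B_\Sigma(R)=+\infty$ and the formulas remain consistent (yielding $\gamma^{\mathtt{nec}}_\Sigma(R)=1$, $\delta^{\mathtt{nec}}_\Sigma(R)=0$) without requiring a separate argument.
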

\begin{proof}
See \Cref{sec:proofuos}.
\end{proof}

We now have the tools to study optimality for sparse and low rank models. 

\paragraph{Optimal regularization for sparse recovery and for low-rank recovery}
Consider now $\Sigma = \Sigma_k$ the set of $k$-sparse vectors in $\sH = \bR^n$ (associated with the canonical scalar product $\ls\cdot,\cdot\rs$ and the $\ell^2$-norm $\|\cdot\|_\sH = \|\cdot\|_2$), where $1 \leq k \leq n/2$, $n \geq 3$. We have $\Sigma - \Sigma = \Sigma_{2k}$ (for $n \leq 2k$, in particular for $n \leq 2$ and any $k \geq 1$, uniform recovery is not possible for non-invertible $M$: as $\Sigma -\Sigma = \bR^n$, by \Cref{lem:ideal_decoder_descent} we have $\sT_{R}(\Sigma) = \bR^{n}$ for any regularizer, thus $\sT_{R}(\Sigma) \cap \ker M = \{0\}$ implies $\ker M = \{0\}$). It is well-known that $\Sigma$ and $\Sigma-\Sigma$ are both unions of subspaces (hence $\Sigma$ is a cone), and it is straightforward that $(\Sigma-\Sigma) \cap S(1)$ is compact and $\Sigma$ is not reduced to a single line.
Moreover, for any nonzero $z \in \mathbb{R}^{n}$, $P_{\Sigma-\Sigma}(z)$ contains the restriction $z_{T_{2}}$ of $z$ to any set $T_2 = T_{2}(z) \subseteq \{1,\ldots,n\}$ of size $2k$ such that $\min_{i \in T_{2}}|z_i|\geq \max_{j \in T_{2}^{c}}|z_j|$. 
Hence, we can invoke \Cref{cor:RCnecUoS} to replace the maximization of $\delta_\Sigma^{\mathtt{nec}}(R)$ by the minimization of 
\begin{equation}\label{eq:PropBSigmaRKSparse}
B_\Sigma(R) = \underset{z\in \sT_R(\Sigma) \setminus \{0\} }{\sup}  \frac{\|z_{T_2^c}\|_2^2}{\|z_{T_2}\|_2^2}.
\end{equation}

Similarly, We consider $\Sigma = \Sigma_r$ the set of matrices of rank at most $r$ in the Hilbert space $\sH$ of  $n \times n$ symmetric matrices (we study the symmetric case for simplicity, but conjecture that our result can be extended to the non-symmetric case) with $\|\cdot\|_\sH = \|\cdot\|_F$ (the Frobenius norm).  We have again $\Sigma - \Sigma = \Sigma_{2r}$ and all conditions are satisfied such that \Cref{cor:RCnecUoS}  can be invoked. Denoting $\Delta = \eig(z)$ the vector of eigenvalues of matrix $z \in \sH$ sorted by decreasing absolute value, so that $z = U^T \Delta U$ for some unitary matrix $U$, and defining $z_{T} := z = U^T \Delta_{T} U$ for every index set $T$, we have $P_{\Sigma-\Sigma}(z) = z_{T_2}$ and $z-P_{\Sigma-\Sigma}(z) = z_{T_2^{c}}$ where 
$T_2 = T_{2}(z) \subseteq \llbracket 1,n\rrbracket$ is any index set containing the $2k$ largest 
eigenvalues (in magnitude) of $z$, \ie such that $\min_{i \in T_{2}} |\Delta_i| \geq \max_{j \in T_{2}^{c}} |\Delta_j|$. With these observations and notations, we are again left to optimize~\eqref{eq:PropBSigmaRKSparse}.

Specializing to the class $\sC$ of convex, coercive, continuous regularizers, we obtain the following theorems. 
\begin{theorem}[Optimality of $\ell^1$-norm for $k$-sparse vectors for $\delta_\Sigma^{\mathtt{nec}}$.]\label{th:RIP_nec_atom}
With $k$-sparse vectors, $\Sigma = \Sigma_k \subseteq  \sH = \bR^n$, $k < \frac{n}{2}$, 
and $R^{\star}(\cdot) = \|\cdot\|_{1}$, we have

  \begin{equation}
\delta_\Sigma^{\mathtt{nec}}(R^{\star}) = \sup_{R\in \sC} \delta_\Sigma^{\mathtt{nec}}(R) 
= 
(2 B^{\star}_{k,n}+1)^{-1}\qquad\text{with}\qquad B^{\star}_{k,n} := \max_{1 \leq L \leq n-2k} \frac{L/k}{(L/k+1)^{2}+1} .
 \end{equation}
  Moreover,  for $k=1$,  the unique functions $R \in \sC_{\Sigma}$ maximizing  $\delta_\Sigma^{\mathtt{nec}}$  are of the form $R(\cdot) =  \lambda \|\cdot\|_1 $ with $\lambda > 0$.
\end{theorem}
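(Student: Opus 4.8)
\emph{The plan is to reduce the optimization over regularizers to a purely geometric extremal problem about descent cones, solve it explicitly for $\|\cdot\|_1$, and then prove a matching universal lower bound.}

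\emph{Step 1 (reduction).} First I would check that $\Sigma=\Sigma_k$ satisfies the hypotheses of \Cref{cor:RCnecUoS}: $\Sigma-\Sigma=\Sigma_{2k}$ is a union of subspaces (here $2k<n$), $(\Sigma-\Sigma)\cap S(1)$ is compact, and $\Sigma_k$ is not a single line. \Cref{cor:RCnecUoS} (via \Cref{lem:expr_RC_sparse}, case~2) then gives the identity $\delta_\Sigma^{\mathtt{nec}}(R)=(1+2B_\Sigma(R))^{-1}$ valid for \emph{every} $R$ with $\Sigma\subseteq\dom(R)$, so that $\sup_{R\in\sC}\delta_\Sigma^{\mathtt{nec}}(R)=(1+2\inf_{R\in\sC}B_\Sigma(R))^{-1}$ and the whole problem reduces to computing $\inf_{R\in\sC}B_\Sigma(R)$. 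Moreover, since $\delta_\Sigma^{\mathtt{nec}}$ is a decreasing function of $\sT_R(\Sigma)$ for inclusion and $\sE(\Sigma_k)=\sH$ (because $\cl{\tconv}(\Sigma_k)$ contains the cross-polytope), \Cref{th:simpl_max_adeq} lets me restrict the infimum to atomic norms $\|\cdot\|_\sA$ with $\sA\subseteq\Sigma_k$.

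\emph{Step 2 (value at $\|\cdot\|_1$).} Using the standard description $\sT_{\|\cdot\|_1}(\Sigma_k)=\{z:\|z_{S^c}\|_1\le\|z_S\|_1\}$, where $S=S(z)$ indexes the $k$ largest entries of $z$ in magnitude, I would evaluate $B_\Sigma(\|\cdot\|_1)=\sup\|z_{T_2^c}\|_2^2/\|z_{T_2}\|_2^2$ over this cone. Passing to $|z|$ sorted decreasingly and equalizing the entries on $S$, on $T_2\setminus S$, and on the ``active'' part of the tail (a short rearrangement/Cauchy--Schwarz argument, using that on $T_2^c$ all entries are at most the $2k$-th largest magnitude) reduces the supremum to the two-level family $z=a\,\mathbf 1_A+b\,\mathbf 1_B$ with $|A|=k$, $|B|=k+L$, $1\le L\le n-2k$, and $a/b=1+L/k$, whose ratio is $\tfrac{L/k}{(L/k+1)^2+1}$; maximizing over the integer $L$ gives $B_\Sigma(\|\cdot\|_1)=B^\star_{k,n}$, hence $\sup_{R\in\sC}\delta_\Sigma^{\mathtt{nec}}(R)\ge(1+2B^\star_{k,n})^{-1}$.

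\emph{Step 3 (the universal lower bound --- the hard part).} It remains to show $B_\Sigma(R)\ge B^\star_{k,n}$ for every atomic norm $R=\|\cdot\|_\sA$ with $\sA\subseteq\Sigma_k$. The unit ball $K=\{R\le1\}$ is a compact convex body with $0$ in its interior, and --- being the closed convex hull of $\sA$, with $\Sigma_k$ closed --- its extreme points are $k$-sparse. Picking an extreme point $v$ of maximal support size, one has $\sT_R(v)\supseteq\tcone(K-v)\subseteq\sT_R(\Sigma_k)$; since $K$ also contains points supported off $\supp(v)$ (convex combinations of atoms, or of the axis segments that $K$ contains), the descent cone contains directions $z=w-v$ with $w$ supported on $\supp(v)^c$, i.e.\ exactly the two-level shape of the $\ell^1$ extremizer. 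The crux is to choose the supports and the scaling of $R$ so that the entries of $w$ do not exceed the smallest entry of $v$ while the tail $\ell^2$-energy of $w$ stays large enough: this is where the boundedness of $K$ and the $k$-sparsity of its extreme points are used (together with a pigeonhole/averaging over the off-support coordinates) to force $\|z_{T_2^c}\|_2^2/\|z_{T_2}\|_2^2\ge B^\star_{k,n}$. Combined with Step~2 this yields $\inf_{R\in\sC}B_\Sigma(R)=B^\star_{k,n}$ and the announced value of $\delta_\Sigma^{\mathtt{nec}}(R^\star)$.

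\emph{Step 4 ($k=1$ uniqueness).} For $k=1$, $\sC_\Sigma$ consists exactly of the gauges $\|x\|_\sA=\sum_i\bigl(x_i^+/\alpha_i^++x_i^-/\alpha_i^-\bigr)$ with all $\alpha_i^\pm>0$ (skewed, weighted $\ell^1$-norms), $\|\cdot\|_1$ being the case $\alpha_i^\pm\equiv1$. Writing $\sT_R(\Sigma_1)$ as a finite union over a coordinate and a sign of an explicit polyhedral cone and redoing the maximization defining $B_\Sigma$ with weights, one obtains $B_\Sigma(R)$ as a function of the weights that is $\ge B^\star_{1,n}$, with equality iff all effective weights coincide, i.e.\ iff $R=\lambda\|\cdot\|_1$ with $\lambda>0$. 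This equality analysis --- the weighted analogue of the exact $3$D solid-angle computation of \cite{Traonmilin_2018} --- together with the construction of Step~3 is where the real work lies.
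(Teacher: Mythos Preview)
Your Steps~1, 2 and~4 are in line with the paper: the reduction to minimizing $B_\Sigma(R)$ via \Cref{cor:RCnecUoS} and the further reduction to $\sC_\Sigma$ via \Cref{th:simpl_max_adeq} are exactly what is done, and for $k=1$ the paper also argues that every $R\in\sC_\Sigma$ is, on each orthant, a weighted $\ell^1$-norm, the equality case forcing all weights to coincide. (Your computation of $B_\Sigma(\|\cdot\|_1)$ in Step~2 is also correct in outcome; be aware that proving the \emph{upper} bound on $B_\Sigma(\|\cdot\|_1)$---not just producing the two-level family---requires a nontrivial case analysis, cf.\ \Cref{lem:charact_supBL2_l1}.)

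Step~3, however, has a real gap. Picking a single extreme point $v$ of $K=\{R\le 1\}$ and looking at $z=w-v$ with $w\in K$ supported off $\supp(v)$ gives no control over the magnitudes of the entries of $v$ and $w$, so you cannot conclude that the $2k$ largest coordinates of $z$ are exactly those of $v$ together with $k$ coordinates of $w$, nor that the resulting ratio reaches $(L/k)/[(L/k+1)^2+1]$. Your ``pigeonhole/averaging over the off-support coordinates'' is not the right object to average. The paper's construction (\Cref{lem:charact_supBL2_atom}) is much more specific: take $v_0\in Q_{H_0}$ (a $\pm1$ vector supported on some $|H_0|=k$) that \emph{maximizes} $R$ over all such vectors, and $v_1\in Q_{H_1}$ with $|H_1|=k+L$, $H_1\cap H_0=\emptyset$, that \emph{minimizes} $R$ over all such vectors. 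The crucial inequality $R(v_1)\le\tfrac{k+L}{k}\,R(v_0)$ comes from averaging $v_1$ over all its $k$-sparse $\pm1$ restrictions (there are $\binom{k+L}{k}$ of them, each coordinate appears $\binom{k+L-1}{k-1}$ times) and using subadditivity and positive homogeneity of $R$. Then $z=-\alpha v_0+v_1$ with $\alpha=\max(R(v_1)/R(v_0),1)$ belongs to $\sT_R(\Sigma)$ (by \Cref{lem:BuildDescentVector}), and because all nonzero entries of $v_0,v_1$ have magnitude one and $\alpha\ge1$, one reads off directly $\|z_{T_2^c}\|_2^2/\|z_{T_2}\|_2^2=L/(k\alpha^2+k)\ge (L/k)/[(L/k+1)^2+1]$. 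Your sketch neither isolates the extremal $\pm1$ vectors nor identifies this averaging over $k$-subsets; without it, the two-level test vector you want cannot be shown to lie in the descent cone with the required entry ordering. Filling in exactly this mechanism is what makes Step~3 go through, and it is also what gives the clean equality case used in the $k=1$ uniqueness.
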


\begin{theorem}[Optimality of the nuclear norm for rank-$r$ matrices for $\delta_\Sigma^{\mathtt{nec}}$.]\label{th:RIP_nec_atom_rank}
 With the set of rank-$r$ matrices, $\Sigma = \Sigma_r$, in the space $\sH$ of symmetric $n \times n$ matrices, $r < \frac{n}{2}$,  and with $R^{\star}(\cdot) = \|\cdot\|_{*}$ (the nuclear norm), we have

  \begin{equation}
\delta_\Sigma^{\mathtt{nec}}(R^{\star}) = \sup_{R\in \sC} \delta_\Sigma^{\mathtt{nec}}(R) 
= 
(2 B^{\star}_{r,n}+1)^{-1}\qquad\text{with}\qquad B^{\star}_{r,n} := \max_{1 \leq L \leq n-2r} \frac{L/r}{(L/r+1)^{2}+1} .
 \end{equation}
\end{theorem}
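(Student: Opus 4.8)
The plan is to prove \Cref{th:RIP_nec_atom} and \Cref{th:RIP_nec_atom_rank} together, since the rank statement follows from the sparse one through the spectral reduction set up just before the statements: once one knows that $P_{\Sigma_{2r}}(z)=z_{T_2}$ with $T_2$ a set of $2r$ largest eigenvalues of $z$, the quantity $B_{\Sigma_r}(R)$ of \eqref{eq:PropBSigmaRKSparse} depends on $z$ only through its sorted squared spectrum, and the nuclear norm --- being $\|\cdot\|_1$ of the spectrum, with a descent cone over $\Sigma_r$ that in spectral coordinates coincides with the $\ell^1$-descent cone over diagonal $r$-sparse matrices --- turns the matrix problem into the vector one; the only delicate point here is to verify that non-commuting perturbations do not enlarge this spectral descent set. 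So I would then concentrate on $\Sigma=\Sigma_k\subseteq\bR^n$. By \Cref{cor:RCnecUoS} (its hypotheses hold for $k<n/2$) together with \Cref{th:simpl_max_adeq} (applicable because $\delta_\Sigma^{\mathtt{nec}}$ is a decreasing function of $\sT_R(\Sigma)$ and $\sE(\Sigma_k)=\sH$), the theorem reduces to the two claims
\[
B_{\Sigma}(\|\cdot\|_1)=B^\star_{k,n} \qquad\text{and}\qquad B_{\Sigma}(R)\ \ge\ B^\star_{k,n}\ \text{ for every } R\in\sC_\Sigma ,
\]
after which $\delta_\Sigma^{\mathtt{nec}}(R^\star)=(1+2B_\Sigma(R^\star))^{-1}=(2B^\star_{k,n}+1)^{-1}$ and the identification of the supremum are immediate.

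\textbf{Computing $B_\Sigma(\|\cdot\|_1)$.} First I would establish $\sT_{\|\cdot\|_1}(\Sigma_k)=\{\,z:\|z\|_1\le 2\sigma_k(z)\,\}$, where $\sigma_k(z)$ denotes the $\ell^1$-mass of the $k$ largest-magnitude entries of $z$: the inclusion ``$\subseteq$'' is the standard reverse-triangle/subgradient bound at a $k$-sparse point, and ``$\supseteq$'' follows by picking a $k$-sparse $x_0$ supported on a top-$k$ set of $z$, with entries of sign opposite to those of $z$ and large enough magnitude, so that $\|x_0+z\|_1=\|x_0\|_1+\|z\|_1-2\sigma_k(z)\le\|x_0\|_1$. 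Maximizing $h(z):=\|z_{T_2^c}\|_2^2/\|z_{T_2}\|_2^2$ over this cone is then a finite-dimensional exercise: by sign and permutation invariance I may assume $z_1\ge\dots\ge z_n\ge0$ with constraint $\sum_{i>k}z_i\le\sum_{i\le k}z_i$, and an exchange argument (average the top $k$ entries; push the remaining mass into the longest possible constant block; make the constraint tight) shows the maximiser is a two-level profile $z_1=\dots=z_k=a$, $z_{k+1}=\dots=z_{k+L}=c$, rest $0$, with $Lc=ka$ (hence $L\ge k$). Substituting gives $h=\tfrac{(L-k)c^2}{ka^2+kc^2}=\tfrac{(L-k)/k}{((L-k)/k+1)^2+1}$, and reindexing $L-k\in\{1,\dots,n-2k\}$ yields $B^\star_{k,n}$; the extremiser lies on the boundary $\|z\|_1=2\sigma_k(z)$, so the supremum is attained.

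\textbf{The main obstacle: no convex regularizer does better.} It remains to show $B_\Sigma(\|\cdot\|_\sA)\ge B^\star_{k,n}$ for every $\|\cdot\|_\sA\in\sC_\Sigma$, i.e.\ for every $\sA\subseteq\Sigma_k$ with $\|\cdot\|_\sA$ coercive and continuous. The idea is to exhibit, inside the descent cone, a direction whose sorted magnitude profile is at least as spread out as the optimal two-level profile above. Writing $L^\star$ for the optimal block length, I would look for disjoint index sets $S$ ($|S|=k$) and $T$ ($|T|=L^\star$) and a point $x_0=-v\cdot\mathbf{1}_S\in\Sigma_k$ such that $w:=c\cdot\mathbf{1}_T$ with $c=vk/L^\star$ satisfies $\|w\|_\sA\le\|x_0\|_\sA$; then $z:=w-x_0$ reproduces the extremal profile and belongs to $\sT_{\|\cdot\|_\sA}(x_0)\subseteq\sT_{\|\cdot\|_\sA}(\Sigma_k)$, since the segment $[x_0,w]$ stays in the ball $\|x_0\|_\sA\cdot\cl{\tconv}(\sA)$. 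The availability of such $S,T,v$ is exactly where $k$-sparsity of the atoms enters: since every atom has at most $k$ nonzero entries, a convex combination reaching $L^\star$ fresh coordinates must spread its total weight --- hence the size of the entries it creates on $T$ --- in inverse proportion to the number of coordinates it touches, which is precisely the balance $L^\star c=kv$ enforced above. Making this quantitative \emph{uniformly over all admissible $\sA$} (and, when no single $w$ realises the profile exactly, closing the gap by an approximation/compactness argument --- possibly after restricting to the coordinate subspace $\bR^{\{1,\dots,2k+L^\star\}}$, where the problem is scale-calibrated) is the technical heart of the proof and the step I expect to require real work; everything preceding it is essentially bookkeeping.

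\textbf{The $k=1$ addendum.} For the last assertion of \Cref{th:RIP_nec_atom}, note that $\sC_{\Sigma_1}$ consists of atomic norms with $1$-sparse atoms, i.e.\ (apart from the non-symmetric case) weighted $\ell^1$-norms. Running the computation of the second step for a weighted $\ell^1$-norm with weight vector $(w_i)$, and optimising over which coordinate carries the support and which carry the tail, I would show that any strict inequality among the $w_i$ strictly decreases the optimal value of $h$ below $B^\star_{1,n}$, so that the functions $\lambda\|\cdot\|_1$ with $\lambda>0$ are the only minimisers of $B_\Sigma$ on $\sC_\Sigma$, equivalently the only maximisers of $\delta_\Sigma^{\mathtt{nec}}$.
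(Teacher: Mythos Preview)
Your overall scaffolding is correct and matches the paper: reduce via \Cref{cor:RCnecUoS} and \Cref{th:simpl_max_adeq} to showing $B_\Sigma(\|\cdot\|_1)=B^\star_{k,n}$ and $B_\Sigma(R)\ge B^\star_{k,n}$ for every $R\in\sC_\Sigma$, then read off $\delta_\Sigma^{\mathtt{nec}}(R^\star)$. Your computation of $B_\Sigma(\|\cdot\|_1)$ is essentially the paper's (the paper is more careful about a possible third level of magnitude $\theta$, but arrives at the same answer).

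The genuine gap is the ``main obstacle'' step. You fix the shape of the descent vector in advance (equal entries on $S$ and on $T$, with the $\ell^1$-tight relation $L^\star c=kv$) and then hope to find $S,T$ so that $\|c\,\mathbf{1}_T\|_\sA\le\|v\,\mathbf{1}_S\|_\sA$. But that relation between $c$ and $v$ is exactly what makes the \emph{$\ell^1$} constraint tight; for a generic $\|\cdot\|_\sA$ there is no reason the same scaling should land you in the descent cone, and your heuristic about $k$-sparse atoms ``spreading mass in inverse proportion'' is not a proof. The paper's resolution is a clean combinatorial averaging that you are missing: pick a $k$-sparse sign vector $v_0$ \emph{maximizing} $R$ over all such vectors, and a $(k+L)$-sparse sign vector $v_1$ (support disjoint from that of $v_0$) \emph{minimizing} $R$; then write $v_1$ as the average $\tfrac{1}{\binom{k+L-1}{k-1}}\sum_G (v_1)_G$ over all $k$-subsets $G$ of $\supp(v_1)$. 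Subadditivity plus positive homogeneity give
\[
R(v_1)\ \le\ \frac{\binom{k+L}{k}}{\binom{k+L-1}{k-1}}\,R(v_0)\ =\ \Bigl(1+\tfrac{L}{k}\Bigr)R(v_0),
\]
so $z:=v_1-\alpha v_0$ with $\alpha=\max(R(v_1)/R(v_0),1)\le 1+L/k$ lies in $\sT_R(\Sigma)$ and already has the two-level profile, yielding $\|z_{T_2^c}\|_2^2/\|z_{T_2}\|_2^2=\tfrac{L/k}{\alpha^2+1}\ge\tfrac{L/k}{(L/k+1)^2+1}$. This argument is short, uniform over all $R\in\sC_\Sigma$, and needs no compactness or approximation.

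Two smaller points. First, your spectral reduction for the matrix theorem works for $R^\star=\|\cdot\|_*$ (since $\|\cdot\|_*$ and $\|\cdot\|_F$ are eigenvalue functions, $B_{\Sigma_r}(\|\cdot\|_*)$ reduces to $B_{\Sigma_k}(\|\cdot\|_1)$ exactly), but it does \emph{not} handle the lower bound for a general $R\in\sC_{\Sigma_r}$, because an atomic norm with low-rank atoms need not be a spectral function. The paper redoes the averaging argument in the matrix setting: optimize also over $U\in O(n)$ when defining $v_0,v_1$, keep the eigenbases orthogonal, and average over $r$-subsets of the eigenvalue indices. Second, in your $k=1$ addendum the inequality goes the wrong way: unequal weights make $B_\Sigma(R)$ strictly \emph{larger} than $B^\star_{1,n}$ (equivalently $\delta_\Sigma^{\mathtt{nec}}$ strictly smaller), which is how uniqueness of $\lambda\|\cdot\|_1$ follows from the equality case of the averaging bound above.
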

The proofs of these two theorems exploits  technical lemmas that we detail in \Cref{sec:proofsparsity} and \Cref{sec:proofLRnex}.

\begin{proof} 
We give the proof for the case of sparse recovery. To express it for low-rank recovery simply replace the 
notation $k$ by $r$. For $1 \leq s \leq n$ and any regularizer  $R$ we define 
 \begin{equation}\label{eq:DefBsSigma}
 B_{\Sigma}^{s}(R) := \underset{z\in \sT_R(\Sigma) \setminus \{0\}, 
 z \in \Sigma_{s}
}{\sup}  \frac{\|z_{T_2^c}\|_2^2}{\|z_{T_2}\|_2^2}.
 \end{equation}
For $s \leq 2k$ and any $z \in \Sigma_{s}$ we have $z_{T_{2}^{c}}=0$ hence $B_{\Sigma}^{s}(R) = 0$, thus
 $B_{\Sigma}(R) = \max_{1 \leq L \leq n-2k} B_{\Sigma}^{2k+L}(R)$.
First consider $R\in \sC_\Sigma$. Since $R$ is positively homogeneous and subadditive, by \Cref{lem:charact_supBL2_atom} for $\Sigma_{k}$ / \Cref{lem:charact_supBL2_atom_rank} for $\Sigma_{r}$,
\[
B_{\Sigma}^{2k+L}(R) \geq \frac{ \frac{L}{k}}{ \left(\frac{L}{k}+1\right)^2 + 1 },
\qquad\text{for each}\ 1 \leq L \leq n-2k.
\]
For $R^{\star}$ and $1 \leq L \leq n-2k$ we also have 
(\Cref{lem:charact_supBL2_l1} / \Cref{lem:charact_supBL2_nuc}, inspired by \cite{Davies_2009}) that
  \[
  B_{\Sigma}(R^{\star}) = \max_{1\leq L \leq n-2k}\frac{ \frac{L}{k}}{ \left(\frac{L}{k}+1\right)^2 + 1 }.
  \]
As a result,
\[
B_{\Sigma}(R) \geq  B_\Sigma(R^{\star}) = \max_{1 \leq L \leq n-2k} \frac{ \frac{L}{k}}{ \left(\frac{L}{k}+1\right)^2 + 1 } =: B^{\star}_{k,n}
\]

Finally,  remark that $B_\Sigma(R)$ is an increasing function of $\sT_R(\Sigma)$. Using \Cref{lem:atomic_necessary}, for any $R\in\sC$  there is $R '\in \sC_\Sigma$ such that 
\[
B_{\Sigma}(R) \geq B_{\Sigma}(R')  \geq B^{\star}_{k,n}.
\]

For $k=1$, uniqueness comes from the fact that on a given orthant for $ R \in \sC_\Sigma$,  $R$ is a weighted $\ell^1$ norm: $R((x_1, \ldots, x_n)) = \sum_i w_i |x_i|$ and the equality  case  in \Cref{lem:charact_supBL2_atom} forces   $ w_i  = \max_i w_{i}$.

\end{proof}

Because of the uniqueness result for $k=1$,
the $\ell^1$-norm is the unique convex regularizer  in $\cap \sC_{\Sigma_k}$ that jointly maximizes  $\delta_{\Sigma_k}^{\mathtt{nec}}$ for all $k < \frac{n}{2}$ (the proof of \Cref{th:RIP_nec_atom} is valid for  $\sC_{\Sigma_{k'}}$, with $k \leq k' <\frac{n}{2}$). It is an open question to determine if  we have uniqueness model by model. As the result might change for tighter compliance measures, we leave this question for future work.

In the next section,  we will see that the optimization of the sufficient RIP constant leads to very similar expressions.

\subsection{Compliance measures based on sufficient RC conditions }\label{sec:suff_RIP}
When $\Sigma$ is a union of subspaces and $R$ is an arbitrary regularizer, an ``explicit'' RIP constant $\delta_\Sigma^{\mathtt{suff}}(R)$ is sufficient to guarantee reconstruction \cite{Traonmilin_2016}.
The expression of this constant \cite{Traonmilin_2016}[Eq. (5)] is recalled in the Appendix (Equation~\eqref{eq:ExplicitExpressionSuffRIP}) and can be used as a compliance measure.
It gives rise to the following lemma, which is proved in \Cref{sec:ProofSuffRIP}.

\begin{lemma}[Equality case of the sufficient conditions.]\label{lem:charac_suff_RIP}
Assume that $\Sigma = \cup_{V \in \mathcal{V}} V$ is a union of subspaces and that $\Sigma \cap S(1)$ is compact. Consider $R$ any  regularizer  such that $\Sigma \subseteq \dom(R)$. We have
\begin{equation}\label{eq:RIPsuffVsRIPsuff2}
\begin{split}
 \delta_\Sigma^{\mathtt{suff}}(R) & \geq  \frac{1}{ \sqrt{ \underset{z\in \sT_R(\Sigma) \setminus \{0\} }{\sup}  \frac{\|z -P_\Sigma(z)\|_\Sigma^2}{\|P_\Sigma(z)\|_2^2} + 1 }} =: \delta_\Sigma^{\mathtt{suff2}}(R).
 \end{split}
\end{equation}

Further, assume that $P_\Sigma(z) \subseteq \arg \min_{x \in \Sigma} \|x-z\|_\Sigma$  for every $z \in \sH$ and that, for every $V \in \mathcal{V}$ and every $u\in \Sigma$, $P_{V^\perp}(u) \in \Sigma$. 
Then, there is equality in~\eqref{eq:RIPsuffVsRIPsuff2}. 
\end{lemma}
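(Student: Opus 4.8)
The plan is to work directly from the explicit expression of $\delta_\Sigma^{\mathtt{suff}}(R)$ given in \cite{Traonmilin_2016}[Eq.~(5)] (recalled in the Appendix as~\eqref{eq:ExplicitExpressionSuffRIP}). Recall that this constant is defined through a supremum over $z \in \sT_R(\Sigma)\setminus\{0\}$ of a ratio involving a decomposition $z = h + (z-h)$ with $h$ ranging over $\Sigma$ (or more precisely over $\mathcal{E}(\Sigma)$), and where the ``off-model'' part $z-h$ is measured in the $\|\cdot\|_\Sigma$ norm (the atomic/$k$-support norm induced by $\Sigma$) while the ``on-model'' part $h$ is measured in $\|\cdot\|_2$. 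First I would rewrite~\eqref{eq:ExplicitExpressionSuffRIP} so that the inner optimization over the splitting point $h$ appears explicitly, and observe that for \emph{any} fixed $z$, choosing $h \in P_\Sigma(z)$ is a feasible choice of splitting point: this gives the lower bound
\begin{equation*}
\delta_\Sigma^{\mathtt{suff}}(R) \geq \frac{1}{\sqrt{\underset{z \in \sT_R(\Sigma)\setminus\{0\}}{\sup} \frac{\|z-P_\Sigma(z)\|_\Sigma^2}{\|P_\Sigma(z)\|_2^2} + 1}} = \delta_\Sigma^{\mathtt{suff2}}(R),
\end{equation*}
since replacing the optimal splitting by the suboptimal (orthogonal-projection) one can only make the ratio inside the supremum larger, hence the whole expression smaller. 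Here I would use \Cref{lem:ProjectionExists} to guarantee that $P_\Sigma(z)$ is nonempty (as $\Sigma \cap S(1)$ is compact) and that all the quantities $\|z-P_\Sigma(z)\|_\Sigma$, $\|P_\Sigma(z)\|_2$, $\langle z, P_\Sigma(z)\rangle$ are unambiguous; I would also need the Pythagorean identity $\|z\|_2^2 = \|z-P_\Sigma(z)\|_2^2 + \|P_\Sigma(z)\|_2^2$ from that lemma to match whatever normalization appears in~\eqref{eq:ExplicitExpressionSuffRIP}.

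For the equality part, I need the two additional hypotheses: (i) $P_\Sigma(z) \subseteq \arg\min_{x\in\Sigma}\|x-z\|_\Sigma$, i.e.\ the orthogonal projection also minimizes the $\|\cdot\|_\Sigma$-distance to $\Sigma$, and (ii) for every subspace $V \in \mathcal{V}$ and every $u \in \Sigma$, $P_{V^\perp}(u) \in \Sigma$. The role of these is to show that the optimal splitting point in~\eqref{eq:ExplicitExpressionSuffRIP} can be taken to be $P_\Sigma(z)$, so that the inequality above is in fact an equality. Concretely, I would take an arbitrary feasible splitting $z = h + e$ with $h$ in the model (or its conic hull) attaining (or nearly attaining) the sup in $\delta_\Sigma^{\mathtt{suff}}$, and show — using (ii) to project $h$ onto the relevant $V^\perp$ and stay inside $\Sigma$, and (i) to control the $\|\cdot\|_\Sigma$-norm of the residual — that the choice $h' = P_\Sigma(z)$, $e' = z - P_\Sigma(z)$ does at least as well. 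The combinatorial heart is the identity/inequality relating the ratio $\|z-P_\Sigma(z)\|_\Sigma^2/\|P_\Sigma(z)\|_2^2$ to the worst-case ratio over \emph{all} admissible splittings appearing in the original formula; condition (i) makes the $\|\cdot\|_\Sigma$-norm of the residual minimal exactly at the orthogonal projection, and condition (ii) ensures the on-model part $h'$ stays admissible when we decompose over the subspaces $V$ indexing $\Sigma$.

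The main obstacle I anticipate is precisely this equality direction: \eqref{eq:ExplicitExpressionSuffRIP} from \cite{Traonmilin_2016} has a somewhat intricate form (a supremum over $z$ of an infimum-type expression over decompositions, with the two pieces measured in different norms), and pinning down exactly why $P_\Sigma(z)$ is an optimal splitting — rather than merely a feasible one — requires carefully unwinding that definition and invoking (i) and (ii) at the right moment. I expect the inequality direction ($\delta_\Sigma^{\mathtt{suff}} \geq \delta_\Sigma^{\mathtt{suff2}}$) to be short once the formula is recalled, essentially ``plug in a suboptimal point,'' while the converse will need a genuine argument that the extra structural assumptions force the optimum to be realized at the orthogonal projection. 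A secondary, more routine obstacle is bookkeeping the norms: making sure that $\|\cdot\|_\Sigma$ restricted to $\Sigma$ coincides with $\|\cdot\|_2$ (true because atoms are unit-norm elements of $\Sigma$, by \Cref{def:modelnorm}), so that the on-model terms in~\eqref{eq:ExplicitExpressionSuffRIP} and in~\eqref{eq:RIPsuffVsRIPsuff2} genuinely match. I would defer all of this detailed verification to \Cref{sec:ProofSuffRIP} as the statement already indicates.
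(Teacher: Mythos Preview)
Your plan matches the paper's approach at the coarse level: plug in the orthogonal projection to get the inequality, then show this choice is optimal under the extra hypotheses. Two points need correcting.

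First, the structure of~\eqref{eq:ExplicitExpressionSuffRIP} is $\inf_{z}\sup_{x\in\Sigma}$ of a ratio, not a supremum over $z$ with an inner minimization over ``splittings''. Plugging in the specific value $x=-y$ for some $y\in P_\Sigma(z)$ lower-bounds the inner $\sup_x$; once you use $\langle z,P_\Sigma(z)\rangle=\|P_\Sigma(z)\|_2^2$ from \Cref{lem:ProjectionExists}, the ratio simplifies exactly to $\bigl(\|z-P_\Sigma(z)\|_\Sigma^2/\|P_\Sigma(z)\|_2^2+1\bigr)^{-1/2}$, and taking the infimum over $z$ gives~\eqref{eq:RIPsuffVsRIPsuff2}. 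So the inequality direction is right, but your justification (``suboptimal splitting makes the ratio larger hence the expression smaller'') is phrased for the wrong $\inf$/$\sup$ structure and should be rewritten.

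Second, for the equality direction the paper does \emph{not} compare $P_\Sigma(z)$ to an arbitrary splitting directly as you sketch. The actual argument fixes an arbitrary $x\in V\in\mathcal{V}$, invokes the atomic-norm characterization (Fact~\ref{fact:atom2}) to write $x+z=\sum_i\lambda_iu_i$ with $u_i\in\Sigma$ and $\|x+z\|_\Sigma^2=\sum_i\lambda_i\|u_i\|_\sH^2$, then projects each $u_i$ onto $V$ and $V^\perp$ (both pieces stay in $\Sigma$, the latter by hypothesis~(ii)), and applies Jensen's inequality to get $\|z_{V^\perp}\|_\Sigma^2+\|x+z_V\|_\sH^2\leq\|x+z\|_\Sigma^2$. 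Combined with Cauchy--Schwarz on $\re\langle x,z\rangle=\re\langle x,z_V\rangle$ and hypothesis~(i) at the very end, this upper-bounds the ratio at \emph{every} $x$ by the value at $-P_\Sigma(z)$. This convex-decomposition-plus-Jensen step is the non-obvious ingredient your outline is missing; hypotheses (i) and (ii) enter exactly where the paper uses them, but the mechanism linking them is more specific than ``show the optimum is realized at the projection''.
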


\begin{proof}
 See \Cref{sec:ProofSuffRIP}. Note that the assumption $P_\Sigma(z) \subseteq \arg \min_{x \in \Sigma} \|x-z\|_\Sigma$ could be replaced by the slightly weaker $P_\Sigma(z) \cap \arg \min_{x \in \Sigma} \|x-z\|_\Sigma/\|x\|_{2} \neq \emptyset$. 

\end{proof}

We get an immediate corollary of the first claim in the above lemma.
\begin{corollary}[Expression of a sufficient condition.]\label{cor:suffcond}
Assume that $\Sigma = \cup_{V \in \mathcal{V}} V$ is a union of subspaces and that $\Sigma \cap S(1)$ is compact. For any class $\sC'$ of regularizers such that $\Sigma \subseteq \mathtt{dom}(R)$ for every $R \in \sC'$, the set of maximizers of $\delta_\Sigma^{\mathtt{suff2}}(\cdot)$ satisfies (whether this set of maximizers is empty)
 \begin{equation}\label{eq:DefDSigmaRGeneric}
\arg \max_{R\in\sC'} \delta_\Sigma^{\mathtt{suff2}}(R)=   \arg\min_{R \in \sC'} D_{\Sigma}(R)\qquad \text{with}\qquad 
D_\Sigma(R) := \underset{z\in \sT_R(\Sigma) \setminus \{0\} }{\sup}  \frac{\|z-P_{\Sigma}(z)\|_\Sigma^2}{\|P_{\Sigma}(z)\|_\sH^2}.
\end{equation}
For any optimal regularizer $R^{\star}$ we have 
\begin{equation}
\delta_\Sigma^{\mathtt{suff2}}(R^{\star}) = (1+D_{\Sigma}(R^{\star}))^{-1/2}.
\end{equation}
\end{corollary}
Note the subtle difference in the norm at the numerator in $B_{\Sigma}(R)$ and $D_{\Sigma}(R)$.

\paragraph{Optimal regularization for sparse recovery and low-rank recovery}
When considering sparse recovery or low-rank recovery, there is indeed equality $\delta_\Sigma^{\mathtt{suff}}(R) =  \delta_\Sigma^{\mathtt{suff2}}(R)$ thanks to the following Lemma.
\begin{lemma} \label{lem:assumption_suff} 
The assumptions for the equality case of  \Cref{lem:charac_suff_RIP} hold for $\Sigma = \Sigma_k$ the set of $k$-sparse vectors in $\sH = \bR^{n}$, as well as for the set $\Sigma = \Sigma_r$ of symmetric matrices of rank at most $r$ in $\sH$ the set of symmetric $n \times n$ matrices.
\end{lemma}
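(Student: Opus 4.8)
\textbf{Proof plan for Lemma~\ref{lem:assumption_suff}.}
The plan is to verify, separately for $\Sigma=\Sigma_k$ and $\Sigma=\Sigma_r$, the two structural hypotheses required by the equality case of \Cref{lem:charac_suff_RIP}: first, that $P_\Sigma(z)\subseteq\arg\min_{x\in\Sigma}\|x-z\|_\Sigma$ for every $z\in\sH$ (equivalently, that a best $\|\cdot\|_\sH$-approximant of $z$ from $\Sigma$ is also a best $\|\cdot\|_\Sigma$-approximant); and second, that for every subspace $V\in\mathcal{V}$ in the union defining $\Sigma$ and every $u\in\Sigma$, the orthogonal projection $P_{V^\perp}(u)$ lies back in $\Sigma$.

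For the sparse case, recall $\mathcal{V}$ is the set of coordinate subspaces $V_S=\{x:\supp(x)\subseteq S\}$ with $|S|=k$. The second condition is immediate: if $u$ is $k$-sparse and $V=V_S$, then $P_{V^\perp}(u)=u-u_S=u_{S^c}$ is again $k$-sparse, so $P_{V^\perp}(u)\in\Sigma_k$. For the first condition, I would use the characterization (from the discussion preceding \Cref{th:RIP_nec_atom}) that $P_{\Sigma_k}(z)=z_T$ where $T$ indexes $k$ largest-magnitude entries of $z$; since the $k$-support norm $\|\cdot\|_\Sigma$ restricted to $k$-sparse vectors coincides (up to the relevant scaling) with $\|\cdot\|_2$, and since keeping the $k$ largest entries also minimizes the residual in the relevant norm, $z_T$ is simultaneously a minimizer for both norms. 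Concretely, $\|z-x\|_\Sigma$ for $x\in\Sigma_k$ only ``sees'' the off-support part through $\|\cdot\|_2$ up to the $k$-support-norm renormalization, so the same truncation is optimal; I would spell this out using that $\|\cdot\|_\Sigma$ is a monotone function of the sorted-magnitude profile.

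For the low-rank case, $\mathcal{V}$ consists of subspaces of the form $V_{U_1,U_2}=\{X: \mathrm{row}(X)\subseteq U_1,\ \mathrm{col}(X)\subseteq U_2\}$ (in the symmetric setting, $\{X=W^T\Lambda W:\ \Lambda \text{ supported on a fixed } r\text{-subset of eigen-directions of a fixed orthonormal frame}\}$). For the second condition, if $u$ has rank $\le r$ and $V$ is such a subspace, then $P_{V^\perp}(u)$ is obtained by zeroing out a block of $u$ in a basis adapted to $V$; this operation cannot increase the rank, hence $P_{V^\perp}(u)\in\Sigma_r$ — I would justify this via the block decomposition $u = u_V + u_{V^\perp}$ where $u_V\in V$ has rank $\le r$ and a short rank-additivity argument using that $V$ and $V^\perp$ are ``aligned'' with a common eigenbasis in the symmetric case. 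For the first condition, I would invoke Mirsky's theorem / the Eckart--Young property: truncating the eigenvalue decomposition of $z$ to its $r$ largest-magnitude eigenvalues minimizes $\|z-X\|$ over $X\in\Sigma_r$ for every unitarily invariant norm, and in particular for $\|\cdot\|_F=\|\cdot\|_\sH$ and for $\|\cdot\|_\Sigma$ (the rank-$r$-support analogue of the $k$-support norm, which is unitarily invariant and a monotone function of the singular-value profile). Hence $P_{\Sigma-\Sigma}(z)=P_{\Sigma_{2r}}(z)=z_{T_2}$ is optimal for both norms.

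The main obstacle I anticipate is the first condition, $P_\Sigma(z)\subseteq\arg\min_{x\in\Sigma}\|x-z\|_\Sigma$: one must check that the $k$-support norm (resp.\ its rank analogue) — which is \emph{not} the same as $\|\cdot\|_\sH$ on all of $\sH$, only on $\Sigma$ itself — nonetheless has the $\ell^2$-truncation (resp.\ SVD-truncation) as a best approximant when measuring the residual. The cleanest route is to observe that for $z$ fixed and $x$ ranging over $\Sigma$, $\|z-x\|_\Sigma$ depends only on the nonincreasing rearrangement of the magnitudes (resp.\ singular values) of $z-x$ and is Schur-convex/monotone in it, so the standard majorization argument behind Eckart--Young applies verbatim; the symmetric-matrix restriction is precisely what makes the eigen-block decomposition clean enough to also settle the $P_{V^\perp}(u)\in\Sigma$ claim without fuss.
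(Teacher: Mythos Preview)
Your sparse-case plan matches the paper's proof: the second condition is the same support-restriction observation, and the first condition is handled (as in the paper, via \Cref{cor:sigma_norm}) by coordinate-wise monotonicity and permutation invariance of $\|\cdot\|_\Sigma$. One step left implicit in both treatments: to pass from an arbitrary $x\in\Sigma_k$ with support $S$ to the restriction $z_S$, observe that $|(z-z_S)_j|\le|(z-x)_j|$ for every $j$ and apply coordinate-wise monotonicity of $\|\cdot\|_\Sigma$, then compare $z_S$ with $z_{T_k}$ via the sorted-profile argument.

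The genuine gap is in the low-rank second condition. Your assertion that projecting onto $V^\perp$ ``cannot increase the rank'' because $V$ and $V^\perp$ are ``aligned with a common eigenbasis'' is false: Frobenius-orthogonality of the rank-one atoms spanning $V$ forces no alignment whatsoever with the eigenbasis of a given $u\in\Sigma_r$. Concretely, with $r=1$, $u=e_1e_1^T\in\Sigma_1$, and $V=\tspan(vv^T)$ for $v=(e_1+e_2)/\sqrt{2}$, one has $V\in\mathcal V$ and
\[
P_{V^\perp}(u)=e_1e_1^T-\tfrac12\,vv^T=\begin{pmatrix}3/4&-1/4\\-1/4&-1/4\end{pmatrix},
\]
whose determinant is $-1/4\ne 0$, so the rank is $2\not\le 1$. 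Thus the hypothesis ``$P_{V^\perp}(u)\in\Sigma$ for every $V\in\mathcal V$ and $u\in\Sigma$'' fails for $\Sigma_r$ with the natural $\mathcal V$ (spans of Frobenius-orthonormal rank-one symmetric matrices); the paper's own argument for this step, which asserts a block decomposition $z=V_1^TS_1V_1+V_2^TS_2V_2$ with $V_1V_2^T=0$ compatible with $P_V$, appears to share the same difficulty. A minor additional slip: your last sentence drifts to $P_{\Sigma-\Sigma}(z)=z_{T_2}$, but the condition to verify concerns $P_\Sigma$ and the $r$-truncation $z_T$, not $\Sigma-\Sigma$ and $z_{T_2}$.
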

\begin{proof}
 See \Cref{sec:ProofSuffRIP}.
\end{proof}

Consider $\Sigma := \Sigma_k$, and regularizers in $\sC_\Sigma$.
Similarly to the necessary case, from \Cref{lem:charac_suff_RIP},  we have (when $\Sigma$ is a union of subspace and $\Sigma \cap S(1)$ is closed)
\begin{equation}
 D_\Sigma(R) 
= \underset{z\in \sT_R(\Sigma) \setminus \{0\} }{\sup}  \frac{\|z_{T^c}\|_\Sigma^2}{\|z_T\|_2^2}
\end{equation}
where $T$ denotes  the  support of the $k$ largest coordinates of $z$.

We obtain similar results as in the necessary RIP constant case. 

\begin{theorem}[Optimality of $\ell^1$-norm for $k$-sparse vectors for $ \delta_\Sigma^{\mathtt{suff}}$.]\label{th:RIP_suff_atom}
With $k$-sparse vectors, $\Sigma = \Sigma_k \subseteq  \sH = \bR^n$, $k < \frac{n}{2}$, 
and $R^{\star}(\cdot) = \|\cdot\|_{1}$, we have

  \begin{equation}
\delta_\Sigma^{\mathtt{suff}}(R^{\star}) = \sup_{R\in \sC} \delta_\Sigma^{\mathtt{suff}}(R) 
= 
\frac{1}{\sqrt{2}} .
 \end{equation}
  Moreover,  for $k=1$,  
   the unique functions $R \in \sC_{\Sigma}$ maximizing  $\delta_\Sigma^{\mathtt{suff}}$  are of the form $R(\cdot) =  \lambda \|\cdot\|_1 $ with $\lambda > 0$.
\end{theorem}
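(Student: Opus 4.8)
\textbf{Proof plan for \Cref{th:RIP_suff_atom}.}
The plan is to mirror the structure of the proof of \Cref{th:RIP_nec_atom}, replacing the quantity $B_\Sigma(R)$ by $D_\Sigma(R)$ from the corollary above. First I would reduce to computing $D_\Sigma(R)$ over the whole class $\sC$. By \Cref{lem:assumption_suff} the equality case of \Cref{lem:charac_suff_RIP} holds, so $\delta_\Sigma^{\mathtt{suff}}(R)=\delta_\Sigma^{\mathtt{suff2}}(R)=(1+D_\Sigma(R))^{-1/2}$, and maximizing $\delta_\Sigma^{\mathtt{suff}}$ over $\sC$ is equivalent to minimizing $D_\Sigma$ over $\sC$. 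Next, exactly as in the necessary case, I would split $D_\Sigma(R)=\max_{1\leq L\leq n-2k} D_\Sigma^{2k+L}(R)$ by stratifying the supremum in $D_\Sigma(R)$ over $z\in\Sigma_{2k+L}\cap(\sT_R(\Sigma)\setminus\{0\})$ (noting $D_\Sigma^{s}(R)=0$ for $s\leq 2k$ since then $z_{T_2^c}=0$), where now the numerator uses the $\|\cdot\|_\Sigma$-norm rather than $\|\cdot\|_2$.

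The core of the argument is the two-sided bound on $D_\Sigma^{2k+L}(R)$. For the lower bound, I would invoke the analogues of \Cref{lem:charact_supBL2_atom} (for $\Sigma_k$) / \Cref{lem:charact_supBL2_atom_rank} (for $\Sigma_r$) adapted to the $\|\cdot\|_\Sigma$-norm at the numerator, which use only positive homogeneity and subadditivity of $R$, to get that for every $R\in\sC_\Sigma$ and every $1\leq L\leq n-2k$ one has $D_\Sigma^{2k+L}(R)\geq 1$; summing the strata, $D_\Sigma(R)\geq 1$. For the upper bound attained at $R^\star=\|\cdot\|_1$ (resp. $\|\cdot\|_*$), I would use the known descent-cone inequality for the $\ell^1$-norm (resp. nuclear norm) together with the analogue of \Cref{lem:charact_supBL2_l1} / \Cref{lem:charact_supBL2_nuc} to show $D_\Sigma(R^\star)=1$; indeed the classical computation shows $\|z_{T^c}\|_\Sigma^2/\|z_T\|_2^2\leq \|z_{T^c}\|_1^2/(k\|z_T\|_2^2)\leq 1$ on the $\ell^1$ descent cone, where $T$ is the support of the $k$ largest entries of $z$. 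Hence $D_\Sigma(R^\star)=1$ and $\delta_\Sigma^{\mathtt{suff}}(R^\star)=1/\sqrt{2}$.

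To conclude optimality over all of $\sC$ (not just $\sC_\Sigma$), I would note that $D_\Sigma(R)$ is an increasing function of $\sT_R(\Sigma)$ with respect to set inclusion, so by \Cref{lem:atomic_necessary} any $R\in\sC$ has an associated $R'\in\sC_\Sigma$ with $\sT_{R'}(\Sigma)\subseteq\sT_R(\Sigma)$, whence $D_\Sigma(R)\geq D_\Sigma(R')\geq 1 = D_\Sigma(R^\star)$. This gives $\delta_\Sigma^{\mathtt{suff}}(R)\leq\delta_\Sigma^{\mathtt{suff}}(R^\star)=1/\sqrt 2$ for all $R\in\sC$. Finally, for $k=1$ the uniqueness claim follows exactly as before: on each orthant a function $R\in\sC_\Sigma$ is a weighted $\ell^1$-norm $R(x)=\sum_i w_i|x_i|$, and the equality case in the relevant lemma (\Cref{lem:charact_supBL2_atom}) forces all weights $w_i$ to equal $\max_i w_i$, so $R$ is a positive multiple of $\|\cdot\|_1$.

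\textbf{Main obstacle.} The delicate point is verifying that the $\|\cdot\|_\Sigma$-norm at the numerator does not change the optimal value: one must check that the general lower bound lemmas (\Cref{lem:charact_supBL2_atom}/\Cref{lem:charact_supBL2_atom_rank}) still yield $D_\Sigma^{2k+L}(R)\geq 1$ — rather than the smaller quantity $\tfrac{L/k}{(L/k+1)^2+1}$ that appeared with the $\ell^2$-norm — and that for $R^\star$ the upper bound is still exactly $1$; this is where the ``subtle difference in the norm at the numerator'' flagged after the corollary must be handled carefully, and it is the reason the final constant collapses to the clean value $1/\sqrt 2$ independently of $k,n$.
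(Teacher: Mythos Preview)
Your overall architecture is right, but there are two concrete errors that break the argument as written.

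\textbf{(1) Wrong projection, wrong stratification.} In $D_\Sigma(R)$ the projection is $P_\Sigma$, not $P_{\Sigma-\Sigma}$ (the $\Sigma-\Sigma$ in the corollary is a typo; compare \Cref{lem:charac_suff_RIP}, which uses $P_\Sigma$, and the paper's own display $D_\Sigma(R)=\sup_z \|z_{T^c}\|_\Sigma^2/\|z_T\|_2^2$ with $T$ the support of $k$ largest entries). Consequently $z_{T^c}=0$ only when $|\supp(z)|\leq k$, and the correct stratification is $D_\Sigma(R)=\max_{1\leq L\leq n-k} D_\Sigma^{k+L}(R)$, not $\max_{1\leq L\leq n-2k} D_\Sigma^{2k+L}(R)$. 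Your proposal mixes $T_2$ (in the stratification paragraph) with $T$ (in the upper-bound chain), which is internally inconsistent. Moreover the per-stratum lower bound is not $\geq 1$: the analogue of \Cref{lem:charact_supBL2_atom} (namely \Cref{lem:charact_supDL2_atom}) only gives $D_\Sigma^{k+L}(R)\geq \min(1,L/k)$, and it is the \emph{maximum} over $1\leq L\leq n-k$ (which includes some $L\geq k$ since $k<n/2$) that equals $1$.

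\textbf{(2) The upper bound for $R^\star$ uses the inequality in the wrong direction.} You claim $\|z_{T^c}\|_\Sigma^2\leq \|z_{T^c}\|_1^2/k$, but the correct relation (\Cref{lem:part_value_norm_sigma}) is the reverse: $\|v\|_\Sigma^2\geq \|v\|_1^2/k$. So your chain $\|z_{T^c}\|_\Sigma^2/\|z_T\|_2^2\leq \|z_{T^c}\|_1^2/(k\|z_T\|_2^2)\leq 1$ fails at the first step. The paper does \emph{not} obtain $D_\Sigma(\|\cdot\|_1)\leq 1$ by such an elementary inequality; instead (\Cref{lem:sup_DL_l1}) it invokes the known fact $\delta_\Sigma^{\mathtt{suff}}(\|\cdot\|_1)\geq 1/\sqrt{2}$ from \cite[Theorem 4.1]{Traonmilin_2016} and reads off $D_\Sigma(\|\cdot\|_1)\leq 1$ from \Cref{lem:charac_suff_RIP}. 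The matching lower bound (hence the equality $D_\Sigma^{k+L}(\|\cdot\|_1)=\min(1,L/k)$) is then established by explicit constructions, using $\|v\|_\Sigma^2\geq \|v\|_1^2/k$ in the \emph{other} direction.

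The remaining steps --- monotonicity of $D_\Sigma$ in $\sT_R(\Sigma)$, the reduction to $\sC_\Sigma$ via \Cref{lem:atomic_necessary}, and the $k=1$ uniqueness via the weighted-$\ell^1$ equality case --- are handled exactly as you describe and match the paper.
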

\begin{theorem}[Optimality of the nuclear norm for rank-$r$ matrices for $ \delta_\Sigma^{\mathtt{suff}}$.]\label{th:RIP_suff_atom_nuclear}
With the set of rank-$r$ matrices, $\Sigma = \Sigma_r$, in the space $\sH$ of symmetric $n \times n$ matrices, $r < \frac{n}{2}$, 
and with $R^{\star}(\cdot) = \|\cdot\|_{*}$ (the nuclear norm), we have

  \begin{equation}
\delta_\Sigma^{\mathtt{suff}}(R^{\star}) = \sup_{R\in \sC} \delta_\Sigma^{\mathtt{suff}}(R) 
= 
\frac{1}{\sqrt{2}} .
 \end{equation}
\end{theorem}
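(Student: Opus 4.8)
The plan is to run the exact argument used for the necessary-RC compliance measure in the proofs of \Cref{th:RIP_nec_atom} and \Cref{th:RIP_nec_atom_rank}, with $\delta_\Sigma^{\mathtt{suff}}$ and $D_\Sigma$ in place of $\delta_\Sigma^{\mathtt{nec}}$ and $B_\Sigma$. First, \Cref{lem:assumption_suff} gives $\delta_\Sigma^{\mathtt{suff}}(R)=\delta_\Sigma^{\mathtt{suff2}}(R)$ for $\Sigma=\Sigma_k$ (resp. $\Sigma_r$) and every $R$ with $\Sigma\subseteq\dom(R)$, in particular for $R\in\sC$. By the corollary following \Cref{lem:charac_suff_RIP}, maximizing $\delta_\Sigma^{\mathtt{suff}}$ over a class of regularizers is equivalent to minimizing $D_\Sigma$, and for a minimizer $R^\star$ one has $\delta_\Sigma^{\mathtt{suff}}(R^\star)=(1+D_\Sigma(R^\star))^{-1/2}$; since in fact $\delta_\Sigma^{\mathtt{suff}}(R)=(1+D_\Sigma(R))^{-1/2}$ for every such $R$, the whole theorem reduces to two facts: $D_\Sigma(R^\star)=1$ for $R^\star$ the $\ell^1$-norm (resp. the nuclear norm), and $D_\Sigma(R)\ge 1$ for every $R\in\sC$. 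The announced value $1/\sqrt2=(1+1)^{-1/2}$, attained at $R^\star$, then follows at once.

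For $D_\Sigma(R^\star)=1$ I would compute $D_\Sigma$ directly from the descent cone of $R^\star$: if $z\in\sT_{R^\star}(x)$ with $x$ $k$-sparse of support $S$, then $\|x+z\|_1\le\|x\|_1$ forces $\|z_{S^c}\|_1\le\|z_S\|_1$ (and the analogous trace inequality for the nuclear norm); feeding this into the explicit expression of $D_\Sigma$ attached to \Cref{lem:charac_suff_RIP}, together with the sorting and Cauchy--Schwarz estimates of \cite{Davies_2009}, pins the supremum to $1$. I would isolate this as technical lemmas parallel to \Cref{lem:charact_supBL2_l1} and \Cref{lem:charact_supBL2_nuc}; the low-rank case is obtained from the sparse one by diagonalizing, the computation becoming literally the one on the sorted eigenvalue vector.

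For $D_\Sigma(R)\ge 1$ on all of $\sC$: since $D_\Sigma$ is monotone under inclusion of descent cones, \Cref{lem:atomic_necessary} reduces the claim to $R=\|\cdot\|_\sA\in\sC_\Sigma$ with $\sA\subseteq\Sigma$. As in the proof of \Cref{th:RIP_nec_atom}, write $D_\Sigma(R)=\max_{1\le L\le n-2k}D_\Sigma^{2k+L}(R)$ by restricting the competitor $z$ to sparsity/rank $2k+L$, and, for a suitable $L$, exhibit an explicit $z$ in the descent cone of $\|\cdot\|_\sA$ built from two disjointly supported low-dimensional elements and rescaled so that subadditivity and positive homogeneity keep $z$ in the cone; evaluating the ratio on this $z$ gives the bound. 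This is the analogue of \Cref{lem:charact_supBL2_atom} / \Cref{lem:charact_supBL2_atom_rank}, the only new subtlety being that the numerator carries the model norm $\|\cdot\|_\Sigma$ rather than $\|\cdot\|_2$, handled via the elementary comparison of $\|\cdot\|_\Sigma$ and $\|\cdot\|_2$ on sparse (resp. low-rank) vectors. The $k=1$ uniqueness in \Cref{th:RIP_suff_atom} then follows verbatim: on each orthant $R\in\sC_\Sigma$ is a weighted $\ell^1$-norm, and equality in the lower-bound lemma forces all weights equal, so $R=\lambda\|\cdot\|_1$. I expect this lower-bound lemma to be the main obstacle: controlling $D_\Sigma$ for an \emph{arbitrary} atomic norm (for which there is no closed form) demands a careful choice of the competitor direction using only subadditivity and positive homogeneity, and in the low-rank case one must in addition arrange the competitor's eigenvectors so that it still lies in $\Sigma_r$; all the remaining steps are the structural machinery already established.
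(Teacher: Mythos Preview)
Your proposal is correct and follows essentially the same route as the paper: reduce via \Cref{lem:assumption_suff} and the corollary to \Cref{lem:charac_suff_RIP} to minimizing $D_\Sigma$, compute $D_\Sigma(R^\star)=1$ from the descent-cone characterization (the paper's \Cref{lem:sup_DL_l1}/\Cref{lem:sup_DL_nuclear}), lower-bound $D_\Sigma(R)\ge 1$ for $R\in\sC_\Sigma$ via an explicit competitor (the paper's \Cref{lem:charact_supDL2_atom}/\Cref{lem:charact_supDL2_atom_nuclear}), and push to all of $\sC$ by \Cref{lem:atomic_necessary}. One slip to fix: $D_\Sigma$ uses $P_\Sigma$ (projection onto $k$-sparse/rank-$r$), not $P_{\Sigma-\Sigma}$, so the correct decomposition is $D_\Sigma(R)=\max_{1\le L\le n-k}D_\Sigma^{k+L}(R)$ and the competitor is built from a $k$-sparse $v_0$ and an $L$-sparse $v_1$ (total support $k+L$), not $2k+L$ as in the necessary case; with this adjustment the paper's lemmas give $D_\Sigma^{k+L}(R)\ge\min(1,L/k)$ and equality for $R^\star$.
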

\begin{proof} 
We give the proof for the case of sparse recovery. To express it for low-rank recovery simply replace the 
notation $k$ by $r$. For $1 \leq s \leq n$ and any regularizer  $R$ we define
 \[
 D_{\Sigma}^{s}(R) := \underset{z\in \sT_R(\Sigma) \setminus \{0\}, 
 z \in \Sigma_{s}
}{\sup}  \frac{\|z_{T^c}\|_\Sigma^2}{\|z_{T}\|_2^2}.
 \]
For $s \leq k$ and any $z \in \Sigma_{s}$ we have $z_{T^{c}}=0$ hence $D_{\Sigma}^{s}(R) = 0$, thus
 $D_{\Sigma}(R) = \max_{1 \leq L \leq n-k} D_{\Sigma}^{k+L}(R)$.

 First consider $R\in \sC_\Sigma$. Since $R$ is positively homogeneous and subadditive, by \Cref{lem:charact_supDL2_atom} for $\Sigma_{k}$ / \Cref{lem:charact_supDL2_atom_nuclear} for $\Sigma_{r}$,
\[
D_{\Sigma}^{k+L}(R) \geq  \min(1, \frac{L}{k}),
\qquad\text{for each}\ 1 \leq L \leq n-k.
\]
For $R^{\star}$ and $1 \leq L \leq n-k$ we also have 
(with \Cref{lem:sup_DL_l1} / \Cref{lem:sup_DL_nuclear}) that
  \[
  D_{\Sigma}^{k+L}(R^{\star}) = \min(1, \frac{L}{k}).
  \]
As a result,
\[
D_{\Sigma}(R) \geq  D_\Sigma(R^{\star}) = \max_{1 \leq L \leq n-k}  \min(1, \frac{L}{k}) =1 .
\]

Finally,  remark that $D_\Sigma(R)$ is an increasing function of $\sT_R(\Sigma)$. Using \Cref{lem:atomic_necessary}, for any $R\in\sC$  there is $R '\in \sC_\Sigma$ such that 
\[
D_{\Sigma}(R) \geq D_{\Sigma}(R')  \geq 1.
\]

\end{proof}

\subsection{Discussion}

Even without an analytic expression of the sharp RIP constant, it would have been possible to show  that $R^\star$  optimizes $\delta_{\Sigma}^{\mathtt{sharp}}$ if it were  simultaneously optimizing its lower and upper bound, \ie if we had
\begin{equation}
\sup_{R \in \sC} \delta_{\Sigma}^{\mathtt{suff}}(R)
  = 
 \delta_{\Sigma}^{\mathtt{suff}}(R^{\star})
 = 
 \delta_{\Sigma}^{\mathtt{nec}}(R^{\star})
 =
 \sup_{R \in \sC} \delta_{\Sigma}^{\mathtt{nec}}(R).
\end{equation}

Unfortunately, this is not the case in the sparse and low rank examples. We observe that for fixed $k,n$ we have in both  cases
\begin{equation}
\frac{1}{\sqrt{2}} = \delta_\Sigma^{\mathtt{suff}}(R^{\star}) < \delta_\Sigma^{\mathtt{nec}}(R^{\star}).
\end{equation}

Because of this fact, we cannot conclude on the optimality of $R^\star$ for $\delta_\Sigma^{\mathtt{sharp}}$. 
However,  indexing all objects of the problem by  $n$ the dimension of $\sH$ (respectively the dimension of the diagonals): the set of regularizers $\sC^{(n)}$, the models $\Sigma_k^{(n)}$ and the corresponding $R^{\star, (n)}$ (independent of $k$ for $k<n/2$ as we saw previously). We  have (see \Cref{rk:opt_B_kL})
\begin{equation}
\inf_{n \geq 3}\inf_{k \in \{1,\ldots, \lfloor n/2 \rfloor \}} \sup_{R\in\sC^{(n)}}\delta_{\Sigma_k^{(n)}}^{\mathtt{nec}}(R)  = \frac{1}{\sqrt{2}} = \delta_{\Sigma_k^{(n)}}^{\mathtt{suff}}(R^{\star, (n)}).
\end{equation}

We deduce
\begin{equation}
\inf_{n \geq 3} \inf_{k \in \{1,\ldots, \lfloor n/2 \rfloor \}} \sup_{R\in\sC^{(n)}}\delta_{\Sigma_k^{(n)}}^{\mathtt{sharp}}(R) = \frac{1}{\sqrt{2}} .
\end{equation}
and  
\begin{equation}
 \inf_{n \geq 3} \inf_{k \in \{1,\ldots, \lfloor n/2 \rfloor \}}  \left| \delta_{\Sigma_k^{(n)}}^{\mathtt{sharp}}(R^{\star,(n)})- \left[ \sup_{R\in\sC^{(n)}}\delta_{\Sigma_k^{(n)}}^{\mathtt{sharp}}(R)\right]  \right| = 0.
\end{equation}

This shows that the functions $R^{\star,(n)}$  are optimal as a family with respect to a family of models $\Sigma_k^{(n)}$  and the worst case of their associated compliance measures $\delta_{\Sigma_k^{(n)}}^{\mathtt{sharp}}(R) $.

These results can be interpreted in terms of number of measurements needed to recover uniformly sparse or low rank objects with convex regularization. Under the \emph{best known} (RIP-based) uniform recovery conditions, it is guaranteed that using the optimal regularization with respect to RIP-based  compliance measures will enable the use of fewer measurements. In particular in the case of an operator $M$ built from $m$ random Gaussian measurements, it has been proven (see e.g.~\cite{Foucart_2013}) that there is a universal constant $C$ such that if $m \geq C\frac{k \log(k/n)}{\delta^{2}}$ then 
 $M$ satisfies a prescribed RIP constant $\delta$ with high probability. Hence, the larger the required RIP constant is, the lower the number of measurement needs to be. Such results on the required number of measurement to verify the RIP have been extended to more general low dimensional models (see e.g. \cite{Puy_2015}), making RIP-based optimal regularizers tools of choice to optimize the number of random measurements of elements of a given low dimensional model.

\section{Towards the construction of optimal convex regularizers? The examples of sparsity in levels and beyond.}\label{sec:sparsity_in_levels}

In the previous Section, optimality was shown by exhibiting the optimal regularizer ($\ell^1$-norm and nuclear norm). The only constructive part in these  results is given by \Cref{th:simpl_max_adeq} that shows that we can look for optimal regularizers in the set of atomic norms $\sC_\Sigma$ constructed using the model set $\Sigma$. Ideally,  given a compliance measure, we would like to be able to construct for any model $\Sigma$, an optimal regularizer $R^\star\in \sC_\Sigma$. As such an objective seems out of reach with the tools we have developed so far, we study on an example (the case of {\em sparsity in levels}) the simpler problem of looking for the optimal regularizer  in a  smaller set of regularizers. We consider a set of weighted $\ell^1$-norms and explore the explicit construction of an optimal regularizer  for the compliance measure  $\delta_\Sigma^{\mathtt{nec}}$. We then extend this result to the similar setting of Cartesian product of sparse and low-rank models.

\subsection{Sparsity in levels}
Sparsity \emph{in levels} is a possible extension of plain sparsity, which is useful for the precise modeling of signals such as medical images \cite{Adcock_2013b,Bastounis_2015}. It is also useful for simultaneous modeling of sparse signal and sparse noise \cite{Studer_2013,Traonmilin_2015}.

\begin{definition}[Sparsity in levels.]
In $\sH = \bR^{n_1} \times  \bR^{n_2}  \times \ldots \times  \bR^{n_L}  $, given sparsity levels $k_1,\ldots,k_L$, we define the \textbf{sparsity in levels} model with
\begin{equation}
\Sigma_{k_1,\ldots,k_L} := \{ x = (x_1, \ldots, x_L) :  x_{i} \in \Sigma_{k_i} \}  
\end{equation}
where $\Sigma_{k_i}$ is the $k_i$-sparse model in $\bR^{n_i}$.
\end{definition}

While the $\ell^1$-norm was shown to be is a candidate to perform regularization for models that are sparse in levels \cite{Adcock_2013b}, it was also shown that it is possible to obtain better sufficient RIP recovery guarantees when weighting the $\ell^1$ norm by $\sqrt{k_i}$ in each level \cite{Traonmilin_2016}.  The following theorem permits to show that this weighting scheme is close to optimal for the compliance measure $\delta_\Sigma^{\mathtt{nec}}$  by giving explicit expressions for the calculation of optimal weights.

Given weights $w = (w_1,\ldots,w_L) \in \bRp^L$, we define the $\ell^1$-norm weighted by levels.

\begin{equation}
 \|(x_1,\ldots,x_L)\|_w = \sum_{i=1}^L w_i\|x_i\|_1.
\end{equation}

We have the following result.

\begin{theorem}[Optimal  weighted $\ell^1$  norms for $\delta_\Sigma^{\mathtt{nec}}$ for sparsity in levels.]\label{th:opt_in_levels}
  Consider two integers $k_1,k_2 \geq 2$ and the model set $\Sigma = \Sigma_{k_1,k_2}$ in $\sH = \bR^{n_1}\times \bR^{n_2}$ where we assume that $n_1 \geq 4k_1$, $n_2 \geq 4k_2$. Let $\tilde{a}= 2\sqrt{3}-3$. We define
  \begin{equation}\label{eq:MinDefiningNu12}
   B_\Sigma^{\star} := \min_{\stackrel{\nu_1 \in  [\tilde{a},1-\tilde{a}]}{\nu_2 =1-\nu_1}} \max_{i\in \{1,2\}}\max_{x_i \in \{ \lfloor k_i\sqrt{1+1/\nu_i}\rfloor; \lceil k_i\sqrt{1+1/\nu_i}\rceil\}}\frac{x_i/k_i} {\nu_i ( x_i/k_i+1)^2 +1}
  \end{equation}
 where $\lfloor\cdot \rfloor$ and $\lceil \cdot\rceil$ denote the lower and upper integer part and $(\nu_1^*,\nu_2^*)$ minimizing this expression.

   Then
\begin{equation}
w^{*}   
 \in \arg \max_{
 w
 } \delta_\Sigma^{\mathtt{nec}} (\|\cdot\|_{w
 } )
 \end{equation}
 if and only if $w^{*} =  (w_1^*,w_2^*)$ where $w_{1}^{*},w_{2}^{*}>0$ satisfy
\begin{equation}\label{eq:th_s_in_levels}
\begin{split}
\frac{w_2^*}{w_1^*} &= \sqrt{\frac{k_{1}}{k_{2}} (1/\nu_1^*-1)}.
\end{split}
\end{equation}
Moreover, denoting $w_{0} = w_{0}(k_{1},k_{2}) := (1/\sqrt{k_{1}},1/\sqrt{k_{2}})$ we have
\begin{equation}
 \label{eq:MainThmLevelsBound1}
  \begin{split}
B_{\Sigma 
}( \|\cdot\|_{w^{*}
} ) 
= B_\Sigma^\star 
&\leq B_{\Sigma 
}( \|\cdot\|_{w_{0}} ) \leq (\sqrt{3}-1)/2\\
\delta_\Sigma^{\mathtt{nec}} (\|\cdot\|_{w^{*} 
} ) = (1+2B_\Sigma^\star)^{-1}
& \geq \delta_\Sigma^{\mathtt{nec}} (\|\cdot\|_{w_0} ) \geq 1/\sqrt{3}.
\end{split}
\end{equation}
Finally, we have
\begin{equation} \label{eq:MainThmLevelsBound2}
\inf_{k_{1},k_{2} \geq 1} \inf_{n_{1} \geq 4k_{1},n_{2} \geq 4 k_{2}}
\delta_\Sigma^{\mathtt{nec}} (\|\cdot\|_{w_{0}(k_{1},k_{2})}) = 1/\sqrt{3}.
\end{equation}
\end{theorem}
\begin{proof}
 See \Cref{sec:proofslevels}.

\end{proof}

This theorem comes from the fact that (see proof) the quantity defined in \eqref{eq:PropBSigmaRKSparse} satisfies

\begin{equation}\label{eq:BSigmaL1L2}
 \begin{split}
B_{\Sigma_{k_1,k_2}}( \|\cdot\|_{(w_1,w_2)} )  &= \max_{L_1,L_2} B^{L_1,L_2}_{\Sigma_{k_1,k_2}}((w_1,w_2))\\
\end{split}
\end{equation}
where $B^{L_1,L_2}_{\Sigma_{k_1,k_2}}(\|\cdot\|_{(w_1,w_2)})$ can be computed explicitly (similarly to $B_\Sigma^{2k+L}$ from \eqref{eq:DefBsSigma} for sparsity).

Thanks to the expression of $B_\Sigma(\|\cdot\|_{w^*})$ from \Cref{th:opt_in_levels}, it becomes tractable to evaluate numerically optimal weights. We simply perform the minimization over $\nu_1 \in [\tilde{a}, 1-\tilde{a}]$ over a regular grid (of $10^6$ points in our experiment) and take the minimum. The value of $w_{1}^{*}/w_{2}^{*}$ is returned according to~\eqref{eq:th_s_in_levels}.  Let $w_0 =w_{0}(k_{1},k_{2}) = (1/\sqrt{k_1},1/\sqrt{k_2})$. In \Cref{fig:s_in_levels}, we show a representation of the two criteria $C_1(k_1,k_2)= |1 - \frac{\ls w^*,w_0\rs}{\|w^*\|_2 \|w_0\|_2}|$ and $C_2(k_1,k_2)= |\delta_\Sigma^{\mathtt{nec}}(\|\cdot\|_{w^*})-\delta_\Sigma^{\mathtt{nec}}(\|\cdot\|_{w_0}) |$ for different pairs $(k_1,k_2)$. The case $C_1(k_1,k_2)=C_2(k_1,k_2)=0$ occurs if and only if $w_0$ is optimal).

\begin{figure}[!h]
  \centering
\includegraphics[width=0.45\linewidth]{./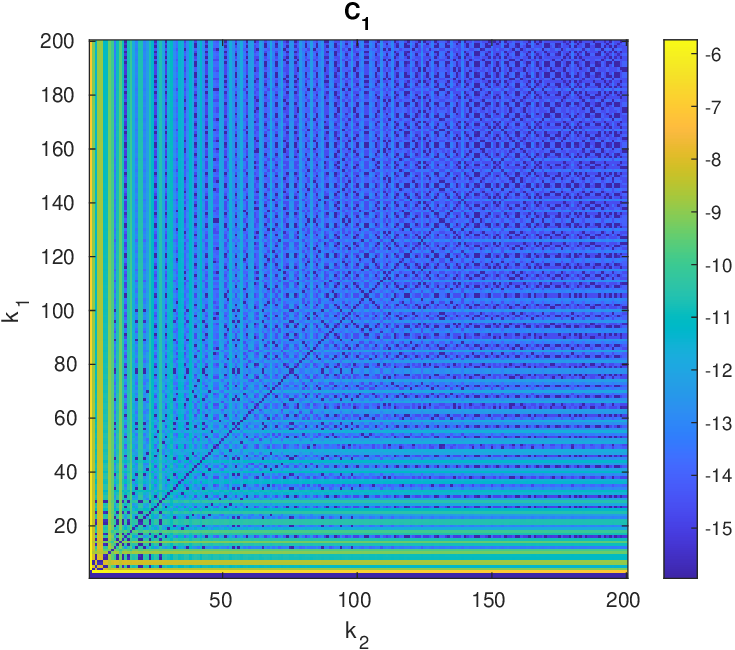}
\includegraphics[width=0.45\linewidth]{./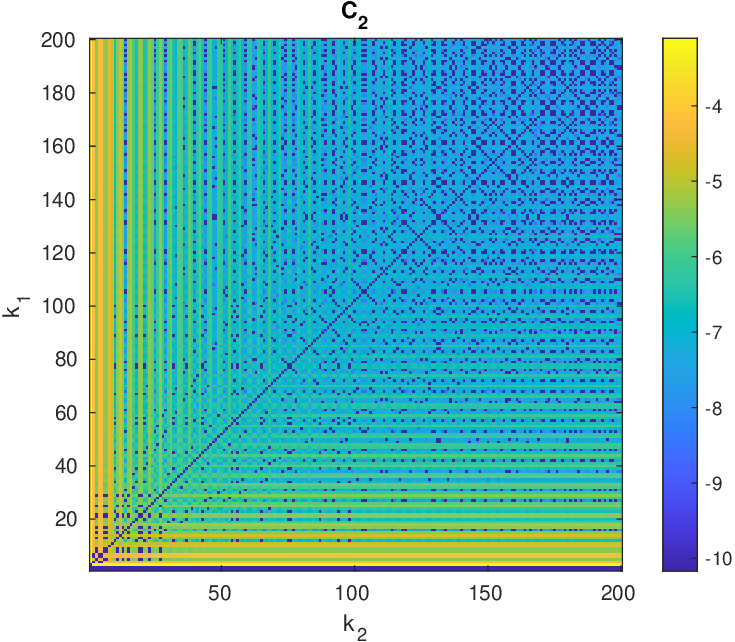}

\caption{Then quantities $\log_{10} (C_1(k_1,k_2)) := \log_{10}\left(|1 - \frac{\ls w^*,w_0\rs}{\|w^*\|_2 \|w_0\|_2}|\right)$ (left) and $\log_{10}(C_2(k_1,k_2)):= \log_{10}(|\delta_\Sigma^{\mathtt{nec}}(\|\cdot\|_{w^*})-\delta_\Sigma^{\mathtt{nec}}(\|\cdot\|_{w_0}) |)$ (right) where $w^{*}=(w_1^*,w_2^*)$ is obtained from \Cref{th:opt_in_levels} and $w_0 = (1/\sqrt{k_1},1/\sqrt{k_2})$ for different $k_1, k_2\geq 2$ . }
  \label{fig:s_in_levels}
\end{figure}

We observe  numerically that for $ 2\leq k_1, k_2\leq 200 $,   $C_1(k_1,k_2)\leq 10^{-5}$ and $C_2(k_1,k_2)\leq 5\cdot 10^{-3}$ and that the error tends to decrease for greater $k_1,k_2$. This comes from the fact that the result of the discrete optimization over the integers $L_i$ in \eqref{eq:BSigmaL1L2} 
 gets closer to the result of a continuous optimization that yields $w_{2}^{*}/w_{1}^{*}= \sqrt{k_1}/ \sqrt{k_2}$ (obtained by dropping the integer parts in \Cref{th:opt_in_levels}).

For the ``asymptotically optimal'' weighting scheme $w_0=w_{0}(k_{1},k_{2}) =\left(\frac{1}{\sqrt{k_1}}, \frac{1}{\sqrt{k_2}}\right)$, we find
\begin{equation}\label{eq:SandwichIneqDeltaSuffLevels}
\inf_{k'_1,k'_2\geq 1, n'_1\geq 4 k_1',n'_2\geq 4 k_2'} \delta_{\Sigma_{k'_1,k'_2}}^{\mathtt{nec}} ( \|\cdot\|_{w_0})
\stackrel{\eqref{eq:MainThmLevelsBound2}}{=} \frac{1}{\sqrt{3}}
\stackrel{(*)}{\leq}
\delta_{\Sigma_{k_1,k_2}}^{\mathtt{suff}}( \|\cdot\|_{w_0}) \leq \delta_{\Sigma_{k_1,k_2}}^{\mathtt{sharp}}( \|\cdot\|_{w_0}) \leq \delta_{\Sigma_{k_1,k_2}}^{\mathtt{nec}}( \|\cdot\|_{w_0})                                                                                                                                                                                                                                                                                             . \end{equation}
The inequality (*) is shown in Theorem~\ref{th:RIP_suff_in_levels} below (improving for $L=2$ the lower bound $\frac{1}{\sqrt{2+L}} =1/\sqrt{4}=1/2$ for sparsity in $L$ levels previously given in \cite[Theorem 4.2]{Traonmilin_2016}),  and the last inequalities are generic, cf \eqref{eq:IneqRIPConstants}.

The double-sided bound~\eqref{eq:SandwichIneqDeltaSuffLevels} confirms that the weighting scheme $\left(\frac{1}{\sqrt{k_1}}, \frac{1}{\sqrt{k_2}}\right)$ is close to an optimal choice (w.r.t the maximization of $\delta_{\Sigma_{k_1,k_2}}^{\mathtt{sharp}}$) when the sparsities are known.

\begin{theorem}[Sufficient RIP condition for near-optimal $\ell^1$ norms for sparsity in levels.]\label{th:RIP_suff_in_levels}
  Consider two integers $k_1,k_2 \geq 2$ and the model set $\Sigma = \Sigma_{k_1,k_2}$ in $\sH = \bR^{n_1}\times \bR^{n_2}$ 
  with $n_{i} \geq k_{i}$, $i =1,2$,
 and the norm $ \|(x_1,x_2)\|_{w} = \sum_{i=1}^2 \frac{1}{\sqrt{k_i}}\|x_i\|_1.$ Then
 \begin{equation}\label{eq:LowerBoundRIPsufflevels}
  \delta_{\Sigma_{k_1,k_2}}^{\mathtt{suff}} ( \|\cdot\|_w) 
  \geq
     \frac{1}{\sqrt{3}}.
 \end{equation}
\end{theorem}
\begin{proof}
 See \Cref{sec:proofslevels}.
\end{proof}

\subsection{Beyond sparsity in levels}
Beyond sparsity in levels, we
 obtain exactly the same result for the Cartesian product of a sparse model and a low-rank model. Consider $\Sigma_{k,r} = \Sigma_k \times \Sigma_r \subset \bR^n \times \sH_p$ where $\sH_p$ is the set of symmetric matrices of size $p \times p$. This model with $n=p^{2}$ can be used to model sums of sparse and low rank matrices. To address associated matrix reconstruction problems it was suggested in \cite{tanner2020compressed} to use a weighted sum of the $\ell^1$-norm and the nuclear norm with weights ratio $\frac{\sqrt{k}}{\sqrt{r}}$, ie $\|(z_{1},z_{2})
 \|_{w} = \frac{1}{\sqrt{k}} \|z_1\|_1 +\frac{1}{\sqrt{r}} \|z_{2}
 \|_{*}$. The following Theorem guarantees that the previous
 numerical experiments hold with this model (by replacing $k_1$ by $k$  and $k_2$ by $r$). It thus confirms that this is a near optimal choice of weights.

\begin{theorem}[Optimal mixed norms for $\delta_\Sigma^{\mathtt{nec}}$ for sparse plus low-rank models.]\label{th:sparse_LR}
  Consider two integers $k,r \geq 2$ and the model set $\Sigma = \Sigma_{k}\times \Sigma_{r}$ in $\sH = \bR^{n}\times \sH_p$ where we assume that $n \geq 4k$, $p \geq 4r$.
   Consider $\tilde{a}= 2\sqrt{3}-3$, $B_\Sigma^\star$ and $(\nu_1^*,\nu_2^*)$ from \Cref{th:opt_in_levels} with $k_1 =k$ and $k_2=r$. Then, with $\|(z_{1},z_{2})
 \|_{w} := w_{1} \|z_1\|_1 + w_{2} \|z_{2}
 \|_{*}$, we have:
\begin{equation}
w^{*}   
 \in \arg \max_{
 w
 } \delta_\Sigma^{\mathtt{nec}} (\|\cdot\|_{w
 } )
 \end{equation}
 if and only if $w^{*} =  (w_1^*,w_2^*)$ where $w_{1}^{*},w_{2}^{*}>0$ satisfy
\begin{equation}
\begin{split}
\frac{w_2^*}{w_1^*} &= \sqrt{\frac{k}{r} (1/\nu_1^*-1)}.
\end{split}
\end{equation}
Moreover, denoting $w_{0} = w_{0}(k,r) := (1/\sqrt{k},1/\sqrt{r})$ we have
\begin{equation}
  \begin{split}
B_{\Sigma 
}( \|\cdot\|_{w^{*}
} )
= B_\Sigma^\star
&\leq B_{\Sigma 
}( \|\cdot\|_{w_{0}} ) \leq (\sqrt{3}-1)/2\\
\delta_\Sigma^{\mathtt{nec}} (\|\cdot\|_{w^{*} 
} ) = (1+2B_\Sigma^\star)^{-1}
& \geq \delta_\Sigma^{\mathtt{nec}} (\|\cdot\|_{w_0} ) \geq 1/\sqrt{3}.
\end{split}
\end{equation}
Finally, we have
\begin{equation}
\inf_{k,r \geq 1} \inf_{n \geq 4k,p \geq 4 r}
\delta_\Sigma^{\mathtt{nec}} (\|\cdot\|_{w_{0}(k,r)}) = 1/\sqrt{3}.
\end{equation}
\end{theorem}
\begin{proof}
 See \Cref{sec:proofslevels}.
\end{proof}

 The resulting weighting scheme for the sparse + low-rank model has been used practically in \cite{guennec2023adaptive}. In the context of structure-texture decomposition of images, the structure is modeled by images with sparse gradients and the texture is modeled by images having patches with low-rank structure. In this work, a heuristic based on the weigthing scheme $\frac{w_2^*}{w_1^*} = \sqrt{\frac{k}{r}}$ is proposed and permits to automatically set regularization parameters for local sparse + low rank models.  These results for sparsity in levels and beyond  show that even with a simple model and parametrized family of functions, optimization might lead to complicated expressions. We also remark that we could perform the optimization because restricting to weighted atomic norms leads to an analytical description of atoms generating the regularizers. This in turn leads to an analytical description of descent cones. The question of optimality within more general sets of atomic norms remains. Unfortunately the lack of analytical description of descent cones in the general case makes the direct extension of our proof technique difficult.

\section{Discussion and future work} \label{sec:conclusion}

We gave a general way of defining compliance measures between a  regularizer $R$ and a low dimensional model set $\Sigma$, and described some elementary properties of such measures. This opens questions on  conditions on compliance measures that guarantee the existence of an optimal convex regularizer. Also, the question of manipulating compliance measures for transformations and combinations of models  (intersections, unions, sums, ...) is a wide and challenging potential area of research. 

We considered  noiseless observations instead of the classical noisy model $y = M x_0 + e$ where $e$ is a measurement noise with finite energy $\|e\|_2$ because of the following remark. Suppose we define an optimal regularizer for a given  noise level $\|e\|_2$. There are two possible cases, either the regularizer is also optimal for $\|e\|_2=0$ or it is not. In the second case, it means that we would need to trade exact recovery capability for improved stability to noise. This is a possible route to follow especially if there is some latitude on the design of the measurement operator $M$, \ie it is possible to increase measurements to improve stability to noise. The analysis of such trade-offs is out of the scope of this article and left for future work.

We have shown that the $\ell^1$-norm is optimal among coercive continuous convex functions for sparse recovery for compliance measures based on necessary and sufficient RIP conditions. This result had to be expected due to the symmetries of the problem. The important point is that we could explicitly quantify the notion of good regularizer. We obtained the same expected results with the nuclear norm for low-rank matrix recovery.

It must be noted that  we did not use constructive proofs (we \emph{exhibited} the candidate maximum of the compliance measure) for the sparsity and low-rank cases. A full constructive proof, \ie an exact calculation and optimization of the quantities $B_\Sigma(R)$ and $D_{\Sigma}(R)$ would be intellectually more satisfying as it would not require the prior knowledge of the candidate optimum, which is our ultimate objective. We saw in the case of sparsity in levels and beyond that we can construct the regularizer that achieved optimality among a simple parametrized family of convex functions (weighted $\ell^1$-norms in levels). It is an open question to obtain more general constructions.

We used compliance measures based on (uniform) RIP recovery guarantees to give results for the  sparse recovery case, it would be interesting to do such analysis using  (non-uniform) recovery guarantees based on the statistical dimension or on the Gaussian width of the descent cones \cite{Chandrasekaran_2012,Amelunxen_2014}. One would need to precisely lower and upper bound these quantities, similarly to our approach with the RIP, to get satisfying results.

Finally, while these compliance measures are designed to make sense with respect to known results in the area of sparse recovery, one might design other compliance measures tailored for particular needs, in this search for optimal regularizers.

\section*{Acknowledgements}
This work was partly supported by the ANR, project EFFIREG ANR-20-CE40-0001, project AllegroAssai ANR-19-CHIA-0009 and project GraVa ANR-18-CE40-0005.

\bibliographystyle{imaiai}
 \bibliography{opti_convex_reg_journal}

\ifx\undefined\BySame
\newcommand{\BySame}{\leavevmode\rule[.5ex]{3em}{.5pt}\ }
\fi
\ifx\undefined\textsc
\newcommand{\textsc}[1]{{\sc #1}}
\newcommand{\emph}[1]{{\em #1\/}}
\let\tmpsmall\small
\renewcommand{\small}{\tmpsmall\sc}
\fi
\begin{thebibliography}{99}

\bibitem{Adcock_2013b}
\textsc{Adcock, B., Hansen, A., Roman, B.  {\small \&} Teschke, G.}  (2014)
  Generalized Sampling: Stable Reconstructions, Inverse Problems and Compressed
  Sensing over the Continuum. \emph{Advances in Imaging and Electron Physics},
  \textbf{182}(1), 187--279.

\bibitem{Amelunxen_2014}
\textsc{Amelunxen, D., Lotz, M., McCoy, M.~B.  {\small \&} Tropp, J.~A.}
  (2014) Living on the Edge: Phase Transitions in Convex Programs with Random
  Data. \emph{Information and Inference, A Journal of the IMA}, \textbf{3}(3),
  224--294.

\bibitem{Amelunxen_2020}
\textsc{Amelunxen, D., Lotz, M.  {\small \&} Walvin, J.}  (2020) Effective
  Condition Number Bounds for Convex Regularization. \emph{Information Theory,
  IEEE Transactions on}, \textbf{66}(4), 2501--2516.

\bibitem{Argyriou_2012}
\textsc{Argyriou, A., Foygel, R.  {\small \&} Srebro, N.}  (2012) {Sparse
  Prediction with the k-Support Norm}. \emph{Advances in Neural Information
  Processing Systems}, \textbf{25}, 1457--1465.

\bibitem{bach2013learning}
\textsc{Bach, F.}  (2013) \emph{{Learning with Submodular Functions: A Convex
  Optimization Perspective}}, Foundations and Trends in Machine Learning. {Now
  Publishers}.

\bibitem{Bastounis_2015}
\textsc{Bastounis, A.  {\small \&} Hansen, A.~C.}  (2015) On Random and
  Deterministic Compressed Sensing and the Restricted Isometry Property in
  Levels. in \emph{Sampling Theory and Applications (SampTA), 2015
  International Conference on}, pp. 297--301.

\bibitem{bertsekas1979convexification}
\textsc{Bertsekas, D.~P.}  (1979) Convexification Procedures and Decomposition
  Methods for Nonconvex Optimization Problems. \emph{Journal of Optimization
  Theory and Applications}, \textbf{29}(2), 169--197.

\bibitem{bougeard1991towards}
\textsc{Bougeard, M., Penot, J.-P.  {\small \&} Pommellet, A.}  (1991) Towards
  Minimal Assumptions for the Infimal Convolution Regularization. \emph{Journal
  of Approximation Theory}, \textbf{64}(3), 245--270.

\bibitem{Bourrier_2014}
\textsc{Bourrier, A., Davies, M., Peleg, T., Perez, P.  {\small \&} Gribonval,
  R.}  (2014) Fundamental Performance Limits for Ideal Decoders in
  High-Dimensional Linear Inverse Problems. \emph{Information Theory, IEEE
  Transactions on}, \textbf{60}(12), 7928--7946.

\bibitem{Candes2013Simpleboundsrecovering}
\textsc{Cand{\`e}s, E.  {\small \&} Recht, B.}  (2013) Simple Bounds for
  Recovering Low-Complexity Models. \emph{Mathematical Programming},
  \textbf{141}(1-2), 577--589.

\bibitem{Candes_2010}
\textsc{Candes, E.~J.  {\small \&} Plan, Y.}  (2010) Matrix Completion with
  Noise. \emph{Proceedings of the IEEE}, \textbf{98}(6), 925--936.

\bibitem{Candes_2006b}
\textsc{Cand\`{e}s, E.~J., Romberg, J.  {\small \&} Tao, T.}  (2006) {Robust
  Uncertainty Principles: Exact Signal Reconstruction from Highly Incomplete
  Frequency Information}. \emph{Information Theory, IEEE Transactions on},
  \textbf{52}(2), 489--509.

\bibitem{Chambolle_2011}
\textsc{Chambolle, A.  {\small \&} Pock, T.}  (2011) A First-order Primal-dual
  Algorithm for Convex Problems with Applications to Imaging. \emph{Journal of
  Mathematical Imaging and Vision}, \textbf{40}(1), 120--145.

\bibitem{Chandrasekaran_2012}
\textsc{Chandrasekaran, V., Recht, B., Parrilo, P.  {\small \&} Willsky, A.}
  (2012) The Convex Geometry of Linear Inverse Problems. \emph{Foundations of
  Computational Mathematics}, \textbf{12}(6), 805--849.

\bibitem{Chen1998AtomicDecompositionBasis}
\textsc{Chen, S., Donoho, D.  {\small \&} Saunders, M.}  (1998) Atomic
  {{Decomposition}} by {{Basis Pursuit}}. \emph{SIAM Journal on Scientific
  Computing}, \textbf{20}(1), 33--61.

\bibitem{Chi_2019}
\textsc{Chi, Y., Lu, Y.~M.  {\small \&} Chen, Y.}  (2019) Nonconvex
  Optimization Meets Low-rank Matrix Factorization: An Overview. \emph{Signal
  Processing, IEEE Transactions on}, \textbf{67}(20), 5239--5269.

\bibitem{Davies_2009}
\textsc{Davies, M.~E.  {\small \&} Gribonval, R.}  (2009) Restricted Isometry
  Constants where {$\ell^p$} Sparse Recovery Can Fail for {$0 < p \leq 1$ }.
  \emph{Information Theory, IEEE Transactions on}, \textbf{55}(5), 2203--2214.

\bibitem{Donoho_2006}
\textsc{Donoho, D.~L.}  (2006) {For Most Large Underdetermined Systems of
  Linear Equations the Minimal $\ell_1$-norm Solution is also the Sparsest
  Solution}. \emph{Communications on Pure and Applied Mathematics},
  \textbf{59}(6), 797--829.

\bibitem{fan2001variable}
\textsc{Fan, J.  {\small \&} Li, R.}  (2001) Variable Selection via Nonconcave
  Penalized Likelihood and its Oracle Properties. \emph{Journal of the American
  Statistical Association}, \textbf{96}(456), 1348--1360.

\bibitem{Fazel2001rankminimizationheuristic}
\textsc{Fazel, M., Hindi, H.  {\small \&} Boyd, S.~P.}  (2001) A Rank
  Minimization Heuristic with Application to Minimum Order System
  Approximation. in \emph{Proceedings of the 2001 {{American Control
  Conference}}}, vol.~6, pp. 4734--4739 vol.6.

\bibitem{foucart2009sparsest}
\textsc{Foucart, S.  {\small \&} Lai, M.-J.}  (2009) Sparsest Solutions of
  Underdetermined Linear Systems via $\ell^q$-minimization for $0<q \leq1$.
  \emph{Applied and Computational Harmonic Analysis}, \textbf{26}(3), 395--407.

\bibitem{Foucart_2013}
\textsc{Foucart, S.  {\small \&} Rauhut, H.}  (2013) \emph{A Mathematical
  Introduction to Compressive Sensing}. Springer.

\bibitem{Friedland_2018}
\textsc{Friedland, S.  {\small \&} Lim, L.-H.}  (2018) Nuclear Norm of
  Higher-order Tensors. \emph{Mathematics of Computation}, \textbf{87}(311),
  1255--1281.

\bibitem{guennec2023adaptive}
\textsc{Guennec, A., Aujol, J.-F.  {\small \&} Traonmilin, Y.}  (2023) Adaptive
  Parameter Selection for Gradient-sparse + Low Patch-rank Recovery:
  Application to Image Decomposition. \emph{HAL preprint hal-04207313}.

\bibitem{jach2008convexenvelope}
\textsc{Jach, M., Michaels, D.  {\small \&} Weismantel, R.}  (2008) The
  {{Convex Envelope}} of (n\textendash 1)-{{Convex Functions}}. \emph{SIAM
  Journal on Optimization}, \textbf{19}(3), 1451--1466.

\bibitem{lasserre2018moment}
\textsc{Lasserre, J.~B.}  (2018) The Moment-SOS Hierarchy. \emph{Proceedings of
  the International Congress of Mathematics}, pp. 3761--3784.

\bibitem{Marz_2020}
\textsc{M{\"a}rz, M., Boyer, C., Kahn, J.  {\small \&} Weiss, P.}  (2023)
  Sampling Rates for {$\ell^1$}-Synthesis. \emph{Foundations of Computational
  Mathematics}, \textbf{23}, 2089--2150.

\bibitem{Negahban2012UnifiedFrameworkHighDimensional}
\textsc{Negahban, S.~N., Ravikumar, P., Wainwright, M.~J.  {\small \&} Yu, B.}
  (2012) A {{Unified Framework}} for {{High}}-{{Dimensional Analysis}} of
  ${M}$-{{Estimators}} with {{Decomposable Regularizers}}. \emph{Statistical
  Science}, \textbf{27}(4), 538--557.

\bibitem{Pock_2010}
\textsc{Pock, T., Cremers, D., Bischof, H.  {\small \&} Chambolle, A.}  (2010)
  Global Solutions of Variational Models with Convex Regularization. \emph{SIAM
  Journal on Imaging Sciences}, \textbf{3}(4), 1122--1145.

\bibitem{Puy_2015}
\textsc{Puy, G., Davies, M.~E.  {\small \&} Gribonval, R.}  (2017) Recipes for
  Stable Linear Embeddings from Hilbert Spaces to $\mathbb{R}^m$.
  \emph{Information Theory, IEEE Transactions on}, \textbf{63}(4), 2171--2187.

\bibitem{Recht_2010}
\textsc{Recht, B., Fazel, M.  {\small \&} Parrilo, P.}  (2010) {Guaranteed
  Minimum-Rank Solutions of Linear Matrix Equations via Nuclear Norm
  Minimization}. \emph{SIAM Review}, \textbf{52}(3), 471--501.

\bibitem{Rockafellar_1970}
\textsc{Rockafellar, R.~T.}  (1970) \emph{Convex Analysis}, no.~28. Princeton
  university press.

\bibitem{Soubies_2015}
\textsc{Soubies, E., Blanc-F{\'e}raud, L.  {\small \&} Aubert, G.}  (2015) A
  Continuous Exact $\ell_0$ Penalty (CEL0) for Least Squares Regularized
  Problem. \emph{SIAM Journal on Imaging Sciences}, \textbf{8}(3), 1607--1639.

\bibitem{Studer_2013}
\textsc{Studer, C.  {\small \&} Baraniuk, R.~G.}  (2014) {Stable Restoration
  and Separation of Approximately Sparse Signals}. \emph{Applied and
  Computational Harmonic Analysis}, \textbf{37}(1), 12--35.

\bibitem{tanner2020compressed}
\textsc{Tanner, J.  {\small \&} Vary, S.}  (2023) Compressed Sensing of
  Low-rank plus Sparse Matrices. \emph{Applied and Computational Harmonic
  Analysis}, \textbf{64}, 254--293.

\bibitem{Traonmilin_2020inverse}
\textsc{Traonmilin, Y.  {\small \&} Aujol, J.-F.}  (2020) {The Basins of
  Attraction of the Global Minimizers of the Non-convex Sparse Spike Estimation
  Problem}. \emph{{Inverse Problems}}, \textbf{36}(4), 045003.

\bibitem{Traonmilin_2020b}
\textsc{Traonmilin, Y., Aujol, J.-F.  {\small \&} Leclaire, A.}  (2023) The
  Basins of Attraction of the Global Minimizers of Non-convex Inverse Problems
  with Low-dimensional Models in Infinite Dimension. \emph{Information and
  Inference: A Journal of the IMA}, \textbf{12}(1), 113--156.

\bibitem{Traonmilin_2016}
\textsc{Traonmilin, Y.  {\small \&} Gribonval, R.}  (2018) Stable Recovery of
  Low-dimensional Cones in Hilbert Spaces: One RIP to Rule them All.
  \emph{Applied and Computational Harmonic Analysis}, \textbf{45}(1), 170--205.

\bibitem{Traonmilin_2015}
\textsc{Traonmilin, Y., Ladjal, S.  {\small \&} Almansa, A.}  (2015) Robust
  Multi-image Processing with Optimal Sparse Regularization. \emph{Journal of
  Mathematical Imaging and Vision}, \textbf{51}(3), 413--429.

\bibitem{Traonmilin_2018}
\textsc{Traonmilin, Y.  {\small \&} Vaiter, S.}  (2018) Optimality of 1-norm
  Regularization Among Weighted 1-norms for Sparse Recovery: A Case Study on
  How to Find Optimal Regularizations. \emph{Journal of Physics: Conference
  Series}, \textbf{1131}, 012009.

\bibitem{Vaiter2015Modelselectionlow}
\textsc{Vaiter, S., Golbabaee, M., Fadili, J.  {\small \&} Peyré, G.}  (2015)
  Model Selection with Low Complexity Priors. \emph{Information and Inference:
  A Journal of the IMA}, \textbf{4}(3), 230--287.

\bibitem{Vaiter2017ModelConsistencyPartly}
\textsc{Vaiter, S., Peyre, G.  {\small \&} Fadili, J.}  (2017) Model
  {{Consistency}} of {{Partly Smooth Regularizers}}. \emph{Information Theory,
  IEEE Transactions on}, \textbf{PP}(99), 1--1.

\bibitem{vershynin2015estimation}
\textsc{Vershynin, R.}  (2015) Estimation in High Dimensions: A Geometric
  Perspective. in \emph{Sampling theory, A Renaissance}, pp. 3--66. Springer.

\bibitem{Yuan2006Modelselectionestimation}
\textsc{Yuan, M.  {\small \&} Lin, Y.}  (2006) Model Selection and Estimation
  in Regression with Grouped Variables. \emph{Journal of the Royal Statistical
  Society: Series B (Statistical Methodology)}, \textbf{68}(1), 49--67.

\end{thebibliography}

\appendix

\section{Appendices}

This section describes the tools and proofs used to obtain our results.
\subsection{Summary of properties used in proofs}\label{sec:summary_prev_results}

 From  \cite[Table 1]{Traonmilin_2016} (which summarizes results from \cite{Rockafellar_1970}  ),  the function $x \in \sE(\sA) \mapsto \|x\|_{\sA}$ is always non-negative, lower semi-continuous and subadditive  (\ie it satisfies the triangle inequality). It is furthermore positively homogeneous as soon as $0 \in \cl{\tconv}(\sA)$, continuous as soon as $0$ is in the interior of $\cl{\tconv}(\sA)$, and coercive as soon as $\cl{\tconv}(\sA)$ is bounded. Finally, it is indeed a norm if $\cl{\tconv}(\sA) = -\cl{\tconv}(\sA)$.

  We refer the reader to  \cite{Traonmilin_2016}[Section 2.2] and \cite{Argyriou_2012} for properties of the atomic norm $\|\cdot\|_\Sigma$ (cf \Cref{def:modelnorm}).
  We will use the following two properties of $\|\cdot\|_\Sigma$ (defined in \Cref{sec:atomic}). 
\begin{fact}[{From e.g. \cite{Traonmilin_2016}}] \label{fact:atom1}

 For all $x \in \Sigma$, $\|x\|_\Sigma = \|x\|_\sH$.
\end{fact}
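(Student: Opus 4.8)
The plan is to unfold the two definitions involved. By \Cref{def:modelnorm} and the definition of the atomic norm $\|\cdot\|_\sA$,
\[
\|x\|_\Sigma = \|x\|_{\Sigma\cap S(1)} = \inf\left\{ t \in \bRp : x \in t\cdot\cl{\tconv}(\Sigma\cap S(1)) \right\}.
\]
We may assume $\Sigma \neq \{0\}$ and pick some $v \in \Sigma\setminus\{0\}$; since $\Sigma$ is a cone, $v/\|v\|_\sH \in \Sigma\cap S(1)$, so $\cl{\tconv}(\Sigma\cap S(1))$ is nonempty and therefore $0 \in 0\cdot\cl{\tconv}(\Sigma\cap S(1))$, which gives $\|0\|_\Sigma = 0 = \|0\|_\sH$. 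It then remains to treat an arbitrary $x \in \Sigma\setminus\{0\}$, for which $\|x\|_\sH > 0$.

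For the bound $\|x\|_\Sigma \leq \|x\|_\sH$, I would note that, $\Sigma$ being positively homogeneous, $x/\|x\|_\sH \in \Sigma$, and this vector obviously lies on $S(1)$; hence $x/\|x\|_\sH \in \Sigma\cap S(1) \subseteq \cl{\tconv}(\Sigma\cap S(1))$, so that $x \in \|x\|_\sH\cdot\cl{\tconv}(\Sigma\cap S(1))$. Thus $\|x\|_\sH$ belongs to the set over which the defining infimum is taken, and $\|x\|_\Sigma \leq \|x\|_\sH$ follows.

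For the reverse bound, the key point I would invoke is that $\cl{\tconv}(\Sigma\cap S(1))$ is contained in the closed unit ball $\{ z \in \sH : \|z\|_\sH \leq 1\}$: indeed $\Sigma\cap S(1)$ is contained in this ball, which is convex and closed, hence it contains $\tconv(\Sigma\cap S(1))$ and its closure as well. Consequently, whenever $t \in \bRp$ satisfies $x \in t\cdot\cl{\tconv}(\Sigma\cap S(1))$ we must have $t>0$ (because $x \neq 0$) and $\|x\|_\sH = t\,\|x/t\|_\sH \leq t$; taking the infimum over all such $t$ yields $\|x\|_\sH \leq \|x\|_\Sigma$. Combining the two inequalities gives $\|x\|_\Sigma = \|x\|_\sH$.

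I do not expect a genuine obstacle here: the whole argument reduces to the elementary observations that rescaling $x$ by $1/\|x\|_\sH$ keeps it in the cone and puts it on $S(1)$, and that the closed convex hull of a subset of the unit ball stays inside the unit ball. The only mildly delicate bookkeeping---excluding the value $t=0$ in the infimum and the passage to the closure in the containment $\cl{\tconv}(\Sigma\cap S(1)) \subseteq \{ z \in \sH : \|z\|_\sH \leq 1\}$---is handled by the remarks above, which is presumably why the statement is merely quoted ``from e.g.~\cite{Traonmilin_2016}''.
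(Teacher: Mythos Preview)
Your proof is correct. The paper does not supply its own argument for this fact: it is simply recorded as a citation to \cite{Traonmilin_2016}, so there is nothing to compare against beyond noting that your two-inequality argument (normalising $x$ by $\|x\|_\sH$ to get the upper bound, and using that the closed convex hull of a subset of the unit ball remains in the unit ball for the lower bound) is exactly the standard elementary proof one expects.
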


\begin{fact}[{From \cite{Traonmilin_2016}[Fact 2.1] applied to $\|\cdot\|_\Sigma$}] \label{fact:atom2}
 For all $z \in \sH$
 \begin{equation}
  \|z\|_\Sigma = \inf \left\{ \sqrt{\sum \lambda_i \|u_i\|_\sH^2} :  \lambda_i \in \bRp, \sum \lambda_i =1, 
 u_i \in \Sigma, z=\sum  \lambda_i u_i\right\}.
 \end{equation}

\end{fact}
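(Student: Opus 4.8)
The plan is to prove the identity by routing through the more primitive notion of a \emph{conic}-combination gauge. Writing $\sA := \Sigma\cap S(1)$ so that $\|\cdot\|_\Sigma = \|\cdot\|_{\sA}$ by \Cref{def:modelnorm}, I introduce
\[
G(z) := \inf\Big\{ \sum\nolimits_i c_i : c_i \in \bRp,\ a_i \in \sA,\ z = \sum\nolimits_i c_i a_i \Big\},
\]
and denote by $N(z)$ the right-hand side of the statement. I would then establish the two equalities $\|z\|_\Sigma = G(z)$ and $G(z) = N(z)$. The first is the classical fact that an atomic norm coincides with the $\ell^1$-type gauge over conic representations by its atoms; the second is the real content, and rests on a Cauchy--Schwarz argument together with the fact that $\Sigma$ is a cone, so that an atom $a\in\sA$ may be rescaled to any $u = s\cdot a \in \Sigma$.

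For $G = N$, I would argue both inequalities by converting representations. Given $z=\sum_i \lambda_i u_i$ with $\sum_i\lambda_i=1$, $u_i\in\Sigma$ (we may assume $u_i\neq 0$, null terms being harmless and only lowering the weight sum), set $a_i := u_i/\|u_i\|_\sH \in \sA$ and $c_i := \lambda_i\|u_i\|_\sH$; then $z = \sum_i c_i a_i$ and, by Cauchy--Schwarz,
\[
\sum_i c_i = \sum_i \lambda_i \|u_i\|_\sH \leq \sqrt{\sum_i \lambda_i}\;\sqrt{\sum_i \lambda_i \|u_i\|_\sH^2} \leq \sqrt{\sum_i \lambda_i \|u_i\|_\sH^2},
\]
using $\sum_i\lambda_i\le 1$, so $G(z)\leq N(z)$. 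Conversely, from a conic representation $z=\sum_i c_i a_i$ with $a_i\in\sA$ and $s:=\sum_i c_i>0$, put $\lambda_i := c_i/s$ and $u_i := s\,a_i$; since $\Sigma$ is a cone, $u_i\in\Sigma$, and one checks $\sum_i\lambda_i u_i = z$ while $\sqrt{\sum_i\lambda_i\|u_i\|_\sH^2}=\sqrt{\sum_i (c_i/s)\,s^2}=s$. The point to emphasize is that this choice \emph{equalizes} the norms $\|u_i\|_\sH = s$, which is exactly the equality case of the Cauchy--Schwarz bound above; hence $N(z)\leq s$, and taking the infimum gives $N(z)\leq G(z)$. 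The degenerate case $z=0$ is handled separately, with $G(0)=N(0)=0$.

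For $\|\cdot\|_\Sigma = G$, the inequality $\|z\|_\Sigma\leq G(z)$ is immediate: if $z=\sum_i c_i a_i$ with $s=\sum_i c_i$, then $z/s\in\tconv(\sA)\subseteq\cl{\tconv}(\sA)$, whence $\|z\|_\Sigma\leq s$ by the definition of the atomic norm. The reverse inequality $G(z)\leq\|z\|_\Sigma$ is where I expect the main obstacle, namely the \emph{closure} in the definition $\|z\|_\Sigma=\inf\{t:\,z/t\in\cl{\tconv}(\sA)\}$: for $t>\|z\|_\Sigma$ one only knows $z/t\in\cl{\tconv}(\sA)$, and a boundary point of the closure need not itself be a finite convex combination of atoms. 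In the finite-dimensional setting considered here this is resolved whenever $\sA=\Sigma\cap S(1)$ is compact, since then Carathéodory's theorem makes $\tconv(\sA)$ compact, hence closed, so $\cl{\tconv}(\sA)=\tconv(\sA)$ and $z/t=\sum_j\mu_j a_j$ is a genuine finite convex combination; scaling by $t$ yields a conic representation of $z$ with sum $t$, so $G(z)\leq t$, and letting $t\downarrow\|z\|_\Sigma$ concludes. For a general, non-closed atom set one instead approximates $z/t$ by interior convex combinations and invokes lower semicontinuity of $\|\cdot\|_\Sigma$ (recalled in \Cref{sec:summary_prev_results}); this limiting step is precisely the technical core handled in \cite{Traonmilin_2016}[Fact 2.1], to which the statement appeals.
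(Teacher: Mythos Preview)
Your proof is correct. The paper does not actually prove this statement: it is recorded as a \emph{Fact} imported verbatim from \cite{Traonmilin_2016}[Fact 2.1], with no argument given in the present paper. So there is no ``paper's own proof'' to compare against; you have supplied a self-contained argument where the paper simply cites.

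A couple of minor remarks on the write-up. First, in the $G\le N$ step you write ``using $\sum_i\lambda_i\le 1$'' where the constraint is in fact $\sum_i\lambda_i=1$; this is harmless since the bound becomes $\sqrt{1}=1$, but you may as well state it as an equality. Second, your caveat about the closure is well placed: the identity $\|\cdot\|_\Sigma=G$ is unproblematic precisely because in the finite-dimensional setting of the paper the atom set $\sA=\Sigma\cap S(1)$ is compact for all the models actually used (sparse vectors, low-rank matrices, sparsity in levels), so $\cl{\tconv}(\sA)=\tconv(\sA)$ by Carath\'eodory and no limiting argument is needed. Your acknowledgment that the general case is exactly what \cite{Traonmilin_2016}[Fact 2.1] handles is the right way to close the loop.
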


\subsection{Proofs for  \texorpdfstring{\Cref{sec:comp_general}}{Section \ref{sec:comp_general}}}\label{sec:proof_ideal_comp}

\subsubsection{Proof of  \texorpdfstring{\Cref{lem:ideal_decoder_descent}}{Lemma \ref{lem:ideal_decoder_descent}}}

Consider $x\in \Sigma$, and $z \in \sH$. We have   $\iota_\Sigma(x+z)\leq \iota_\Sigma(x) =0$ if and only if $x+z \in \Sigma$, \ie if there is $x' \in \Sigma$ such that $z = x'-x$. Hence, $\sT_{\iota_\Sigma}(x) = \{\gamma (x'-x): \gamma \in \bR, x' \in \Sigma\}$. It follows that $\sT_{\iota_\Sigma}(\Sigma) = \{\gamma z: \gamma \in \bR, z \in \Sigma-\Sigma\} \supseteq \Sigma-\Sigma$. When $\Sigma$ is positively homogeneous, for any $z = x'-x \in \Sigma-\Sigma$ and $\gamma \in \bR$ we have: if $\gamma >0$ then $\gamma z = \gamma x'-\gamma x \in \Sigma-\Sigma$; if $\gamma <0$ then $\gamma z = (-\gamma)x-(-\gamma)x' \in \Sigma-\Sigma$; if $\gamma = 0$ then $\gamma z = 0 = x-x \in \Sigma-\Sigma$, hence indeed $\sT_{\iota_\Sigma}(\Sigma)  \subseteq \Sigma-\Sigma$.

Now consider $y \in \sT_{\iota_{\Sigma}}(\Sigma)$ and write it as $y = \gamma(x_{1}-x_{2})$ where $x_{1},x_{2} \in \Sigma$ and $\gamma \in \bR$. Since $\Sigma \subseteq \mathtt{dom}(R)$ we have $\max(R(x_{1}),R(x_{2})) < \infty$. We will prove that $y \in \sT_{R}(\Sigma)$. We distinguish two cases: if $R(x_1)\leq R(x_2)$ then $R(x_{2}+(x_{1}-x_{2})) = R(x_{1})  \leq R(x_{2})$ hence $y = \gamma (x_{1}-x_{2}) \in \sT_{R}(x_{2})$, and as $x_{2} \in \Sigma$ it follows that $y  \in \sT_{R}(\Sigma)$; otherwise $R(x_{2})<R(x_{1})$ hence $R(x_{1}+(x_{2}-x_{1})) = R(x_{2}) < R(x_{1})$ hence $y = (-\gamma) (x_{2}-x_{1}) \in \sT_{R}(x_{1})$ and therefore $y \in \sT_{R}(\Sigma)$. $\qed$

\subsubsection{Proof of \texorpdfstring{\Cref{lem:atomic_necessary}}{Lemma \ref{lem:atomic_necessary}} }\label{sec:reduction_proof}

Given  $t > R(0)$, the level set $\sL(R,t)= \{y\in \sH : R(y)\leq t \}$ is nonempty, convex and closed (by convexity and lower semi-continuity of $R$), and bounded (by coercivity of $R$). 
We define $\sA \defin \sL(R,t) \cap \Sigma = \{x\in \Sigma : R(x) \leq t \}$. 

Consider $z \in \sT_{\|\cdot\|_\sA}(\Sigma)$. If $z = 0$ then clearly $z \in \sT_R(\Sigma)$. Let us prove that the same holds when $z \neq 0$.
By definition, there exists $\gamma \in \bR \setminus \{0\}$ and $x\in \Sigma$ such that 
\[
\|x+z/\gamma\|_\sA \leq \|x\|_\sA.
\]
On the one hand we have $R(0 \cdot x) = R(0) < t$. On the other hand, since $R$ is coercive, we have $R(\lambda x)\underset{\lambda \to +\infty}{\to} +\infty$. Since $R$ is continuous, by the mean value theorem, there is $\lambda_0 >0$ such that
\[
 R(\lambda_0 x) = t.
 \]

Since $\Sigma$ is a cone, the vector $x' = \lambda_0 x$ belongs to $\Sigma$ and, since  $R(x')=t$, by definition of $\sA$ we have indeed $x' \in \sA$, hence $\|x'\|_{\sA} \leq 1$. Furthermore, since  $\|\cdot\|_\sA$ is positively homogeneous (because $0 \in \overline{\tconv}(\sA)$), we have  
\[
\|x'+ \lambda_0 z/\gamma\|_\sA = \lambda_{0} \|x+z/\gamma\|_{\sA} \leq \lambda_{0} \|x\|_{\sA} = \|x'\|_\sA. 
\]
We now observe that, on the one hand, the level set $\sL( \|\cdot\|_\sA,1) = \overline{\tconv}(\sA)$  is the smallest closed convex set containing $\sA$; on the other hand $\sA \subset \sL( R,t)$ and $\sL( R,t)$ is convex and closed. Thus $\sL( \|\cdot\|_\sA,1) \subset \sL(R,t)$ and the fact that $\|x'+\lambda_{0}z/\gamma\|_{\sA} \leq \|x'\|_{\sA} \leq 1$ therefore implies
\begin{equation}
R(x'+\lambda_{0} z/\gamma) \leq t = R(x').
\end{equation}
This shows that $z \in \sT_R(\Sigma)$ and establishes that $\sT_{\|\cdot\|_{\sA}}(\Sigma) \subseteq \sT_R(\Sigma)$.

 Let us now prove that $\|\cdot\|_{\sA}$ is continuous, convex, coercive and positively homogeneous. First, from the property of gauges (see \Cref{sec:summary_prev_results}), $\|\cdot\|_\sA$ is always convex and lower semi-continuous. 
Second, since $R$ is coercive, its level sets are bounded, hence  $\cl{ \tconv}(\sA)$ is bounded and $\|\cdot\|_\sA$ is coercive.  Finally, as $R(0)<t$ and $R$ is continuous, $0$ is in the interior of $\sL(R,t)$.  There exists $\epsilon >0$ such that an open ball $O$ of radius $\epsilon$ centered on $0$ is included in $\sL(R,t)$.  This implies $O \cap \Sigma \subset  \sL(R,t) \cap \Sigma = \sA$ which in turns imply $\tconv( O \cap \Sigma) \subset  \tconv(\sA)  \subset \overline{\tconv(\sA) }$.  Remark that $ \sE(O \cap \Sigma) = \sE(\Sigma) = \sH$. Now we need to find $ O'$ an open ball of radius $\epsilon'$  such that $ O'\subset \tconv( O \cap \Sigma)$. In each orthant $\Omega_r$, we can find a normalized  basis $E = (e_i) \in \Sigma$ such that $\Omega_r \subset \sE(E)$. We define the norm $\|\sum_i \mu_i e_i\|_E=  \sum \mu_i$. This norm is equivalent to $\| \cdot\|_\sH$. This implies  there is a constant $c_r$ depending on the orthant $\Omega_r$, such that  for $x = \sum_i \mu_i e_i \in O' \cap \Omega_r$, $\max_i \mu_i < c_r\epsilon'$.  This implies 
\begin{equation}
x = t  \sum_i \frac{\mu_i}{\sum_j \mu_j} \epsilon e_i
 \end{equation}
with $t =\frac{\sum_j \mu_j}{\epsilon} \leq n c_r \frac{\epsilon'}{\epsilon}$. Taking $\epsilon' < {\epsilon}/(n c_r)$ implies $t < 1$ and  $x \in \tconv( O \cap \Sigma)$. As there is a finite number of orthants we can chose $\epsilon'$ such that we always have $x \in O'$ implies $x \in \tconv( O \cap \Sigma)$. $\qed$

\subsection{Proofs for \texorpdfstring{\Cref{sec:RC}}{Section \ref{sec:RC}}}\label{sec:link_RIP_RC}

\begin{proof}[Proof of \Cref{lem:link_RIP_RC}]

Denote $\alpha =\inf_{x\in (\Sigma-\Sigma) \cap S(1)}   \|M x\|_2^2$ and $\beta = \sup_{x\in (\Sigma-\Sigma) \cap S(1)}   \|M x\|_2^2$, so that $\gamma(M) = \beta/\alpha$. Since $\Sigma$ is a cone, we have for every
 $x\in \Sigma- \Sigma$,
\begin{equation}\label{eq:link_RIP_RC_star}
\alpha\|x\|_\sH^2 \leq \|\mM x\|_2^2 \leq \beta \|x\|_{\sH}^{2} = \gamma(M)\alpha\|x\|_\sH^2,
\end{equation}
Multiplying $x$ in~\eqref{eq:link_RIP_RC_star} by any $\lambda>0$, we have
\begin{equation*}    
\lambda^2\alpha\|x\|_\sH^2 \leq \|\lambda \mM x\|_2^2 \leq \lambda^2\gamma(M)\alpha\|x\|_\sH^2.
\end{equation*}
We look for $\lambda > 0$, $\delta \neq 1$ such that $\lambda M$ satisfies a symmetric RIP with constant $\delta$, \ie
\begin{equation*}
  \lambda^2\alpha = 1-\delta
  \quad \text{ and } \quad
  \lambda^2 \gamma(M)\alpha = 1+\delta.
\end{equation*}
Adding these two equalities yields $\lambda^2 \alpha(1 +  \gamma(M) )= 1$, hence $\lambda^2 = \frac{1}{\alpha(1 +  \gamma(M))}$. Dividing them yields
\begin{equation*}
  \frac{1-\delta}{1+\delta} = \gamma(M)
  \iff
  \delta =\frac{\gamma(M)-1}{\gamma(M)+1}.
\end{equation*}
We have shown that for any $M$, there exists $\lambda >0$ such that
 \begin{equation*}
  \delta(\lambda M)  \leq  \frac{\gamma(M)-1}{\gamma(M)+1}.
 \end{equation*}
Remark that the value of $\lambda$ that makes the RIP bounds symmetrical is unique, and that no better symmetrical RIP bound can be obtained, otherwise we could construct a better restricted conditioning (which is impossible by definition of $\gamma(M)$). We deduce
\begin{equation*}
  \delta(\lambda M)  =  \frac{\gamma(M)-1}{\gamma(M)+1}.  
 \end{equation*}

\end{proof}
\begin{lemma}\label{lem:best_op_nec_RIP_dim1}
Consider a cone $\Sigma \subseteq \sH$ and $\mathcal{T} \subseteq \sH$ a non-empty set, and denote $\mathcal{P}$ the set of symmetric positive semi-definite linear operators on $\sH$, \ie $N \in \mathcal{P}$ if and only if $N^{H} = N$ and $N \succeq 0$. Then
\begin{equation}\label{eq:best_op_nec_RIP_dim1}
  \inf_{M: \ker M \cap \mathcal{T} \neq \{0\}} \gamma_{\Sigma}(M) =
  \inf_{N \in \mathcal{P}:   \dim \ker N = 1,
    \ker N \cap \mathcal{T} \neq \{0\}} \gamma_{\Sigma}(N).
\end{equation}
\end{lemma}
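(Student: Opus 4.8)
The plan is to prove the equality in~\eqref{eq:best_op_nec_RIP_dim1} by establishing the two inequalities. The inequality ``$\leq$'' is immediate: any $N \in \mathcal{P}$ with $\dim \ker N = 1$ and $\ker N \cap \mathcal{T} \neq \{0\}$ is in particular a linear operator on $\sH$ whose kernel meets $\mathcal{T}$ nontrivially, so the left-hand infimum is taken over a larger collection of operators, with the same functional $\gamma_{\Sigma}$, and can only be smaller. All the work is in the reverse inequality: I would fix an arbitrary $M$ with $\ker M \cap \mathcal{T} \neq \{0\}$ and show that $\inf\{\gamma_{\Sigma}(N): N \in \mathcal{P},\ \dim \ker N = 1,\ \ker N \cap \mathcal{T} \neq \{0\}\} \leq \gamma_{\Sigma}(M)$; taking the infimum over such $M$ then concludes. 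One may assume $\gamma_{\Sigma}(M) < \infty$, since otherwise there is nothing to prove.

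The first step is to replace $M$ by a PSD operator with the same restricted conditioning and the same kernel. The nonnegative quadratic form $x \mapsto \|Mx\|_2^2$ on $\sH$ is represented by a (unique) self-adjoint PSD operator $A$ on $\sH$; set $N_0 := A^{1/2} \in \mathcal{P}$. Then $\|N_0 x\|_\sH^2 = \langle Ax,x\rangle = \|Mx\|_2^2$ for every $x$, hence $\gamma_{\Sigma}(N_0) = \gamma_{\Sigma}(M)$ and $\ker N_0 = \ker M$; choose $z \in \ker N_0 \cap \mathcal{T}$ with $z \neq 0$, normalized so $\|z\|_\sH = 1$. Since $\ker N_0$ may be more than one-dimensional, I would perturb $N_0$ to collapse the surplus. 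Write $\ker N_0 = \tspan(z) \oplus K$ as an orthogonal sum inside $\sH$, let $\Pi_K$ be the orthogonal projection onto $K$, and for $\epsilon > 0$ define
\begin{equation*}
N_\epsilon := N_0 + \sqrt{\epsilon}\,\Pi_K .
\end{equation*}

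Then I would check three things. (i) $N_\epsilon \in \mathcal{P}$: it is self-adjoint, and $\langle N_\epsilon x, x\rangle = \langle N_0 x,x\rangle + \sqrt{\epsilon}\,\|\Pi_K x\|_\sH^2 \geq 0$. (ii) $\dim \ker N_\epsilon = 1$ and $z \in \ker N_\epsilon$: from (i), $N_\epsilon x = 0$ forces $\langle N_0 x,x\rangle = 0$ and $\Pi_K x = 0$, i.e.\ $x \in \ker N_0 \cap K^{\perp} = \tspan(z)$, while conversely $N_\epsilon z = 0$; so $\ker N_\epsilon = \tspan(z)$, which meets $\mathcal{T}$ nontrivially. (iii) $\gamma_{\Sigma}(N_\epsilon)$ is close to $\gamma_{\Sigma}(M)$: since $N_0$ is self-adjoint, $\tim N_0 = (\ker N_0)^{\perp} \subseteq K^{\perp}$ (because $K \subseteq \ker N_0$), hence $N_0 x \perp \Pi_K x$ and therefore $\|N_\epsilon x\|_\sH^2 = \|N_0 x\|_\sH^2 + \epsilon\,\|\Pi_K x\|_\sH^2$; on $(\Sigma-\Sigma)\cap S(1)$ this lies between $\alpha := \inf_{x \in (\Sigma-\Sigma)\cap S(1)} \|Mx\|_2^2 > 0$ and $\beta + \epsilon$, where $\beta := \sup_{x \in (\Sigma-\Sigma)\cap S(1)} \|Mx\|_2^2 < \infty$ (using $\|\Pi_K x\|_\sH \leq \|x\|_\sH = 1$). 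Hence $\gamma_{\Sigma}(N_\epsilon) \leq (\beta+\epsilon)/\alpha = \gamma_{\Sigma}(M) + \epsilon/\alpha$, and letting $\epsilon \to 0^{+}$ yields $\inf\{\gamma_{\Sigma}(N): \ldots\} \leq \gamma_{\Sigma}(M)$.

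I do not expect a genuine obstacle: the statement is essentially the observation that a PSD operator can be perturbed, at arbitrarily small cost in restricted conditioning, into one whose kernel is a prescribed line intersecting $\mathcal{T}$. The only points requiring care are the exact identification $\ker N_\epsilon = \tspan(z)$ (which uses $K \subseteq \ker N_0$ and $z \perp K$) and the additive splitting $\|N_\epsilon x\|_\sH^2 = \|N_0 x\|_\sH^2 + \epsilon\|\Pi_K x\|_\sH^2$ (which uses self-adjointness of $N_0$). Degenerate cases — $\mathcal{T} = \{0\}$, or $(\Sigma-\Sigma)\cap S(1) = \emptyset$ — make both sides equal to $+\infty$ and are dispatched in a line.
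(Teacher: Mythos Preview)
Your proposal is correct and follows essentially the same approach as the paper's proof: both replace $M$ by the PSD operator $N_{0}=(M^{H}M)^{1/2}$ (which the paper writes via the SVD as $\sum_{i}\sigma_{i}v_{i}v_{i}^{H}$), then perturb by a small positive multiple of the projection onto the orthogonal complement of $\tspan(z)$ inside $\ker M$ to force the kernel down to a line, and control the effect on $\gamma_{\Sigma}$ via the additive splitting $\|N_{\epsilon}x\|_{\sH}^{2}=\|Mx\|_{2}^{2}+\epsilon\|\Pi_{K}x\|_{\sH}^{2}$. Your treatment is marginally cleaner in that it handles the cases $\dim\ker M=1$ and $\dim\ker M\geq 2$ uniformly (the paper splits them), but the argument is the same.
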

\begin{proof}
The infimum on the r.h.s. of~\eqref{eq:best_op_nec_RIP_dim1} is over a more constrained set than on the l.h.s., hence
\[
\inf_{M: \ker M \cap \mathcal{T} \neq \{0\}} \gamma_{\Sigma}(M) \leq \inf_{N \in \mathcal{P}: \dim \ker N = 1, \ker N \cap \mathcal{T} \neq \{0\}} \gamma_{\Sigma}(N).
\]
If the l.h.s. is infinite, then the right-hand side must also be infinite, and we are done.

Assume that the l.h.s. is finite.
We now prove the reverse inequality.
For this, consider $M$ a linear operator with $\ker M \cap \mathcal{T} \neq \{0\}$ and $\gamma_{\Sigma}(M) < \infty$.
There exists a nonzero vector $t \in \ker M \cap \mathcal{T}$.
We build an operator $N \in \mathcal{P}$ such that $\ker N = \tspan(t)$ and with $\gamma_{\Sigma}(N)$ arbitrarily close to $\gamma_{\Sigma}(M)$. 

Since $\gamma_{\Sigma}(M) < \infty$, $M$ is nonzero hence
a singular value decomposition allows writing $M = \sum_{i=1}^{r} \sigma_{i} u_{i}v_{i}^{H}$ where $(u_{i})_{i=1}^{r}$ and  $(v_{i})_{i=1}^{r}$ are orthonormal families and $\min_{1 \leq i \leq r} \sigma_{i}>0$.
First we deal with the case where $\dim \ker M = 1$. We set $N = \sum_{i=1}^{r} \sigma_{i} v_{i} v_{i}^{H}$ so that $N \in \mathcal{P}$ and $\dim \ker N = 1$ too. Since $\|N x\|_2^{2}  = \sum_{i=1}^{r} \sigma_{i}^{2} \langle v_{i},x\rangle^{2} = \|Mx\|_2^{2}$  for any vector $x$ we have $\gamma(N) = \gamma(M)$, and we are done. Assume now that $k := \dim \ker M \geq 2$. Observe that $\tspan(t) \subset \ker M$ and let $(e_{1},\ldots,e_{k-1})$ be an orthonormal basis of the orthogonal complement of $\tspan(t)$ in $\ker M$, so that $(v_{1},\ldots,v_{r},e_{1},\ldots,e_{k-1})$ is an orthonormal family. For each $\epsilon>0$, define $N_{\epsilon} = \sum_{i=1}^{r} \sigma_{i} v_{i}v_{i}^{H} + \epsilon \sum_{j=1}^{k-1}e_{j} e_{j}^{H}$. Again, $N_{\epsilon} \in \mathcal{P}$ and $\tspan(t) = \ker N_{\epsilon}$ so that $\dim \ker N_{\epsilon} = 1$, and for each $x \in \mathcal{H}$ we have
\begin{equation*}
\|N_{\epsilon}x\|_2^{2} = \sum_{i=1}^{r} \sigma_{i}^{2} \langle v_{i},x\rangle^{2} + \epsilon^{2} \sum_{j=1}^{k-1} \langle e_{j},x\rangle^{2} = \|Mx\|_2^{2}+\epsilon^{2} \sum_{j=1}^{k-1} \langle e_{j},x\rangle^{2},
\end{equation*}
hence $\|Mx\|_2^{2} \leq  \|N_{\epsilon}x\|_2^{2} \leq \|Mx\|_2^{2}+\epsilon^{2}\|x\|_2^{2}$. Since $\gamma_{\Sigma}(M)<\infty$, we get
\[
0< \inf_{x \in (\Sigma-\Sigma) \cap S(1)} \|Mx\|_2^{2}
\leq 
\inf_{x \in (\Sigma-\Sigma) \cap S(1)} \|N_{\epsilon}x\|_2^{2}
\leq 
\sup_{x \in (\Sigma-\Sigma) \cap S(1)} \|N_{\epsilon}x\|_2^{2}
\leq 
\sup_{x \in (\Sigma-\Sigma) \cap S(1)} \|Mx\|_2^{2}+\epsilon^{2}
\]
which implies
\[
\gamma_{\Sigma}(N_{\epsilon})
 \leq 
 \frac{\sup_{x \in (\Sigma-\Sigma) \cap S(1)} \|Mx\|_2^{2}+\epsilon^{2}}{\inf_{x \in (\Sigma-\Sigma) \cap S(1)} \|Mx\|_2^{2}} 
 = \gamma_{\Sigma}(M) + \frac{\epsilon^{2} }{\inf_{x \in (\Sigma-\Sigma) \cap S(1)} \|Mx\|_2^{2}}.
\]
This implies that $\inf_{\epsilon>0} \gamma_{\Sigma}(N_{\epsilon}) \leq \gamma_{\Sigma}(M)$ as claimed.

\end{proof}

\begin{proof}[Proof of \Cref{lem:transf_sigma_RC}]
We define  
\begin{equation}
G(\Sigma, E, M) :=  \frac{\sup_{y\in (\Sigma-\Sigma)\cap E} \|My\|_2^2}{\inf_{y\in (\Sigma-\Sigma)\cap E} \|My\|_2^2}.
\end{equation}
For any nonzero $M$, we have 
\begin{equation}
\begin{split}
 \gamma_{F\Sigma} (M) &=  \frac{\sup_{x\in (F\Sigma-F\Sigma)\cap S(1)} \|Mx\|_2^2}{\inf_{x\in (F\Sigma-F\Sigma)\cap S(1)} \|Mx\|_2^2}
=  \frac{\sup_{y\in (\Sigma-\Sigma)\cap F^{-1}S(1)} \|MFy\|_2^2}{\inf_{y\in (\Sigma-\Sigma)\cap F^{-1}S(1)} \|MFy\|_2^2}.\\
 \end{split}
\end{equation}
Hence, 
\begin{align*}
A_{F\Sigma}^{RC}(R \circ F^{-1}) 
&=  \inf_{M : \ker M \cap \sT_{R \circ F^{-1}}(F\Sigma) \neq \{ 0\}} \gamma_{F\Sigma} (M) \\
 &=  \inf_{M : \ker M \cap \sT_{R \circ F^{-1}}(F\Sigma) \neq \{ 0\}}  G(\Sigma, F^{-1}S(1), MF).
\end{align*}
By \Cref{lem:transf_sigma_gen} with $R' = R \circ F^{-1}$, $\sT_{R \circ F^{-1}}(F\Sigma) = \sT_{R'}(F\Sigma) = F(\sT_{R' \circ F}(\Sigma)) = F(\sT_{R}(\Sigma))$. Also, $ \ker M \cap \sT_{R \circ F^{-1}}(F\Sigma) \neq \{ 0\} $ is equivalent to the existence of  $  z \in \ker M$ such that
$ z' :=   F^{-1}z \in \sT_{R}(\Sigma)$, \ie of $ z' \in \sT_{R}(\Sigma) $ such that $ z :=  Fz' \in \ker M$. As a result,

\begin{equation}
\begin{split}
 \inf_{M : \ker M \cap \sT_{R \circ F^{-1}}(F\Sigma) \neq \{ 0\}} \gamma_{F\Sigma} (M) 
 &=  \inf_{M : F^{-1}\ker M \cap \sT_{R }(\Sigma) \neq \{ 0\}}  G(\Sigma, F^{-1}S(1), MF) .\\
 \end{split}
\end{equation}
Rewriting  $M' = MF$, we have  $\ker M' = F^{-1} \ker M $ and 
\begin{equation}
\begin{split}
 \inf_{M : \ker M \cap \sT_{R \circ F^{-1}}(F\Sigma) \neq \{ 0\}} \gamma_{F\Sigma} (M) 
 &=  \inf_{M' : \ker M' \cap \sT_{R }(\Sigma) \neq \{ 0\}}  G(\Sigma, F^{-1}S(1), M')\\
 \end{split}
\end{equation}
which gives the desired result using the fact that $F^{-1}S(1) = S(1)$ since $F$ is a linear isometry. 

\end{proof}
\subsection{Proofs for \texorpdfstring{\Cref{sec:nec_RIP}}{Section \ref{sec:nec_RIP}} }\label{sec:proofuos}

\begin{proof}[Proof of \Cref{lem:expr_RC_sparse}] 

Consider $ z \in \sH \setminus \{0\}$  and $M = I-\Pi_{z}$. For every $x \in S(1)$, we have
\begin{equation}
 \|Mx\|_2^2 = 1 -\frac{\ls x,z\rs^2}{\|z\|_\sH^2}
\end{equation}
hence 
\begin{align*}
\gamma_{\Sigma}(M) &= \frac{\sup_{x\in (\Sigma-\Sigma)\cap S(1)} \|Mx\|_2^2}{\inf_{x\in (\Sigma-\Sigma)\cap S(1)} \|Mx\|_2^2} 
= \frac{ 1  -\inf_{x\in (\Sigma-\Sigma)\cap S(1)} \frac{\ls x,z\rs^2}{\|z\|_\sH^2} }{ 1-\sup_{x\in (\Sigma-\Sigma)\cap S(1)}\frac{\ls x,z\rs^2}{\|z\|_\sH^2}} 
\end{align*}
\textbf{Case 1:} By assumption there is $x_{0}$ such that $\|x_{0}\|_{\sH}=1$ and $\Sigma \subseteq \tspan(x_0)$. Since $\Sigma \neq \{0\}$ is a cone, it follows that  $(\Sigma-\Sigma) \cap S(1) =
\tspan(x_{0}) \cap S(1) = \{-x_{0},+x_{0}\}$ and 
\begin{equation}
\begin{split}
\inf_{x\in (\Sigma-\Sigma)\cap S(1)} \frac{\ls x,z\rs^2}{\|z\|_\sH^2}&= \sup_{x\in (\Sigma-\Sigma)\cap S(1)}\frac{\ls x,z\rs^2}{\|z\|_\sH^2}  =  \frac{\ls x_0,z\rs^2}{\|z\|_\sH^2}.
\end{split}
\end{equation}
Hence, if $z \in \Sigma = \tspan(x_{0})$ we have $\gamma_{\Sigma}(M) = +\infty$, otherwise 
$\frac{\ls x_0,z\rs^2}{\|z\|_\sH^2} <1$ and $\gamma_{\Sigma}(M)=1$. Thus, if $\sT_{R}(\Sigma) \subseteq \Sigma$ we have   $A_\Sigma^{RIP,\mathtt{nec}}(R) = +\infty$, otherwise there is $z \in \sT_R(\Sigma) \setminus \Sigma$, and $A_\Sigma^{RIP,\mathtt{nec}}(R) = 1$.\\
\textbf{Case 2:} Let us show that for any $z \neq 0$ there is some $x \in (\Sigma - \Sigma) \setminus \{0\}$ such that  $\ls x,z\rs =0$. This implies $\inf_{x\in (\Sigma-\Sigma)\cap S(1)} \frac{\ls x,z\rs^2}{\|z\|_\sH^2}=0$ and yields the result. Indeed, by assumption, given any $x_1 \in \Sigma \setminus \{ 0\}$ there is $x_2 \in \Sigma$ such that $x_2 \notin \tspan(x_1)$ (hence $x_{2} \neq 0$). If $\ls x_1, z \rs = 0$ we take $x = x_{1} = x_{1}-\lambda x_{2}$ with $\lambda = 0$. Otherwise, with $\lambda = \frac{\ls x_2, z \rs }{\ls x_1, z \rs }$ we set $x  = \lambda x_1 -x_2$. In both cases we have $x \neq 0$ and, since $\Sigma$ is a cone, $x \in \Sigma -\Sigma $ and $\ls \lambda x_1 -x_2, z\rs =0$.

\end{proof}

\begin{proof}[Proof of \Cref{lem:ProjectionExists}]
Since $E \cap S(1)$ is compact, for any $z$ there exists $\tilde{x} \in E \cap S(1)$ such that 
\begin{equation}\label{eq:TmpProjOptim}
|\langle \tilde{x},z\rangle|^{2} = \max_{\tilde{y} \in E \cap S(1)} |\langle \tilde{y},z\rangle|^{2}.
\end{equation}
Since $E$ is a union of subspaces, it is homogeneous. Thus, as $\tilde{x} \in E$, we have $x:= \ls \tilde{x},z\rs \tilde{x} \in E$. If $y \in E \setminus \{0\}$, we have $\tilde{y} := y/\|y\|_{\sH} \in E \cap S(1)$, $\langle z,\tilde{y}\rangle \tilde{y}$ is the orthogonal projection of $z$ on $\tilde{y}$ and
\begin{equation}
\begin{split}
\|z-y\|_{\sH}^{2} &= \big\|z-\|y\|_{\sH} \cdot \tilde{y}\big\|_{\sH}^{2} \geq \|z-\langle z,\tilde{y}\rangle \tilde{y}\|_{\sH}^{2} = \|z\|_{\sH}^{2}-|\langle z,\tilde{y}\rangle|^{2} \\
&\stackrel{\eqref{eq:TmpProjOptim}}{\geq} \|z\|_{\sH}^{2}-|\langle z, \tilde{x}\rangle|^{2} \\
\end{split}
\end{equation}
 Since $\|z- x\|_{\sH}^{2} = \|z\|_{\sH}^{2}-2 \re \ls z, x \rs + \|x\|_\sH^2 = \|z\|_{\sH}^{2}-|\langle z, \tilde{x}\rangle|^{2}$, we conclude 
 
\begin{equation}
\begin{split}
\|z-y\|_{\sH}^{2}  &\geq \|z-x\|_{\sH}^{2}\\
\end{split}
\end{equation}

\noindent and $x \in P_{E}(z)$ by definition of $P_E$.

If $x' \in P_{E}(z)$, we have $\|z-x'\|_{\sH}^{2} = \|z-x\|_{\sH}^{2} = \min_{y \in E} \|z-y\|_{\sH}^{2}$ hence the notation $\|z-P_{E}(z)\|_{\sH}^{2}$ is unambiguous. Since $x' \in P_{E}(z)$, there is equality in the above equation with $y=x'$, hence $\|y\|_{\sH}=\langle z,\tilde{y}\rangle$ and $|\langle z,\tilde{y}\rangle|^{2}= |\langle z,\tilde{x}\rangle|^{2}$, therefore $\langle z,y\rangle = \langle z,\|y\|_{\sH}\tilde{y}\rangle = \|y\|_{\sH} \langle z,\tilde{y}\rangle = \|y\|_{\sH}^{2} = \langle z,\tilde{y}\rangle^{2} = \langle z,\tilde{x}\rangle^{2} = \|x\|_{\sH}^{2}$. This shows that the notations $\|P_{E}(z)\|_{\sH}^{2}$ and $\langle z,P_{E}(z)\rangle$ are unambiguous and that $\|P_{E}(z)\|_{\sH}^{2} = \langle z,P_{E}(z)\rangle$.

 We also have $\|z\|_{\sH}^{2} = \|x\|^{2}_{\sH}+\|z-x\|_{\sH}^{2} = \|x'\|^{2}$, and $\langle z,y\rangle = \|y\|_{\sH}$ hence the notations $\|z-P_{E}(z)\|_{\sH}^{2}$ and $\|P_{E}(z)\|_{\sH}^{2}$ are unambiguous.
\end{proof}

\begin{proof}[Proof of \Cref{cor:RCnecUoS}]
Since $\Sigma-\Sigma$ is a union of subspaces and $(\Sigma-\Sigma) \cap S(1)$ is compact, by \Cref{lem:ProjectionExists}, $\sup_{x\in (\Sigma-\Sigma)\cap S(1)}\frac{\ls x,z\rs^2}{\|z\|_\sH^2} =  \frac{\|P_{\Sigma-\Sigma}(z)\|_\sH^2}{\|z\|_\sH^2}$, hence we have
\begin{align*}
\left(B_{\Sigma}(R)+1\right)^{-1} & =  \left(\sup_{ z \in \sT_R(\Sigma) \setminus \{0\}} \frac{\|z- P_{\Sigma-\Sigma}(z)\|_\sH^2}{\|P_{\Sigma-\Sigma}(z)\|_\sH^2} +1\right)^{-1}
=  \inf_{ z \in \sT_R(\Sigma) \setminus \{0\}} \frac{\|P_{\Sigma-\Sigma}(z)\|_\sH^2}{\|z- P_{\Sigma-\Sigma}(z)\|_\sH^2 + \|P_{\Sigma-\Sigma}(z)\|_\sH^2 }\\
&=  \inf_{ z \in \sT_R(\Sigma) \setminus \{0\}} \frac{\|P_{\Sigma-\Sigma}(z)\|_\sH^2}{\|z\|_\sH^2 }.
\end{align*}

Since $\Sigma$ is a cone and $\Sigma \neq \tspan(x)$ for each $x \in \Sigma$, by \Cref{lem:expr_RC_sparse}, using~\eqref{eq:TranslateRIPtoRC} we have
\(
\gamma_{\Sigma}^{\mathtt{nec}}(R) = \frac{1}{1-(1+B_{\Sigma}(R))^{-1}} = 1+1/B_{\Sigma}(R)
\)
hence
\(
\delta_{\Sigma}^{\mathtt{nec}}(R) = \frac{\gamma_{\Sigma}^{\mathtt{nec}}(R)-1}{\gamma_{\Sigma}^{\mathtt{nec}}(R)+1}  = (2B_{\Sigma}(R)+1)^{-1}.
\)

We conclude using that  $b \mapsto  1/(1+2b)$ is decreasing.

\end{proof}

\subsubsection{Lemmas for the proof of \texorpdfstring{\Cref{th:RIP_nec_atom}}{Theorem \ref{th:RIP_nec_atom}}  (sparse recovery)}\label{sec:proofsparsity}

We begin by some technical lemmas. We recall that $T_{2} = T_{2}(z) \subseteq \{ 1, \ldots,n\}$ denotes a set indexing any $2k$ largest components (in magnitude) 
of vector $z$ , while $T=T(z)  \subseteq  \{ 1, \ldots,n\}$ will denote a set indexing $k$ largest components (in magnitude). Given an index set $\emptyset \neq H \subseteq  \{ 1, \ldots,n\}$, $Q_{H}$ is the ``cube'' of all vectors $v \in \mathbb{R}^{n}$ such that $\supp(v) = H$ and $|v_{i}|=1$ for every $i \in H$. The restriction of $v$ to $H$, $v_{H} \in \bR^{n}$, is such that 
$(v_{H})_{i}= v_{i}$, $i \in H$ and $\supp(v_{H}) \subseteq H$.

\begin{lemma} \label{lem:opt_support_wl1}
 Let $\Sigma = \Sigma_k$. Let $\|\cdot\|_w$ be a weighted $\ell^1$-norm ( for $w= (w_i)_{i=1}^n$ with $w_i>0$, $\|x\|_w = \sum w_i \|x\|_1$). Let $z \in \sT_{\|\cdot\|_w}(\Sigma)$. There is a support $H$ of size $\leq k$ such that  
 \begin{equation}
\|z_{H^{c}}\|_{w}-\|z_{H}\|_{w} = \inf_{x \in \Sigma} \left\{\|x+z\|_{w}-\|x\|_{w}\right\} \leq 0,
 \end{equation}
\ie the infimum is achieved at $x^* = -z_{H}$. 

Moreover, if $\|\cdot\|_w = \|\cdot\|_1$, $H = T(z)$.
\end{lemma}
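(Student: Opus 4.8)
The plan is to compute the infimum $\inf_{x\in\Sigma}\{\|x+z\|_w-\|x\|_w\}$ explicitly by splitting it coordinatewise over candidate supports, and to use the hypothesis $z\in\sT_{\|\cdot\|_w}(\Sigma)$ only in order to certify that this infimum is nonpositive.

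First I would turn the hypothesis into a convenient form. By definition of the (symmetrized) descent cone there are $x_0\in\Sigma_k$, $\gamma\in\bR$ and $z'\in\sH$ with $\|x_0+z'\|_w\le\|x_0\|_w$ and $z=\gamma z'$. If $\gamma=0$ then $z=0$ and the statement holds trivially with $H=\emptyset$. Otherwise, since $\Sigma_k$ is a union of subspaces, it is invariant under multiplication by any real scalar, so $x_1:=\gamma x_0\in\Sigma_k$; homogeneity of the norm then gives $\|x_1+z\|_w=|\gamma|\,\|x_0+z'\|_w\le|\gamma|\,\|x_0\|_w=\|x_1\|_w$. In particular $\inf_{x\in\Sigma}\{\|x+z\|_w-\|x\|_w\}\le\|x_1+z\|_w-\|x_1\|_w\le 0$.

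Next I would evaluate the infimum. Fix a set $S\subseteq\{1,\ldots,n\}$ with $|S|\le k$. For any $x$ with $\supp(x)\subseteq S$,
\[
\|x+z\|_w-\|x\|_w = \sum_{i\in S} w_i\bigl(|x_i+z_i|-|x_i|\bigr) + \sum_{i\notin S} w_i|z_i| \ \ge\ -\sum_{i\in S} w_i|z_i| + \sum_{i\notin S} w_i|z_i| = \|z_{S^c}\|_w-\|z_S\|_w ,
\]
with equality at $x=-z_S$ (then $x_i+z_i=0$ and $|x_i|=|z_i|$ for $i\in S$). Since every $x\in\Sigma_k$ has support contained in some $S$ with $|S|\le k$, and there are finitely many such $S$, the infimum over $\Sigma_k$ equals $\min_{|S|\le k}(\|z_{S^c}\|_w-\|z_S\|_w)=\|z\|_w-2\max_{|S|\le k}\|z_S\|_w$, using $\|z_S\|_w+\|z_{S^c}\|_w=\|z\|_w$.

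Finally, choosing $H$ to index $k$ of the largest values of $(w_i|z_i|)_i$ (ties broken arbitrarily) makes $\|z_H\|_w=\max_{|S|\le k}\|z_S\|_w$, so the infimum equals $\|z_{H^c}\|_w-\|z_H\|_w$ and is attained at $x^*=-z_H\in\Sigma_k$; combined with the previous step this quantity is $\le 0$, which is the claim. For the last assertion, when $w_i\equiv 1$ the maximizer of $\|z_S\|_1=\sum_{i\in S}|z_i|$ over $|S|\le k$ is exactly a set $T(z)$ of indices of $k$ largest-magnitude entries of $z$, so one may take $H=T(z)$. I do not expect a genuine obstacle here; the only points requiring a little care are that the sign reduction for $\gamma$ uses the symmetry of $\Sigma_k$ (not merely that it is a cone), and the bookkeeping of ties in the definitions of $H$ and $T(z)$.
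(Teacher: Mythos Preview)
Your proof is correct and follows essentially the same route as the paper's: both reduce the infimum over $\Sigma_k$ to a minimum over supports via the coordinatewise reverse triangle inequality, and both use the descent-cone hypothesis only to certify nonpositivity. Your version is slightly more explicit than the paper's in two respects: you verify directly that the lower bound $\|z_{S^c}\|_w-\|z_S\|_w$ is attained at $x=-z_S$ (the paper leaves this implicit), and you correctly flag that passing from $\gamma z'$ to $z$ requires $\Sigma_k$ to be invariant under multiplication by \emph{all} real scalars, not merely positive ones, which holds because $\Sigma_k$ is a union of subspaces.
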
 
\begin{proof}
The result is trivial for $z = 0$, so we prove it for $z \in \sT_{\|\cdot\|_{w}}(\Sigma) \setminus \{0\}$.
  Consider $H \in \arg\min_{T: |T| \leq k} \left\{ \|z_{T^{c}}\|_{w}-\|z_{T}\|_{w}\right\}$.
  By definition of $\sT_{\|\cdot\|_w}(\Sigma)$, since $z \in \sT_{\|\cdot\|_w}(\Sigma) \setminus \{0\}$, there are $x' \in \Sigma$, $\lambda \in \bR \setminus\{0\}$  such that $\|x'+\lambda z\|_w \leq \|x'\|_w$. 
By homogeneity of $\Sigma$, $x := x'/\lambda \in \Sigma$ and $\|x+z\|_{w} \leq \|x\|_{w}$.
  This shows that  $\inf_{x \in \Sigma} \left\{\|x+ z\|_{w}-\|x\|_{w}\right\} \leq 0$ as claimed.
  For any such $x \in \Sigma$, consider $T = \supp(x)$. 
  
By the reverse triangle inequality $|x_{i}+ z_{i}|-|x_{i}| \geq -| z_{i}|$, we have
  \begin{equation}
   \|x+ z_T\|_w  - \|x\|_w = \sum_{i\in T} w_i ( |x_i+ z_i| -|x_i|) \geq -\sum_{i \in T} w_i | z_i| = - \| z_T\|_w
  \end{equation}
  Hence $\|x+ z\|_{w}-\|x\|_{w} = \|x+ z_{T}\|_{w}+\| z_{T^{c}}\|_{w}-\|x\|_{w} \geq \| z_{T^{c}}\|_{w}-\| z\|_{w} \geq \| z_{H^{c}}\|_{w}-\| z_{H}\|_{w}$. 
  
  If $\|\cdot\|_w = \|\cdot\|_1$, let $T = T(z)$ and remark that $\|z_{H^{c}}\|_{1}-\|z_{H}\|_{1} \geq \|z_{T^{c}}\|_{1}-\|z_{T}\|_{1}$

\end{proof}

The following Lemma permits to construct and to characterize elements of descent cones.

\begin{lemma}\label{lem:BuildDescentVector}
Assume that $R$ and $\Sigma$ are positively homogeneous.
For every $v_{0} \in \Sigma$ such that $R(v_{0})>0$ and any $v_{1} \in \sH$, we have that $z := v_{1}-\alpha v_{0} \in \sT_{R}(\Sigma)$ where $\alpha = \max(R(v_{1})/R(v_{0}),1)$. 
If, in addition, $\Sigma$ is homogeneous and $R$ is even,  we have conversely that any $z \in  \sT_{R}(\Sigma)$ can be written as $z = v_{1}-v_{0}$ where $v_{0} \in \Sigma$, $v_{1} \in \sH$, and $R(v_{1}) \leq R(v_{0})$.
\end{lemma}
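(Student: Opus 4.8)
The plan is to prove both halves by writing down the required decomposition explicitly; no compactness or convexity enters, only the homogeneity hypotheses and, for the converse, the facts that $\Sigma = -\Sigma$ and $R$ is even.

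For the direct statement I would fix $v_0 \in \Sigma$ with $0 < R(v_0) < +\infty$ and $v_1 \in \sH$ with $R(v_1) < +\infty$ (this finiteness is implicit in the very definition of $\alpha$), set $\alpha := \max\{R(v_1)/R(v_0),1\} \geq 1$, and note that $x := \alpha v_0 \in \Sigma$ since $\alpha > 0$ and $\Sigma$ is positively homogeneous. Then I would check directly that $z := v_1 - \alpha v_0$ lies in $\sT_R(x)$ with the trivial scaling $\gamma = 1$: indeed $x + z = v_1$, while positive homogeneity of $R$ gives $R(x) = \alpha R(v_0)$, and $\alpha R(v_0) \geq R(v_1)$ holds because $\alpha \geq R(v_1)/R(v_0)$ and $R(v_0) > 0$ (and trivially when $R(v_1) \leq 0$, a case in which $\alpha = 1$). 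Hence $R(x+z) \leq R(x)$, so $z \in \sT_R(\alpha v_0) \subseteq \sT_R(\Sigma)$.

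For the converse I would take $z \in \sT_R(\Sigma)$ and unfold the definition: there exist $x \in \Sigma$, $\gamma \in \bR$, $w \in \sH$ with $z = \gamma w$ and $R(x+w) \leq R(x)$. If $\gamma = 0$ then $z = 0$ and $v_0 = v_1 = 0 \in \Sigma$ works. If $\gamma \neq 0$, I would set $\beta := 1/\gamma$, so that $R(x + \beta z) \leq R(x)$, and split on the sign of $\beta$. When $\beta > 0$, dividing this inequality by $\beta$ and using positive homogeneity of $R$ gives $R(x/\beta + z) \leq R(x/\beta)$, so $v_0 := x/\beta \in \Sigma$ and $v_1 := v_0 + z$ do the job. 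When $\beta < 0$, writing $\eta := -\beta > 0$ and dividing $R(x - \eta z) \leq R(x)$ by $\eta$ yields $R(x/\eta - z) \leq R(x/\eta)$; applying evenness of $R$ then gives $R(z - x/\eta) = R(x/\eta - z) \leq R(x/\eta) = R(-x/\eta)$, and since $\Sigma$ is homogeneous I may take $v_0 := -x/\eta \in \Sigma$ and $v_1 := v_0 + z = z - x/\eta$, so that $z = v_1 - v_0$ with $R(v_1) \leq R(v_0)$.

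The whole thing is bookkeeping, so there is no genuine obstacle; the one spot I would treat with care — and flag explicitly — is the sign split on $\gamma$ in the converse, because the case $\gamma < 0$ really does need \emph{both} extra hypotheses ($\Sigma = -\Sigma$ to keep $-x/\eta$ inside the model, and $R$ even to convert $R(u - z) \leq R(u)$ into $R(z + v_0) \leq R(v_0)$). A secondary point worth stating cleanly is the implicit finiteness $R(v_0), R(v_1) < +\infty$ (equivalently $v_0, v_1 \in \dom(R)$) that makes $\alpha$ meaningful in the direct statement.
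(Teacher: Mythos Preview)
Your proof is correct and follows the same approach as the paper's. The only cosmetic difference is in the converse: rather than splitting on the sign of $\gamma$, the paper sets $v_0 := \gamma x$ and $v_1 := v_0 + z = \gamma(x+u)$ directly, then uses that evenness together with positive homogeneity gives $R(\gamma\,\cdot) = |\gamma|\,R(\cdot)$ to conclude $R(v_1) = |\gamma| R(x+u) \leq |\gamma| R(x) = R(v_0)$ in one stroke.
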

\begin{proof}
Since $\Sigma$ is positively homogeneous, $x:= \alpha v_{0} \in \Sigma$, and $R(x+z) = R(\alpha v_{0}+z) = R(v_{1})$. If $R(v_{1}) > R(v_{0})$ then $\alpha > 1$ and $R(x+z) = R(v_{1}) = \alpha R(v_{0}) = R(\alpha v_{0}) = R(x)$. Otherwise, $\alpha = 1$ and $R(x+z) = R(v_{1}) \leq R(v_{0}) = R(x)$. In both cases we obtain that $z \in \sT_{R}(x) \subseteq \sT_{R}(\Sigma)$.

Regarding the second claim, when $z \in \sT_{R}(\Sigma)$, by definition there exists $x \in \Sigma$, $u \in \sH$ and $\gamma \in \bR$ such that $z = \gamma u$ where $R(x+u) \leq R(x)$. Denote $v_{0} := \gamma x$ and $v_{1} := v_{0}+z$. Since $\Sigma$ is homogeneous, we have $v_{0} \in \Sigma$. Since $R$ is even and positively homogeneous, $R(v_{1}) = R(\gamma x+\gamma u) = |\gamma| R(x+u) \leq |\gamma| R(x) = R(\gamma x) = R(v_{0})$. 

\end{proof}

The next lemma permits to compare $B_\Sigma^s(R)$ with $B_\Sigma^s(\|\cdot\|_1)$ (see definition in~\eqref{eq:DefBsSigma}) which was calculated in~\cite{Davies_2009} to characterize the necessary RIP condition for sparse recovery. 

\begin{lemma}\label{lem:charact_supBL2_atom}
Let $\Sigma = \Sigma_k$ be the set of $k$-sparse vectors in $\bR^{n}$ with $k<n/2$ and $1 \leq L \leq n-2k$. Assume that $R$ is positively homogeneous, subadditive, and nonzero.
 
 Consider 
\begin{eqnarray}
(H_0,v_0) &\in& \arg \max_{\stackrel{H \subseteq \{ 1, \ldots,n\}:\ |H| = k}{v \in Q_{H}}}
R(v)\\
(H_1,v_1) &\in& \arg \min_{\stackrel{H \subseteq \{ 1, \ldots,n\} \setminus{H_{0}}, |H|=k+L}{v \in Q_{H}}}
R(v).
\end{eqnarray}
\begin{enumerate}
\item We have $R(v_{0})>0$, and for any $H$ of size $k' \geq k$ and any $v \in Q_{H}$, we have
\begin{equation}\label{eq:ineqKsparseonevectorR}
R(v) \leq \frac{k'}{k}R(v_{0}).
\end{equation}
If $R = R^{\star} = \|\cdot\|_{1}$ then we have indeed equality $R^{\star}(v)= \frac{k'}{k}R^{\star}(v_{0})$.
\item We have
\begin{equation}\label{eq:ineq_BL1_atom}
B_\Sigma^{2k+L}(R) := \sup_{ z \in \sT_{R}(\Sigma)\setminus \{0\} : |\supp(z)| =2k+L} \frac{\|z_{T_2^c}\|_2^2}{\|z_{T_2}\|_2^2}  \geq \frac{  \frac{L}{k}}{ \max\left(\left(\frac{R(v_1)}{R(v_0)}\right)^{2},1\right) + 1 }
\geq \frac{ \frac{L}{k}}{ \left(\frac{L}{k}+1\right)^2 + 1 }.
\end{equation}

\end{enumerate}
\end{lemma}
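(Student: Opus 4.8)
The plan is to treat the lemma's two parts in turn: Part 1 is a convexity/averaging argument, and Part 2 is an explicit construction of one good descent vector feeding into \Cref{lem:BuildDescentVector}.

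For Part 1, I would first note that $v_{0}$ and $v_{1}$ exist because both optimizations run over finite sets (finitely many index sets, and each $Q_{H}$ is finite), and the feasible set for $v_{1}$ is non-empty precisely because $L\leq n-2k$ leaves room for a $(k+L)$-element set $H_{1}$ disjoint from the $k$-element set $H_{0}$. The heart of Part 1 is the combinatorial identity that, for a sign vector $v\in Q_{H}$ with $|H|=k'\geq k$, one has $v=\binom{k'-1}{k-1}^{-1}\sum_{S\subseteq H,\,|S|=k}v_{S}$, where $v_{S}$ is the restriction of $v$ to $S$ (so $v_{S}\in Q_{S}$) and the sum has $\binom{k'}{k}$ terms. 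Subadditivity and positive homogeneity of $R$, together with $R(v_{S})\leq R(v_{0})$ by maximality of $v_{0}$, then give $R(v)\leq\binom{k'-1}{k-1}^{-1}\binom{k'}{k}\,R(v_{0})=\tfrac{k'}{k}R(v_{0})$, which is \eqref{eq:ineqKsparseonevectorR}; for $R^{\star}=\|\cdot\|_{1}$ every inequality is an equality since $R^{\star}(v)=k'$ and $R^{\star}(v_{0})=k$. To get $R(v_{0})>0$ I would argue by contradiction: if $R(v_{0})\leq 0$, then by the $k'=k$ case $R\leq 0$ on every $k$-sparse sign vector; writing each $\pm e_{j}$ as an average of such vectors (over a fixed $k$-set $S\ni j$) gives $R(\pm e_{j})\leq 0$, and then $R(y)\leq\sum_{j}|y_{j}|R(\sign(y_{j})e_{j})\leq 0$ for all $y$, so $R\leq 0$ on $\sH$; but $R$ nonzero forces $R(y)<0$ for some $y$, whereas $0=R(0)\leq R(y)+R(-y)$ forces $R(-y)>0$, a contradiction.

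For Part 2, using $v_{0}\in\Sigma_{k}=\Sigma$ with $R(v_{0})>0$ (Part 1) and positive homogeneity of $R$ and $\Sigma$, \Cref{lem:BuildDescentVector} gives $z:=v_{1}-\alpha v_{0}\in\sT_{R}(\Sigma)$ with $\alpha=\max(R(v_{1})/R(v_{0}),1)\geq 1$. Since $H_{0}\cap H_{1}=\emptyset$ and $\alpha>0$, the support of $z$ is exactly $H_{0}\cup H_{1}$ of size $2k+L$, with $|z_{i}|=\alpha$ on $H_{0}$ and $|z_{i}|=1$ on $H_{1}$; hence one may take $T_{2}=H_{0}\cup S'$ for any $k$-element $S'\subseteq H_{1}$, so $z_{T_{2}^{c}}$ is supported on the remaining $L$ coordinates of $H_{1}$, giving $\|z_{T_{2}^{c}}\|_{2}^{2}/\|z_{T_{2}}\|_{2}^{2}=L/(k\alpha^{2}+k)=(L/k)/(\alpha^{2}+1)$. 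As $z\in\sT_{R}(\Sigma)\setminus\{0\}$ has support of size $2k+L$, this lower-bounds $B_{\Sigma}^{2k+L}(R)$. Then I would bound $\alpha^{2}$ twice: from $\alpha=\max(R(v_{1})/R(v_{0}),1)$ one gets $\alpha^{2}\leq\max((R(v_{1})/R(v_{0}))^{2},1)$, yielding the first inequality in \eqref{eq:ineq_BL1_atom}; and applying Part 1 with $k'=k+L$ to both $v_{1}$ and $-v_{1}$ (both in $Q_{H_{1}}$), together with $R(v_{1})+R(-v_{1})\geq 0$, gives $|R(v_{1})|\leq\tfrac{k+L}{k}R(v_{0})$, so $\alpha\leq\tfrac{L}{k}+1$ and the second inequality follows since $t\mapsto(L/k)/(t+1)$ is decreasing.

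I expect no real obstacle here — the work is mostly bookkeeping. The two points to be careful about are getting the binomial constant $\binom{k'-1}{k-1}^{-1}\binom{k'}{k}=k'/k$ right in the averaging identity, and keeping in mind that $R$ is only assumed sublinear (not nonnegative), which is exactly why both $R(v_{0})>0$ and the bound $|R(v_{1})|\leq\tfrac{k+L}{k}R(v_{0})$ must go through the symmetrization $R(v)+R(-v)\geq R(0)=0$ rather than through any sign hypothesis on $R$.
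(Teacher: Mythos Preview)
Your proposal is correct and follows essentially the same approach as the paper: the same averaging identity $v=\binom{k'-1}{k-1}^{-1}\sum_{|S|=k}v_S$ for Part~1, and the same explicit descent vector $z=v_1-\alpha v_0$ via \Cref{lem:BuildDescentVector} for Part~2. Your handling of $R(v_0)>0$ routes through the basis vectors $\pm e_j$ rather than the paper's route through the full cube $\{-1,1\}^n$ and its convex hull, but both arguments reach $R\leq 0$ on $\sH$ and derive the same contradiction from $R(0)\leq R(y)+R(-y)$; and you are more explicit than the paper in justifying the final inequality of \eqref{eq:ineq_BL1_atom} via the symmetrization $|R(v_1)|\leq \tfrac{k+L}{k}R(v_0)$, which the paper leaves implicit.
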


\begin{proof}
As a preliminary observe that if $R^{\star} = \|\cdot\|_{1}$ then $R^{\star}(v) = |H|$ for any $H,v \in Q_{H}$, hence $H_{0},H_{1}$ can be any pair of  disjoint sets of respective sizes $k,k+L$, and $v_{i} \in Q_{H_{i}}$ can be arbitrary, for example $v_{i} = 1_{H_{i}}$.  This yields $R^{\star}(v_{0}) = k$,  $R^{\star}(v_{1}) = k+L$, hence  $R^{\star}(v_{1})= (1+L/k)R^{\star}(v_{0})$.

To prove the first claim, consider $\{G_{i}\}_{1 \leq i \leq {k' \choose k}}$ the collection of all subsets $G_{i} \subseteq H$ of size exactly $k$. Since $v \in Q_{H}$, we have $v_{G_{i}} \in Q_{G_{i}}$ for each $i$. Also, since $|G_{i}|=k$ for every $i$, by definition of $H_{0},v_{0}$ we obtain  
$\max_{i} R(v_{G_{i}}) \leq R(v_{0})$.
Notice that given a coordinate  $j \in H$, there are ${k'-1 \choose k-1}$ sets $G_i$ such that $j \in G_i$. With  $\lambda := \frac{1}{{k'-1 \choose k-1}}$ we get $v=\lambda \sum_{i} v_{G_{i}}$ hence by positive homogeneity and subadditivity of $R$ (which imply convexity)
 \begin{equation}\label{eq:ineq_th_atom1}
 R(v) = R(\lambda \sum_{i=1}^{{k' \choose k}} v_{G_{i}})
 \leq \sum_{i=1}^{{k' \choose k}} R(\lambda v_{G_{i}})
  = \lambda  \sum_{i}^{{k' \choose k}} R(v_{G_{i}})
  \leq \frac{{k' \choose k}}{{k'-1 \choose k-1}} R(v_0) = \frac{k'}{k} R(v_0).
\end{equation}
This establishes~\eqref{eq:ineqKsparseonevectorR}. With $R = R^{\star}$, we have $R^{\star}(v) = \|v\|_{1} = k'$ for $v \in Q_{H}$, hence $R^{\star}(v)  = (k'/k) R^{\star}(v_{0})$ as claimed.

For the sake of contradiction, assume that $R(v_{0}) \leq 0$. As we have just proved, this implies $R(v) \leq (n/k) R(v_{0}) \leq 0$ for every $v \in \{-1,+1\}^{n}= Q_{H}$ with $H =  \{ 1, \ldots,n\}$. By convexity of $R$ it follows that $R(v) \leq 0$ for each $v \in [-1,1]^{n}  = \tconv(Q_{H})$, and by positive homogeneity, 
\begin{equation} \label{eq:TmpNegReg0}
R(v) \leq 0,\ \forall v \in \sH.
\end{equation}
Positive homogeneity and subadditivity also imply 
\[
0 = 0 \cdot R(v_{0}) = R(0 \cdot v_{0}) = R(0) = R(-v+v) \leq R(-v) + R(v) \stackrel{\eqref{eq:TmpNegReg0}}{\leq} R(-v)
\]
for every $v \in \sH$, hence $R(v) = 0$ on $\sH$, which yields the desired contradiction since we assume that $R$ is nonzero.

Regarding the second claim, since $2k+L \leq n$ there is indeed some $H$ of size $k+L$ such that $H \cap H_{0} = \emptyset$, hence $H_{1}$ is well defined. By construction, $H_{1} \cap H_{0} = \emptyset$. 
Since $R(v_{0})>0$, $R$ is positively homogeneous and $\Sigma$ is homogeneous, by \Cref{lem:BuildDescentVector}, 
 $z = -\alpha  v_0 +  v_1 \in \sT_{R}(\Sigma)$ with $\alpha :=  \max(R(v_1)/R(v_0),1)$. 
Observe that $|\supp(z)| = |H_{0}|+|H_{1}| = 2k+L$. Since $\alpha \geq 1$ and all nonzero entries of $v_{0},v_{1}$ have magnitude one, a set of $2k$ largest components of $z$ is $T_{2} = H_{0} \cup T'_{1}$ with $T'_{1}$ any subset of $H_{1}$ with $k$ components, and we obtain~\eqref{eq:ineq_BL1_atom}.
once we observe that
\[
 \frac{\|z_{T_2^c}\|_2^2}{\|z_{T_2}\|_2^2} = \frac{  L}{ k\alpha^2 + k } = \frac{L/k}{\alpha^{2}+1}. 
\]

\end{proof}

\begin{lemma}\label{lem:FlatVectors}
Consider $c_{\infty},c_{1} >0$, an integer $n \geq 2$, and the optimization problem
\begin{equation}
\sup_{x \in \mathbb{R}_{+}^{n}: \|x\|_{\infty} \leq c_{\infty}; \|x\|_{1} \leq c_{1}} \|x\|_{2}^{2}.
\end{equation}
If $c_{1} \geq c_{\infty}$ then there exists $1 \leq L \leq n-1$ and $0 \leq \theta \leq 1$ such that
\[
x^{*} := c_{\infty} (\underbrace{1,\ldots,1}_{L \geq 1},\theta,\underbrace{0,\ldots,0}_{n-(L+1) \geq 0})
\]
is a maximizer. Otherwise, a maximizer is $x^{*} = (c_{1},0,\ldots,0)$.
\end{lemma}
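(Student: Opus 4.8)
The plan is to view the feasible set $P := \{x \in \bR^n : 0 \le x_i \le c_\infty\ \text{for all}\ i,\ \sum_i x_i \le c_1\}$ as a compact convex polytope and to use that $x \mapsto \|x\|_2^2$ is convex. First I would note that $P$ is nonempty (it contains $0$) and compact, so the supremum is attained at some $x^\star \in P$.

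Next I would show that we may take $x^\star$ to have at most one coordinate lying strictly inside $(0, c_\infty)$. Indeed, if $0 < x^\star_i < c_\infty$ and $0 < x^\star_j < c_\infty$ for two distinct indices $i \ne j$, then $t \mapsto x^\star + t(e_i - e_j)$ stays in $P$ for $t$ in a nondegenerate closed interval containing $0$ in its interior (the $\ell^1$-sum is unchanged and the box constraints carve out this interval), while $t \mapsto \|x^\star + t(e_i - e_j)\|_2^2$ is a strictly convex quadratic in $t$ and hence attains its maximum over that interval at an endpoint; moving $x^\star$ to such an endpoint does not decrease the objective and makes at least one more coordinate equal to $0$ or to $c_\infty$. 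Iterating at most $n-1$ times produces a maximizer with at most one coordinate in $(0,c_\infty)$; permuting coordinates (which changes neither $P$ nor the objective) we may write it as $x^\star = c_\infty(1,\dots,1,\theta,0,\dots,0)$ with $L$ ones, one $\theta \in [0,1]$ and $n-L-1 \ge 0$ zeros, where $0 \le L \le n-1$ (the all-$c_\infty$ vector corresponds to $L = n-1$, $\theta = 1$).

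For such an $x^\star$ we have $\|x^\star\|_2^2 = c_\infty^2(L + \theta^2)$ and $\|x^\star\|_1 = c_\infty(L+\theta)$, so the problem reduces to maximizing $L + \theta^2$ over integers $0 \le L \le n-1$ and $\theta \in [0,1]$ subject to $c_\infty(L+\theta) \le c_1$. If $c_1 < c_\infty$, feasibility forces $L = 0$, hence $x^\star = (c_\infty\theta,0,\dots,0)$ with $c_\infty\theta \le c_1$, and maximizing $\theta^2$ gives $c_\infty\theta = c_1$, i.e. $x^\star = (c_1,0,\dots,0)$. If $c_1 \ge c_\infty$, the pair $(L,\theta)=(1,0)$ is feasible with value $1$, so an optimal pair with $L=0$ (necessarily of value $\theta^2 \le 1$) can be replaced by $(1,0)$; thus we may assume $L \ge 1$. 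Combined with $L \le n-1$ (which uses $n \ge 2$) and $\theta \in [0,1]$, this is exactly the announced form.

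The only mildly delicate point is the iterative reduction in the second paragraph — checking that every perturbation preserves feasibility and that the procedure terminates with at most one fractional coordinate; once that is in hand, the remaining optimization over the resulting one-parameter family of vectors is routine.
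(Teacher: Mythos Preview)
Your proof is correct and uses essentially the same idea as the paper: perturb two fractional coordinates along $e_i-e_j$ (equivalently, replace $(a,b)$ by $(a-t,b+t)$) and use that $\|\cdot\|_2^2$ strictly increases along this segment toward an endpoint, forcing all but at most one coordinate to $\{0,c_\infty\}$. The only cosmetic difference is organizational: the paper first splits on whether $\|x^*\|_\infty<c_\infty$ (and invokes that the $\ell^2$-maximizer under an $\ell^1$ ball is a Dirac) before running the perturbation, whereas you run the perturbation uniformly and only afterward split on $c_1$ versus $c_\infty$; your packaging is slightly cleaner but the arguments are the same.
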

\begin{proof}
Standard compactness arguments show the existence of a maximizer $x^{*}$. We  distinguish two cases: 
\begin{itemize}
\item[$\bullet$] If $\|x^{*}\|_{\infty} < c_{\infty}$ then $x^{*}$ is indeed a maximizer of the Euclidean norm under an $\ell^{1}$ constraint, hence $x^{*}$ is a Dirac: without loss of generality, $x^{*} = (c_{1},0,\ldots,0)$ so that $c_{1} = \|x^{*}\|_{\infty} < c_{\infty}$.
\item[$\bullet$] Otherwise $\|x^{*}\|_{\infty} = c_{\infty}$, in which case we show that 
all entries of $x^{*}$, except at most one, are either zero or equal to $c_{\infty}$. 
For the sake of contradiction, assume that $x^{*}$ contains two distinct entries with values $0<a  < b< c_{\infty}$, then for small enough $t>0$, replacing these entries with $0<a-t<b+t< c_{\infty}$ and keeping all other entries unchanged would lead to a vector $x$ satisfying $\|x\|_{\infty} = \|x^{*}\|_{\infty} = c_{\infty}$, $\|x\|_{1} = \|x^{*}\|_{1}$. However, since $\|x\|_{2}^{2}-\|x^{*}\|_{2}^{2} = (a-t)^{2}+(b+t)^{2}-(a^{2}+b^{2}) = 2t^{2}+2(b-a)t>0$. Since $x^{*}$ has optimal objective value, this yields the desired contradiction. Since the objective value and the constraints are invariant to index permutations, there is thus a maximizer with the claimed shape, and we have $c_{1} \geq \|x^{*}\|_{1} \geq \|x^{*}\|_{\infty} = c_{\infty}$. 
\end{itemize}
The two cases respectively correspond to $c_{1}<c_{\infty}$ or $c_{1} \geq c_{\infty}$, which are mutually exclusive, hence the conclusion.

\end{proof}

\begin{lemma}[{\cite{Davies_2009}}]\label{lem:charact_supBL2_l1}
Consider $\Sigma = \Sigma_k \subseteq \bR^{n}$.  We have
\begin{equation}
B_\Sigma(\|\cdot\|_1) = \max_{1 \leq L \leq n-2k}  \frac{ \frac{L}{k}}{ \left(\frac{L}{k}+1\right)^2 + 1 }. 
\end{equation}
\end{lemma}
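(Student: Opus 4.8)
\emph{Strategy and lower bound.} Set $\beta:=\max_{1\le L\le n-2k}\frac{L/k}{(L/k+1)^{2}+1}$; the plan is to prove $B_{\Sigma}(\|\cdot\|_{1})=\beta$ by two inequalities. For the lower bound, fix $1\le L\le n-2k$ and apply the second claim of \Cref{lem:charact_supBL2_atom} with $R=R^{\star}=\|\cdot\|_{1}$: since $R^{\star}(v)=|H|$ whenever $v\in Q_{H}$, one gets $R^{\star}(v_{0})=k$ and $R^{\star}(v_{1})=k+L$, so $\max\big((R^{\star}(v_{1})/R^{\star}(v_{0}))^{2},1\big)=(1+L/k)^{2}$ and \eqref{eq:ineq_BL1_atom} yields $B_{\Sigma}^{2k+L}(\|\cdot\|_{1})\ge\frac{L/k}{(L/k+1)^{2}+1}$. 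As $B_{\Sigma}(\|\cdot\|_{1})\ge B_{\Sigma}^{2k+L}(\|\cdot\|_{1})$ for every $L$, maximizing over $L$ gives $B_{\Sigma}(\|\cdot\|_{1})\ge\beta$.

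\emph{Upper bound: reductions.} First, \Cref{lem:opt_support_wl1} with $\|\cdot\|_{w}=\|\cdot\|_{1}$ shows that a nonzero $z$ lies in $\sT_{\|\cdot\|_{1}}(\Sigma_{k})$ iff $\|z_{T^{c}}\|_{1}\le\|z_{T}\|_{1}$, where $T=T(z)$ indexes $k$ largest entries in magnitude. Replacing each $z_{i}$ by $|z_{i}|$ and permuting coordinates leaves both this constraint and the ratio $\|z_{T_{2}^{c}}\|_{2}^{2}/\|z_{T_{2}}\|_{2}^{2}$ unchanged, so I assume $z_{1}\ge\cdots\ge z_{n}\ge0$, $T=\{1,\dots,k\}$, $T_{2}=\{1,\dots,2k\}$, and the constraint becomes $\sum_{i>k}z_{i}\le\sum_{i\le k}z_{i}$. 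Next I flatten the two leading blocks: since $z_{k+1}\le\tfrac1k\sum_{i\le k}z_{i}$ always (otherwise $\sum_{i\le k}z_{i}>k z_{k}\ge k z_{k+1}$), replacing $z_{1},\dots,z_{k}$ by their average keeps the vector sorted and the constraint valid while not increasing $\|z_{T}\|_{2}^{2}$ (convexity of $t\mapsto t^{2}$), hence not decreasing the ratio; after rescaling, $z_{1}=\cdots=z_{k}=1$. The identical move flattens $z_{k+1},\dots,z_{2k}$ to their common value $c:=\tfrac1k\sum_{k<i\le2k}z_{i}\in[0,1]$ (feasible as $z_{2k+1}\le c\le1$), which leaves $\sum_{i>k}z_{i}$ unchanged, does not increase $\|z_{T'}\|_{2}^{2}$ with $T'=\{k+1,\dots,2k\}$, and only relaxes the cap $z_{i}\le z_{2k}=c$ on $u:=(z_{2k+1},\dots,z_{n})$. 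The case $c=0$ forces $u=0$ and a zero ratio; otherwise I am left with bounding, over $c\in(0,1]$,
\begin{equation*}
\frac{1}{k(1+c^{2})}\sup\Big\{\|u\|_{2}^{2}:\ u\in\bR_{+}^{\,n-2k},\ \|u\|_{\infty}\le c,\ \|u\|_{1}\le k(1-c)\Big\}.
\end{equation*}

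\emph{Upper bound: inner optimization and the obstacle.} The inner supremum is governed by \Cref{lem:FlatVectors} (with $c_{\infty}=c$, $c_{1}=k(1-c)$): if $k(1-c)<c$, a maximizer is $(k(1-c),0,\dots,0)$, the contribution $\frac{k(1-c)^{2}}{1+c^{2}}$ is decreasing on $(\tfrac{k}{k+1},1]$, hence $\le\frac{1/k}{(1/k+1)^{2}+1}\le\beta$; if $k(1-c)\ge c$, a maximizer has the form $c(1,\dots,1,\theta,0,\dots,0)$ with $L\ge1$ ones and $\theta\in[0,1]$, so $\|u\|_{2}^{2}\le c^{2}(L+\theta^{2})$ subject to $c(L+\theta)\le k(1-c)$, and it remains to optimize over the integer $L$ and $\theta$. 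This last point is the crux: the \emph{continuous} relaxation here gives only $\sup_{c}\tfrac{c(1-c)}{1+c^{2}}=\tfrac{\sqrt2-1}{2}$, which is in general strictly larger than $\beta$ (for instance $\beta=\tfrac15<\tfrac{\sqrt2-1}{2}$ when $k=1$), so the integrality of the number of tail coordinates must be retained. Splitting according to which interval $\big(\tfrac{k}{k+L+1},\tfrac{k}{k+L}\big]$ contains $c$ — so that at optimum $L=\lfloor k(1-c)/c\rfloor$ entries equal $c$ and one residual is $\theta c$ — one checks by an elementary cross-multiplication that $\frac{c^{2}(L+\theta^{2})}{k(1+c^{2})}$ never exceeds $\frac{L/k}{(L/k+1)^{2}+1}$ for small $\theta$ and $\frac{(L+1)/k}{((L+1)/k+1)^{2}+1}$ for large $\theta$, with equality with $\beta$ attained at $c=\tfrac{k}{k+L}$ for the $L$ realizing the maximum; the edge cases where $u=c\mathbf 1$ uses all $n-2k$ slots (small $c$, and in particular $n-2k=1$) are settled the same way. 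Collecting the cases gives $B_{\Sigma}(\|\cdot\|_{1})\le\beta$, which with the lower bound proves the claim.
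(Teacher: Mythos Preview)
Your proof is correct and follows essentially the same route as the paper: the lower bound via \Cref{lem:charact_supBL2_atom}, the reduction to vectors flat on $T$ and on $T_2\setminus T$ (the paper averages two entries at a time, you average the whole block at once---both work by convexity), the application of \Cref{lem:FlatVectors} to the tail, and the final split over integer $L$. The only difference is cosmetic parametrization (you normalize $\alpha=1$ and slice over $c$, the paper keeps $\alpha,\beta$ free and slices over $L$); where you write ``one checks by an elementary cross-multiplication,'' the paper carries out the corresponding computation explicitly as a function study of $g(t)=\frac{L/k+kt^{2}}{(1+L/k+t)^{2}+1}$ on $[0,1/k]$, showing $g'$ changes sign once so the maximum is at an endpoint---you may want to include this one-line derivative argument rather than leave it asserted.
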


\begin{proof}
With $B_{\Sigma}^{s}(R)$ defined in~\eqref{eq:DefBsSigma}, and recalling the expression~\eqref{eq:PropBSigmaRKSparse} of $B_{\Sigma}(R)$,
we have \[B_\Sigma(\|\cdot\|_1) = \max_{1 \leq L \leq n-2k}  B_\Sigma^{2k+L}(\|\cdot\|_1) \]

By \Cref{lem:charact_supBL2_atom}, $\frac{R^{\star}(v_{1})}{R^{\star}(v_{0})} = (L/k)+1 > 1$ and $B_\Sigma^{2k+L}(\|\cdot\|_1) \geq  \frac{ \frac{L}{k}}{ \left(\frac{R^{\star}(v_{1})}{R^{\star}(v_{0})}\right)^2 + 1 } =  \frac{ \frac{L}{k}}{ \left(\frac{L}{k}+1\right)^2 + 1 }$. This implies 

\begin{equation}
B_\Sigma(\|\cdot\|_1) \geq \max_{1 \leq L \leq n-2k}  \frac{ \frac{L}{k}}{ \left(\frac{L}{k}+1\right)^2 + 1 } ,
\end{equation}
and there only remains to show there is indeed equality. We isolate  this result from \cite{Davies_2009} for completeness. This will also help understand the case  of sparsity in levels in \Cref{sec:proofslevels}.

First, we show we can restrict the maximization used to express $B_{\Sigma}(\|\cdot\|_{1})$ (cf \eqref{eq:PropBSigmaRKSparse}) over vectors $z$ having constant amplitude $\alpha>0$ on $T(z)$.

Indeed, consider $z \neq 0$ such that $z \in \sT_{\Sigma_{k}}(\|\cdot\|_{1})$. By \Cref{lem:opt_support_wl1}, we have
$\|z_{T^c}\|_1 \leq \|z_T\|_1$ with $T = T(z)$ a set of $k$ indices of components of largest magnitude of $z$. 
Assume  that there are $i \neq j$ in $T$ such that $|z_{i}| \neq |z_{j}|$.
Let $y$ such that $y_l = z_l$ for $l \notin \{i,j\}$ and $y_i= y_j = (|z_i|+|z_j|)/2$. The set $T$ remains a support of the $k$ largest amplitudes in $y$, and $T_{2}=T_{2}(z)$ remains a support of the $2k$ largest amplitudes in $y$. Moreover,
we have $\|y_T\|_1 = \|z_T\|_1 \geq \|z_{T^c}\|_1 = \|y_{T^c}\|_1 = \|-y_T +y\|_1$ hence we have $y \in \sT_{\Sigma_{k}}(\|\cdot\|_{1})$.
Since $ \|y_{T_{2}}\|_2^2-\|z_{T_{2}}\|_2^2 =  \|y_{T}\|_2^2-\|z_{T}\|_2^2 = 2 [ (|z_i|+|z_j|)/2]^2 -|z_i|^2-|z_j|^2 =- (|z_i|-|z_j|)^2/2 < 0 $ and $\|y_{T_{2}^{c}}\|_{2}^{2}= \|z_{T_{2}^{c}}\|_{2}^{2}$ we have $\|y_{T_{2}^{c}}\|_{2}^{2}/\|y_{T_{2}}\|_{2}^{2} > \|z_{T_{2}^{c}}\|_{2}^{2}/\|z_{T_{2}}\|_{2}^{2}$.

Second, the same reasoning on $T' = T_{2} \setminus T$, shows that we can further restrict the maximization used to define $B_{\Sigma}(\|\cdot\|_{1})$ to vectors having constant amplitude $0\leq \beta \leq \alpha$ over $T'$. This leads to
\begin{equation}
B_\Sigma(\|\cdot\|_1) =\sup_{z\neq 0: \|z_{T^c}\|_1\leq \|z_T\|_1} \frac{\|z_{T_2^c}\|_2^2}{\|z_{T_2}\|_2^2} =\sup_{\alpha,\beta: \alpha \geq \beta>0} \sup_{x\in\bR^{n-2k} : \|x\|_\infty \leq \beta, \|x\|_1 \leq k(\alpha-\beta)}\frac{ \|x\|_2^2}{k(\alpha^2+\beta^2)}.
\end{equation}
Using \Cref{lem:FlatVectors}, the supremum with respect to $x$ is reached with vectors with the shape
\[
  (\underbrace{\beta,\ldots,\beta}_{L},\theta,\underbrace{0,\ldots,0}_{n-2k-(L+1) \geq 0}) 
\]
with $0\leq \theta \leq \beta$ and $0\leq L \leq n-2k-1$. We deduce 
\begin{equation}
\begin{split}
B_\Sigma(\|\cdot\|_1)&= \sup_{\alpha,\beta: \alpha \geq \beta>0} \sup_{\stackrel{L,\theta : 0\leq L \leq n-2k-1 ,0\leq\theta\leq\beta}{ \theta\leq k\alpha-(k+L)\beta}} \frac{ L\beta^2 +\theta^2}{k(\alpha^2+\beta^2)}\\
&= \max_{ 0\leq L \leq n-2k-1 }\sup_{\alpha,\beta: \alpha \geq \beta>0} \sup_{\stackrel{\theta :0\leq\theta\leq\beta} {\theta\leq k\alpha-(k+L)\beta}} \frac{ L\beta^2 +\theta^2}{k(\alpha^2+\beta^2)}\\
\end{split}
\end{equation}

When $0 \leq \beta \leq  k\alpha-(k+L)\beta$ we have
\begin{equation}
  \sup_{\theta :0\leq\theta\leq\beta, \theta\leq k\alpha-(k+L)\beta} \frac{ L\beta^2 +\theta^2}{k(\alpha^2+\beta^2)} = \frac{ (L+1)\beta^2 }{k(\alpha^2+\beta^2)}
\end{equation}
while when  $\beta \geq  k\alpha-(k+L)\beta \geq 0$ we have
\begin{equation}
  \sup_{\theta :0\leq\theta\leq\beta, \theta\leq k\alpha-(k+L)\beta} \frac{ L\beta^2 +\theta^2}{k(\alpha^2+\beta^2)} = \frac{ L\beta^2 + ( k\alpha-(k+L)\beta)^{2}}{k(\alpha^2+\beta^2)}.
\end{equation}

On the one hand, when $0 < \beta \leq \alpha$ satisfies $\beta \leq  k\alpha-(k+L)\beta$ we have $\alpha \geq (1+(L+1)/k) \beta$ hence
\begin{equation}
  \sup_{\theta :0\leq\theta\leq\beta, \theta\leq k\alpha-(k+L)\beta} \frac{ L\beta^2 +\theta^2}{k(\alpha^2+\beta^2)} = \frac{ (L+1)\beta^2 }{k(\alpha^2+\beta^2)} 
  = \frac{(L+1)/k}{(\alpha/\beta)^{2}+1}
  \leq 
  \frac{(L+1)/k}{[1+(L+1)/k]^{2}+1}.
\end{equation}
On the other hand, when $0 < \beta \leq \alpha$ satisfies $\beta \geq  k\alpha-(k+L)\beta \geq 0$ we have $(1+(L+1)/k))\beta \geq \alpha \geq (1+L/k)\beta$ and, denoting $g(t) := \frac{L/k+kt^{2}}{(1+L/k+t)^{2}+1}$ for $t \geq 0$, we get
\begin{equation}
  \sup_{\theta :0\leq\theta\leq\beta, \theta\leq k\alpha-(k+L)\beta} \frac{ L\beta^2 +\theta^2}{k(\alpha^2+\beta^2)} = \frac{ L\beta^2 + ( k\alpha-(k+L)\beta)^{2}}{k(\alpha^2+\beta^2)}
  =
  \frac{L/k+k [\alpha/\beta-(1+L/k)]^{2}}{(\alpha/\beta)^{2}+1}
  = g(\alpha/\beta-(1+L/k)).
\end{equation}
A simple function study shows that $g'(t)$ is positively proportional to a second degree polynomial $P(t)$ with positive leading coefficient and such that $P(0)<0$. It follows that there is $t_{0}>0$ such that $g'(t) \leq 0$ for $0 \leq t \leq t_{0}$ and $g'(t) \geq 0$ for $t \geq t_{0}$. Hence, $g$ is decreasing on $[0,t_{0}]$ and increasing on $[t_{0},+\infty)$, so that 
\[
g(\alpha/\beta-(1+L/k)) \leq \sup_{0 \leq t \leq 1/k} g(t) = \max\left(g(0), g(1/k)\right) 
= \max\left(
\frac{L/k}{(1+L/k)^{2}+1},\frac{(L+1)/k}{(1+(L+1)/k)^{2}+1}
\right).
\]
As all the above bounds also hold if $\beta=0$, we obtain the claimed result.

\end{proof}

\begin{remark}\label{rk:opt_B_kL}
 
The maximum value of $ \frac{ L/k}{ ((k+L)/k)^2+1}$  (with respect to $L$) is reached for $L/k$ maximizing $f(u) = u/((u+1)^2+1)$ (which is maximized at $\sqrt{2}$ over $\bR$). We verify that it matches the necessary RIP condition $\frac{1}{\sqrt{2}}$ from \cite{Davies_2009}, $f(\sqrt{2})= 2\sqrt{2}/(2+\sqrt{2})$ which gives $\gamma_\Sigma(\|\cdot\|_1) = (4+3\sqrt{2})/\sqrt{2} = \frac{\sqrt{2}+1}{\sqrt{2}-1}$.
\end{remark}

\subsubsection{Lemmas for the proof of \texorpdfstring{\Cref{th:RIP_nec_atom_rank}}{Theorem \ref{th:RIP_nec_atom_rank}} }\label{sec:proofLRnex}
 
Given a matrix $U$, we denote  $U_{k:l}$ the restriction of $U$ to its rows ${k,\ldots,l}$. We denote $O(n)$ the orthogonal group. Given a symmetric matrix $z$, we write $\eig(z)$ the vector of eigenvalues ordered decreasingly with respect to their absolute value.
Given a vector $x$ of size $n$,  we write $\diag(x)$ the diagonal matrix with diagonal equal to $x$. To match the notations for the case of sparsity, given a matrix $z=U^T\diag(w)U$, we write $z_{H}  =U^T\diag(w_H)U $ and $Q_{H}$ as in the previous section. We denote  $T=\{1,..,r\}$ and $T_2 =\{1,..,2r\}$.  We denote $\|\cdot\|_F$ the Frobenius norm.

Using the same demonstration as~\Cref{lem:opt_support_wl1} we characterize the descent cones of the nuclear norm.
\begin{lemma} \label{lem:opt_support_wnuclear}
 Let $\Sigma = \Sigma_r$. Let $\|\cdot\|_w$ be a weighted nuclear-norm. Let $z \in \sT_{\|\cdot\|_w}(\Sigma)$. There is a support $H$ of size $\leq r$ such that  
 \begin{equation}
\|z_{H^{c}}\|_{w}-\|z_{H}\|_{w} = \inf_{x \in \Sigma} \left\{\|x+z\|_{w}-\|x\|_{w}\right\} \leq 0,
 \end{equation}
\ie the infimum is achieved at $x^* = -z_{H}$. Moreover, if $\|\cdot\|_w= \|\cdot\|_*$, $H = T(z)$.
\end{lemma}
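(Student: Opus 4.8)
The plan is to transpose the argument of \Cref{lem:opt_support_wl1} from the coordinate structure of $\bR^{n}$ to the spectral structure of symmetric matrices, reading ``support'' as ``set of eigenvalue indices'' and the restriction $z_{H}=U^{T}\diag(\eig(z)_{H})U$ as the spectral analogue of the coordinate restriction, where $U$ is an eigenbasis of $z$. First I would dispose of the trivial case $z=0$, then fix $z\neq 0$ and set $H\in\arg\min_{T:\,|T|\leq r}\{\|z_{T^{c}}\|_{w}-\|z_{T}\|_{w}\}$. Exactly as in the $\ell^1$ case, membership $z\in\sT_{\|\cdot\|_{w}}(\Sigma_{r})\setminus\{0\}$ together with the fact that $\Sigma_{r}$ is a cone provides $x\in\Sigma_{r}$ with $\|x+z\|_{w}\leq\|x\|_{w}$, so $\inf_{x\in\Sigma_{r}}\{\|x+z\|_{w}-\|x\|_{w}\}\leq 0$; and $x^{*}:=-z_{H}$ has rank at most $|H|\leq r$, lies in $\Sigma_{r}$, and satisfies $\|x^{*}+z\|_{w}-\|x^{*}\|_{w}=\|z_{H^{c}}\|_{w}-\|z_{H}\|_{w}$ since $x^{*}+z=z_{H^{c}}$ and $x^{*}=-z_{H}$ share the eigenbasis $U$. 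So the only real work is the matching lower bound $\|x+z\|_{w}-\|x\|_{w}\geq\|z_{H^{c}}\|_{w}-\|z_{H}\|_{w}$ for every $x\in\Sigma_{r}$.

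To establish this lower bound, the key step I would take is to reduce the infimum to matrices $x$ that commute with $z$ (i.e.\ are diagonal in the eigenbasis $U$). Writing $\|\cdot\|_{w}$ as a Ky Fan--type norm on sorted singular values and using von Neumann's trace inequality (equivalently, the dual formula $\|A\|_{w}=\sup_{\|B\|_{w}^{*}\leq 1}\langle A,B\rangle$), one should be able to show that rotating the eigenbasis of $x$ onto that of $z$, while keeping its eigenvalues and re-pairing them with those of $z$ by decreasing magnitude, can only decrease $\|x+z\|_{w}-\|x\|_{w}$. This plays the role of the elementary ``reverse triangle inequality plus disjoint-support additivity'' of \Cref{lem:opt_support_wl1}; the obstruction to be controlled is precisely the cross terms $\Pi z(I-\Pi)+(I-\Pi)z\Pi$, where $\Pi$ is the orthogonal projector onto the range of $x$, which vanish automatically in the diagonal case.

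Once restricted to $x$ commuting with $z$, the problem collapses to a purely scalar optimization over the eigenvalues of $z$, of exactly the type handled in \Cref{sec:proofsparsity}: the argument of \Cref{lem:opt_support_wl1} then shows the infimum is attained at the eigenvalue vector $-\eig(z)_{H}$, i.e.\ at $x^{*}=-z_{H}$, with value $\|z_{H^{c}}\|_{w}-\|z_{H}\|_{w}$; combining with the two facts recorded in the first paragraph closes the proof. For the specialization $\|\cdot\|_{w}=\|\cdot\|_{*}$ I would argue as in the sparse case: $\|z_{T^{c}}\|_{*}-\|z_{T}\|_{*}=\|z\|_{*}-2\|z_{T}\|_{*}$ is minimized over $|T|\leq r$ by any $T=T(z)$ indexing $r$ eigenvalues of $z$ of largest magnitude, so $H=T(z)$ there.

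The hard part will be the reduction to the commuting case: in contrast with \Cref{lem:opt_support_wl1}, where $z_{H}$ and $z_{H^{c}}$ are supported on disjoint coordinates and the reverse triangle inequality is applied entrywise, here $x$ and $z$ are in general not simultaneously diagonalizable, so the step genuinely needs spectral tools (von Neumann's trace inequality, Ky Fan's maximum principle) and some care with the sign pattern of the eigenvalues of $z$ and with the fact that the weights act on the \emph{sorted} singular values, which implicitly requires $w_{1}\geq\cdots\geq w_{n}\geq 0$ for $\|\cdot\|_{w}$ to be a norm.
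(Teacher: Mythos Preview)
The paper's own ``proof'' is a single sentence preceding the lemma: it simply asserts that the demonstration of \Cref{lem:opt_support_wl1} carries over. Your proposal is considerably more detailed and, importantly, puts its finger on the genuine obstruction that the paper elides: in \Cref{lem:opt_support_wl1} the key step is the identity $\|x+z\|_{w}=\|x+z_{T}\|_{w}+\|z_{T^{c}}\|_{w}$ with $T=\supp(x)$, which relies on coordinate-wise disjoint supports; in the matrix setting $x\in\Sigma_{r}$ need not commute with $z$, so there is no index set $T$ (in $z$'s eigenbasis) with $x+z=(x+z_{T})\oplus z_{T^{c}}$, and the ``same demonstration'' does not go through verbatim.

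Your plan---reduce to the commuting case via spectral inequalities, then invoke the scalar argument of \Cref{lem:opt_support_wl1}---is the right strategy, and your honesty that this reduction ``genuinely needs spectral tools\ldots and some care'' is well placed. One concrete realization for the unweighted nuclear norm (the only case the paper actually uses, in \Cref{lem:charact_supBL2_nuc} and \Cref{lem:sup_DL_nuclear}) is: with $\Pi$ the range projector of $x$, the pinching inequality gives $\|x+z\|_{*}\geq\|\Pi(x+z)\Pi\|_{*}+\|(I-\Pi)z(I-\Pi)\|_{*}\geq\|x\|_{*}-\|\Pi z\Pi\|_{*}+\|(I-\Pi)z(I-\Pi)\|_{*}$; one then controls $\|\Pi z\Pi\|_{*}$ and $\|(I-\Pi)z(I-\Pi)\|_{*}$ in terms of the top-$r$ and bottom-$(n-r)$ singular values of $z$ via Ky Fan / interlacing. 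Your von Neumann route is an alternative packaging of the same idea. Either way, your write-up is more complete than the paper's on this point; the one caveat is that the ``rotating the eigenbasis can only decrease $\|x+z\|_{w}-\|x\|_{w}$'' claim, while true for $\|\cdot\|_{*}$ with the right pairing, is not a one-line consequence of von Neumann's trace inequality and deserves a careful statement and proof if you want the weighted case in full generality.
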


\begin{lemma}\label{lem:charact_supBL2_atom_rank}
Let $\Sigma = \Sigma_r$ be the set of $n \times n$ symmetric matrices  with rank at most $r$ with $r < n/2$, and $1 \leq L \leq n-2r$. 
Assume $R$ is positively homogeneous, subadditive and nonzero. Consider the supports $H_0 =\{1,2,..,r\}$  and $H_1 = \{r+1,\ldots,2r+L\}$. 
\begin{eqnarray}
(U_0,v_0) &\in& \arg \max_{U \in O(n), v \in Q_{H_{0}}} \|U^T\diag(v)U \|_\sA ,\\
(U_1,v_1) &\in& \arg \min_{U \in O(n),v \in Q_{H_{1}}:\; U_{0,1:r}U_{r+1:2r+L}^T  =0  } \|U^T\diag(v)U \|_\sA.
\end{eqnarray}
\begin{enumerate}
 \item We have $R(U_0^Tv_0U_0)>0$, and for any $H$ of size $r' \geq r$,  $V\in O(n)$ and $w \in Q_H$, we have
 \begin{equation}\label{eq:ineqrakRonevectorR}
   R(V^T\diag(w)V) \leq \frac{r'}{r} R(U_0^Tv_0U_0) .
 \end{equation}
If $R = R^\star = \|\cdot\|_*$ then we have indeed equality $R(V^T\diag(w)V) = \frac{r'}{r} R(U_0^Tv_0U_0)$.
 
 \item We have 
 \begin{equation}\label{eq:ineq_Bnuc_atom}
 \begin{split}
 B_\Sigma^{L+2r}(R) := \sup_{ z \in \sT_{\|\cdot\|_\sA}(\Sigma)\setminus \{0\} : |\supp(\eig(z))| =2r+L} \frac{\|z_{T_2^c}\|_F^2}{\|z_{T_2}\|_F^2}  &\geq \frac{  \frac{L}{r}}{ \left( \max \left(\frac{R(U_1^T\diag(v_1)U_1)}{R(U_0^T\diag(v_0)U_0)},1\right)\right)^2 + 1 } \\
 &\geq \frac{  \frac{L}{r}}{ \left( \frac{L}{r}+1\right)^2 + 1 }. \\
 \end{split}
 \end{equation}

\end{enumerate}
\end{lemma}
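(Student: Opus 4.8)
The plan is to follow the proof of \Cref{lem:charact_supBL2_atom} almost line by line, replacing a choice of support by a choice of orthogonal matrix and replacing restriction to a coordinate subset by restriction of the diagonal in a fixed eigenbasis. Throughout, for $V\in O(n)$, an index set $H$ and $w\in Q_{H}$, I abbreviate $z=V^{T}\diag(w)V$ and use the elementary facts that $z_{G}=V^{T}\diag(w_{G})V$ for $G\subseteq H$, that $\|z_{G}\|_{F}^{2}=|G|$, that the range of $z_{G}$ is the span of the rows of $V$ indexed by $G$, and that (since $w\in Q_{H}$) $z$ has rank $|H|$ with all nonzero eigenvalues of magnitude $1$.

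For the first claim, fix $H$ of size $r'\geq r$, $V\in O(n)$, $w\in Q_{H}$, and let $\{G_{i}\}$ enumerate the $\binom{r'}{r}$ subsets of $H$ of size $r$. Because the eigenbasis $V$ is common to all of them, $V^{T}\diag(w)V=\lambda\sum_{i}V^{T}\diag(w_{G_{i}})V$ with $\lambda=1/\binom{r'-1}{r-1}$, and each summand is of the admissible form for $(U_{0},v_{0})$ (take orthogonal matrix $V$ and diagonal vector $w_{G_{i}}\in Q_{G_{i}}$ with $|G_{i}|=r$), hence $R(V^{T}\diag(w_{G_{i}})V)\leq R(U_{0}^{T}\diag(v_{0})U_{0})$. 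Convexity of $R$ (positive homogeneity plus subadditivity) together with $\binom{r'}{r}/\binom{r'-1}{r-1}=r'/r$ yields \eqref{eq:ineqrakRonevectorR}, with equality for $R^{\star}=\|\cdot\|_{*}$ since then $R^{\star}(V^{T}\diag(w)V)=\|w\|_{1}=r'=(r'/r)\|v_{0}\|_{1}$. Positivity $R(U_{0}^{T}\diag(v_{0})U_{0})>0$ then follows exactly as in the sparse case: if it were nonpositive, \eqref{eq:ineqrakRonevectorR} with $r'=n$ would force $R\leq 0$ on every $V^{T}\diag(w)V$ with $w\in\{-1,+1\}^{n}$; since every symmetric matrix equals, up to a positive scalar, a convex combination of such matrices (its spectral decomposition with eigenvalues rescaled into $[-1,1]$), convexity and positive homogeneity give $R\leq 0$ on all of $\sH$, and then $0=R(0)=R(z-z)\leq R(z)+R(-z)$ forces $R\equiv 0$, contradicting nonzeroness.

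For the second claim I first check the constraint set defining $(U_{1},v_{1})$ is nonempty: the first $r$ rows of $U_{0}$ span an $r$-dimensional subspace whose orthogonal complement has dimension $n-r\geq r+L$ (this is where $L\leq n-2r$ enters), so one can choose there $r+L$ orthonormal rows to serve as rows $r+1,\dots,2r+L$ of $U_{1}$ and complete $U_{1}$ to an element of $O(n)$. Writing $z_{0}=U_{0}^{T}\diag(v_{0})U_{0}\in\Sigma_{r}$ and $z_{1}=U_{1}^{T}\diag(v_{1})U_{1}$, since $R$ and $\Sigma_{r}$ are positively homogeneous and $R(z_{0})>0$, \Cref{lem:BuildDescentVector} gives $z:=z_{1}-\alpha z_{0}\in\sT_{R}(\Sigma_{r})$ with $\alpha=\max\bigl(R(z_{1})/R(z_{0}),1\bigr)\geq 1$. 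The orthogonality constraint says precisely that $\mathrm{range}(z_{0})\perp\mathrm{range}(z_{1})$, and as $z_{0},z_{1}$ are symmetric this yields $z_{1}x=0$ for $x\in\mathrm{range}(z_{0})$ and $z_{0}x=0$ for $x\in\mathrm{range}(z_{1})$; hence $z$ is block-diagonal for the orthogonal splitting $\mathrm{range}(z_{0})\oplus\mathrm{range}(z_{1})\oplus(\ker z_{0}\cap\ker z_{1})$, acting as $-\alpha z_{0}$, $z_{1}$ and $0$ respectively. Therefore $\eig(z)$ consists of $r$ values of magnitude $\alpha$, $r+L$ values of magnitude $1$ and the rest zero, so $|\supp(\eig(z))|=2r+L$, and because $\alpha\geq 1$ any set $T_{2}$ of the $2r$ largest-magnitude eigenvalues captures all $r$ of the $\pm\alpha$'s and $r$ of the $\pm 1$'s, whence $\|z_{T_{2}^{c}}\|_{F}^{2}/\|z_{T_{2}}\|_{F}^{2}=L/(r\alpha^{2}+r)=(L/r)/(\alpha^{2}+1)$. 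Finally, the first claim applied with $H=H_{1}$ and $r'=r+L$ gives $R(z_{1})/R(z_{0})\leq(r+L)/r=1+L/r$, hence $\alpha\leq 1+L/r$ and $(L/r)/(\alpha^{2}+1)\geq(L/r)/((1+L/r)^{2}+1)$, which is \eqref{eq:ineq_Bnuc_atom}.

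The main obstacle, relative to the sparse case, is the block-diagonal argument in the last step: in the sparse setting disjointness of $H_{0}$ and $H_{1}$ makes the two pieces act on orthogonal coordinate blocks automatically, whereas here that orthogonality must be engineered through the eigenvector constraint relating $U_{0}$ and $U_{1}$, and one has to both verify it is realizable (using $L\leq n-2r$) and verify that it makes the spectrum of $z$ the concatenation of the spectra of $-\alpha z_{0}$ and $z_{1}$ with no interaction. The averaging identity for the first claim is the same as in the sparse proof once one observes that the common eigenbasis $V$ can be held fixed across all subsets $G_{i}$.
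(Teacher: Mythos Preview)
Your proof is correct and follows essentially the same approach as the paper's own proof: the same averaging over rank-$r$ ``sub-blocks'' for the first claim, the same contradiction for positivity, and the same construction $z=-\alpha z_0+z_1$ via \Cref{lem:BuildDescentVector} for the second. You add two clarifications the paper leaves implicit --- the feasibility of the constraint set for $(U_1,v_1)$ using $L\le n-2r$, and the block-diagonal/spectral argument justifying that $\eig(z)$ is the concatenation of $\eig(-\alpha z_0)$ and $\eig(z_1)$ --- both of which are welcome and do not change the underlying strategy.
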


\begin{proof}
As a preliminary observe that if $R^{\star} = \|\cdot\|_{*}$ then $R^{\star}(V^TwV) = |H|$ for any $H,w \in Q_{H}, V \in O(n)$, hence  $w_{i} \in Q_{H_{i}}$ can be arbitrary, for example $w_{i} = 1_{H_{i}}$.  This yields $R^{\star}(U_0^T\diag(v_0)U_0) = r$,  $R^{\star}(U_1^T\diag(v_1)U_1) = r+L$, hence  $R^{\star}(U_1^T\diag(v_1)U_1)= (1+L/r)R^{\star}((U_0^T\diag(v_0)U_0)$.

To prove the first claim, consider $\{G_{i}\}_{1 \leq i \leq {r' \choose r}}$ the collection of all subsets $G_{i} \subseteq H$ of size exactly $r$. Since $w \in Q_{H}$, we have $w_{G_{i}} \in Q_{G_{i}}$ for each $i$. Also, since $|G_{i}|=r$ for every $i$, by definition of $H_{0},v_{0}$ and remarking that the maximization over  $O(n)$ permits to consider any permutation of the support, we obtain  
$\max_{i} R(V^T\diag(v_{G_{i}})V) \leq R(U_0^T\diag(v_{0})U_0)$.

Notice that given a coordinate  $j \in H$, there are ${r'-1 \choose r-1}$ sets $G_i$ such that $j \in G_i$. With  $\lambda := \frac{1}{{r'-1 \choose r-1}}$, we get $V^T\diag(w)V=V^T\lambda \sum_{i} \diag(w_{G_{i}})V$ hence by positive homogeneity and subadditivity of $R$ (which imply convexity)
 \begin{equation}
 \begin{split}
 R(V^TwV) = R(\lambda V^T\sum_{i=1}^{{r' \choose r}} \diag(w_{G_{i}})V)
 &\leq \sum_{i=1}^{{r' \choose r}} R(V^T\lambda \diag(w_{G_{i}})V)
  = \lambda  \sum_{i}^{{r' \choose r}} R(V^T\diag(w_{G_{i}})V)\\
 & \leq \frac{{r' \choose r}}{{r'-1 \choose r-1}} R(U_0^T\diag(v_0)U_0) = \frac{r'}{r} R(U_0^T\diag(v_0)U_0).
  \end{split}
\end{equation}
This establishes~\eqref{eq:ineqrakRonevectorR}. With $R = R^{\star}$, we have $R^{\star}(V^T\diag(w)V) = \|w\|_{1} = r'$ for $w \in Q_{H}$, hence $R^{\star}(V^T\diag(w)V)  = (r'/r) R^{\star}(U_0^T\diag(v_{0})U_0)$ as claimed.

For the sake of contradiction, assume that $R(U_0^T\diag(v_{0})U_0) \leq 0$. As we have just proved, this implies $R(V^T\diag(w)V) \leq (n/k) R(U_0^T\diag(v_{0})U_0) \leq 0$ for every $w \in \{-1,+1\}^{n}= Q_{H}$ with $H =  \{ 1, \ldots,n\}$ and $V \in O(n)$. By convexity of $R$ it follows that $R(V^T\diag(w)V) \leq 0$ for each $w \in [-1,1]^{n}  = \tconv(Q_{H})$, and by positive homogeneity, 
\begin{equation}
\label{eq:TmpNegReg}
R(V^T \diag(w)V) \leq 0,\quad \forall w \in \bR^n.
\end{equation}
Positive homogeneity and subadditivity also imply 
\begin{align*}
0 = 0 \cdot R(U_0^T\diag(v_{0})U_0) = R(0 \cdot U_0^T\diag(v_{0})U_0) = R(0) &= R(-V^T\diag(w)V+V^T\diag(w)V) \\
& \leq R(-V^T\diag(w)V) + R(V^T\diag(w)V)\\
&  \stackrel{\eqref{eq:TmpNegReg}}{\leq} R(-V^T\diag(w)V)
\end{align*}
 for every $V^T\diag(w)V \in \sH$, hence $R(V^T\diag(w)V) = 0$ on $\sH$, which yields the desired contradiction since we assume that $R$ is nonzero.

Regarding the second claim, since $2r+L \leq n$, by construction, $H_{1} \cap H_{0} = \emptyset$. 
Since $R(U_0^T\diag(v_{0})U_0)>0$, $R$ is positively homogeneous and the set $\Sigma$ is homogeneous, and we have by \Cref{lem:BuildDescentVector},
 $z = -\alpha  U_0^T\diag(v_{0})U_0 +  U_1^T\diag(v_1)U_1 \in \sT_{R}(\Sigma)$ with $\alpha :=  \max(R(U_1^T\diag(v_1)U_1)/R(U_0^T\diag(v_{0})U_0 ),1)$. 
Observe that $|\supp(\eig(z))| = |H_{0}|+|H_{1}| = 2r+L$. Since $\alpha \geq 1$ and all nonzero entries of $v_{0},v_{1}$ have magnitude one, a set of the $2r$ largest components of $eig(z)$ is $T_{2} = H_{0} \cup T'_{1}$ with $T'_{1}$ any subset of $H_{1}$ with $k$ components, and we obtain~\eqref{eq:ineq_Bnuc_atom}.
once we observe that
\begin{equation}
 \frac{\|z_{T_2^c}\|_2^2}{\|z_{T_2}\|_2^2} = \frac{  L}{ r\alpha^2 + r } = \frac{L/r}{\alpha^{2}+1}. 
\end{equation}

\end{proof}

\begin{lemma}\label{lem:charact_supBL2_nuc}
Let $\Sigma = \Sigma_r$.   Then 
\begin{equation}
B_\Sigma(\|\cdot\|_*) = \max_{0\leq L\leq n-2r}  \frac{ \frac{L}{r}}{ \left(\frac{L}{r}+1\right)^2 + 1 }.  \\
\end{equation}
\end{lemma}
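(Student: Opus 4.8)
The plan is to mirror the proof of \Cref{lem:charact_supBL2_l1} (the $\ell^{1}$ case) and to reduce the matrix optimization to the corresponding \emph{spectral} optimization over vectors of eigenvalues. The lower bound is immediate: applying \Cref{lem:charact_supBL2_atom_rank} to $R^{\star} = \|\cdot\|_{*}$ (for which the ratio $R^{\star}(U_{1}^{T}\diag(v_{1})U_{1})/R^{\star}(U_{0}^{T}\diag(v_{0})U_{0})$ equals $1 + L/r$) gives $B_{\Sigma}^{2r+L}(\|\cdot\|_{*}) \geq \frac{L/r}{(L/r+1)^{2}+1}$ for each $1 \leq L \leq n-2r$, and since $B_{\Sigma}^{2r}(\|\cdot\|_{*}) = 0$ we get $B_{\Sigma}(\|\cdot\|_{*}) = \max_{0 \leq L \leq n-2r} B_{\Sigma}^{2r+L}(\|\cdot\|_{*}) \geq \max_{0 \leq L \leq n-2r} \frac{L/r}{(L/r+1)^{2}+1}$. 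So the real work is the reverse inequality.

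The first step I would carry out is to record the precise shape of the descent cone. Using \Cref{lem:opt_support_wnuclear}, a nonzero symmetric matrix $z$ lies in $\sT_{\|\cdot\|_{*}}(\Sigma_{r})$ if and only if $\|z_{T^{c}}\|_{*} \leq \|z_{T}\|_{*}$, where $T = T(z)$ indexes $r$ eigenvalues of $z$ of largest magnitude; the ``only if'' part is exactly \Cref{lem:opt_support_wnuclear}, and the ``if'' part uses the admissible competitor $x := -z_{T} \in \Sigma_{r}$, which gives $\|x+z\|_{*} = \|z_{T^{c}}\|_{*} \leq \|z_{T}\|_{*} = \|x\|_{*}$. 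Combined with the fact (already established in the main text when invoking \Cref{cor:RCnecUoS} for $\Sigma_{r}$) that $P_{\Sigma-\Sigma}(z) = z_{T_{2}}$ and $z - P_{\Sigma-\Sigma}(z) = z_{T_{2}^{c}}$, with $T_{2} = T_{2}(z)$ indexing $2r$ largest eigenvalues, the definition of $B_{\Sigma}$ becomes $B_{\Sigma}(\|\cdot\|_{*}) = \sup\{\,\|z_{T_{2}^{c}}\|_{F}^{2}/\|z_{T_{2}}\|_{F}^{2} : z \neq 0 \text{ symmetric},\ \|z_{T^{c}}\|_{*} \leq \|z_{T}\|_{*}\,\}$.

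Next I would observe that, writing $z = U^{T}\diag(w)U$ with $w = \eig(z)$, every quantity above depends only on $w$: $\|z_{T}\|_{*}$ and $\|z_{T^{c}}\|_{*}$ are the $\ell^{1}$ norms of the top-$r$ and bottom-$(n-r)$ blocks of $w$, and $\|z_{T_{2}}\|_{F}^{2}$, $\|z_{T_{2}^{c}}\|_{F}^{2}$ are the corresponding squared $\ell^{2}$ norms for the top-$2r$ split; conversely any $w \in \bR^{n}$ (sorted by decreasing magnitude) is realized by $z = \diag(w)$. Hence $B_{\Sigma}(\|\cdot\|_{*})$ is \emph{verbatim} the supremum computed in the proof of \Cref{lem:charact_supBL2_l1} for the $r$-sparse model in $\bR^{n}$ (i.e. with $k$ replaced by $r$), namely $\max_{1 \leq L \leq n-2r} \frac{L/r}{(L/r+1)^{2}+1}$, and adding the vanishing $L=0$ term yields the stated formula $\max_{0 \leq L \leq n-2r} \frac{L/r}{(L/r+1)^{2}+1}$.

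I do not expect a serious obstacle: the nuclear norm, the Frobenius norm, and the truncation operators $z \mapsto z_{H}$ are all spectral (eigenvector-blind) for symmetric matrices, so nothing genuinely two-dimensional arises and the reduction to the vector problem is routine. The only points deserving a word of care are the non-uniqueness of $T(z)$ and of $U$ when eigenvalue magnitudes are tied --- which, exactly as in the $\ell^{1}$ case, affects none of the norms involved --- and the standing hypothesis $r \leq n/2$ (implicit in writing $n - 2r \geq 0$), under which the statement is non-vacuous. If a self-contained proof were preferred over citing \Cref{lem:charact_supBL2_l1}, one would instead re-run its three reductions (to eigenvalue vectors constant in magnitude on $T$, then on $T_{2}\setminus T$, then the scalar optimization solved with \Cref{lem:FlatVectors} and a one-variable function study), but invoking the $\ell^{1}$ computation directly is cleaner.
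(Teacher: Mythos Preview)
Your proposal is correct and follows essentially the same route as the paper: characterize the descent cone via \Cref{lem:opt_support_wnuclear}, observe that all involved quantities are spectral, and reduce to the $\ell^{1}$ computation of \Cref{lem:charact_supBL2_l1}. Your separate lower-bound step via \Cref{lem:charact_supBL2_atom_rank} is harmless but redundant, since the spectral reduction already yields \emph{equality} (both directions) once \Cref{lem:charact_supBL2_l1} is invoked.
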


\begin{proof}

 We have $z \in \sT_{\|\cdot\|_*}(\Sigma_r)$ is equivalent to $\|z_{T_2^c}\|_* + \|z_{T'}\|_* \leq \|z_{T}\|_*$ where $T' = \supp(z) \setminus (T_2^c \cup T)$ (\Cref{lem:opt_support_wnuclear}). Hence, 
 \begin{equation}
 \begin{split}
B_\Sigma^{L+2r}(\|\cdot\|_*) &= \sup_{z : \|z_{T_2^c}\|_* + \|z_{T'}\|_* \leq \|z_{T}\|_*} \frac{\|z_{T_2^c}\|_F^2}{\|z_{T_2}\|_F^2}  .\\
\end{split}
\end{equation}
Using the fact that $\|z\|_* = \|\eig(z)\|_1$ and   $\|z\|_F = \|\eig(z)\|_2$, we fall on the expression of $B_\Sigma^{L+2r}(\|\cdot\|_1)$ and get the result using \Cref{lem:charact_supBL2_l1}.

\end{proof}

\subsection{Proofs for \texorpdfstring{\Cref{sec:suff_RIP}}{Theorem \ref{sec:suff_RIP}}  }\label{sec:ProofSuffRIP}

\begin{proof}[Proof of \Cref{lem:charac_suff_RIP}]
 The constant $\delta_\Sigma^{\mathtt{suff}}(R)$ \cite{Traonmilin_2016}[Eq. (5)] has the following expression:
\begin{equation}\label{eq:ExplicitExpressionSuffRIP}
 \delta_\Sigma^{\mathtt{suff}}(R)  = \underset{z\in \sT_R(\Sigma) \setminus \{0\} }{\inf}  \underset{x\in \Sigma}{\sup} \frac{-\re \ls x,z \rs}{\|x\|_\sH \sqrt{ \|x+z\|_\Sigma^2 -  \|x\|_\sH^2 - 2\re \ls x,z \rs}} .
\end{equation}
Considering any nonzero $z \in \sH$, since $\Sigma$ is a union of subspaces and $\Sigma \cap S(1)$ is compact, by \Cref{lem:ProjectionExists} the set $P_{\Sigma}(z)$ is not empty and $\langle P_{\Sigma}(z),z\rangle = \|P_{\Sigma}(z)\|_{\sH}^{2}$ is unambiguous. Choosing an arbitrary $y \in P_{\Sigma}(z)$ and setting $x = -y$, we obtain

 \begin{equation*}
\begin{split}
  \underset{x\in \Sigma}{\sup} \frac{-\re \ls x,z \rs}{\|x\|_\sH \sqrt{ \|x+z\|_\Sigma^2 -  \|x\|_\sH^2 - 2\re \ls x,z \rs}}& \geq  \frac{\|P_{\Sigma}(z)\|_\sH^2}{\|P_{\Sigma}(z)\|_\sH \sqrt{ \|z-P_{\Sigma}(z)\|_\Sigma^2 -  \|P_{\Sigma}(z)\|_\sH^2 + 2\|P_{\Sigma}(z)\|_\sH^2}}\\
  & =   \frac{1}{ \sqrt{ \underset{z\in \sT_R(\Sigma) \setminus \{0\} }{\sup}  \frac{\|z-P_{\Sigma}(z) \|_\Sigma^2}{\|P_{\Sigma}(z)\|_\sH^2} + 1 }}.
 \end{split}
\end{equation*}
Considering the infimum over $z \in \sT_{R}(\Sigma) \setminus \{0\}$ yields the first claim. 
Let us now proceed to the second claim.

Given $z \in \sT_{R}(\Sigma) \setminus \{0\}$, consider an arbitrary $x \in \Sigma$, and $V \in \mathcal{V}$ such that $x \in V$. 
With Fact~\ref{fact:atom2}, for every $v \in \sH$, $\|v\|_\Sigma^{2}$ is the infimum of $\sum_i \lambda_i \|u_i\|_{\sH}^2$ over convex decompositions $v = \sum_i \lambda_i u_i$ over $\Sigma$, hence there exists $u_i\in \Sigma$, $\lambda_i\geq0$  such that $\sum_i \lambda_i =1$, $\sum_i \lambda_i u_i = x+z$ and
 \[
 \|x+z\|_\Sigma^2 = \sum_i \lambda_i \|u_i\|_{\sH}^2 .
 \]

Since $V \subset \Sigma$, $u_{i,V} := P_{V} u_{i} \in \Sigma$. By the additional assumption, since $u_{i} \in \Sigma$ we also have and $u_{i,V^{\perp}} := P_{V^{\perp}}u_{i} \in \Sigma$ for each $i$. Observe also that $P_{V^{\perp}}x = 0$. Hence, with the notations $z_{V}= P_{V}z$, $z_{V^{\perp}} = P_{V^{\perp}}z$, we have the convex decompositions 
\begin{align*}
z_{V^\perp}  =  P_{V^{\perp}}(x+z) &= \sum \lambda_{i} u_{i,V^\perp}\\
 x+z_{V} = P_{V}(x+z) &= \sum \lambda_{i} u_{i,V}. 
\end{align*}
Using Jensen's inequality for the convex functions $\|\cdot\|_{\Sigma}^{2}$ and $\|\cdot\|_{\sH}^{2}$ and  the identity $\|v\|_\Sigma^2 =\|v\|_\sH^2$ for $v \in \Sigma$ (Fact~\ref{fact:atom1}), we have
\begin{align*}
 \|z_{V^\perp}\|_\Sigma^2  +  \|x+z_V\|_\sH^2 
   \leq \sum_i \lambda_i \|u_{i,V^\perp}\|_\Sigma^2 + \sum_i\lambda_i \| u_{i,V}\|_\sH^2
   &= \sum_i \lambda_i \|u_{i,V^\perp}\|_\sH^2 + \sum_i\lambda_i \| u_{i,V}\|_\sH^2\\
   &= \sum_i \lambda_i \|u_{i}\|_\sH^2  = \|x+z\|_\Sigma^2.
\end{align*}
Since $P_{V}$ is the (linear and self-adjoint) orthogonal projection onto $V$, we have $\re \langle x,z_{V}\rangle =\re  \langle x,P_{V}z\rangle =\re  \langle P_{V}x,z\rangle =\re  \langle x,z\rangle$, and we obtain 
\begin{equation}
\begin{split}
  \|z_{V^\perp}\|_\Sigma^2  +   \|z_V\|_\sH^2  &\leq \|x+z\|_\Sigma^2 -  \|x+z_V\|_\sH^2 + \|z_V\|_\sH^2\\
  \|z_{V^\perp}\|_\Sigma^2  +   \|z_V\|_\sH^2  &\leq \|x+z\|_\Sigma^2 - \|x\|_\sH^2 -2\re \ls x, z_V \rs \\
   \|z_{V^\perp}\|_\Sigma^2  +   \|z_V\|_\sH^2  &\leq \|x+z\|_\Sigma^2 - \|x\|_\sH^2 -2\re \ls x, z\rs. \\
 \end{split}
\end{equation}
Using Cauchy-Schwarz inequality, we have $\left(\re(\langle x,z\rangle\right)^{2} = \left(\re(\langle x,z_{V}\rangle\right)^{2} \leq \|x\|_{\sH}^{2}\|z_{V}\|_{\sH}^{2}$. Denoting $V_0$ such that 
$P_{V_0}(z) \in P_\Sigma(z)$, we get

\begin{align*}
  \left(\re \ls x, z\rs\right)^2  \left(\|z_{V^\perp}\|_\Sigma^2  +   \|z_V\|_\sH^2 \right) &
  \leq \|x\|_\sH^2\|z_V\|_\sH^2\left(\|x+z\|_\Sigma^2 - \|x\|_\sH^2 -2\re \ls x, z\rs\right) \\
  \frac{\left(\re \ls x, z\rs\right)^2}{ \|x\|_\sH^2\left(\|x+z\|_\Sigma^2 - \|x\|_\sH^2 -2\re \ls x, z\rs\right) }  
  &\leq\frac{\|z_V\|_\sH^2}{\left(\|z_{V^\perp}\|_\Sigma^2  +   \|z_V\|_\sH^2 \right)} =  \frac{1}{\frac{\|z_{V^\perp}\|_\Sigma^2}{\|z_{V}\|_\sH^2}  +1} 
  \leq \frac{1}{\frac{\|z-P_{V}z\|_\Sigma^2}{\|P_{\Sigma}(z)\|_\sH^2}  +1} ,
 \end{align*}
where the last inequality (we could use here the weaker alternative assumption $P_\Sigma(z) \cap \arg \min_{x \in \Sigma} \|x-z\|_\Sigma/\|x\|_{\sH} \neq \emptyset$) uses that $z_{V^{\perp}} = z-P_{V}z$ and
$\|P_{V_{0}}z\|_{\sH} = \|P_{\Sigma}(z)\|_{\sH} \geq \|P_{V}(z)\|_{\sH} = \|z_{V}\|_{\sH}$.
To conclude, we use the additional hypothesis $P_\Sigma(z) \subseteq \arg\min_{x\in\Sigma} \|x-z\|_\Sigma$, which implies  $\|z-P_{\Sigma}(z)\|_{\Sigma} \leq \|z-P_{V}z\|_{\Sigma}$ since $P_{V}z \in \Sigma$

 \begin{align*}
  \underset{x\in \Sigma}{\sup} \frac{-\re \ls x,z \rs}{\|x\|_\sH \sqrt{ \|x+z\|_\Sigma^2 -  \|x\|_\sH^2 - 2\re \ls x,z \rs}} \leq  \frac{1}{ \sqrt{ \underset{z\in \sT_R(\Sigma) \setminus \{0\} }{\sup}  \frac{\|z - P_\Sigma(z)\|_\Sigma^2}{\|P_\Sigma(z)\|_\sH^2} + 1 }}. 
\end{align*}

\end{proof}

To replicate the proof used in the necessary case, we show a monotonicity property of $\|\cdot\|_\Sigma$.

\begin{lemma}\label{lem:sigma_norm_monotonic}
Consider a model set $\Sigma \subset \sH$, $\|\cdot\|_{\Sigma}$ the atomic  ``norm'' induced by $\Sigma$, and $D:\sH \to \sH$ a linear operator. 
If $D \Sigma \subseteq \Sigma$ and $\|D\|_{op} := \sup_{\|v\|_{\sH} \leq 1} \|Dv\|_{\sH} \leq 1$ then 
\begin{equation}
\|Dv\|_{\Sigma} \leq \|v\|_{\Sigma},\quad \forall v \in \sH.
\end{equation}
\end{lemma}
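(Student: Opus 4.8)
The plan is to exploit the variational characterization of $\|\cdot\|_\Sigma$ recalled in Fact~\ref{fact:atom2}, which expresses $\|v\|_\Sigma^2$ as an infimum over convex combinations $v = \sum_i \lambda_i u_i$ with $u_i \in \Sigma$, $\lambda_i \geq 0$, $\sum_i \lambda_i = 1$, of the quantity $\sum_i \lambda_i \|u_i\|_\sH^2$. The underlying idea is that $D$ sends such a decomposition of $v$ to a decomposition of $Dv$ of the same combinatorial shape, while it can only shrink the Hilbert norms of the atoms.

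First I would dispose of the trivial case: if $v \notin \sE(\Sigma)$ then $\|v\|_\Sigma = +\infty$ and there is nothing to prove, so I may assume $v \in \sE(\Sigma)$, i.e.\ $\|v\|_\Sigma < \infty$. I would then fix an arbitrary admissible decomposition $v = \sum_i \lambda_i u_i$ with $u_i \in \Sigma$, $\lambda_i \geq 0$ and $\sum_i \lambda_i = 1$. Applying $D$ gives $Dv = \sum_i \lambda_i (Du_i)$, and since $D\Sigma \subseteq \Sigma$ each $Du_i$ lies in $\Sigma$; hence this is an admissible decomposition of $Dv$ in the sense of Fact~\ref{fact:atom2}. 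Consequently
\[
\|Dv\|_\Sigma^2 \leq \sum_i \lambda_i \|Du_i\|_\sH^2 \leq \|D\|_{op}^2 \sum_i \lambda_i \|u_i\|_\sH^2 \leq \sum_i \lambda_i \|u_i\|_\sH^2 ,
\]
where the middle step uses $\|Du_i\|_\sH \leq \|D\|_{op}\|u_i\|_\sH$ and the last one uses $\|D\|_{op} \leq 1$. Taking the infimum of the right-hand side over all admissible decompositions of $v$ and invoking Fact~\ref{fact:atom2} again yields $\|Dv\|_\Sigma^2 \leq \|v\|_\Sigma^2$, hence the claim after taking square roots. Note in passing that the existence of a finite decomposition of $v$ automatically furnishes a finite decomposition of $Dv$, so $Dv \in \sE(\Sigma)$ and $\|Dv\|_\Sigma$ is indeed finite.

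I do not anticipate any serious obstacle: the argument is essentially a one-line application of Fact~\ref{fact:atom2} combined with submultiplicativity of the operator norm. The only points that require a little care are (i) treating the case $v \notin \sE(\Sigma)$ separately so that the infimum characterization is not vacuous, and (ii) checking that the images $Du_i$ still satisfy all the constraints of Fact~\ref{fact:atom2} — nonnegativity of the weights, the normalization $\sum_i \lambda_i = 1$, and membership in $\Sigma$ — the last of which is precisely where the hypothesis $D\Sigma \subseteq \Sigma$ is used.
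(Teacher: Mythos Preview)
Your proof is correct and follows essentially the same approach as the paper: both push an admissible decomposition of $v$ through $D$ and use Fact~\ref{fact:atom2} together with $\|D\|_{op}\leq 1$. The only cosmetic difference is that the paper phrases the first inequality via Jensen's inequality for $\|\cdot\|_\Sigma^2$ combined with Fact~\ref{fact:atom1}, whereas you invoke Fact~\ref{fact:atom2} directly; your explicit handling of the case $v\notin\sE(\Sigma)$ is a small bonus.
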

\begin{proof}
 Let $\lambda_i,u_i$ such that $u_i \in \Sigma$, $\sum_i \lambda_i =1$, $\sum_i \lambda_i u_i = v$.
 Denoting $u'_{i} = D u_{i}$ we have $u'_{i} \in \Sigma$ and $Dv = \sum \lambda_{i} u'_{i}$. 
 By Jensen's inequality and the fact that $\|u\|_{\Sigma} = \|u\|_{\sH}$ for any $u \in \Sigma$ (Fact~\ref{fact:atom1}), it follows that
 \begin{equation}
 \begin{split}
  \|Dv\|_\Sigma^2 &\leq 
  \sum \lambda_i \|u'_i\|_\Sigma^2 =  
  \sum \lambda_i \|u'_i\|_\sH^2 
  =
  \sum \lambda_i \|Du_i\|_\sH^2 
  \leq 
  \sum \lambda_i \|u_i\|_\sH^2.
  \end{split}
 \end{equation}
 With Fact~\ref{fact:atom2}, $\|v\|_\Sigma^2$ is the infimum of the right-hand side over all such decompositions $v=\sum \lambda_i u_i$. 

\end{proof} 

 \begin{corollary} \label{cor:sigma_norm}
With $\Sigma := \Sigma_{k}$ the set of $k$-sparse vectors in $\sH = \bR^{n}$, we have:
 \begin{enumerate}
 \item the norm $\|\cdot\|_{\Sigma}$ is invariant by permutation and coordinate sign changes;
 \item for any vectors $v,v' \in \sH$ such that $|v_{j}| \leq |v'_{j}|$ for all $j$ we have $\|v\|_{\Sigma} \leq \|v'\|_{\Sigma}$;
 \item consider any vector $z$, and $T_k$ a subset indexing $k$ components of the largest magnitude, \ie  $\min_{i \in T} |z_{i}| \geq \max_{j \notin T} |z_{j}|$, with $|T| = k$. Then
 \begin{eqnarray}
 \max_{|T| \leq k} \|z_T\|_\Sigma &=& \|z_{T_k}\|_\Sigma\\
 \min_{|T| \leq k} \|z-z_T\|_\Sigma &=& \|z-z_{T_k}\|_\Sigma.
 \end{eqnarray}
 \end{enumerate}
 
 \end{corollary}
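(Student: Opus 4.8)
The plan is to derive all three claims from \Cref{lem:sigma_norm_monotonic}, which states that $\|Dv\|_{\Sigma} \leq \|v\|_{\Sigma}$ for every $v \in \sH$ whenever $D$ is linear with $D\Sigma \subseteq \Sigma$ and $\|D\|_{op} \leq 1$. The key preliminary remark is that, for $\Sigma = \Sigma_k$, every diagonal operator with all diagonal entries of magnitude at most $1$, and every signed permutation operator, is support‑preserving and hence maps $\Sigma_k$ into itself.

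First I would prove (1). A signed permutation operator $D$ is an isometry, so $\|D\|_{op}=1$, and both $D$ and $D^{-1}$ preserve $\Sigma_k$; applying \Cref{lem:sigma_norm_monotonic} to $D$ gives $\|Dv\|_{\Sigma}\leq\|v\|_{\Sigma}$, and applying it to $D^{-1}$ with $v$ replaced by $Dv$ gives the reverse inequality, hence $\|Dv\|_{\Sigma}=\|v\|_{\Sigma}$. (This also shows the quantity $\|z_{T_k}\|_{\Sigma}$ appearing in (3) is well defined: two sets $T_k$ of $k$ largest‑magnitude coordinates of $z$ differ by a permutation of equal‑magnitude entries.) For (2), given $v,v'$ with $|v_j|\leq|v'_j|$ for all $j$, I would take $D$ to be the diagonal operator with $D_{jj}=v_j/v'_j$ when $v'_j\neq0$ and $D_{jj}=0$ otherwise; since $v'_j=0$ forces $v_j=0$ we get $Dv'=v$, and $|D_{jj}|\leq1$ gives $\|D\|_{op}\leq1$, so \Cref{lem:sigma_norm_monotonic} yields $\|v\|_{\Sigma}=\|Dv'\|_{\Sigma}\leq\|v'\|_{\Sigma}$.

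For (3) I would treat the maximum first. Given $T$ with $m:=|T|\leq k$, order $T=\{j_1,\dots,j_m\}$ and $T_k=\{i_1,\dots,i_k\}$ by decreasing magnitude of the corresponding entries of $z$; since $j_1,\dots,j_l$ all carry magnitude $\geq|z_{j_l}|$, the number $|z_{j_l}|$ is at most the $l$‑th largest magnitude of $z$, \ie $|z_{j_l}|\leq|z_{i_l}|$ for $l\leq m$. Picking a permutation $P$ sending $j_l\mapsto i_l$ for $l\leq m$, the vector $w:=Pz_T$ is supported inside $T_k$ with $|w|\leq|z_{T_k}|$ coordinatewise, so by (1) and (2) we get $\|z_T\|_{\Sigma}=\|w\|_{\Sigma}\leq\|z_{T_k}\|_{\Sigma}$, with equality at $T=T_k$. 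For the minimum, note $z-z_T=z_{T^c}$; taking $T_k$ to index the $k$ largest magnitudes, $T_k^c$ indexes the $n-k$ smallest. Writing the magnitudes in nonincreasing order $|z_{(1)}|\geq\dots\geq|z_{(n)}|$, the $l$‑th largest magnitude among $\{|z_j|:j\in T^c\}$ is $\geq|z_{(l+m)}|\geq|z_{(l+k)}|$ for $1\leq l\leq n-k$ (at most $m\leq k$ of the top coordinates were removed), so exactly as above there is a permutation $P$ with $Pz_{T_k^c}$ supported inside $T^c$ and coordinatewise dominated by $z_{T^c}$; then (1) and (2) give $\|z_{T_k^c}\|_{\Sigma}\leq\|z_{T^c}\|_{\Sigma}$, which is the claim.

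The steps that invoke \Cref{lem:sigma_norm_monotonic} are immediate; the only mildly technical point is the rearrangement argument in (3), namely producing, from an index set of size $\leq k$ (or from its complement), a permutation realizing the restricted vector as coordinatewise dominated by $z_{T_k}$ (or $z_{T_k^c}$). This is the standard observation that the $l$‑th order statistic of $|z|$ restricted to a set can only increase as the set is enlarged, and it is where I would expect to spend the most care, although no hard estimate is involved.
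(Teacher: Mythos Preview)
Your proposal is correct and follows essentially the same route as the paper: Properties (1) and (2) are proved by applying \Cref{lem:sigma_norm_monotonic} to signed permutations and to contractive diagonal operators exactly as the paper does, and Property (3) is derived from (1) and (2) via the same rearrangement idea. The only cosmetic difference is that for (3) the paper first reduces, using (1), to the case $z_1 \geq \dots \geq z_n \geq 0$ with $T_k=\{1,\dots,k\}$ and restricts to $|T|=k$, whereas you build the comparison permutation directly for arbitrary $z$ and $|T|\le k$; the underlying order-statistic inequality is identical.
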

 \begin{proof}
 We show the three properties separately. 
 
 \begin{itemize}
  \item \textbf{Property 1:} Let $\pi$ be a permutation of $(1,\ldots,n)$ and $\epsilon_{1},\ldots,\epsilon_{n} \in \{\pm1\}$. Define $D$ by $(Du)_{i} = \epsilon_{i} u_{\pi(i)}$. Observe that $D \Sigma_{k} \subseteq \Sigma_{k}$ and $\|D\|_{op} =1$. Conclude using \Cref{lem:sigma_norm_monotonic} that $\|Du\|_{\Sigma} \leq \|u\|_{\Sigma}$ for any $u \in \sH$. The same holds with $D' = D^{-1}$, hence $\|u\|_{\Sigma} = \|D^{-1}D u\|_{\Sigma} \leq \|Du\|_{\Sigma}$ for any $u$. This shows  $\|D\cdot\|_{\Sigma} = \|\cdot\|_{\Sigma}$.
 
 \item \textbf{Property 2:} Given the assumptions on $v,v'$, the linear operator defined by $(Du)_{i} = v_{i} u_{i}/v'_{i}$ if $v'_{i} \neq 0$ (and $(Du)_{i} = 0$ otherwise) satisfies $D \Sigma \subseteq \Sigma$ and $\|D\|_{op} \leq 1$ hence, using \Cref{lem:sigma_norm_monotonic} again, $\|v\|_{\Sigma} = \|Dv'\|_{\Sigma} \leq \|v'\|_{\Sigma}$.
 
\item \textbf{Property 3:} By the invariance by permutation and coordinate sign changes of $\|\cdot\|_{\Sigma}$, it is sufficient to prove the result when $z_{1} \geq \ldots \geq z_{n} \geq 0$ and $T_{k} = \{1,\ldots,k\}$. Given $T$ of size $k$, there is a permutation $\phi$ of $(1,\ldots,n)$ such that $T = \{\phi(1),\ldots,\phi(k)\}$ where $\phi(1) < \ldots < \phi(k)$. 
It follows that $z_{\phi(i)} \leq z_{i}$ for $1 \leq i \leq k$.  Hence by Property 2, we have $\|z_{T}\|_{\Sigma} = \|(z_{\phi(1)},\ldots,z_{\phi(k)},0,\ldots,0)\|_{\Sigma} \leq \|(z_{1},\ldots,z_{k},0,\ldots,0)\|_{\Sigma} =  \|z_{T_{k}}\|_{\Sigma}$. A similar argument using $T^{c}$ yields $\|z-z_{T}\|_{\Sigma} \geq \|z-z_{T_{k}}\|_{\Sigma}$.  
\end{itemize}

\end{proof}

 \begin{corollary} \label{cor:sigma_norm_rank}
With $\Sigma := \Sigma_{r}$ the set of matrices of rank lower than $r$ in $\sH$ the set of symmetric matrices in $\bR^{n \times n}$, we have:
 \begin{enumerate}
 \item for any matrices $V^T\diag(w)V,V^T\diag(w')V$ with $V \in O(n)$  such that $|w_{j}| \leq |w'_{j}|$ for all $j$ we have $\|V^T\diag(w)V\|_{\Sigma} \leq \|V^T\diag(w')V\|_{\Sigma}$;
   \item For any symmetric matrix  $z$, and $T_r$ a subset indexing $r$ components of largest magnitude of $\eig(z)$, \ie
   \[\min_{i \in T} |\eig(z)_{i}| \geq \max_{j \notin T} |\eig(z)_{j}|,\]
   with $|T| = r$. Then
 \begin{eqnarray}
 \max_{|T| \leq r} \|z_T\|_\Sigma &=& \|z_{T_r}\|_\Sigma\\
 \min_{|T| \leq r} \|z-z_T\|_\Sigma &=& \|z-z_{T_r}\|_\Sigma.
 \end{eqnarray}
 \end{enumerate}
 
 \end{corollary}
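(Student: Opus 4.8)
The plan is to follow the structure of the proof of \Cref{cor:sigma_norm}, reducing both assertions to their counterparts for the $r$-sparse model in $\bR^{n}$, read through the eigenvalue vector. First I would record that $\|\cdot\|_{\Sigma_r}$ is invariant under orthogonal conjugation: for $U\in O(n)$ the linear map $D(M):=UMU^{T}$ takes $\sH$ into $\sH$, satisfies $D\Sigma_r\subseteq\Sigma_r$ (conjugation does not increase rank) and $\|D\|_{op}\leq 1$ (orthogonal conjugation preserves $\|\cdot\|_{F}$); the same holds for its inverse $M\mapsto U^{T}MU$, so by \Cref{lem:sigma_norm_monotonic} applied to both maps, $\|UMU^{T}\|_{\Sigma_r}=\|M\|_{\Sigma_r}$. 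In particular $\|M\|_{\Sigma_r}$ depends only on the (signed) eigenvalues of $M$.

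The key step is then the spectral-to-sparse identity $\|U^{T}\diag(w)U\|_{\Sigma_r}=\|w\|_{\Sigma_r^{\mathrm{v}}}$, where $\|\cdot\|_{\Sigma_r^{\mathrm{v}}}$ denotes the atomic norm on $\bR^{n}$ induced by the $r$-sparse model, i.e. the one studied in \Cref{cor:sigma_norm} with $k=r$. By the conjugation invariance just established it suffices to treat $U=I$. The inequality $\|\diag(w)\|_{\Sigma_r}\leq\|w\|_{\Sigma_r^{\mathrm{v}}}$ follows by lifting any convex decomposition $w=\sum_{j}\mu_{j}a_{j}$ over $r$-sparse vectors to the decomposition $\diag(w)=\sum_{j}\mu_{j}\diag(a_{j})$ over diagonal rank-at-most-$r$ matrices and invoking \Cref{fact:atom2}. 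For the reverse inequality, given a convex decomposition $\diag(w)=\sum_{i}\lambda_{i}u_{i}$ with $u_{i}\in\Sigma_r$, extracting diagonals yields $w=\sum_{i}\lambda_{i}\mathrm{d}_{i}$, where $\mathrm{d}_{i}\in\bR^{n}$ collects the diagonal entries of $u_{i}$; by the Schur--Horn theorem $\mathrm{d}_{i}$ is majorized by $\eig(u_{i})$, and since $\|\cdot\|_{\Sigma_r^{\mathrm{v}}}$ is a permutation-invariant norm (permutation invariance being the first item of \Cref{cor:sigma_norm}), hence convex and symmetric and therefore monotone for majorization, we get $\|\mathrm{d}_{i}\|_{\Sigma_r^{\mathrm{v}}}\leq\|\eig(u_{i})\|_{\Sigma_r^{\mathrm{v}}}=\|\eig(u_{i})\|_{2}=\|u_{i}\|_{F}$, the middle equality using that $\eig(u_{i})$ is $r$-sparse together with \Cref{fact:atom1}. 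Convexity of $\|\cdot\|_{\Sigma_r^{\mathrm{v}}}^{2}$ then gives $\|w\|_{\Sigma_r^{\mathrm{v}}}^{2}\leq\sum_{i}\lambda_{i}\|u_{i}\|_{F}^{2}$, and taking the infimum over decompositions concludes via \Cref{fact:atom2}.

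Granting this identity, the two assertions are immediate. For the first, if $|w_{j}|\leq|w'_{j}|$ for all $j$ then $\|V^{T}\diag(w)V\|_{\Sigma_r}=\|w\|_{\Sigma_r^{\mathrm{v}}}\leq\|w'\|_{\Sigma_r^{\mathrm{v}}}=\|V^{T}\diag(w')V\|_{\Sigma_r}$ by the second item of \Cref{cor:sigma_norm}. For the second, writing $z=U^{T}\diag(w)U$ with $w=\eig(z)$ and $z_{T}=U^{T}\diag(w_{T})U$, conjugation invariance and the identity give $\|z_{T}\|_{\Sigma_r}=\|w_{T}\|_{\Sigma_r^{\mathrm{v}}}$ and $\|z-z_{T}\|_{\Sigma_r}=\|w-w_{T}\|_{\Sigma_r^{\mathrm{v}}}$; since $T_{r}$ indexes the $r$ entries of $w=\eig(z)$ of largest magnitude, the third item of \Cref{cor:sigma_norm} (with $k=r$) yields both $\max_{|T|\leq r}\|z_{T}\|_{\Sigma_r}=\|z_{T_{r}}\|_{\Sigma_r}$ and $\min_{|T|\leq r}\|z-z_{T}\|_{\Sigma_r}=\|z-z_{T_{r}}\|_{\Sigma_r}$. (The $\max$ part is in fact elementary on its own: $z_{T}$ has rank at most $|T|\leq r$, so $\|z_{T}\|_{\Sigma_r}=\|z_{T}\|_{F}$ by \Cref{fact:atom1}, and $T_{r}$ maximizes $\sum_{j\in T}w_{j}^{2}$.)

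The main obstacle is the general-sign case of the first assertion. In the vector setting of \Cref{cor:sigma_norm} one simply uses a diagonal rescaling, whose entries may be negative; here any contraction preserving $\Sigma_r$ is forced to be of conjugation type $M\mapsto AMA^{T}$, and such a map cannot reverse the sign of an individual eigenvalue, so the vector proof cannot be transcribed verbatim. The remedy is precisely the spectral-to-sparse identity above, and within it the delicate direction is $\|\diag(w)\|_{\Sigma_r}\geq\|w\|_{\Sigma_r^{\mathrm{v}}}$, which is where the Schur--Horn majorization and the convexity and symmetry of $\|\cdot\|_{\Sigma_r^{\mathrm{v}}}$ come in; once it is established, everything else reduces to \Cref{cor:sigma_norm} and \Cref{lem:sigma_norm_monotonic}.
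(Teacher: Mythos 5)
Your proof is correct, but it takes a genuinely different route from the paper's. The paper proves Property 1 directly by applying \Cref{lem:sigma_norm_monotonic} to the multiplication operator $D: z \mapsto (V^TWV)z$ with $W = \diag(w_i/w'_i)$, which contracts the Frobenius norm and does not increase rank, and then deduces Property 2 from Property 1 together with the eigendecomposition (absorbing permutations of the eigenvalues into the choice of $V$), exactly mirroring the vector case. You instead establish the transfer identity $\|U^T\diag(w)U\|_{\Sigma_r} = \|w\|_{\Sigma_r^{\mathrm{v}}}$ — conjugation invariance via \Cref{lem:sigma_norm_monotonic}, the easy inequality by lifting sparse convex decompositions to diagonal ones via Fact~\ref{fact:atom2}, and the hard inequality via Schur--Horn majorization of the diagonal by the spectrum plus Schur-convexity of the permutation-invariant convex norm $\|\cdot\|_{\Sigma_r^{\mathrm{v}}}$ and Fact~\ref{fact:atom1} — and then both properties fall out of \Cref{cor:sigma_norm}. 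Your route is longer and imports an external tool (Schur--Horn), but it buys a stronger and reusable structural fact (the matrix atomic norm is the symmetric gauge of the vector one on eigenvalues), and it sidesteps a genuine looseness in the paper's argument: the operator $z \mapsto (V^TWV)z$ does not map symmetric matrices to symmetric matrices, so \Cref{lem:sigma_norm_monotonic} (stated for $D:\sH\to\sH$ with $\sH$ the symmetric matrices) does not literally apply as written. One quibble: your claim that any rank-preserving contraction must be of conjugation type $M\mapsto AMA^T$ is not accurate (one-sided multiplication also preserves $\rank \leq r$ and contracts $\|\cdot\|_F$); the real obstruction to transcribing the vector proof is the symmetry-preservation issue just mentioned, not the absence of non-conjugation contractions. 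This side remark is only motivational and does not affect the validity of your argument.
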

 \begin{proof}
 We show the two properties separately. 
 
 \begin{itemize}
 
 \item \textbf{Property 1:} Given the assumptions on $w,w'$, the linear operator defined by $Dz =V^TW Vz$ where $W$ is the diagonal matrix  such that $W_{ii} = w_i/w_i'$ if $w'_{i} \neq 0$ (and $W_{ii} = 0$ otherwise) satisfies $D \Sigma \subseteq \Sigma$ and $\|D\|_{op} \leq 1$. We have $D(V^T\diag(w')V)=V^TWw'V=V^TwV$. With \Cref{lem:sigma_norm_monotonic}, we get $\|V^T\diag(w)V\|_{\Sigma} = \|D(V^T\diag(w')V)\|_{\Sigma} \leq \|V^T\diag(w')V\|_{\Sigma}$.
 
\item \textbf{Property 2:}  This property is direct using the eigenvalue decomposition  \[z = U^T\diag(\eig(z))U^T = U^T\diag(\eig(z)_T + \eig(z)_{T^c})U^T \] and Property 1.

\end{itemize}

\end{proof} 
 We now prove \Cref{lem:assumption_suff}.
\begin{proof}[Proof of \Cref{lem:assumption_suff}]
 Consider first $\Sigma = \Sigma_k$. First, the properties of $\|\cdot\|_\Sigma$ established in \Cref{cor:sigma_norm} directly show that the minimum  of $\|x-z\|_\Sigma$  with respect to $x \in \Sigma$ is reached at any $x \in P_{\Sigma}(z)$. Then, we can write $\Sigma = \cup_{V \in \mathcal{V}} V$ where $V \in \mathcal{V}$ if, and only if there is an index set $T \subseteq \{ 1, \ldots,n\}$ such that $|T| \leq k $ and $V = \tspan(e_{i})_{i \in T}$.
 Given $V \in \mathcal{V}$ and $u \in \Sigma_k$, let us show that $P_{V^{\perp}} u \in \Sigma_{k}$.  Writing $V = \tspan (e_i)_{i \in T}$ where $|T| \leq k$, we have
$ P_V(u) = u_T$ and  $P_{V^\perp}(u) =   u_{T^c}$. As $\supp(u_{T^{c}}) \subseteq \supp(u)$ it follows that $\|u_{T^{c}}\|_{0} \leq k$, hence
$P_{V^\perp}(u) \in \Sigma_{k}$.

 In the case of low rank matrices $\Sigma = \Sigma_r$. We take $ \mathcal{V} = \{ \tspan (U_i)_{i \in I}, |I| \leq r, \|U_i\|_F=1, \rank(U_i) = 1, \ls U_i, U_j\rs =0 , i\neq j  \}$ . With \Cref{cor:sigma_norm_rank}, the minimum  of $\|x-z\|_\Sigma$  with respect to $x \in \Sigma$ is reached at any $x \in P_{\Sigma}(z)$.  Let $z\in \Sigma_r$ and $V \in \mathcal{V}$. 
 We have $P_V(z)=V_1^T S_1V_1$ has rank $r'$ lower than $r$. We can write $z= V_1^T S_1V_1 + V_2^T S_2V_2$ with $V_1V_2^T=0$.  Hence, $P_{V^\perp}(z)$ has rank at most $r-r' \leq r$ and $P_{V^\perp}(z)\in \Sigma_r$  otherwise $z$ would be of rank greater than $r$.

\end{proof}
 We need the following Lemma to control $\|\cdot\|_\Sigma$.
\begin{lemma}\label{lem:part_value_norm_sigma}

Let $\Sigma = \Sigma_k \subset \bR^n$. Then for any $v$
\begin{equation}
\|v\|_\Sigma^2 \geq \frac{\|v\|_1^2}{k}.
\end{equation}
Let $\Sigma = \Sigma_r$. Then for any $v$
\begin{equation}
\|v\|_\Sigma^2 \geq \frac{\|v\|_*^2}{r}.
\end{equation}
\end{lemma}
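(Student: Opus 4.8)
The plan is to invoke the variational formula for $\|\cdot\|_\Sigma$ provided by Fact~\ref{fact:atom2}, namely
\[
\|v\|_\Sigma^2 = \inf\Bigl\{ \textstyle\sum_i \lambda_i \|u_i\|_\sH^2 : \lambda_i \in \bRp,\ \textstyle\sum_i \lambda_i = 1,\ u_i \in \Sigma,\ v = \textstyle\sum_i \lambda_i u_i \Bigr\},
\]
and to bound each such decomposition from below. First I would fix an arbitrary convex decomposition $v = \sum_i \lambda_i u_i$ with $u_i \in \Sigma$. For $\Sigma = \Sigma_k$ we have $\|\cdot\|_\sH = \|\cdot\|_2$, and each atom $u_i$ is supported on at most $k$ coordinates, so Cauchy--Schwarz gives $\|u_i\|_1 \leq \sqrt{k}\,\|u_i\|_2$, i.e. $\|u_i\|_2^2 \geq \|u_i\|_1^2/k$. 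Multiplying by $\lambda_i$ and summing, then using Jensen's inequality for the convex map $t \mapsto t^2$ followed by the triangle inequality for $\|\cdot\|_1$, I obtain
\[
\sum_i \lambda_i \|u_i\|_2^2 \geq \frac1k \sum_i \lambda_i \|u_i\|_1^2 \geq \frac1k \Bigl(\sum_i \lambda_i \|u_i\|_1\Bigr)^2 \geq \frac1k \Bigl\|\sum_i \lambda_i u_i\Bigr\|_1^2 = \frac{\|v\|_1^2}{k}.
\]
Taking the infimum over all such decompositions yields $\|v\|_\Sigma^2 \geq \|v\|_1^2/k$.

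Second, the low-rank case is handled by exactly the same three-step chain, replacing $\|\cdot\|_2$ by the Frobenius norm $\|\cdot\|_F$ and $\|\cdot\|_1$ by the nuclear norm $\|\cdot\|_*$. Here each atom $u_i \in \Sigma_r$ has at most $r$ nonzero singular values, so Cauchy--Schwarz on the singular-value vector gives $\|u_i\|_* \leq \sqrt{r}\,\|u_i\|_F$, hence $\|u_i\|_F^2 \geq \|u_i\|_*^2/r$; Jensen and the triangle inequality for $\|\cdot\|_*$ then give $\sum_i \lambda_i \|u_i\|_F^2 \geq \|v\|_*^2/r$, and taking the infimum concludes.

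I do not anticipate any real obstacle here; the statement is essentially a two-line consequence of Fact~\ref{fact:atom2}. The only points requiring a word of care are: (i) that Fact~\ref{fact:atom2} indeed applies to $\|\cdot\|_\Sigma$ for both $\Sigma = \Sigma_k$ and $\Sigma = \Sigma_r$, which is fine since $\Sigma \cap S(1)$ is compact in each case; and (ii) recalling that $\|\cdot\|_\sH$ denotes $\|\cdot\|_2$ in the sparse setting and $\|\cdot\|_F$ in the matrix setting, so that the per-atom Cauchy--Schwarz bound is exactly the classical inequality between the $\ell^1$ and $\ell^2$ norms (resp. between the nuclear and Frobenius norms) of a vector with at most $k$ (resp. $r$) nonzero entries (resp. nonzero singular values).
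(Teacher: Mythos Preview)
Your proof is correct and follows essentially the same approach as the paper: both invoke Fact~\ref{fact:atom2}, use the per-atom inequality $\|u_i\|_1 \leq \sqrt{k}\,\|u_i\|_2$ (resp.\ $\|u_i\|_* \leq \sqrt{r}\,\|u_i\|_F$), and combine the triangle inequality with Jensen. The only cosmetic difference is that the paper picks a decomposition achieving the infimum and bounds $\|v\|_1$ (resp.\ $\|v\|_*$) upward via concavity of the square root, whereas you fix an arbitrary decomposition and bound $\sum_i \lambda_i \|u_i\|_\sH^2$ downward via convexity of $t\mapsto t^2$ before taking the infimum; your version is mildly more careful in that it does not assume the infimum in Fact~\ref{fact:atom2} is attained.
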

\begin{proof}

Case $\Sigma = \Sigma_k$ :
Let $\lambda_i\geq 0,u_i\in \Sigma$ such that $\|v\|_\Sigma^2 = \sum \lambda_i \|u_i\|_2^2$ and $v= \sum \lambda_i u_i$ from Fact~\ref{fact:atom2}.  We have, by convexity
 \begin{equation}
  \begin{split}
   \|v\|_1 &= \left\|\sum_i \lambda_i u_i \right\|_1 \leq  \sum_i\lambda_i \|u_i\|_1 . \\
 \end{split}
 \end{equation}
Using the fact that $\|x\|_1 \leq \sqrt{k}\|x\|_2 $ if $|\supp(x)|\leq k$ and the concavity of the square root,
 \begin{equation}
  \begin{split}      
       \|v\|_1 &\leq \sqrt{k}\sum_i\lambda_i \|u_i\|_2  
       \leq \sqrt{k}\sqrt{\sum_i\lambda_i \|u_i\|_2^2 }  = \sqrt{k}\|v\|_\Sigma .
  \end{split}
  \end{equation}

Case $\Sigma = \Sigma_r$ :
Let $\lambda_i\geq 0,u_i\in \Sigma$ such that $\|v\|_\Sigma^2 = \sum \lambda_i \|u_i\|_F^2$ and $v= \sum \lambda_i u_i$ from~Fact~\ref{fact:atom2}.
 We have, by convexity
 \begin{equation}
  \begin{split}
   \|v\|_1 &= \left\|\sum_i \lambda_i u_i \right\|_* 
   \leq  \sum_i\lambda_i \|u_i\|_* .
 \end{split}
 \end{equation}
Using the fact that $\|x\|_* \leq \sqrt{r}\|x\|_F $ if $\rank(x) \leq r$ and the concavity of the square root,
 \begin{equation}
  \begin{split}      
       \|v\|_* &\leq \sqrt{k}\sum_i\lambda_i \|u_i\|_F  
       \leq \sqrt{k}\sqrt{\sum_i\lambda_i \|u_i\|_F^2 }  = \sqrt{k}\|v\|_\Sigma.
  \end{split}
  \end{equation}

\end{proof}

\subsubsection{Sparsity}

We prove several intermediates lemmas to obtain $D_\Sigma(\|\cdot\|_1)$.
\begin{lemma}\label{lem:sup_DL_l1}
Consider $\Sigma = \Sigma_{k}$ the set of $k$-sparse vectors in $\sH = \bR^{n}$, and $0 \leq L \geq n-k$. We have
\begin{equation}
D_\Sigma^{k+L}(\|\cdot\|_1) :=  \sup_{ z \in \sT_{\|\cdot\|_1}(\Sigma) \setminus \{0\} : |\supp(z)| = k+L}   \frac{\|z_{T^c}\|_\Sigma^2}{\|z_{T}\|_2^2}   = \min \left(1,\frac{L}{k}\right) .\\
\end{equation}
\end{lemma}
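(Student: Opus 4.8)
The plan is to compute $D_\Sigma^{k+L}(\|\cdot\|_1)$ by reducing, as in the proof of \Cref{lem:charact_supBL2_l1}, the supremum over descent vectors to a concrete finite-dimensional optimization. First I would use \Cref{lem:opt_support_wl1}: for $z \in \sT_{\|\cdot\|_1}(\Sigma)\setminus\{0\}$ the descent cone membership is equivalent to $\|z_{T^c}\|_1 \leq \|z_T\|_1$ where $T=T(z)$ indexes $k$ largest-magnitude entries. Combining this with $|\supp(z)| = k+L$ and with the monotonicity/permutation/sign invariance of $\|\cdot\|_\Sigma$ from \Cref{cor:sigma_norm}, I would argue (exactly as in \Cref{lem:charact_supBL2_l1}) that the supremum is unchanged if we restrict to vectors $z$ which are constant of value $\alpha>0$ on $T$ and constant of value $0\leq\beta\leq\alpha$ on $T^c$ (so $|T^c\cap\supp(z)|=L$), with the descent constraint becoming $L\beta \leq k\alpha$. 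Note $\|z_T\|_2^2 = k\alpha^2$, and $z_{T^c}$ is $L$-sparse with all nonzero entries equal to $\beta$, so by \Cref{fact:atom1} and \Cref{fact:atom2} (or directly \Cref{cor:sigma_norm}) $\|z_{T^c}\|_\Sigma^2 = \min(L,k)\beta^2 \cdot \lceil L/k\rceil / \lceil L/k\rceil$ — more carefully, $\|z_{T^c}\|_\Sigma$ for a flat $L$-sparse vector equals $\beta\sqrt{L}$ if $L\leq k$ and equals $\beta L/\sqrt k$ if $L\geq k$ (decompose the flat vector into $\lceil L/k\rceil$ flat $k$-sparse pieces; \Cref{cor:sigma_norm} makes this projection-optimal). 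Thus $\|z_{T^c}\|_\Sigma^2 = \beta^2 L \cdot \min(1,L/k)$... let me instead write $\|z_{T^c}\|_\Sigma^2 = \beta^2\max(L, L^2/k)$ when $L\le k$ this is $\beta^2 L$, when $L\ge k$ this is $\beta^2 L^2/k$.

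With this, the quantity to maximize is
\begin{equation}
\frac{\|z_{T^c}\|_\Sigma^2}{\|z_T\|_2^2} = \frac{\beta^2\max(L,L^2/k)}{k\alpha^2}
\end{equation}
subject to $0\leq\beta\leq\alpha$ and $L\beta\leq k\alpha$, i.e. $\beta/\alpha \leq \min(1, k/L)$. The ratio is increasing in $\beta/\alpha$, so the optimum is at $\beta/\alpha = \min(1,k/L)$. If $L\leq k$ then $\min(1,k/L)=1$ and the value is $L/k \leq 1 = \min(1,L/k)$; if $L\geq k$ then $\min(1,k/L)=k/L$ and the value is $(L^2/k)(k/L)^2/k = 1 = \min(1,L/k)$. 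In both cases the supremum equals $\min(1,L/k)$, which is the claim. I would also note the boundary case $L=0$ gives $z_{T^c}=0$ and value $0=\min(1,0)$, consistent.

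The main obstacle — and the step requiring the most care — is the justification that restricting to flat vectors on $T$ and on $T^c$ does not decrease the supremum, because here the numerator uses $\|\cdot\|_\Sigma$ rather than $\|\cdot\|_2$: the "averaging trick" of \Cref{lem:charact_supBL2_l1} (replacing two unequal entries $a<b$ by their average) must be checked to (i) keep $z$ in the descent cone, (ii) keep $T$ and the sparsity pattern as largest-magnitude supports, and (iii) not decrease $\|z_{T^c}\|_\Sigma^2 / \|z_T\|_2^2$. For the entries on $T$, flattening strictly decreases $\|z_T\|_2^2$ while (by \Cref{cor:sigma_norm}, Property 2, since flattening within $T$ preserves the multiset bound) not decreasing $\|z_{T^c}\|_\Sigma$ and preserving the constraint, so the ratio does not decrease. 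For entries on $T^c$ one must use \Cref{cor:sigma_norm} Property 2 together with the fact that flattening $z_{T^c}$ subject to fixed $\ell^1$-mass can only increase $\|z_{T^c}\|_\Sigma$ (a concavity/majorization argument, parallel to \Cref{lem:FlatVectors} but for $\|\cdot\|_\Sigma$ in place of $\|\cdot\|_2$); I would spell this out via the explicit value of $\|\cdot\|_\Sigma$ on vectors with at most two distinct nonzero magnitudes. Once these monotonicity facts are in place, the reduction to the two-parameter problem above is routine, and the final optimization is the elementary one-variable computation carried out above.
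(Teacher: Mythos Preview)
Your reduction step on $T^c$ contains a genuine error. You claim that ``flattening $z_{T^c}$ subject to fixed $\ell^1$-mass can only increase $\|z_{T^c}\|_\Sigma$,'' but the opposite holds: $\|\cdot\|_\Sigma$ is permutation-invariant (\Cref{cor:sigma_norm}) and convex, hence Schur-\emph{convex}, so averaging two entries can only \emph{decrease} it. Concretely, with $k=2$ and $v = (1, 0.8, 0.1)$ one gets $\|v\|_\Sigma^2 = 1^{2} + (0.8+0.1)^2 = 1.81$, whereas the flat vector with the same $\ell^1$-mass $1.9$ has $\|\cdot\|_\Sigma^2 = (1.9)^2/2 = 1.805 < 1.81$. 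The averaging trick of \Cref{lem:charact_supBL2_l1} worked there because it was applied to the \emph{denominator} $\|z_{T_2}\|_2^2$; transplanted to a $\|\cdot\|_\Sigma$ numerator it goes the wrong way. Your analogy with \Cref{lem:FlatVectors} is also inverted: that lemma shows the $\ell^2$-maximizer under $\ell^1/\ell^\infty$ constraints is the \emph{least} flat admissible vector, not the flattest.

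What you actually need for the upper bound when $L \geq k$ is that every $v$ with $\|v\|_\infty \leq \alpha$ and $\|v\|_1 \leq k\alpha$ satisfies $\|v\|_\Sigma \leq \sqrt{k}\,\alpha$; this is the ``sparse representation of a polytope'' lemma behind the sharp constant $1/\sqrt{2}$, and it is not a majorization triviality. The paper's proof sidesteps a direct argument entirely: it invokes the known bound $\delta_\Sigma^{\mathtt{suff}}(\|\cdot\|_1) \geq 1/\sqrt{2}$ from \cite[Theorem 4.1]{Traonmilin_2016}, which via \Cref{lem:charac_suff_RIP} gives $D_\Sigma(\|\cdot\|_1) \leq 1$ in one line; the matching lower bounds then come from your explicit flat construction together with \Cref{lem:part_value_norm_sigma}, and the case $L<k$ is handled by the elementary estimate $\|z_{T^c}\|_\Sigma^2 = \|z_{T^c}\|_2^2 \leq L\|z_{T^c}\|_\infty^2 \leq (L/k)\|z_T\|_2^2$. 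Your computation of $\|\cdot\|_\Sigma$ on flat vectors and the final one-variable optimization are correct, so replacing the flawed flattening-on-$T^c$ step by either the polytope lemma or the cited bound completes the argument.
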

\begin{proof}
It was already proven in \cite[Theorem 4.1]{Traonmilin_2016} that $\delta_\Sigma^{\mathtt{suff}}(\|\cdot\|_1)\geq \frac{1}{\sqrt{2}}$ hence by \Cref{lem:charac_suff_RIP}
\begin{equation}\label{eq:DL1}
 \sup_{ z \in \sT_{\|\cdot\|_1}(\Sigma) \setminus \{0\} } \frac{\|z_{T^c}\|_\Sigma^2}{\|z_{T}\|_2^2}  = D_\Sigma(\|\cdot\|_1) \leq 1. 
\end{equation}

Hence, $D_\Sigma^{k+L}(\|\cdot\|_1) \leq 1$

Consider $H_{0}$ of cardinality $k$, $H_{1}$ of cardinality $L$ such that $H_{0} \cap H_{1} = \emptyset$ (this is possible as $k+L \leq n$), and define $z = \alpha 1_{H_0}+1_{H_1}$ where $\alpha = \max(1,L/k)$. As $\alpha \geq 1$, a set of the $k$ largest components of $z$ is $T = H_0$. Moreover, $\|z_{H_{0}}\|_{1} = \alpha k = \max(k,L) \geq L = \|z_{H_{1}}\|_{1} = \|z_{H_{0}^{c}}\|_{1}$.

We distinguish two cases:
\begin{itemize}
 \item  \textbf{Case 1: $L \geq k$}, from  \Cref{lem:part_value_norm_sigma},
 $\|z_{T^c}\|_\Sigma^2 \geq     \frac{1}{k}\|z_{H_1}\|_1^2 = L^2/k$. Moreover, $\|z_{T}\|_2^2 =k\alpha^2 = L^2/k $, thus $\|z_{T^c}\|_\Sigma^2/\|z_{T}\|_2^2 \geq 1$. Combining with \eqref{eq:DL1} yields $D_\Sigma^{k+L}(\|\cdot\|_{1}) = 1  = \min(1,L/k)$.

  \item \textbf{Case 2: $ L<k$,} we have $z_{T^c} = z_{H_1} \in \Sigma_{k}$ hence $\|z_{T^c}\|_\Sigma^{2} = \|z_{T^c}\|_2^{2} = \|z_{H_{1}}\|_{2}^{2} = L$ and $\|z_{T^c}\|_{\Sigma}^{2}/\|z_{T}\|_{2}^{2} = L/k$. This shows that $D_{\Sigma}^{k+L}(\|\cdot\|_{1}) \geq L/k = \min(1,L/k)$. To conclude, we show that $D_\Sigma^{k+L}(\|\cdot\|_{1}) \leq L/k$. Consider {\em any} $z' \in \sT_{\|\cdot\|_1}(\Sigma)$ such that $|\supp(z')|=k+L$, with \Cref{lem:opt_support_wl1}, there is a support $H$ of size lower than $k$ such that, $\|z'_{H}\|_{1} \geq \|z'_{H^c}\|_{1}$, let $T$ a set of the $k$ largest components of $z'$. We have $\|z'_{T}\|_{1}-\|z'_{T^c}\|_{1} \geq  \|z'_{H}\|_{1} - \|z'_{H^c}\|_{1} $.
  As $\|z'\|_{0} \leq k+L$ and $L<k$, $z'_{T^{c}} \in \Sigma_{L} \subset \Sigma_{k}$ hence $\|z'_{T^c}\|_\Sigma = \|z'_{T^c}\|_2$. Moreover, $|z'_{i}| \geq \|z'_{T^{c}}\|_{\infty}$ for any $i \in T$,  hence $\|z'_{T}\|_{2}^{2} \geq k \|z'_{T^{c}}\|_{\infty}^{2}$. As a result
  \[
   \frac{\|z'_{T^c}\|_\Sigma^2}{\|z'_{T}\|_2^2}
   =
    \frac{\|z'_{T^c}\|_2^2}{\|z'_{T}\|_2^2}
    \leq
     \frac{L \|z'_{T^c}\|_\infty^2}{k\|z'_{T^{c}}\|_\infty^2} = L/k.
  \]

  \end{itemize}

\end{proof}

\begin{lemma}\label{lem:charact_supDL2_atom}
Let $\Sigma = \Sigma_k$ be the set of $k$-sparse vectors in $\bR^{n}$ with $k<n/2$
 and $1 \leq L \leq n-k$. Assume that $R$ is positively homogeneous, subadditive and nonzero.
 
 Consider 
\begin{eqnarray}
(H_0,v_0) &\in& \arg \max_{\stackrel{H \subseteq \{ 1, \ldots,n\}:\ |H| = k}{v \in Q_{H}}}
R(v)\\
(H_1,v_1) &\in& \arg \min_{\stackrel{H \subseteq \{ 1, \ldots,n\} \setminus{H_{0}}, |H|=L}{v \in Q_{H}}}
R(v).
\end{eqnarray}
 We have
\begin{equation}\label{eq:ineq_DL1_atom}
D_\Sigma^{k+L}(R) := \sup_{ z \in \sT_{R}(\Sigma)\setminus \{0\} : |\supp(z)| =k+L} \frac{\|z_{T^c}\|_\Sigma^2}{\|z_{T}\|_2^2}    
\geq \min \left(1, \frac{L}{k} \right).
\end{equation}

\end{lemma}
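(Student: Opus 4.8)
The plan is to mimic the construction used in the proof of \Cref{lem:charact_supBL2_atom}: for each $1\le L\le n-k$ I will exhibit an explicit descent vector $z\in\sT_{R}(\Sigma)\setminus\{0\}$ with $|\supp(z)|=k+L$ whose ratio $\|z_{T^{c}}\|_{\Sigma}^{2}/\|z_{T}\|_{2}^{2}$ attains the claimed lower bound, where $T$ is a set of $k$ largest coordinates of $z$. The building blocks are the vectors $v_{0}\in Q_{H_{0}}$ and $v_{1}\in Q_{H_{1}}$ from the statement ($|H_{0}|=k$, $|H_{1}|=L$, $H_{0}\cap H_{1}=\emptyset$), together with \Cref{lem:BuildDescentVector}: since $v_{0}\in\Sigma_{k}=\Sigma$, both $R$ and $\Sigma$ are positively homogeneous, and $R(v_{0})>0$ by \Cref{lem:charact_supBL2_atom}(1), the vector $z:=v_{1}-\alpha v_{0}$ lies in $\sT_{R}(\Sigma)$ whenever $\alpha=\max(R(v_{1})/R(v_{0}),1)$ (one may in fact take any larger $\alpha$). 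In all cases $H_{0}\cap H_{1}=\emptyset$ guarantees $z\neq 0$ and $\supp(z)=H_{0}\cup H_{1}$, so $|\supp(z)|=k+L$ and $z$ is an admissible competitor in the supremum defining $D_{\Sigma}^{k+L}(R)$.

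I would then split according to whether $L\ge k$ or $L\le k$. If $L\ge k$, then $\min(1,L/k)=1$; applying \Cref{lem:charact_supBL2_atom}(1) with $k'=L$ gives $R(v_{1})\le(L/k)R(v_{0})$, hence $\alpha\in[1,L/k]$. The entries of $z=v_{1}-\alpha v_{0}$ have magnitude $\alpha\ge 1$ on the $k$ coordinates of $H_{0}$ and magnitude $1$ on the $L$ coordinates of $H_{1}$, so any choice of $k$ largest coordinates $T$ yields $\|z_{T}\|_{2}^{2}=\alpha^{2}k\le(L/k)^{2}k=L^{2}/k$, while $z_{T^{c}}$ consists of $L$ unit entries and \Cref{lem:part_value_norm_sigma} gives $\|z_{T^{c}}\|_{\Sigma}^{2}\ge\|z_{T^{c}}\|_{1}^{2}/k=L^{2}/k$. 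Therefore $D_{\Sigma}^{k+L}(R)\ge (L^{2}/k)/(L^{2}/k)=1=\min(1,L/k)$.

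The regime $L\le k$ is the delicate one and I expect it to be the main obstacle. Here $\min(1,L/k)=L/k$, and to reach exactly this value I need $z$ to have $k$ unit entries and $L$ unit entries, i.e. I need $\alpha=1$, equivalently $R(v_{1})\le R(v_{0})$. Since $R$ need not be monotone under restriction to a sub-support, this does not follow from the maximality of $v_{0}$; instead I would use a symmetrization argument: because $k<n/2$ we have $|H_{0}^{c}|=n-k\ge k\ge L$, so I can choose $\tilde H_{1}$ with $H_{1}\subseteq\tilde H_{1}\subseteq H_{0}^{c}$ and $|\tilde H_{1}|=k$, pick signs $\epsilon_{j}\in\{\pm1\}$ on $\tilde H_{1}\setminus H_{1}$, and set $\tilde v^{\pm}\in Q_{\tilde H_{1}}$ to agree with $v_{1}$ on $H_{1}$ and equal $\pm\epsilon_{j}$ on $\tilde H_{1}\setminus H_{1}$; maximality of $v_{0}$ gives $R(\tilde v^{\pm})\le R(v_{0})$, and convexity of $R$ (positive homogeneity plus subadditivity) applied to $v_{1}=\tfrac12(\tilde v^{+}+\tilde v^{-})$ yields $R(v_{1})\le R(v_{0})$, hence $\alpha=1$. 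Then $z:=v_{1}-v_{0}\in\sT_{R}(\Sigma)$ has all nonzero entries of magnitude $1$, so for any $T$ of size $k$ we get $\|z_{T}\|_{2}^{2}=k$, and $z_{T^{c}}$ has $L\le k$ unit entries, hence lies in $\Sigma$ and satisfies $\|z_{T^{c}}\|_{\Sigma}^{2}=\|z_{T^{c}}\|_{2}^{2}=L$ by \Cref{fact:atom1}; thus $D_{\Sigma}^{k+L}(R)\ge L/k=\min(1,L/k)$. A secondary point to flag: in this regime one must use the identity $\|\cdot\|_{\Sigma}=\|\cdot\|_{2}$ on $\Sigma$ rather than the inequality of \Cref{lem:part_value_norm_sigma}, which would only give the (insufficient) value $(L/k)^{2}$.
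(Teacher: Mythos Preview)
Your proof is correct and uses the same construction as the paper: build $z=v_{1}-\alpha v_{0}$ via \Cref{lem:BuildDescentVector}, then bound the ratio in the two regimes using \Cref{lem:part_value_norm_sigma} (for $L\ge k$) and \Cref{fact:atom1} (for $L\le k$). Your treatment of $L<k$ is actually more careful than the paper's: the paper invokes \Cref{lem:charact_supBL2_atom}(1) to claim $R(v_{1})/R(v_{0})\le L/k$, but that item is only stated for supports of size $k'\ge k$, and the inequality is false in general when $L<k$ (take $R=\|\cdot\|_{\infty}$, for which $R(v_{1})=R(v_{0})=1$). What is really needed is merely $R(v_{1})\le R(v_{0})$ so that $\alpha=1$, and your symmetrization argument---extending $H_{1}$ to a size-$k$ set $\tilde H_{1}$, writing $v_{1}=\tfrac12(\tilde v^{+}+\tilde v^{-})$ with $\tilde v^{\pm}\in Q_{\tilde H_{1}}$, and combining convexity of $R$ with the maximality of $v_{0}$ over all size-$k$ cubes---supplies exactly this.
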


\begin{proof}
From \Cref{lem:charact_supBL2_atom}, $R^{\star}(v_{1})= \frac{L}{k}R^{\star}(v_{0})$.
Since $k+L \leq n$ there is indeed some $H$ of cardinality $L$ such that $H \cap H_{0} = \emptyset$, hence $H_{1}$ is well-defined. By construction, $H_{1} \cap H_{0} = \emptyset$. From \Cref{lem:charact_supBL2_atom}, we also have $R(v_0)>0$ and $R(v_1)/R(v_0)\leq L/k$.

Since $R(v_{0})>0$, $R$ is positively homogeneous and $\Sigma$ is homogeneous, by \Cref{lem:BuildDescentVector}, 
 $z = -\alpha  v_0 +  v_1 \in \sT_{R}(\Sigma)$ with $\alpha :=  \max(R(v_1)/R(v_0),1)$. 
Observe that $|\supp(z)| = |H_{0}|+|H_{1}| = k+L$. Since $\alpha \geq 1$ and all nonzero entries of $v_{0},v_{1}$ have magnitude one, a set of the $k$ largest components of $z$ is $T=  H_{0}$. We have 
\begin{equation}
 \frac{\|z_{T^c}\|_\Sigma^2}{\|z_{T}\|_2^2} = \frac{\|v_1\|_\Sigma^2}{k\alpha^2}.
\end{equation}
With \Cref{lem:part_value_norm_sigma}, $\|v_1\|_\Sigma^2\geq \frac{\|v_1\|_1^2}{k} \geq \frac{L^2}{k}$ if $L\geq k$ and $\|v_1\|_\Sigma^2 =\|v_1\|_2^2$ otherwise (Fact~\ref{fact:atom1}).
If $L\geq k$
\begin{equation}
  \frac{\|z_{T^c}\|_\Sigma^2}{\|z_{T}\|_2^2} \geq \frac{L^2}{k^2\alpha^2} \geq \frac{L^2}{k^2\max(L/k,1)^2} =1.
\end{equation}
If $L<k$, 
\begin{equation}
  \frac{\|z_{T^c}\|_\Sigma^2}{\|z_{T}\|_2^2}  = \frac{L}{k\alpha^2} \geq \frac{L}{k}
\end{equation}
which leads to the conclusion.

\end{proof}

\subsubsection{Low rank}

\begin{lemma}\label{lem:sup_DL_nuclear}
Consider $\Sigma = \Sigma_{r}$ the set of symmetric matrices of rank lower than $r$. For any $L \geq 0$ such that $r+L \leq n$ we have,
\begin{equation}
D_\Sigma^{r+L}(\|\cdot\|_*) :=  \sup_{ z \in \sT_{\|\cdot\|_*}(\Sigma) \setminus \{0\} : rank(z) = r+L}   \frac{\|z_{T^c}\|_\Sigma^2}{\|z_{T}\|_F^2}   = \min \left(1,\frac{L}{r}\right)\\
\end{equation}
where $z_T$ is $z$ restricted to its $r$ biggest eigenvalues, and $z_{T^c} = z-z_T$
\end{lemma}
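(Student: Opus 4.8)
The plan is to replicate, \emph{mutatis mutandis}, the proof of \Cref{lem:sup_DL_l1}, transporting the vector picture to the spectral one: $\|\cdot\|_*$ plays the role of $\|\cdot\|_1$, $\|\cdot\|_F$ that of $\|\cdot\|_2$, the operator norm $\|\cdot\|_{op}$ that of $\|\cdot\|_\infty$, ``$\supp(z)$'' becomes ``$\supp(\eig(z))$'' and $|\supp(z)|$ becomes $\rank(z)$. First I would record the upper bound $D_\Sigma^{r+L}(\|\cdot\|_*) \leq D_\Sigma(\|\cdot\|_*) \leq 1$: the known RIP recovery guarantee for the nuclear norm gives $\delta_\Sigma^{\mathtt{suff}}(\|\cdot\|_*) \geq \tfrac{1}{\sqrt{2}}$ (\cite[Theorem 4.1]{Traonmilin_2016}), which via \Cref{lem:charac_suff_RIP} --- whose equality clause holds for $\Sigma_r$ by \Cref{lem:assumption_suff}, so that $\delta_\Sigma^{\mathtt{suff}}(\|\cdot\|_*) = \delta_\Sigma^{\mathtt{suff2}}(\|\cdot\|_*) = (1+D_\Sigma(\|\cdot\|_*))^{-1/2}$ --- forces $D_\Sigma(\|\cdot\|_*) \leq 1$.

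For the matching lower bound I would exhibit an explicit extremal matrix, copying the construction of the sparse case. Choose disjoint index sets $H_0, H_1 \subseteq \{1,\dots,n\}$ with $|H_0| = r$, $|H_1| = L$ (possible since $r+L \leq n$) and set $z := \diag(w)$ with $w := \alpha\, 1_{H_0} + 1_{H_1}$ and $\alpha := \max(1, L/r)$. Then $\rank(z) = r+L$, a set of $r$ largest eigenvalues of $z$ may be taken to be $T = H_0$ (as $\alpha \geq 1$), $z_{T^c} = \diag(1_{H_1})$, and $\|z_T\|_* = r\alpha = \max(r,L) \geq L = \|z_{T^c}\|_*$; hence by \Cref{lem:opt_support_wnuclear} applied with $\|\cdot\|_w = \|\cdot\|_*$ (which, via $x^* = -z_{T(z)}$, characterizes membership in $\sT_{\|\cdot\|_*}(\Sigma_r)$ by $\|z_{T(z)^c}\|_* \leq \|z_{T(z)}\|_*$), we get $z \in \sT_{\|\cdot\|_*}(\Sigma_r) \setminus \{0\}$. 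Then I split on $L$. If $L \geq r$, then $\alpha = L/r$, so $\|z_T\|_F^2 = r\alpha^2 = L^2/r$, while \Cref{lem:part_value_norm_sigma} gives $\|z_{T^c}\|_\Sigma^2 \geq \|z_{T^c}\|_*^2/r = L^2/r$; thus the ratio is $\geq 1$ and, combined with the upper bound above, $D_\Sigma^{r+L}(\|\cdot\|_*) = 1 = \min(1,L/r)$. If $L < r$, then $\alpha = 1$, the matrix $z_{T^c}$ has rank $L < r$ so $\|z_{T^c}\|_\Sigma^2 = \|z_{T^c}\|_F^2 = L$ (Fact~\ref{fact:atom1}) and $\|z_T\|_F^2 = r$, whence the ratio is $L/r$ and $D_\Sigma^{r+L}(\|\cdot\|_*) \geq L/r$.

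There then remains only the matching upper bound $D_\Sigma^{r+L}(\|\cdot\|_*) \leq L/r$ for $L < r$, again by transcribing the sparse argument. Take any $z' \in \sT_{\|\cdot\|_*}(\Sigma_r) \setminus \{0\}$ with $\rank(z') = r+L$, and let $T$ index its $r$ largest eigenvalues in magnitude. Since $L < r$, the matrix $z'_{T^c} = z' - z'_T$ has rank exactly $L < r$, hence lies in $\Sigma_r$, so $\|z'_{T^c}\|_\Sigma = \|z'_{T^c}\|_F$ by Fact~\ref{fact:atom1}. As $\eig(z')$ is ordered by decreasing magnitude, every eigenvalue indexed by $T$ has magnitude at least $\|z'_{T^c}\|_{op}$, so $\|z'_T\|_F^2 \geq r\,\|z'_{T^c}\|_{op}^2$, while $\|z'_{T^c}\|_F^2 \leq L\,\|z'_{T^c}\|_{op}^2$ since $z'_{T^c}$ has rank $\leq L$; dividing gives $\|z'_{T^c}\|_\Sigma^2 / \|z'_T\|_F^2 \leq L/r$. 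Taking the supremum over such $z'$ completes the proof.

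Everything here is routine once the spectral analogues of the $\ell^1$ ingredients are in hand; the one genuinely load-bearing external input is the bound $D_\Sigma(\|\cdot\|_*) \leq 1$ needed in the case $L \geq r$ --- I do not see a short self-contained estimate of $\|z_{T^c}\|_\Sigma$ giving it directly (the inequality of \Cref{lem:part_value_norm_sigma} points the wrong way there) --- so I would lean on the known nuclear-norm RIP guarantee together with the equality clause of \Cref{lem:charac_suff_RIP}. A minor thing to double-check is that the monotonicity and restriction properties of $\|\cdot\|_\Sigma$ invoked implicitly (its value on $\Sigma$, behaviour under truncation of the spectrum) carry over from vectors to symmetric matrices: they do, being exactly \Cref{cor:sigma_norm_rank} and Facts~\ref{fact:atom1}--\ref{fact:atom2}, so no new lemma is required.
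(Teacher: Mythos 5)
Your proposal is correct and follows essentially the same route as the paper's proof: the same upper bound $D_\Sigma(\|\cdot\|_*)\leq 1$ via $\delta_\Sigma^{\mathtt{suff}}(\|\cdot\|_*)\geq 1/\sqrt{2}$ and the equality case of \Cref{lem:charac_suff_RIP}, the same two-block extremal matrix with $\alpha=\max(1,L/r)$ (the paper writes it as $U^T\diag(\alpha 1_{H_0}+1_{H_1})U$, but a diagonal witness suffices), the same use of \Cref{lem:part_value_norm_sigma} when $L\geq r$, and the same $\|z'_{T^c}\|_\Sigma=\|z'_{T^c}\|_F$ plus operator-norm comparison for the converse bound when $L<r$. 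No gaps.
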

\begin{proof}
It was already proven in \cite[Theorem 4.1]{Traonmilin_2016} that $\delta_\Sigma^{\mathtt{suff}}(\|\cdot\|_*)\geq \frac{1}{\sqrt{2}}$ hence by \Cref{lem:charac_suff_RIP}
\begin{equation}\label{eq:DL1_nuclear}
 \sup_{ z \in \sT_{\|\cdot\|_*}(\Sigma) \setminus \{0\} } \frac{\|z_{T^c}\|_\Sigma^2}{\|z_{T}\|_F^2}  = D_\Sigma(\|\cdot\|_*) \leq 1. 
\end{equation}

Consider $H_{0} =\{1,..r\}$ , $H_{1}=\{r+1,..,r+L\}$, let $U \in O(n)$ and define $z = U^T\diag(\alpha 1_{H_0}+1_{H_1})U$ where $\alpha = \max(1,L/r)$. As $\alpha \geq 1$, a set of the $r$ largest components of $\eig(z)$ is $T = H_0$. Moreover, $\|z_T\|_{*} = \alpha r = \max(r,L) \geq L = \|z-z_{T}\|_{*} = \|z_{T^{c}}\|_{*}$.

 If $L \geq r$, from  \Cref{lem:part_value_norm_sigma},
 $\|z_{T^c}\|_\Sigma^2 \geq     \frac{1}{r}(\|z_{T^c}\|_*)^2 = L^2/r$. Moreover, $\|z_{T}\|_F^2 =r\alpha^2 = L^2/r $, thus $\|z_{T^c}\|_\Sigma^2/\|z_{T}\|_F^2 \geq 1$. Combining with \eqref{eq:DL1_nuclear} yields $D_L(\|\cdot\|_*) = 1  = \min(1,L/r)$.

  If $ L<r$, we have $z_{T^c}  \in \Sigma_{r}$ hence $\|z_{T^c}\|_\Sigma^{2}  = L$ and $\|z_{T^c}\|_{\Sigma}^{2}/\|z_{T}\|_{2}^{2} = L/r$. This shows that $D_{L}(\|\cdot\|_*) \geq L/r = \min(1,L/r)$. To conclude, we show that $D_{L}(\|\cdot\|_*) \leq L/r$. Consider {\em any} $z' \in \sT_{\|\cdot\|_*}(\Sigma)$ such that $|\supp(z')|=r+L$, with \Cref{lem:opt_support_wnuclear}, there is a support $r'$ and $H = {1,..,r'}$ such that $\|z'_{H}\|_{*} \geq \|z'_{H^c}\|_{*}$, let $T$ a set of $r$ largest components of $z'$. We have $\|z'_{T}\|_{*}-\|z'_{T^c}\|_{*} \geq  \|z'_{H}\|_{*} - \|z'_{H^c}\|_{*} $.
  As $\|\eig(z')\|_{0} \leq r+L$ and $L<r$, $z'_{T^{c}} \in \Sigma_{L} \subset \Sigma_{r}$ hence $\|z'_{T^c}\|_\Sigma = \|z'_{T^c}\|_F$. Moreover, $|\eig(z')_{i}| \geq \|\eig(z'_{T^{c}})\|_{\infty}$ for any $i \in T$,  hence $\|z'_{T}\|_{F}^{2} \geq r \|\eig(z'_{T^{c}})\|_{\infty}^{2}$. As a result
  \[
   \frac{\|z'_{T^c}\|_\Sigma^2}{\|z'_{T}\|_F^2}
   =
    \frac{\|z'_{T^c}\|_F^2}{\|z'_{T}\|_F^2}
    \leq
     \frac{L \|\eig(z'_{T^c})\|_\infty^2}{r\|\eig(z'_{T^{c}})\|_\infty^2} = L/r.
  \]

\end{proof}

\begin{lemma}\label{lem:charact_supDL2_atom_nuclear}
Let $\Sigma = \Sigma_r$ be the set of $n \times n$ symmetric matrices  with rank at most $r$ with $r < n/2$, and $1 \leq L \leq n-r$. 
Assume $R$ is positively homogeneous, subadditive and nonzero. Consider the supports $H_0 =\{1,2,..,r\}$  and $H_1 = \{r+1,\ldots,r+L\}$. 
\begin{eqnarray}
(U_0,v_0) &\in& \arg \max_{U \in O(n), v \in Q_{H_{0}}} \|U^T\diag(v)U \|_\sA\\
(U_1,v_1) &\in& \arg \min_{U \in O(n),v \in Q_{H_{1}} :\; U_{0,1:r} U_{r+1:r+L}^T =0 } \|U^T\diag(v)U \|_\sA .
\end{eqnarray}

We have 
 \begin{equation}\label{eq:ineq_Dnuc_atom}
 \begin{split}
 D_\Sigma^{r+L}(R) := \sup_{ z \in \sT_{R}(\Sigma)\setminus \{0\} : |\supp(z)| =r+L} \frac{\|z_{T^c}\|_\Sigma^2}{\|z_{T}\|_F^2}  \geq  
 \min \left(1, \frac{L}{r} \right).
 \end{split}
 \end{equation}

\end{lemma}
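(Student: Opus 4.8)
The plan is to transcribe the proof of \Cref{lem:charact_supDL2_atom} to the low-rank setting, replacing the sparsity level $k$ by $r$, the $\ell^{1}$-norm by the nuclear norm and the $\ell^{2}$-norm by the Frobenius norm, and invoking \Cref{lem:charact_supBL2_atom_rank}, \Cref{lem:BuildDescentVector} and \Cref{lem:part_value_norm_sigma} instead of their sparse counterparts. Write $a_{0} := U_{0}^{T}\diag(v_{0})U_{0}$ and $a_{1} := U_{1}^{T}\diag(v_{1})U_{1}$. Since $v_{0}\in Q_{H_{0}}$ with $|H_{0}|=r$ and $v_{1}\in Q_{H_{1}}$ with $|H_{1}|=L$, the matrix $a_{0}$ has rank $r$ (hence $a_{0}\in\Sigma_{r}$) with all nonzero eigenvalues of modulus $1$ and eigenvectors the rows of $U_{0}$ indexed by $1,\dots,r$, while $a_{1}$ has rank $L$ with eigenvectors the rows of $U_{1}$ indexed by $r+1,\dots,r+L$. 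The minimizer $(U_{1},v_{1})$ exists by compactness of $O(n)\times Q_{H_{1}}$ intersected with the closed set $\{U_{0,1:r}U_{r+1:r+L}^{T}=0\}$, which is nonempty because $r+L\le n$. The first part of \Cref{lem:charact_supBL2_atom_rank} gives $R(a_{0})>0$, and the same positive-homogeneity/subadditivity argument used there yields $R(a_{1})\le\max(L/r,1)\,R(a_{0})$: for $L\ge r$ this is~\eqref{eq:ineqrakRonevectorR} applied with $r'=L$, while for $L<r$ one writes $a_{1}$ as the midpoint of two rank-$r$ matrices $U_{1}^{T}\diag(\tilde v)U_{1}$, $U_{1}^{T}\diag(\tilde v')U_{1}$ feasible for the $\arg\max$ defining $a_{0}$, so $R(a_{1})\le R(a_{0})$.

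Next I would set $\alpha := \max\big(R(a_{1})/R(a_{0}),1\big)$, so that $1\le\alpha\le\max(L/r,1)$, and define $z := a_{1}-\alpha a_{0}$. Since $R$ is positively homogeneous, $\Sigma$ is homogeneous and $R(a_{0})>0$, \Cref{lem:BuildDescentVector} applied with $v_{0}=a_{0}$, $v_{1}=a_{1}$ shows $z\in\sT_{R}(\Sigma)$. The orthogonality constraint $U_{0,1:r}U_{r+1:r+L}^{T}=0$ makes the $r$ eigenvectors of $a_{0}$ and the $L$ eigenvectors of $a_{1}$ into one orthonormal family, so the spectrum of $z$ is $\{-\alpha(v_{0})_{i}:1\le i\le r\}\cup\{(v_{1})_{j}:r+1\le j\le r+L\}$ together with zeros; in particular $\rank(z)=r+L$, hence $|\supp(\eig(z))|=r+L$. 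As $\alpha\ge 1$, a set $T$ of $r$ largest-modulus eigenvalues may be chosen to be the block coming from $-\alpha a_{0}$, so $z_{T}=-\alpha a_{0}$, $z_{T^{c}}=a_{1}$, $\|z_{T}\|_{F}^{2}=\alpha^{2}r$, and
\[
\frac{\|z_{T^{c}}\|_{\Sigma}^{2}}{\|z_{T}\|_{F}^{2}}=\frac{\|a_{1}\|_{\Sigma}^{2}}{\alpha^{2}r}.
\]

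To finish I would split into two cases. If $L\ge r$, then $\alpha\le L/r$ gives $\alpha^{2}r\le L^{2}/r$, while \Cref{lem:part_value_norm_sigma} gives $\|a_{1}\|_{\Sigma}^{2}\ge\|a_{1}\|_{*}^{2}/r=L^{2}/r$, so the ratio is $\ge 1=\min(1,L/r)$. If $L<r$, then $R(a_{1})\le R(a_{0})$ forces $\alpha=1$, and $a_{1}$ has rank $L<r$, so by Fact~\ref{fact:atom1} $\|a_{1}\|_{\Sigma}^{2}=\|a_{1}\|_{F}^{2}=L$, whence the ratio equals $L/r=\min(1,L/r)$. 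In both cases the $z$ just built witnesses $D_{\Sigma}^{r+L}(R)\ge\min(1,L/r)$, which is~\eqref{eq:ineq_Dnuc_atom}.

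The step I expect to be the main obstacle is the spectral bookkeeping in the middle paragraph: one must verify that the constraint $U_{0,1:r}U_{r+1:r+L}^{T}=0$ really does make the eigenvectors of $a_{0}$ and $a_{1}$ orthonormal as a system, so that the eigenvalues of $z=a_{1}-\alpha a_{0}$ are exactly the concatenation of $\{-\alpha(v_{0})_{i}\}$, $\{(v_{1})_{j}\}$ and zeros, and the identification of $z_{T}$, $z_{T^{c}}$ with the two blocks is legitimate; in the tie case $\alpha=1$ this identification is a \emph{choice}, which is allowed because $T$ only has to index some $r$ eigenvalues of largest modulus. Everything else is a line-by-line copy of the sparse argument.
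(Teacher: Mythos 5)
Your proposal is correct and follows essentially the same route as the paper: build $z=a_{1}-\alpha a_{0}$ with $\alpha=\max(R(a_{1})/R(a_{0}),1)$, invoke \Cref{lem:BuildDescentVector} to place $z$ in $\sT_{R}(\Sigma)$, identify $z_{T}=-\alpha a_{0}$ and $z_{T^{c}}=a_{1}$ via the orthogonality constraint, and split into the cases $L\ge r$ (via \Cref{lem:part_value_norm_sigma}) and $L<r$ (via Fact~\ref{fact:atom1}). In fact your treatment of the case $L<r$ is slightly more careful than the paper's: the paper cites \Cref{lem:charact_supBL2_atom_rank} for $R(a_{1})/R(a_{0})\le L/r$, whose inequality~\eqref{eq:ineqrakRonevectorR} is only stated for supports of size $r'\ge r$, whereas your midpoint argument directly establishes the fact actually needed, namely $R(a_{1})\le R(a_{0})$ hence $\alpha=1$.
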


\begin{proof}
From \Cref{lem:charact_supBL2_atom_rank}, $R^{\star}(U_1^T\diag(v_{1})U_1= \frac{L}{r}R^{\star}(U_0^T\diag(v_{0})U_0)$, $R(U_0^T\diag(v_{0})U_0)>0$ and \[R(U_1^T\diag(v_{1})U_1)/R(U_0^T\diag(v_{0})U_0)\leq L/r.\]

Since $R(v_{0})>0$, $R$ is positively homogeneous and $\Sigma$ is homogeneous, by \Cref{lem:BuildDescentVector}, 
 $z = -\alpha U_0^T\diag(v_{0})U_0 +  U_1^T\diag(v_{1})U_1\in \sT_{R}(\Sigma)$ with $\alpha :=  \max(R(U_1^T\diag(v_{1})U_1)/R(U_0^T\diag(v_{0})U_0),1)$. 
Observe that $|\supp(\eig(z))| = |H_{0}|+|H_{1}| = r+L$. Since $\alpha \geq 1$ and all nonzero entries of $v_{0},v_{1}$ have magnitude one, a set of the $r$ largest components of $z$ is $T=  H_{0}$. We have 
\begin{equation}
 \frac{\|z_{T^c}\|_\Sigma^2}{\|z_{T}\|_F^2} = \frac{\|U_1^T\diag(v_{1})U_1\|_\Sigma^2}{r\alpha^2}.
\end{equation}
With \Cref{lem:part_value_norm_sigma}, we have
\begin{equation}
  \begin{cases}
  \|U_1^T\diag(v_{1})U_1\|_\Sigma^2 \geq \frac{1}{r}\|U_1^T\diag(v_{1})U_1\|_*^2 = \frac{L^2}{r} & \text{if } L\geq r \\
  \|U_1^T\diag(v_{1})U_1\|_\Sigma^2 =\|U_1^T\diag(v_{1})U_1\|_F^2 & \text{otherwise (Fact~\ref{fact:atom1}).}
  \end{cases}
\end{equation}
If $L\geq r$
\begin{equation}
  \frac{\|z_{T^c}\|_\Sigma^2}{\|z_{T}\|_F^2} \geq \frac{L^2}{r^2\alpha^2} \geq \frac{L^2}{r^2\max(L/r,1)^2} =1.
\end{equation}
If $L<r$, 
\begin{equation}
  \frac{\|z_{T^c}\|_\Sigma^2}{\|z_{T}\|_F^2}  = \frac{L}{r\alpha^2} \geq \frac{L}{r}
\end{equation}
which leads to the conclusion.

\end{proof}

\subsection{Proofs for \texorpdfstring{\Cref{sec:sparsity_in_levels}}{Section \ref{sec:sparsity_in_levels}} }\label{sec:proofslevels}

We extend notations for classical sparsity to sparsity in levels ($\Sigma = \Sigma_{k_1,k_2}$). For $z= (z_1,z_2) \in \sH$, we
we define the following projections $P_1(z) :=z_1$ and $P_2(z):=z_2$ and denote $T=(S_1,S_2) =T(z)$ where for $i\in \{1,2\}$, $S_{i} \subseteq \{1,\ldots,n_i\}$ is a support containing $k_i$ largest coordinates (in absolute value) of $z_i$, i.e. $|S_{i}|=k_{i}$ and $\min_{j \in S_1} |z_{i,j}| \geq \max_{j\in S_i^c}|z_{i,j}|$. 
For every $U = (U_{1},U_{2})$ where $U_{i} \subseteq \{1,\ldots,n_{i}\}$ and $|U_{i}|=k_{i}$, 
 we also have $\|(z_{i})_{S_{i}}\|_{1} \geq \|(z_{i})_{U_{i}}\|_{1}$ hence $\|z_{T}\|_{w} \geq \|z_{U}\|_{w}$ and similarly $\|z_{T^{c}}\|_{w} \leq \|z_{U^{c}}\|_{w}$.

We define similarly  $T_2 =T_{2}(z) = (S'_{1},S'_{2})$ with $S'_{i}$ containing the $2k_{i}$ largest coordinates of $z_i$. 
We begin by simplifying  the condition $z \in \sT_{\|\cdot\|_{w}}(\Sigma)\setminus \{0\}$.

\begin{lemma} \label{lem:opt_support_wl1_level}
 Let $w=(w_1,w_2) \in \bRp^2$. Let $\|\cdot\|_w = w_1\|P_1(\cdot)\|_1 +w_2 \|P_2(\cdot)\|_1$ Let $z \in \sT_{\|\cdot\|_w}(\Sigma_{k_1,k_2})\setminus \{0\} $ then 
\begin{equation}
 \|z_{T^c}\|_w \leq \|z_{T}\|_w.
\end{equation}
 Reciprocally, 
 \begin{equation}
 \|z_{T^c}\|_w \leq \|z_{T}\|_w 
\end{equation}
implies $z \in \sT_{\|\cdot\|_w}(\Sigma_{k_1,k_2})$.
\end{lemma}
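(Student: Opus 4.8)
The plan is to adapt the single-level characterization \Cref{lem:opt_support_wl1} to the two-level setting, exploiting the fact that a level-weighted $\ell^1$-norm decouples across the two blocks. First I would unfold the definition of $\sT_{\|\cdot\|_w}(\Sigma_{k_1,k_2})$: a nonzero $z$ lies in this cone iff there is $x = (x_1,x_2) \in \Sigma_{k_1,k_2}$ and $\gamma \neq 0$ with $\|x + \gamma z\|_w \leq \|x\|_w$; rescaling by $1/\gamma$ and using positive homogeneity of $\Sigma_{k_1,k_2}$, this is equivalent to the existence of $x \in \Sigma_{k_1,k_2}$ with $\|x+z\|_w \leq \|x\|_w$, i.e. $\inf_{x \in \Sigma_{k_1,k_2}} (\|x+z\|_w - \|x\|_w) \leq 0$. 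Because $\|(x_1,x_2)\|_w = w_1\|x_1\|_1 + w_2\|x_2\|_1$ and the constraint $x \in \Sigma_{k_1,k_2}$ is the product constraint $x_i \in \Sigma_{k_i}$, this infimum separates as $\sum_{i=1,2} \inf_{x_i \in \Sigma_{k_i}} (w_i\|x_i + z_i\|_1 - w_i\|x_i\|_1)$.

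Next I would apply \Cref{lem:opt_support_wl1} (with the one-coordinate weight vector $w_i \cdot \mathbf{1}$, i.e. the plain $\ell^1$-norm scaled by $w_i$) to each block separately. That lemma tells us the block-$i$ infimum is achieved at $x_i^* = -(z_i)_{S_i}$ where $S_i = S_i(z)$ indexes the $k_i$ largest coordinates of $z_i$, with value $w_i(\|(z_i)_{S_i^c}\|_1 - \|(z_i)_{S_i}\|_1)$. Summing over $i$ and recalling the notation $T=(S_1,S_2)$, $\|z_T\|_w = w_1\|(z_1)_{S_1}\|_1 + w_2\|(z_2)_{S_2}\|_1$ and likewise for $T^c$, the total infimum equals $\|z_{T^c}\|_w - \|z_T\|_w$. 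Hence $z \in \sT_{\|\cdot\|_w}(\Sigma_{k_1,k_2}) \setminus \{0\}$ is exactly equivalent to $\|z_{T^c}\|_w - \|z_T\|_w \leq 0$, which gives both the forward implication and the converse simultaneously (the converse direction: if $\|z_{T^c}\|_w \leq \|z_T\|_w$ then $x^* = (-(z_1)_{S_1}, -(z_2)_{S_2}) \in \Sigma_{k_1,k_2}$ witnesses $\|x^*+z\|_w \leq \|x^*\|_w$, so $z \in \sT_{\|\cdot\|_w}(\Sigma_{k_1,k_2})$, with the $z=0$ case being trivial).

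The only mild obstacle is bookkeeping: making sure the decoupling of the infimum over the product set is stated cleanly (it is immediate since both the objective and the feasible set are products), and checking that \Cref{lem:opt_support_wl1} applies verbatim to the scaled plain $\ell^1$-norm $w_i\|\cdot\|_1$ — it does, since scaling a weighted $\ell^1$-norm by a positive constant is again a weighted $\ell^1$-norm, and for the plain $\ell^1$-norm that lemma even identifies $H = T(z_i)$. A subtlety worth a sentence: the minimizing support in each block can be taken to be \emph{any} set of $k_i$ largest-magnitude coordinates, so the statement is independent of the (possibly non-unique) choice of $T$; and one should note that for $z$ with $|\supp(z_i)| \le k_i$ one trivially has $z_{T^c}=0$, consistent with the claim. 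I would write the proof in three short moves — unfold the cone membership, decouple across levels, invoke \Cref{lem:opt_support_wl1} per block — and conclude.
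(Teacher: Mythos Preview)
Your proposal is correct and the underlying mathematics is the same as the paper's, but the organization differs. The paper does not decouple across levels or invoke \Cref{lem:opt_support_wl1}; instead it reruns that lemma's triangle-inequality argument directly on the two-level norm: given a witness $x\in\Sigma_{k_1,k_2}$ with support $U=(\supp x_1,\supp x_2)$, it shows $\|z_{U^c}\|_w\le\|z_U\|_w$ and then uses the optimality of $T$ (stated just before the lemma) to pass from $U$ to $T$. Your route---separate the infimum as a sum over blocks and cite \Cref{lem:opt_support_wl1} per block---is more modular and yields the full equivalence in one stroke, whereas the paper only writes out the forward direction explicitly.

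One small caveat to tighten: \Cref{lem:opt_support_wl1} is \emph{stated} under the hypothesis $z\in\sT_{\|\cdot\|_w}(\Sigma)$, which need not hold blockwise for $z_i$. What you actually need from it is the identity $\inf_{x_i\in\Sigma_{k_i}}\bigl(\|x_i+z_i\|_1-\|x_i\|_1\bigr)=\|(z_i)_{S_i^c}\|_1-\|(z_i)_{S_i}\|_1$, which its proof establishes for \emph{every} $z_i$ (the hypothesis is only used for the ``$\le 0$'' conclusion). Either note this, or verify the block identity directly in one line via $x_i^*=-(z_i)_{S_i}$ and the reverse triangle inequality. Also, when you rescale by $1/\gamma$ use that $\Sigma_{k_1,k_2}$ is \emph{homogeneous} (a union of subspaces), not merely positively homogeneous, to cover $\gamma<0$.
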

\begin{proof}
By definition, if $z \in \sT_{\|\cdot\|_w}(\Sigma_{k_1,k_2})\setminus \{0\} $ then there exists $x \in \Sigma_{k_1,k_2}$ and $\gamma \in \bR \setminus \{0\}  $ such that $z= \gamma y$ and  $\|x+ y\|_w \leq \|x\|_w$. With $U := \supp(x)$ we have $\|y_{U^{c}}\|_{w}+\|(x+y)_{U}\|_{w} = \|x+y\|_{w} \leq \|x\|_{w} = \|x_{U}\|_{w}$. By the triangle inequality this implies 
\begin{equation}
 \|y_{U^c}\|_w \leq \|x_U\|_w - \| (x+ y)_U\|_w \leq \| y_U\|_w.
\end{equation}
As $\gamma \neq 0$, we obtain  $ \| z_{U^c}\|_w \leq \| z_U\|_w$.\
We have 
\begin{equation}\label{eq:TmpPB0}
 \|z_{T}\|_w \geq \|z_{U}\|_w \geq  \|z_{U^c}\|_w \geq \|z_{T^c}\|_w.
\end{equation}

\end{proof}

To calculate $B_\Sigma(\|\cdot\|_w)$ (see definition in \Cref{cor:RCnecUoS}), we need a few lemmas.

\begin{lemma}\label{lem:intermediate_level_0} 
Consider $w_1,w_2,k_1,k_2 > 0$ and $\beta_1,\beta_2, \lambda  \geq 0$ and 
\begin{equation}
 V :=   \min_{\alpha_1,\alpha_2\geq 0} k_1 \alpha_1^2 +k_2\alpha_2^2 \;\; \;\; \mathrm{s.t.} \;\;\;\;\alpha_{1} \geq \beta_{1},\ \alpha_{2} \geq \beta_{2},\ k_1 w_1\alpha_1 +k_2w_2\alpha_2 \geq \lambda
\end{equation}
\begin{itemize}
\item If $\lambda < k_{1}w_{1}\beta_{1}+k_{2}w_{2}\beta_{2}$ then $V = k_{1}\beta_{1}^{2}+k_{2}\beta_{2}^{2}$. 
\item If $\lambda \geq k_{1}w_{1}\beta_{1}+k_{2}w_{2}\beta_{2}$  then the minimum is achieved at $\alpha_{1}^{*},\alpha_{2}^{*}$ such that $k_1 w_1\alpha_1^* +k_2w_2\alpha_2^* = \lambda$. Moreover,
\begin{itemize}
\item if $\lambda \geq (k_{1}w_{1}^{2}+k_{2}w_{2}^{2}) \max(\beta_{1}/w_{1},\beta_{2}/w_{2})$ then
\[V =  \min_{\alpha_1,\alpha_2\geq 0, w_1\alpha_1 +k_2w_2\alpha_2 = \lambda } k_1 \alpha_1^2 +k_2\alpha_2^2 =  \lambda^{2}/(k_{1}w_{1}^{2}+k_{2}w_{2}^{2});\]
\item otherwise 
\[
V = \min \left( k_1\beta_1^2 +\frac{(\lambda -k_1w_1\beta_1)^{2}}{k_2w_2^{2}}, k_2\beta_2^2 +\frac{(\lambda -k_2w_2\beta_2)^{2}}{k_1w_1^{2}} \right) > \lambda^{2}/(k_{1}w_{1}^{2}+k_{2}w_{2}^{2}).
\]
\end{itemize}
\end{itemize}
\end{lemma}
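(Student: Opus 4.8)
The plan is to treat this as a small two-variable constrained quadratic minimization, splitting into the two cases that the statement already announces according to whether the constraint $k_1w_1\alpha_1+k_2w_2\alpha_2\geq\lambda$ is active at the ``corner'' point $(\beta_1,\beta_2)$ or not. First I would observe that the feasible set $\{(\alpha_1,\alpha_2): \alpha_1\geq\beta_1,\ \alpha_2\geq\beta_2,\ k_1w_1\alpha_1+k_2w_2\alpha_2\geq\lambda\}$ is a closed convex (polyhedral) subset of the positive quadrant and the objective $k_1\alpha_1^2+k_2\alpha_2^2$ is strictly convex and coercive, so a minimizer exists and is unique; this justifies writing ``$V$'' and ``the minimum is achieved''.

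For the first bullet ($\lambda< k_1w_1\beta_1+k_2w_2\beta_2$): here $(\beta_1,\beta_2)$ already satisfies the linear constraint strictly, so it lies in the feasible set, and since the objective is coordinatewise increasing on the positive quadrant and every feasible point has $\alpha_i\geq\beta_i$, the point $(\beta_1,\beta_2)$ is the minimizer, giving $V=k_1\beta_1^2+k_2\beta_2^2$. For the second bullet ($\lambda\geq k_1w_1\beta_1+k_2w_2\beta_2$): I would argue the linear constraint must be active at the optimum — otherwise one could decrease some $\alpha_i$ toward $\beta_i$ and strictly lower the objective while staying feasible — so we may replace the inequality by the equality $k_1w_1\alpha_1+k_2w_2\alpha_2=\lambda$. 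Parametrize this line and minimize $k_1\alpha_1^2+k_2\alpha_2^2$ along it; the unconstrained (Lagrange/Cauchy–Schwarz) minimizer of $k_1\alpha_1^2+k_2\alpha_2^2$ subject to $k_1w_1\alpha_1+k_2w_2\alpha_2=\lambda$ is $\alpha_i^* = \lambda w_i/(k_1w_1^2+k_2w_2^2)$, with value $\lambda^2/(k_1w_1^2+k_2w_2^2)$. The sub-cases then come from checking whether this interior minimizer satisfies $\alpha_i^*\geq\beta_i$ for both $i$: the condition $\alpha_i^*\geq\beta_i$ for all $i$ is exactly $\lambda\geq (k_1w_1^2+k_2w_2^2)\max(\beta_1/w_1,\beta_2/w_2)$, which gives the first sub-case. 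If that fails, the minimizer on the segment is at one of the two endpoints where one of the box constraints $\alpha_i=\beta_i$ is tight; setting $\alpha_1=\beta_1$ forces $\alpha_2=(\lambda-k_1w_1\beta_1)/(k_2w_2)$ with objective $k_1\beta_1^2+(\lambda-k_1w_1\beta_1)^2/(k_2w_2^2)$, and symmetrically for $\alpha_2=\beta_2$; taking the smaller of the two gives the stated formula, and convexity of the 1-D restriction plus the fact that the interior critical point is infeasible shows this value is strictly larger than $\lambda^2/(k_1w_1^2+k_2w_2^2)$.

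The main obstacle, such as it is, is bookkeeping: making sure the endpoint candidates of the constrained line segment are correctly identified (one must check that $(\lambda-k_1w_1\beta_1)/(k_2w_2)\geq\beta_2$ and the symmetric inequality cannot both fail, which follows from $\lambda\geq k_1w_1\beta_1+k_2w_2\beta_2$), and verifying the strict inequality in the last sub-case — this is a one-variable convexity argument (the restricted quadratic is strictly convex, its unconstrained minimum is excluded, so its value at any feasible endpoint strictly exceeds the unconstrained minimum). Everything else is elementary: existence/uniqueness by coercivity and strict convexity, activation of the linear constraint by a descent argument, and explicit solution of the resulting equality-constrained quadratic via Cauchy–Schwarz.
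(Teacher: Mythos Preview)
Your proof is correct and follows essentially the same route as the paper: both argue that either the corner $(\beta_1,\beta_2)$ is feasible (first case) or the linear constraint is active, then minimize the quadratic along that line and distinguish whether the unconstrained line-minimizer respects the box constraints or the optimum falls on an endpoint. The only cosmetic difference is that the paper first applies the change of variables $x=\sqrt{k_1}\alpha_1$, $y=\sqrt{k_2}\alpha_2$ to reduce to minimizing $x^2+y^2$ over a half-plane intersected with a translated quadrant, whereas you work directly in the original coordinates via Lagrange/Cauchy--Schwarz; the case analysis and conclusions are identical.
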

\begin{proof} Consider the change of variables $x = \sqrt{k_1} \alpha_1$, $y = \sqrt{k_{2}}\alpha_{2}$ and denote $x_{0} := \sqrt{k_{1}}\beta_{1}$, $y_{0} := \sqrt{k_{2}}\beta_{2}$, $a := \sqrt{k_{1}}w_{1}$, $b := \sqrt{k_{2}}w_{2}$. This leads to the equivalent problem
\[
  \min_{x,y \geq 0} x^{2}+y^{2} \;\; \mathrm{s.t.} \;\;x \geq x_{0}, y \geq y_{0}, ax+by \geq \lambda
\]
which involves a convex objective to be minimized over a polyhedral constraint set. If $ax_{0}+by_{0} > \lambda$, i.e., if $k_{1}w_{1}\beta_{1}+k_{2}w_{2}\beta_{2} > \lambda$, then this problem is equivalent to 
\[
\min_{x,y \geq 0} x^{2}+y^{2}  \;\; \mathrm{s.t.} \;\;x \geq x_{0}, y \geq y_{0}
\]
which is minimized at $(x_{0},y_{0})$, with value $x_{0}^{2}+y_{0}^{2} = k_{1}\beta_{1}^{2}+k_{2}\beta_{2}^{2}$. Otherwise, the candidate optima must satisfy the constraint $ax+by=\lambda$, hence $y=(\lambda-ax)/b$ and the problem is equivalent to 
\begin{equation}\label{eq:TmpIntermediate0-0}
\min_{x_{0} \leq x \leq (\lambda-by_{0})/a} x^{2}+(ax-\lambda)^{2}/b^{2}.
\end{equation}
The unconstrained minimum of~\eqref{eq:TmpIntermediate0-0} is at $x^{*}$ satisfying $2x^{*}+2a(ax^{*}-\lambda)/b^{2} = 0$, \ie, $x^{*} = \frac{a\lambda}{a^{2}+b^{2}}$, leading to $y^{*}=(\lambda-ax^{*})/b = \frac{b\lambda}{a^{2}+b^{2}}$ and to an optimal unconstrained problem value
\[
(x^{*})^{2}+(y^{*})^{2} = \lambda^{2}/(a^{2}+b^{2}) = \lambda^{2}/(k_{1}w_{1}^{2}+k_{2}w_{2}^{2}).
\]
This is also the value of the constrained minimum of~\eqref{eq:TmpIntermediate0-0}, provided that $x_{0} \leq x^{*} \leq (\lambda-by_{0})/a$, i.e., that $\lambda \geq (a^{2}+b^{2}) \max(x_{0}/a,y_{0}/b) = (k_{1}w_{1}^{2}+k_{2}w_{2}^{2}) \max(\beta_{1}/w_{1},\beta_{2}/w_{2})$.
Otherwise, the constrained minimum is either at $x = x_{0}$ and $y = (\lambda-ax_{0})/b$, so that $x^{2}+y^{2}=x_{0}^{2}+(\lambda-ax_{0})^{2}/b^{2}$; or at $y = y_{0}$ and $x = (\lambda-by_{0})/a$, so that $x^{2}+y^{2} = y_{0}^{2}+(\lambda-by_{0})^{2}/a^{2}$. The value at the optimum is then $\min(x_{0}^{2}+(\lambda-ax_{0})^{2}/b^{2},y_{0}^{2}+(\lambda-by_{0})^{2}/a^{2})$, which is necessarily larger than that of the unconstrained minimum.
Once translated in terms of the original variables, this yields the result.

\end{proof}

\begin{lemma}\label{lem:intermediate_level_2} Let $\rho \geq 0$, $k_1,k_2,L_1,L_2,w_1,w_2, \lambda  >0 $
 \begin{equation}\label{eq:SpLevCstraint}
  \max_{\beta_1\geq 0,\beta_2\geq 0} \frac{ L_1\beta_1^2 +L_2 \beta_2^2}{\rho +k_1 \beta_1^2+k_2\beta_2^2} \\ \; \mathrm{s.t.} \;\;w_1 (k_1+L_1)\beta_1  + w_2 (k_2 +L_2)\beta_2 = \lambda
\end{equation}
 is equal to 
 
 \begin{equation}
  \max_{i\in \{1,2\}} \frac{L_i\lambda^2} {\rho  w_i^2 (k_i +L_i)^2 +k_i \lambda ^2}.
 \end{equation}
Denoting $i^*$ the index maximizing this expression, the maximum is reached for $\beta_{i^*} = \frac{ \lambda}{ w_{i^{*}} (k_{i^{*}}+L_{i^{*}})}$ (and $\beta_j = 0$ for $j\neq i$).
\end{lemma}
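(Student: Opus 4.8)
The plan is to reduce the two-variable constrained maximization to a one-variable problem by eliminating $\beta_2$ via the linear constraint, then analyze the resulting quotient. First I would substitute $\beta_2 = \tfrac{1}{w_2(k_2+L_2)}\bigl(\lambda - w_1(k_1+L_1)\beta_1\bigr)$, which is forced by the equality constraint, and note that the feasibility requirement $\beta_1,\beta_2 \geq 0$ confines $\beta_1$ to a closed interval $[0, \lambda/(w_1(k_1+L_1))]$. The objective becomes a ratio of two quadratics in $\beta_1$ on this interval; since numerator and denominator are both nonnegative and the denominator is bounded away from $0$ (because $\rho \geq 0$ and the quadratic part is positive once $\rho=0$ only if $\beta_1=\beta_2=0$, which is excluded by $\lambda>0$), the ratio is continuous on a compact interval and attains its maximum.

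Next I would exploit the key structural observation that, after the substitution, both numerator and denominator are \emph{affine} functions of the pair $(\beta_1^2, \beta_2^2)$, and the constraint, once we parametrize the segment of feasible $(\beta_1,\beta_2)$, shows that the extreme points of the relevant region correspond exactly to the two endpoints $\beta_1 = 0$ (so $\beta_2 = \lambda/(w_2(k_2+L_2))$) and $\beta_2 = 0$ (so $\beta_1 = \lambda/(w_1(k_1+L_1))$). The cleanest way to make this rigorous is to observe that a ratio $\frac{a_1 s_1 + a_2 s_2}{\rho + c_1 s_1 + c_2 s_2}$ with $a_i, c_i \geq 0$, $\rho\geq 0$, maximized over a segment in the $(s_1,s_2)=(\beta_1^2,\beta_2^2)$-picture, is a quasiconvex function and hence attains its maximum at an endpoint; here $s_1 = \beta_1^2$ ranges monotonically as $\beta_1$ runs over its interval while $s_2 = \beta_2^2$ is a (decreasing, then the square makes it non-monotone — careful) function of $\beta_1$. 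To avoid this subtlety I would instead differentiate the single-variable function $g(\beta_1) := \frac{L_1\beta_1^2 + L_2\beta_2(\beta_1)^2}{\rho + k_1\beta_1^2 + k_2\beta_2(\beta_1)^2}$ directly: $g'$ vanishes where a certain quadratic in $\beta_1$ vanishes, and one checks that the only admissible critical configuration inside the open interval is a \emph{minimum} (the numerator and denominator ratios pull in opposite directions), so the maximum sits at one of the two endpoints. Evaluating $g$ at $\beta_1 = 0$ gives $\frac{L_2\beta_2^2}{\rho + k_2\beta_2^2}$ with $\beta_2 = \lambda/(w_2(k_2+L_2))$, which simplifies to $\frac{L_2\lambda^2}{\rho w_2^2(k_2+L_2)^2 + k_2\lambda^2}$; symmetrically at $\beta_2 = 0$ one gets $\frac{L_1\lambda^2}{\rho w_1^2(k_1+L_1)^2 + k_1\lambda^2}$. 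Taking the larger of the two yields $\max_{i\in\{1,2\}} \frac{L_i\lambda^2}{\rho w_i^2(k_i+L_i)^2 + k_i\lambda^2}$, achieved at $\beta_{i^*} = \frac{\lambda}{w_{i^*}(k_{i^*}+L_{i^*})}$.

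The main obstacle I anticipate is the endpoint-versus-interior argument: ruling out an interior maximum cleanly. One tempting shortcut — invoking quasiconvexity of a linear-fractional function — does not apply verbatim because the feasible set, once parametrized by $\beta_1$, is not a segment in the $(\beta_1^2,\beta_2^2)$ coordinates (the map $\beta_1 \mapsto (\beta_1^2, \beta_2(\beta_1)^2)$ traces a parabola-type arc, not a line). The robust route is the explicit derivative computation: $g'(\beta_1) = 0$ reduces, after clearing denominators, to a polynomial whose sign pattern one reads off using $\rho \geq 0$ and positivity of all the constants; the conclusion is that $g$ is unimodal with its \emph{minimum} in the interior, so the supremum is at the boundary. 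I would present this as a short computation rather than invoking a general principle, since the algebra is elementary once the substitution is in place.
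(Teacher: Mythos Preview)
Your approach is correct and genuinely different from the paper's. The paper uses a level-set argument: for each threshold $c\geq 0$ it rewrites the condition $\{\text{ratio}\geq c\}$ as $(L_1-ck_1)\beta_1^2+(L_2-ck_2)\beta_2^2\geq c\rho$, substitutes $b_i=w_i(k_i+L_i)\beta_i$ so that $b_1+b_2=\lambda$, asserts that the resulting quadratic in $b_1$ is maximized at $b_1\in\{0,\lambda\}$, and reads off the largest feasible $c$ from the two endpoint values. Your direct derivative route reaches the same conclusion with less indirection; the key structural fact you flag---that the numerator of $g'$ is a quadratic in $\beta_1$---is exactly what makes the argument close. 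I would replace the informal ``ratios pull in opposite directions'' with the concrete sign computation: when $\rho>0$, the numerator of $g'$ is strictly negative at the left endpoint of the interval and strictly positive at the right endpoint (immediate from the formulas since $L_i,k_i,\rho>0$), and being a quadratic it therefore changes sign exactly once, from $-$ to $+$, giving a single interior \emph{minimum}; when $\rho=0$ that numerator collapses to a constant-sign multiple of $\beta_1\,\beta_2(\beta_1)$ on the open interval, so $g$ is monotone. Either way the maximum sits at an endpoint and the stated values follow. As a side benefit, your route avoids a small wrinkle in the paper's presentation: its claim that the auxiliary quadratic in $b_1$ is maximized at an endpoint is literally false once $c>\max_i L_i/k_i$ (both coefficients become negative, the quadratic is concave with interior vertex), though the overall conclusion still survives there because that interior maximum is then negative and hence below $c\rho$.
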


\begin{proof}
Let $c\geq 0$. Observe that
 \begin{equation}\label{eq:SpLevCond1}
\frac{ L_1\beta_1^2 +L_2 \beta_2^2}{\rho +k_1 \beta_1^2+k_2\beta_2^2} \geq c
\end{equation}
is equivalent to 

 \begin{equation}\label{eq:SpLevCond2}
 (L_1-ck_1)\beta_1^2 +(L_2-ck_2 )\beta_2^2 \geq c \rho.
\end{equation}
With the change of variable $b_i =  w_i (k_i +L_i)\beta_i $ we have $b_1+b_2 = \lambda$ and \eqref{eq:SpLevCond2} reads
 \begin{equation}\label{eq:SpLevCond3}
 \frac{(L_1-ck_1)}{w_1^2 (k_1 +L_1)^2}b_1^2 + \frac{(L_2-ck_2)}{w_2^2 (k_2 +L_2)^2}(b_1-\lambda)^2 \geq c \rho .
\end{equation}
  The left side  is \emph{maximized} (with respect to $0 \leq b_{1} \leq \lambda$) for either $b_1=0$ or $b_1 =\lambda$. 
  The initial inequality~\eqref{eq:SpLevCond1} is thus feasible if, and only if, the maximum of the left-hand side of \eqref{eq:SpLevCond3} over these two values verifies the inequality
 \begin{equation}
 \max_{i \in \{1,2\}} \frac{(L_i-ck_i)}{w_i^2 (k_i +L_i)^2}\lambda^2 \geq c \rho  
\end{equation}
\ie if there is $i\in \{1,2\}$ such that $(L_i-ck_i)\lambda^2 \geq c \rho  w_i^2 (k_i +L_i)^2$. This is equivalent to
$L_i \lambda^2\geq c (\rho  w_i^2 (k_i +L_i)^2 +k_i \lambda ^2)$ and
 \begin{equation}
c \leq \frac{L_i\lambda^2} {\rho  w_i^2 (k_i +L_i)^2 +k_i \lambda ^2}.
\end{equation}

\end{proof}

\begin{lemma}\label{lem:max_theta}
Consider $w_1,w_2,\beta_1,\beta_2,c \geq 0$ and
  \begin{equation}
  V:=
  \sup_{0 \leq \theta_{i} \leq \beta_{i}, w_{1}\theta_{1}+w_{2}\theta_{2} \leq c} \theta_1^2 +\theta_2^2.
  \end{equation} 
Denoting $(\ell,r) \in \{(1,2),(2,1)\}$ such that $w_{\ell}\beta_{\ell} \leq w_{r}\beta_{r}$, we have
\begin{enumerate}
 \item\label{it:maxtheta_one} if $c < w_{\ell}\beta_{\ell}$ then $V = \max_{i \in \{1,2\}}(c/w_i)^2$;
  \item\label{it:maxtheta_two}  if $w_{\ell}\beta_{\ell} \leq c < w_{r}\beta_{r}$ then $V = \max((c/w_{r})^{2},\beta_{\ell}^{2}+[(c-w_{\ell}\beta_{\ell})/w_{r}]^{2}$;
  \item\label{it:maxtheta_three}  if $w_{r}\beta_{r} \leq c < w_{1}\beta_{1}+w_{2}\beta_{2}$ then $V = \max_{(i,j) \in \{(1,2),(2,1)\}} \beta_{i}^{2}+[(c-w_{i}\beta_{i})/w_{j}]^{2}$;
 \item\label{it:maxtheta_four}  if $c \geq w_{1}\beta_{1}+w_{2}\beta_{2}$ then $V = \beta_{1}^{2}+\beta_{2}^{2}$;
\end{enumerate}
\end{lemma}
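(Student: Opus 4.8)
The plan is to solve the optimization
\[
V = \sup_{0\le\theta_i\le\beta_i,\ w_1\theta_1+w_2\theta_2\le c} \theta_1^2+\theta_2^2
\]
by a standard convexity/compactness argument: the objective $\theta\mapsto\theta_1^2+\theta_2^2$ is convex, and the feasible set is a (nonempty, since $\theta=0$ is feasible) compact convex polygon in $\mathbb{R}^2$, namely the intersection of the rectangle $[0,\beta_1]\times[0,\beta_2]$ with the half-plane $\{w_1\theta_1+w_2\theta_2\le c\}$. A convex function on a compact convex polytope attains its maximum at a vertex, so it suffices to enumerate the vertices of this polygon as a function of where $c$ falls relative to the breakpoints $w_\ell\beta_\ell \le w_r\beta_r \le w_1\beta_1+w_2\beta_2$ (using the labelling $(\ell,r)$ with $w_\ell\beta_\ell\le w_r\beta_r$), and to evaluate $\theta_1^2+\theta_2^2$ at each vertex.

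First I would record the vertex structure in each of the four regimes. When $c\ge w_1\beta_1+w_2\beta_2$ the half-plane constraint is inactive, the polygon is the full rectangle, its vertices are $(0,0),(\beta_1,0),(0,\beta_2),(\beta_1,\beta_2)$, and the maximum of $\theta_1^2+\theta_2^2$ is clearly at $(\beta_1,\beta_2)$, giving $V=\beta_1^2+\beta_2^2$; this is case~\ref{it:maxtheta_four}. When $w_r\beta_r\le c<w_1\beta_1+w_2\beta_2$ the corner $(\beta_1,\beta_2)$ is cut off and replaced by the two points $(\beta_1,(c-w_1\beta_1)/w_2)$ and $((c-w_2\beta_2)/w_1,\beta_2)$ (both legitimate vertices because $c\ge w_r\beta_r$ keeps the cut on the two incident edges); evaluating the objective at these and discarding the dominated vertices $(0,0),(\beta_1,0),(0,\beta_2)$ yields case~\ref{it:maxtheta_three}. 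When $w_\ell\beta_\ell\le c<w_r\beta_r$ the line $w_1\theta_1+w_2\theta_2=c$ cuts the edge $\theta_\ell=\beta_\ell$ and the edge $\theta_r=0$: the relevant candidate vertices are $(c/w_r,0)$-type point on the axis (giving $(c/w_r)^2$) and the point with $\theta_\ell=\beta_\ell$, $\theta_r=(c-w_\ell\beta_\ell)/w_r$ (giving $\beta_\ell^2+[(c-w_\ell\beta_\ell)/w_r]^2$), producing case~\ref{it:maxtheta_two}. Finally when $c<w_\ell\beta_\ell\le w_r\beta_r$ the line meets both coordinate axes before reaching either $\theta_i=\beta_i$, the polygon is the triangle with vertices $(0,0),(c/w_1,0),(0,c/w_2)$, and $V=\max_i(c/w_i)^2$, which is case~\ref{it:maxtheta_one}.

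The only technical care needed is to check, in each regime, that the listed points are actually vertices of the polygon (i.e.\ that the inequalities defining the regime are exactly what makes the relevant edge intersections fall in the correct ranges $0\le\theta_i\le\beta_i$), and that the omitted vertices $(0,0)$, $(\beta_i,0)$, $(0,\beta_j)$ are dominated by the ones retained — the latter is immediate since e.g.\ $(\beta_1,0)$ has objective $\beta_1^2$ while a feasible point with $\theta_1=\beta_1$ and $\theta_2>0$ (available whenever $c>w_1\beta_1$) does strictly better, and similar monotone comparisons dispatch the remaining cases. I expect the main obstacle to be purely bookkeeping: making sure the case boundaries in the statement align exactly with the combinatorial changes in the vertex set, in particular handling the boundary values of $c$ (where a vertex degenerates) consistently with the $<$ versus $\le$ in the hypotheses, and verifying the symmetric roles of $(\ell,r)$ versus $(i,j)\in\{(1,2),(2,1)\}$ so that the $\max$ in case~\ref{it:maxtheta_three} is correctly stated. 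Once the vertex enumeration is pinned down, each evaluation is a one-line computation. $\qed$
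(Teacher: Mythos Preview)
Your proposal is correct and in fact cleaner than the paper's own argument. The paper does not invoke the vertex principle; instead it argues directly by first-order considerations: it first uses compactness to get a maximizer $(\theta_1^*,\theta_2^*)$, then observes that if $\theta_i^*<\beta_i$ for some $i$ the linear constraint must be tight (else one could increase $\theta_i$), and then case-splits on which of $\theta_1^*<\beta_1$, $\theta_2^*<\beta_2$ hold, checking in each sub-case for which values of $c$ the resulting candidate is feasible. Your approach replaces this ad hoc analysis by the single structural fact that a convex objective on a compact convex polygon is maximized at a vertex, after which the four regimes arise naturally as the combinatorial types of the polygon $[0,\beta_1]\times[0,\beta_2]\cap\{w_1\theta_1+w_2\theta_2\le c\}$. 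The payoff is that the bookkeeping becomes geometric (where does the cutting line meet the rectangle?) rather than analytic, and the domination of the ``corner'' vertices $(0,0),(\beta_i,0),(0,\beta_j)$ is immediate. The paper's route is slightly more self-contained in that it does not appeal to the vertex principle as a black box, but both arrive at the same enumeration and the same formulas.

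One small slip in your write-up: in case~\ref{it:maxtheta_two} the axis vertex is $\theta_\ell=0,\ \theta_r=c/w_r$, not ``$(c/w_r,0)$'' as you wrote; the objective value $(c/w_r)^2$ is of course unaffected.
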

\begin{proof}
The optimum $V$ is the maximization of a quadratic form within the intersection of a rectangle and a half-space delimited by an affine function. Using standard compactness arguments there exists at least a maximizer $(\theta_1^*,\theta_2^{*})$ of the considered expression.  If $\theta_{i}^{*}<\beta_{i}$ for some $i \in \{1,2\}$ then the constraint $c = w_{1}\theta_{1}^{*}+w_{2}\theta_{2}^{*}$ is satisfied (otherwise, we would have $0 \leq \theta_{i}^{*}<\beta_{i}$ and $w_{1}\theta_{1}+w_{2}\theta_{2} < c$, and we could exhibit other $\theta_i > \theta_{i}*^{}$ still satisfying the constraints and such that $\theta_1^2+\theta_2^2$ is increased), hence
$w_{1}\beta_{1}+w_{2}\beta_{2} > w_{1}\theta_{1}^{*}+w_{2}\theta_{2}^{*}=c$. \\
Vice-versa if $w_{1}\beta_{1}+w_{2}\beta_{2}>c$ then since $(\theta_{1}^{*},\theta_{2}^{*})$ satisfies all constraints we have $w_{1}\theta_{1}^{*}+w_{2}\theta_{2}^{*} \leq c < w_{1}\beta_{1}+w_{2}\beta_{2}$, hence there is at least one index $i \in \{1,2\}$ such that $\theta_{i}^{*}<\beta_{i}$. We can thus consider the following cases (depending on the shape of the domain): 
\begin{itemize}
 \item[$\bullet$]  if $w_{1}\beta_{1}+w_{2}\beta_{2}\leq c$ then for each $i\in \{1,2\}$,  $\theta_i^*=\beta_{i}$ hence $V = \beta_{1}^{2}+\beta_{2}^{2}$ as claimed;
 \item[$\bullet$] otherwise, i.e., if $w_{1}\beta_{1}+w_{2}\beta_{2} > c$, we have $w_{1}\theta_{1}^{*}+w_{2}\theta_{2}^{*} =  c$ and we distinguish three cases:
 \begin{itemize}
  \item[(a)] $\theta_{1}^{*} < \beta_{1}$, $\theta_{2}^{*} < \beta_{2}$: then, since $\theta_{2}^{*}=(c-w_{1}\theta_{1}^{*})/w_{2}$ where $\theta_{1}^{*}$ is a maximizer of $\theta_{1}^{2}+[(c-w_{1}\theta_{1})/w_{2}]^{2}$ under the constraint $0 \leq \theta_{1}$ and $c-w_{1}\theta_{1} \geq 0$, there is $(i,j) \in \{(1,2),(2,1) \}$ such that  $\theta_j^{*}=0$ and $\theta_i^{*}= c/w_i$. This is feasible provided that $c/w_{i} < \beta_{i}$.
 \item[(b)] $\theta_{1}^{*} = \beta_{1}$, $\theta_{2}^{*} < \beta_{2}$, hence $\theta_{2}^{*} = (c-w_{1}\beta_{1})/w_{2}$. This satisfies $0 \leq \theta_{2}^{*} < \beta_{2}$ if, and only if, $c \geq w_{1}\beta_{1}$.
 \item[(c)] $\theta_{1}^{*} < \beta_{1}$, $\theta_{2}^{*} = \beta_{2}$, hence $\theta_{1}^{*} = (c-w_{2}\beta_{2})/w_{1}$. This is feasible provided that $c \geq w_{2}\beta_{2}$.
\end{itemize}
We now discuss the possible cases depending on the value of $c$:
\begin{itemize}
\item $c < w_{\ell}\beta_{\ell}$: (a) with any $(i,j) \in \{(1,2),(2,1)\}$ is feasible; (b)-(c) are unfeasible, hence $V = \max_{i \in \{1,2\}} (c/w_{i})^{2}$. 
\item $c \geq w_{r}\beta_{r}$: (a) is unfeasible; (b)-(c) are both feasible, hence the claimed value of $V$ for this case. 
\item $w_{\ell}\beta_{\ell} \leq c < w_{r}\beta_{r}$: (a) is feasible with $(i,j)$ such that $c < w_{i}\beta_{i}$, \ie, with $(i,j) = (r,\ell)$, leading to a value $(\theta_{j}^{*})^{2}+(\theta_{i}^{*})^{2} = (c/w_{i})^{2} = (c/w_{r})^{2}$; (b) is feasible provided that $c \geq w_{1}\beta_{1}$, i.e., that $(r,\ell)=(2,1)$, leading to a value $(\theta_{1}^{*})^{2}+(\theta_{2}^{*})^{2} = \beta_{1}^{2}+[(c-w_{1}\beta_{1})/w_{2}]^{2}= \beta_{\ell}^{2}+[(c-w_{\ell}\beta_{\ell})/w_{r}]^{2}$; similarly, (c) is feasible provided that $(r,\ell)=(2,1)$, leading to a value $(\theta_{2}^{*})^{2}+(\theta_{1}^{*})^{2} = \beta_{2}^{2}+[(c-w_{2}\beta_{2})/w_{1}]^{2} = \beta_{\ell}^{2}+[(c-w_{\ell}\beta_{\ell})/w_{r}]^{2}$. Overall, this leads to $V = \max((c/w_{r})^{2},\beta_{\ell}^{2}+[(c-w_{\ell}\beta_{\ell})/w_{r}]^{2}$.
\end{itemize}

 \end{itemize}
\end{proof}

As in the case of the $\ell^{1}$ norm for sparsity and the nuclear norm for low-rank matrices, we compute $B_\Sigma(\|\cdot\|_w)$ (see definition in \Cref{cor:RCnecUoS}) via intermediate quantities $B^{L_1,L_2}(w)$ that we now introduce and control. We find an expression consistent with the $\ell^1$ case.

\begin{lemma} \label{lem:bound_gv}
Consider weights $w = (w_{1},w_{2})$ with $w_{i}>0$ and integers $k_{i} \geq 0$. Denote for any integers $L_{1},L_{2} \geq 0$
\begin{equation}\label{eq:DefBL1L2}
B^{L_1,L_2}(w) : =\sup_{\stackrel{\alpha_i \geq \beta_{i} \geq 0,\beta_{1}+\beta_{2}>0}{\sum_{i=1}^{2} (k_{i}w_{i}\alpha_{i}-w_i(k_i+L_i)\beta_{i})=0}}
\frac{\sum_{i=1}^{2}L_i \beta_i^2 }{\sum_{i=1}^{2}k_{i}(\alpha_{i}^{2}+\beta_{i}^{2})}.
\end{equation}
For $m \in \{ 1,2\}$, consider
\[
g_{m}(L_{1},L_{2},\alpha_{1},\alpha_{2},\beta_{1},\beta_{2}) := 
\frac{L_{1}\beta_{1}^{2}+L_{2}\beta_{2}^{2}+[(\sum_{i=1}^{2}(k_{i}w_{i}\alpha_{i}-(k_{i}+L_{i})w_{i}\beta_{i})/w_{m}]^{2}}
{\sum_{i=1}^{2} k_{i}(\alpha_{i}^{2}+\beta_{i}^{2})}.
\]
We have
 \begin{align}\label{eq:maximization_bound_gv}
 \sup_{  \stackrel{\alpha_i,\beta_i:0\leq \beta_{i} \leq \alpha_i; \beta_1+\beta_2>0}{
\sum_{i=1}^{2} (k_i+L_i)w_{i}\beta_{i} \leq \sum_{i=1}^{2} k_{i}w_{i}\alpha_{i}}}
g_m(L_1,L_2,\alpha_1,\alpha_2,\beta_1,\beta_2) \leq  B^{L_1,L_2}(w) .\\
\end{align}
\end{lemma}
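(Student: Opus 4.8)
The plan is to bound $g_m$ pointwise: for every $(\alpha_1,\alpha_2,\beta_1,\beta_2)$ in the feasible set of the left-hand side I will exhibit a point of the feasible set defining $B^{L_1,L_2}(w)$ whose objective value is at least $g_m$, so that taking suprema gives the claim. Throughout write $\Delta := \sum_{i=1}^{2}(k_i w_i \alpha_i - (k_i+L_i) w_i \beta_i) \geq 0$ for the slack at the given point (so the last term of the numerator of $g_m$ is $(\Delta/w_m)^2$), and set $N := \sum_i L_i \beta_i^2$, $B_0 := \sum_i k_i \beta_i^2$, $A := \sum_i k_i \alpha_i^2$ and $\lambda := \sum_i (k_i+L_i) w_i \beta_i$, so that $\sum_i k_i w_i \alpha_i = \lambda + \Delta$ and $g_m = (N + (\Delta/w_m)^2)/(A+B_0)$.

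First I would keep the \emph{same} $\beta=(\beta_1,\beta_2)$ and replace $\alpha$ by the minimiser $\tilde\alpha$ of $\sum_i k_i \tilde\alpha_i^2$ subject to $\tilde\alpha_i \geq \beta_i$ and $\sum_i k_i w_i \tilde\alpha_i = \lambda$. The feasible set of this problem is nonempty because $L_i \geq 0$ forces $\sum_i k_i w_i \beta_i \leq \lambda$, and by \Cref{lem:intermediate_level_0} (applied with lower bounds $\beta_i$ and the linear constraint active at level $\lambda$) the minimum $V^{*}$ is finite with an explicit closed form; moreover the same lemma shows $V^{*}$ is a nondecreasing function of the constraint level. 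Since $\tilde\alpha_i \geq \beta_i$, $\beta_1+\beta_2>0$ and $\sum_i k_i w_i \tilde\alpha_i = \lambda = \sum_i(k_i+L_i)w_i\beta_i$, the point $(\tilde\alpha_1,\tilde\alpha_2,\beta_1,\beta_2)$ lies in the feasible set defining $B^{L_1,L_2}(w)$, hence $B^{L_1,L_2}(w) \geq N/(V^{*}+B_0)$. Thus it suffices to prove $(N+(\Delta/w_m)^2)/(A+B_0) \leq N/(V^{*}+B_0)$, which after clearing the (positive) denominators is equivalent to the elementary inequality
\[
N\,(A - V^{*}) \;\geq\; (\Delta/w_m)^2\,(V^{*}+B_0).
\]
To control $A - V^{*}$, note that $\alpha$ satisfies $\alpha_i \geq \beta_i$ and $\sum_i k_i w_i \alpha_i = \lambda + \Delta$, so $A \geq V^{**}$, where $V^{**}$ is the analogue of $V^{*}$ with constraint level $\lambda+\Delta$; and $V^{**} \geq V^{*}$ by the monotonicity just recalled. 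Hence $A - V^{*} \geq V^{**}-V^{*}$, and everything reduces to checking $N\,(V^{**}-V^{*}) \geq (\Delta/w_m)^2\,(V^{*}+B_0)$ via the closed forms of \Cref{lem:intermediate_level_0}.

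This last verification is the core computation. In the generic regime of \Cref{lem:intermediate_level_0} (no lower bound $\tilde\alpha_i=\beta_i$ active, $\lambda \geq (k_1w_1^2+k_2w_2^2)\max_i(\beta_i/w_i)$) one has $V^{*}=\lambda^2/(k_1w_1^2+k_2w_2^2)$ and $V^{**}=(\lambda+\Delta)^2/(k_1w_1^2+k_2w_2^2)$, so $V^{**}-V^{*}=(2\lambda\Delta+\Delta^2)/(k_1w_1^2+k_2w_2^2)$ and the target becomes a polynomial inequality in $\beta_1,\beta_2,\Delta,w_1,w_2$ and the (fixed) integers $k_i,L_i$ that one expands using $N=\sum L_i\beta_i^2$, $B_0=\sum k_i\beta_i^2$, $\lambda=\sum(k_i+L_i)w_i\beta_i$ and the role of $w_m$; the remaining regimes of \Cref{lem:intermediate_level_0} (a lower bound active) are handled identically with their respective closed forms.

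The hard part will be exactly this verification, and especially the degenerate configurations it does not cover cleanly: when $N=\sum L_i\beta_i^2 = 0$ while $\Delta>0$ (e.g. $L_m=0$, or $\beta$ concentrated in a level with $L_i=0$), the competitor ``keep $\beta$, re-optimise $\alpha$'' only yields the trivial bound $B^{L_1,L_2}(w)\geq 0$, which is too weak. In that case I would instead build the competitor by simultaneously lowering some $\alpha_i$ and raising some $\beta_j$ in a level with $L_j>0$ — this is feasible because the ``capacity'' $\sum_i w_i(k_i+L_i)(\alpha_i-\beta_i)$ exceeds $\Delta$ by $\sum_i w_i L_i\alpha_i \geq 0$ — and re-run the two-term ratio comparison, the choice of which level absorbs the transferred mass being governed by the same case split as in \Cref{lem:max_theta}. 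Assembling the generic estimate with this degenerate-case construction gives $g_m \leq B^{L_1,L_2}(w)$ at every feasible point, and hence \eqref{eq:maximization_bound_gv}.
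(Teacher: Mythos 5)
Your strategy---dominating $g_m$ pointwise by exhibiting, for each feasible point of the left-hand side of \eqref{eq:maximization_bound_gv}, a competitor in the feasible set of $B^{L_1,L_2}(w)$---is genuinely different from the paper's proof, which instead argues that the supremum on the left is attained (after normalizing $\beta_1+\beta_2=1$) and that any maximizer must \emph{saturate} the constraint $\sum_i(k_i+L_i)w_i\beta_i\leq\sum_i k_iw_i\alpha_i$, so that the maximizer is itself feasible for the $B$-problem. Unfortunately your route collapses at exactly the step you defer as ``the core computation''. The inequality $N\,(V^{**}-V^{*})\geq(\Delta/w_m)^2\,(V^{*}+B_0)$ is not a polynomial identity that one verifies by expansion: it is false. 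Take $k_1=k_2=L_1=L_2=w_1=w_2=1$, $\beta=(1,0)$, $\alpha=(t,t)$. Then $N=1$, $B_0=1$, $\lambda=2$, $\Delta=2t-2$, and (generic regime of \Cref{lem:intermediate_level_0}) $V^{*}=\lambda^2/2=2$, $V^{**}=(2t)^2/2=2t^2$, so your target reads $2t^2-2\geq 3(2t-2)^2$, equivalently $t\leq 7/5$; it fails for all larger $t$. Structurally, $\Delta$ enters quadratically on the right with the prefactor $V^{*}+B_0$, but on the left only through $V^{**}-V^{*}\asymp(2\lambda\Delta+\Delta^2)/(k_1w_1^2+k_2w_2^2)$ multiplied by $N$, which can be small; there is no cancellation to exploit.

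Worse, no choice of competitor can rescue a pointwise argument at such points, because the pointwise bound $g_m\leq B^{L_1,L_2}(w)$ itself fails there: in the same configuration $g_1=(1+(2t-2)^2)/(2t^2+1)\to 2$ as $t\to\infty$, whereas $B^{1,1}((1,1))=1/3$ (normalizing $\beta_1+\beta_2=1$ forces $\alpha_1+\alpha_2=2$, hence $\sum_i k_i\alpha_i^2\geq 2$ while $\sum_i L_i\beta_i^2\leq 1$). The constraint $\sum_i(k_i+L_i)w_i\beta_i\leq\sum_i k_iw_i\alpha_i$ does not control the slack $\Delta$, and for large slack the term $(\Delta/w_m)^2$ dominates everything. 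In the paper's actual use of this lemma inside \Cref{lem:charact_supBL2_wl1_level-1}, the points at which $g_v$ is invoked additionally satisfy $\Delta<w_1\beta_1+w_2\beta_2$ (this is what the case split of \Cref{lem:max_theta} delivers), and that bounded-slack regime is where the comparison is really needed. So you cannot close the argument as proposed: you would have to either restrict the feasible set to bounded slack and redo the two-term comparison there, or follow the paper's route of reasoning about a maximizer at which the constraint is active---noting that that route in turn hinges on the supremum being attained, which is precisely what your large-$\alpha$ configurations put under pressure.
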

\begin{proof}
 First we show that there exist $\alpha_i^* \in \bRp,\beta_i^* \in \bRp$ such that 
 \begin{align}
 g_m(L_1,L_2,\alpha_1^*,\alpha_2^*,\beta_1^*,\beta_2^*) 
 = 
 \sup_{  \stackrel{\alpha_i,\beta_i:0\leq \beta_{i} \leq \alpha_i; \beta_1+\beta_2>0}{
\sum_{i=1}^{2} (k_i+L_i)w_{i}\beta_{i} \leq \sum_{i=1}^{2} k_{i}w_{i}\alpha_{i}}}
g_m(L_1,L_2,\alpha_1,\alpha_2,\beta_1,\beta_2)
\end{align}
with $0\leq \beta_{i}^* \leq \alpha_i^*; \beta_1^*+\beta_2^* >0$, and $\sum_{i=1}^{2} (k_i+L_i)w_{i}\beta_{i}^* 
\leq \sum_{i=1}^{2} k_{i}w_{i}\alpha_{i}^*$.
Indeed, given any $\alpha_i,\beta_i$ satisfying these constraints, setting $\beta_j' = \beta_j /(\beta_1+\beta_2)$, $\alpha_j' = \alpha_j/(\beta_i+\beta_j)$, we have 
 \(
g_{m}(L_{1},L_{2},\alpha_{1}',\alpha_{2}',\beta_{1}',\beta_{2}') =g_{m}(L_{1},L_{2},\alpha_{1},\alpha_{2},\beta_{1},\beta_{2}) \)
hence the supremum is unchanged if we impose $\beta_1'+\beta_2' =1$ instead of $\beta_1+\beta_2>0$. Given any such pair $\beta'_{1},\beta'_{2}$, \Cref{lem:intermediate_level_0} yields the optimum over $\alpha_{i}$ satisfying the constraints, and as the resulting expression is continuous with respect to $\beta'_{j}$, the existence of a maximizer follows using a compactness argument.

We will soon prove that $\sum_i (k_i+L_i)w_{i}\beta_i^* = \sum_i k_iw_i\alpha_i^*$. If this equality is verified, since $0 \leq \beta_{i}^{*} \leq \alpha_{i}^{*}$, we obtain the desired result
  \begin{align}
   g_m(L_1,L_2,\alpha_1^*,\alpha_2^*,\beta_1^*,\beta_2^*) 
 &=\frac{
\sum_{i=1}^2 (L_{i}\beta_{i}^*)^{2}
}{\sum_{i=1}^{2}k_{i}((\alpha_{i}^*)^{2}+(\beta_{i}^*)^{2})}
\notag\\
 &\leq  \sup_{\stackrel{\alpha_i, \beta_i : 0\leq \beta_{i} \leq \alpha_i; \beta_1+\beta_2>0}{
 \sum_{i=1}^{2} (k_i+L_i)w_{i}\beta_{i} = \sum_{i=1}^{2} k_{i}w_{i}\alpha_{i}}}\frac{ \sum_{i=1}^2 L_{i}\beta_{i}^{2}}{\sum_{i=1}^{2}k_{i}((\alpha_{i})^{2}+(\beta_{i})^{2})}=B^{L_1,L_2}(w).
\end{align}

For the sake of contradiction, assume that $\sum_i (k_i+L_i)w_{i}\beta_i^*<\sum_i k_iw_i\alpha_i^*$, then with the shorthand $C := g_m(L_1,L_2,\alpha_1^*,\alpha_2^*,\beta_1^*,\beta_2^*)$, we have
\begin{equation}
[( \sum_i k_iw_i\alpha_i^*- \sum_i (k_i+L_i)w_{i}\beta_i^*)/w_m]^2+\sum_i (L_i -Ck_i) (\beta_i^*)^2   =  C\sum_i k_i(\alpha_i^*)^2 .
 \end{equation}
Since $g_{m}(L_{1},L_{2},\alpha_{1},\alpha_{2},\beta_{1},\beta_{2}) \leq C$ within the constraints  of~\eqref{eq:maximization_bound_gv}, 
  $(\beta_1^*,\beta_{2}^{*})$  maximize
 \[
 h(\beta_1,\beta_2) :=  [( \sum_i k_iw_i\alpha_i^*- \sum_i (k_i+L_i)w_{i}\beta_i)/w_v]^2+\sum_i (L_i -Ck_i) (\beta_i)^2
 \]
 among all $\beta_{1},\beta_{2}$ such that $0 \leq \beta_{i} \leq \alpha_{i}^{*}$, $\beta_{1}+\beta_{2}>0$ and $\sum_{i=1}^{2}(k_{i}+L_{i})w_{i}\beta_{i} \leq \sum_{i=1}^{2} k_{i}w_{i}\alpha_{i}^{*}$.

 Consider $j \in \{1,2\}$. 
  
   If $C> L_j/k_j$, then $h$ is decreasing with respect to $\beta_j$ on the considered range, hence $\beta_j^*=0$.   
Otherwise $C\leq L_j/k_j$, and since $h$ is a second degree polynomial in $\beta_{j}$ with positive leading coefficient, its maximum is at one of the extremities of the optimization interval, i.e.,  since we assumed $\sum_i (k_i+L_i)w_{i}\beta_i^*<\sum_i k_iw_i\alpha_i^*$, at least one of the constraints $ \beta_j^* =0$, $\beta_j^*=\alpha_j^*$ is reached. 

Since the optimum satisfies all constraints of~\eqref{eq:maximization_bound_gv}, we have 
$\beta_{1}^{*}+\beta_{2}^{*}>0$, hence in light of the above observations there is at least one index $j \in \{1,2\}$ such that $C \leq L_{j}/k_{j}$, and for which we have $\beta^{*}_{j} =\alpha_{j}^{*}>0$.

Since $\sum_{i=1}^{2}k_{i}w_{i}\beta_{i}^{*} \leq 
\sum_{i=1}^{2}(k_{i}+L_{i})w_{i}\beta_{i}^{*} < \sum_{i=1}^{2}k_{i}w_{i}\alpha_{i}^{*}$, both constraints $\beta_1^*=\alpha_1^*,\beta_{2}^{*}=\alpha_{2}^{*}$ cannot be reached at the same time hence there is $(i,j) \in \{(1,2),(2,1)\}$ such that  $\beta_i^*=0$, $\beta_j^*=\alpha_j^*$ and  
 \begin{align}
C =    g_m(L_1,L_2,\alpha_1^*,\alpha_2^*,\beta_1^*,\beta_2^*) 
&=\frac{L_{j}(\beta_{j}^*)^{2}+[(k_{i}w_{i}\alpha_{i}^{*}+k_{j}w_{j}\alpha_{j}^{*}-(k_{j}+L_{j})w_{j}\beta_{j}^{*})/w_{m}]^{2}
}
{k_{i}(\alpha_{i}^*)^{2}+k_{j}(\alpha_{j}^*)^{2}+ k_j(\beta_{j}^*)^{2}}\\
&=\frac{L_{j}(\alpha_{j}^*)^{2}+[(k_{i}w_{i}\alpha_{i}^{*}-L_{j}
w_{j}\alpha_{j}^{*})/w_{m}]^{2}
}
{k_{i}(\alpha_{i}^*)^{2}+2k_{j}(\alpha_{j}^*)^{2}}.
\end{align}
This can be rewritten $(L_{j}-2Ck_j)(\alpha_{j}^*)^{2}+[(k_{i}w_{i}\alpha_{i}^{*}-L_{j}
w_{j}\alpha_{j}^{*})/w_{m}]^{2}  = C k_i(\alpha_i^*)^2$. Observe that any $\alpha_{1},\alpha_{2},\beta_{1},\beta_{2}$ such that $\beta_{i}=0$, $\beta_{j} = \alpha_{j}>0$, $\alpha_{i}=\alpha_{i}^{*}$, and $L_{j}w_{j}\alpha_{j} \leq k_{i}w_{i}\alpha_{i}^{*}$ satisfy the constraints  of~\eqref{eq:maximization_bound_gv}, hence
$g_{m}(L_{1},L_{2},\alpha_{1},\alpha_{2},\beta_{1},\beta_{2}) \leq C$, or equivalently
\begin{equation}
\label{eq:TmpXXX}
(L_{j}-2Ck_j)(\alpha_j)^{2}+[(k_{i}w_{i}\alpha_{i}^{*}-L_{j}
w_{j}\alpha_{j})/w_{m}]^{2}  \leq C k_i(\alpha_i^*)^2 
\end{equation}
Thus, $\alpha_{j}^{*}$ maximizes the left hand side of~\eqref{eq:TmpXXX} under the constraint $0 \leq L_{j}w_{j}\alpha_{j} \leq k_{i}w_{i}\alpha_{i}^{*}$.
If $L-2Ck_{j} \leq 0$, then the left hand side of~\eqref{eq:TmpXXX} is decreasing with respect to $\alpha_j$ in the considered range, 
hence $\alpha_j^*=0$, which is not possible since $0< \beta_{1}+\beta_{2} = \beta_{j}^{*} = \alpha_{j}^{*}$.
Therefore we must have $L_{j}-2Ck_j > 0$, hence the left hand side of~\eqref{eq:TmpXXX} is a second degree polynomial in $\alpha_{j}$ with positive leading coefficient. Its maximum is achieved at one extremity of the interval constraint : the case $\alpha_j^*=0$ was already ruled out as impossible, hence $ L_{j} w_{j}\alpha_{j}^{*}= k_{i}w_{i}\alpha_{i}^{*}$. This implies $(k_{i}+L_{i}) w_{i}\beta_{i}^{*}+(k_{j}+L_{j}) w_{j}\beta_{j}^{*} = 
(k_{j}+L_{j}) w_{j}\alpha_{j}^{*} = k_{j}w_{j}\alpha_{j}^{*}+k_{i}w_{i}\alpha_{i}^{*}$, which yields the desired contradiction to the assumption that  $\sum_i (k_i+L_i)w_{i}\beta_i^*<\sum_i k_iw_i\alpha_i^*$.

\end{proof}

\begin{lemma} \label{lem:charact_supBL2_wl1_level-1}
Consider weights $w=(w_1,w_2)$ and integers $k_{i},n_{i}$ such that 
$1 \leq 2k_{i} < n_{i}$ and $\Sigma = \Sigma_{k_1,k_2} \subset \bR^{n_1} \times \bR^{n_2}$, $i\in \{1,2\}$. We have
\begin{equation}\label{eq:charact_supBL2_wl1_level-1}
B_\Sigma(\|\cdot\|_w)
=  \max_{
0\leq L_i \leq n-2k_i} B^{L_1,L_2}(w) 
\end{equation}
where $B^{L_{1},L_{2}}(w)$ is defined in \eqref{eq:DefBL1L2}.
\end{lemma}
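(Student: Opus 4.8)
The plan is to mirror the proof of \Cref{lem:charact_supBL2_l1}, performing the flattening argument level by level and then delegating the remaining combinatorial optimization to the two auxiliary lemmas. First I would reduce $B_\Sigma(\|\cdot\|_w)$ to an explicit supremum over vectors. Since $\Sigma=\Sigma_{k_1,k_2}$ has $\Sigma-\Sigma=\Sigma_{2k_1}\times\Sigma_{2k_2}$, a finite union of subspaces with $(\Sigma-\Sigma)\cap S(1)$ compact, and $\Sigma$ is not contained in a line because $1\leq 2k_i<n_i$, \Cref{cor:RCnecUoS} gives
\[
B_\Sigma(\|\cdot\|_w) = \sup_{z\in \sT_{\|\cdot\|_w}(\Sigma)\setminus\{0\}} \frac{\|z-P_{\Sigma-\Sigma}(z)\|_\sH^2}{\|P_{\Sigma-\Sigma}(z)\|_\sH^2}.
\]
The orthogonal projection onto a product of unions of subspaces decouples across levels, so $P_{\Sigma-\Sigma}(z)=z_{T_2}$ with $T_2=T_2(z)$ (unambiguous by \Cref{lem:ProjectionExists}) and the quotient equals $\|z_{T_2^c}\|_2^2/\|z_{T_2}\|_2^2$; by \Cref{lem:opt_support_wl1_level}, $z\in\sT_{\|\cdot\|_w}(\Sigma)\setminus\{0\}$ is equivalent to $z\neq 0$ together with $\|z_{T^c}\|_w\leq\|z_T\|_w$, where $T=T(z)$. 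Hence
\[
B_\Sigma(\|\cdot\|_w) = \sup_{z\neq 0,\ \|z_{T^c}\|_w\leq\|z_T\|_w} \frac{\|z_{T_2^c}\|_2^2}{\|z_{T_2}\|_2^2}.
\]

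Next I would flatten each level, replicating the averaging argument of \Cref{lem:charact_supBL2_l1} block by block: replacing two entries of $z_i$ indexed by $S_i$ (resp. by $S_i'\setminus S_i$) of distinct magnitude by their common average leaves the $\ell^1$-mass of that block, hence $\|z_T\|_w$ and $\|z_{T^c}\|_w$, unchanged, keeps $z$ in the descent cone and $T,T_2$ admissible, keeps $\|z_{T_2^c}\|_2^2$ fixed, and strictly decreases $\|z_{T_2}\|_2^2$. Iterating, the supremum is unchanged if one restricts to $z$ for which, on level $i$, $|z_i|\equiv\alpha_i$ on $S_i$, $|z_i|\equiv\beta_i\in[0,\alpha_i]$ on $S_i'\setminus S_i$, and the tail $x_i:=z_i|_{(S_i')^c}$ satisfies $\|x_i\|_\infty\leq\beta_i$; the constraint then reads $\sum_i w_i\|x_i\|_1\leq\sum_i w_ik_i(\alpha_i-\beta_i)$ and the objective is $\sum_i\|x_i\|_2^2 / (\sum_i k_i(\alpha_i^2+\beta_i^2))$. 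For fixed $\alpha_i,\beta_i$ and a fixed split of the $\ell^1$-budget between the two levels, \Cref{lem:FlatVectors} forces the optimal $x_i$ to be of the form $(\beta_i,\dots,\beta_i,\theta_i,0,\dots,0)$ with $0\leq L_i\leq n_i-2k_i$ entries equal to $\beta_i$ and $\theta_i\in[0,\beta_i]$ (a saturated $\theta_i=\beta_i$ being absorbed into $L_i\mapsto L_i+1$). Collecting contributions yields
\[
B_\Sigma(\|\cdot\|_w) = \max_{0\leq L_i\leq n_i-2k_i}\ \sup_{\substack{\alpha_i\geq\beta_i\geq 0,\ 0\leq\theta_i\leq\beta_i,\ \beta_1+\beta_2>0\\ \sum_i w_i(L_i\beta_i+\theta_i)\leq \sum_i w_ik_i(\alpha_i-\beta_i)}} \frac{\sum_i(L_i\beta_i^2+\theta_i^2)}{\sum_i k_i(\alpha_i^2+\beta_i^2)}.
\]

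Finally I would match the right-hand side with $\max_{L_1,L_2}B^{L_1,L_2}(w)$. The inequality $B^{L_1,L_2}(w)\leq B_\Sigma(\|\cdot\|_w)$ is immediate: every configuration in the supremum \eqref{eq:DefBL1L2} defining $B^{L_1,L_2}(w)$ (take $\theta_i=0$ and the $\ell^1$-constraint tight) is admissible in the display above and corresponds, since $L_i\leq n_i-2k_i$, to a genuine nonzero $z$ in the descent cone. For the reverse inequality, fix $(L_1,L_2)$ and a configuration in the displayed supremum: \Cref{lem:max_theta} shows the sub-optimization over $(\theta_1,\theta_2)$ under $w_1\theta_1+w_2\theta_2\leq c$, $0\leq\theta_i\leq\beta_i$, is attained at a vertex of the feasible polygon, hence with at most one strictly interior tail coordinate $\theta_m$, the remaining one being either zero or saturated (and then merged into $L_i+1$, which keeps $L_i\leq n_i-2k_i$); relaxing the surviving $\ell^1$-inequality to equality does not decrease the quotient and turns the configuration into $g_m(L_1,L_2,\alpha_1,\alpha_2,\beta_1,\beta_2)$, which \Cref{lem:bound_gv} bounds by $B^{L_1,L_2}(w)$. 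Taking suprema and then the maximum over $(L_1,L_2)$ gives $B_\Sigma(\|\cdot\|_w)\leq\max_{0\leq L_i\leq n_i-2k_i}B^{L_1,L_2}(w)$. The main obstacle is precisely this last step — tracking the coupling between $\theta_1$ and $\theta_2$, the slack in the $\ell^1$-constraint, and the $L_i\mapsto L_i+1$ re-indexing — which is exactly why that bookkeeping has been isolated in \Cref{lem:max_theta} and \Cref{lem:bound_gv}; with those in hand, the argument above is routine.
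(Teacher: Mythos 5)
Your proposal follows the paper's proof almost line for line: the same reduction via \Cref{cor:RCnecUoS} and \Cref{lem:opt_support_wl1_level}, the same block-by-block averaging, the same use of \Cref{lem:FlatVectors} to reduce the tails to $(\beta_i,\dots,\beta_i,\theta_i,0,\dots)$, and the same strategy of eliminating $(\theta_1,\theta_2)$ via \Cref{lem:max_theta} so that each configuration becomes some $g_m(L_1',L_2',\alpha,\beta)$ controlled by \Cref{lem:bound_gv}. The lower bound is handled identically.

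There is, however, one genuine gap in the last step, and it is not absorbed by the two auxiliary lemmas as you claim. \Cref{lem:bound_gv} bounds $g_m(L_1',L_2',\alpha_1,\alpha_2,\beta_1,\beta_2)$ by $B^{L_1',L_2'}(w)$ \emph{only} for configurations satisfying
\begin{equation*}
\sum_{i=1}^{2}(k_i+L_i')w_i\beta_i \;\leq\; \sum_{i=1}^{2}k_iw_i\alpha_i .
\end{equation*}
Your flattened configurations satisfy this with the original $(L_1,L_2)$, but after the re-indexing $L_u\mapsto L_u+1$ (a saturated $\theta_u=\beta_u$ merged into the block) the left-hand side increases by $w_u\beta_u$, and it is not automatic that the inequality survives. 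The paper proves this separately (its equation~\eqref{eq:MainMissingStep}) by a three-way case analysis on the position of $c=\sum_i(k_iw_i\alpha_i-(k_i+L_i)w_i\beta_i)$ relative to $w_\ell\beta_\ell$, $w_r\beta_r$ and $w_1\beta_1+w_2\beta_2$ (with $w_\ell\beta_\ell\leq w_r\beta_r$), checking in each case which index gets incremented and that the resulting constraint is exactly the defining inequality of that case (the ordering $w_\ell\beta_\ell\leq w_r\beta_r$ is needed when the smaller index is the one incremented). Separately, in the regime where \emph{both} $\theta_i$ saturate, one cannot invoke \Cref{lem:bound_gv} at all with a single $g_m$; the paper instead identifies that branch directly as $B^{L_1+1,L_2+1}(w)$ via \Cref{lem:intermediate_level_0}. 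You correctly flag this bookkeeping as ``the main obstacle,'' but calling it routine and attributing it to \Cref{lem:max_theta} and \Cref{lem:bound_gv} leaves the actual verification unwritten; as it stands the upper bound is not proved.
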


\begin{proof}
We use the same proof method as in \Cref{lem:charact_supBL2_l1}. With the notations  $T=T(z),T_{2}=T_{2}(z)$ from the beginning of \Cref{sec:proofslevels}, denote  $T' = T_2 \setminus T$ so that $\|z_{T_{2}^{c}}\|_{w}+\|z_{T'}\|_{w} = \|z_{T^{c}}\|_{w}$. By \Cref{lem:opt_support_wl1_level}, we have 
\begin{equation}\label{eq:TmpOptLevels3}
 \begin{split}
B_\Sigma(\|\cdot\|_w) &= \sup_{z: z \neq 0, \|z_{T_2^c}\|_w + \|z_{T'}\|_w \leq \|z_{T}\|_w} \frac{\|z_{T_2^c}\|_2^2}{\|z_{T_2}\|_2^2}.
\end{split}
\end{equation}
We now show that this expression can be simplified by maximizing over vectors $z$ with a particular shape. 
Consider $z$ a vector satisfying the constraint in~\eqref{eq:TmpOptLevels3}. Replacing each entry $z_{i}$ of $z$ with its magnitude $|z_{i}|$ leaves the constraint (as well as the maximized quantity) unchanged, hence without loss of generality we can assume that $z$ has nonnegative entries $z_{i} \geq 0$. Similarly, we can assume without loss of generality that for each $i\in\{1,2\}$, the index set $S_{i} = \llbracket 1,k_{i}\rrbracket$ indexes $k_{i}$ largest entries of $P_{i}(z)$ and $S'_{i} = \llbracket 1,2k_{i}\rrbracket$ indexes the $2k_{i}$ largest entries.

Given some $j \in \{1,2\}$, consider two (equal or distinct) indices in $S_{j}$ and the vector $\tilde{z}$ obtained by keeping unchanged all entries of $z$, except those indexed by these indices which are replaced by their average. This has the following effect:
\begin{enumerate}
  \item Each $S_{i}$ (resp. $S'_{i}$), $i \in \{1,2\}$, is a set of the $k_{i}$ (resp. $2k_{i}$) largest coordinates of $P_{i}(\tilde{z})$, hence $T(\tilde{z}) = T = (S_{1},S_{2})$, $T_{2}(\tilde{z}) = T_{2} = (S'_{1},S'_{2})$, $T'(\tilde{z}) = T' = T_{2}\backslash T$, $\tilde{z}_{T_{2}^{c}}=z_{T_{2}^{c}}$, $\tilde{z}_{T'}=z_{T'}$, and the support of $P_{i}(\tilde{z})$, $i \in \{1,2\}$ is the same as that of $P_{i}(z)$.
  \item Denoting $a,b \geq 0$ the values of the two considered entries, since $(a+b)/2+(a+b)/2 = a+b$, we have $\|[P_{j}(\tilde{z})]_{S_{j}}\|_{1}= \|[P_{j}(z)]_{S_{j}}\|_{1}$, and we obtain that  $\|\tilde{z}_{T}\|_{w} = \|z_{T}\|_{w}$, hence $\tilde{z}$ still satisfies the optimization constraint;
  \item As $\|\tilde{z}_{T_{2}^{c}}\|_{2}=\|z_{T_{2}^{c}}\|_{2}$ and $\|\tilde{z}_{T_{2}}\|_{2}^{2} - \|z_{T_{2}}\|_{2}^{2} = 2[(a+b)/2]^{2}-a^{2}-b^{2}=-(a-b)^{2}/2 \leq 0$, hence $\|\tilde{z}_{T_{2}^{c}}\|_{2}^{2}/\|\tilde{z}_{T_{2}}\|_{2}^{2} \geq
\|\tilde{z}_{T_{2}^{c}}\|_{2}^{2}/\|\tilde{z}_{T_{2}}\|_{2}^{2}$ where the inequality is strict as soon as $a \neq b$.
\end{enumerate}
All the above imply that, without loss of generality, we can restrict the optimization to vectors $z$ such that, for $i \in \{1,2\}$, all entries of $z_{S_{i}}$  are equal. We denote $\alpha_{i} > 0$ their common value. A similar reasoning with $S'_{j}\backslash S_{j}$ instead of $S_{j}$ shows that we can also assume without loss of generality that all entries of $z_{S'_{i} \backslash S_{i}}$, $i \in \{1,2\}$, are equal. We denote $\beta_{i} \geq 0$ their common value.

The value of the smallest component of $[P_{i}(z)]_{S_{i}}$ is $\alpha_{i}$, 
while the smallest component of $[P_{i}(z)]_{S'_{i}}$ is $\min(\alpha_{i},\beta_{i})$. Denoting $x_{i} = P_{i}(z)_{(S'_{i})^{c}}$, we have $x_{i} \in \mathbb{R}_{+}^{n_{i}-2k_{i}}$ and the largest component of $[P_{i}(z)]_{(S'_{i})^{c}}$ is $\|x_{i}\|_{\infty}$. Hence, $S_{i}$ and $S'_{i}$ are respectively a set of the $k_{i}$ and $2k_{i}$ largest components of $P_{i}(z)$ if, and only if, $\|x_{i}\|_{\infty} \leq \beta_{i} \leq \alpha_{i}$. 

Finally, we observe that
$\|z_{T}\|_{w}-\|z_{T'}\|_{w}-\|z_{T_{2}^{c}}\|_{w}= w_{1}k_{1}\alpha_{1}+w_{2}k_{2}\alpha_{2}-w_{1}k_{1}\beta_{1}-w_{2}k_{2}\beta_{2}-w_{1}\|x_{1}\|_{1}-w_{2}\|x_{2}\|_{1}$, $\|z_{T_{2}^{c}}\|_{2}^{2} = \|x_{1}\|_{2}^{2}+\|x_{2}\|_{2}^{2}$ and $\|z_{T_{2}}\|_{2}^{2} = k_{1}\alpha_{1}^{2}+k_{2}\alpha_{2}^{2}+k_{1}\beta_{1}^{2}+k_{2}\beta_{2}^{2}$. 
This establishes
\begin{equation}\label{eq:subBL2_levels}
B_\Sigma(\|\cdot\|_w)
=
\sup_{\stackrel{\beta_i:\beta_{i} \geq 0}{\beta_{1}+\beta_{2}>0}}
\sup_{\alpha_i:\alpha_{i} \geq \beta_{i}}
\sup_{\stackrel{
\|x_{i}\|_{\infty} \leq \beta_{i}}{
\sum_{i=1}^{2}w_{i}\|x_{i}\|_{1} \leq \sum_{i=1}^{2} k_{i}w_{i}(\alpha_{i}-\beta_{i})}}
\frac{\sum_{i=1}^{2}\|x_{i}\|_{2}^{2}}{\sum_{i=1}^{2}k_{i}(\alpha_{i}^{2}+\beta_{i}^{2})} ,
\end{equation}
where the restriction $\beta_{1}+\beta_{2}>0$ simply follows from the fact that when $\beta_{1}+\beta_{2}=0$ we have $x_{1}=x_{2}=0$ which leads to a sub-optimal objective value. To show that the supremum in~\eqref{eq:subBL2_levels} is achieved, observe that both the constraints on $y:= (\alpha_{1},\alpha_{2},\beta_{1},\beta_{2},x_{1},x_{2})$ and the quantity $f(y)$ that is maximized are invariant by multiplication by a positive constant factor. Hence, the supremum is unchanged if we add a scaling constraint. e.g. by fixing $\|y\|_{\infty}$. This leads to the supremum of a continuous function over a compact set (the unit $\ell_{\infty}$ ball), hence there exists $\alpha_{i}^{*},\beta_{i}^{*},x_{i}^{*}$ reaching the supremum in~\eqref{eq:subBL2_levels}.

Thanks to \Cref{lem:FlatVectors}, given the constraints (depending on $\alpha_i$ and $\beta_i$), the  maximization w.r.t $x_i$ is reached with vectors with the shape
\[
  (\underbrace{\beta_i,\ldots,\beta_i}_{L_i},\theta_i,\underbrace{0,\ldots,0}_{n_{i}-2k_{i}-(L_i+1) \geq 0})
\]
with  $0\leq \theta_i \leq \beta_i$, $0\leq L_{i} \leq n_{i}-2k_{i}-1$, including potentially $L_i=0$ (case of vector $x_{i}$ with a single nonzero coordinate $\theta_i$).
We deduce
\begin{align}
B_\Sigma(\|\cdot\|_w)
&=
\sup_{\stackrel{\beta_i:\beta_{i} \geq 0}{\beta_{1}+\beta_{2}>0}}
\sup_{\alpha_i : \alpha_{i} \geq \beta_{i}}
\sup_{\stackrel{L_i,\theta_i:
0\leq L_i \leq n-2k_i-1, 0\leq \theta_i \leq \beta_{i}}{ 
\sum_{i=1}^{2}w_{i}\theta_i \leq \sum_{i=1}^{2} (k_{i}w_{i}\alpha_{i}-w_i(k_i+L_i)\beta_{i})}}
\frac{\sum_{i=1}^{2}L_i \beta_i^2 +\theta_i^2}{\sum_{i=1}^{2}k_{i}(\alpha_{i}^{2}+\beta_{i}^{2})} .
\end{align}
Hence, denoting
\begin{align}\label{eq:DefFL1L2}
f(L_{1},L_{2},\alpha_{1},\alpha_{2},\beta_{1},\beta_{2}) 
& :=
\sup_{\stackrel{\theta_i: 0\leq \theta_i \leq \beta_{i}}{ 
\sum_{i=1}^{2}w_{i}\theta_i \leq \sum_{i=1}^{2} (k_{i}w_{i}\alpha_{i}-w_i(k_i+L_i)\beta_{i})}}
\frac{\sum_{i=1}^{2}L_i \beta_i^2 +\theta_i^2}{\sum_{i=1}^{2}k_{i}(\alpha_{i}^{2}+\beta_{i}^{2})}, \\
\intertext{for parameters $\alpha_{i},\beta_{i},L_{i}$ such that \(
c:= \sum_{i=1}^{2} (k_{i}w_{i}\alpha_{i}-w_i(k_i+L_i)\beta_{i})\geq 0,
\) we have}\label{eq:DefFL1L20}
B_\Sigma(\|\cdot\|_w)
&=  \max_{0 \leq L_{i} \leq n_{i}-2k_{i} -1} \underbrace{
\sup_{\stackrel{\beta_{1},\beta_{2} \geq 0}{ \beta_{1}+\beta_{2}>0}}
\sup_{\stackrel{\alpha_i: \alpha_{i} \geq \beta_{i}}{\sum_{i=1}^{2} k_{i}w_{i}\alpha_{i} \geq \sum_{i=1}^{2}(k_i+L_i)w_{i}\beta_{i}}}
f(L_{1},L_{2},\alpha_{1},\alpha_{2},\beta_{1},\beta_{2})}_{f(L_{1},L_{2})} .
\end{align}
To continue, we bound $f(L_{1},L_{2})$ via characterizations of 
$f(L_{1},L_{2},\alpha_{1},\alpha_{2},\beta_{1},\beta_{2})$ in different parameter ranges. The supremum in~\eqref{eq:DefFL1L2} is covered by \Cref{lem:max_theta} hence we need to primarily distinguish cases depending on 
relative order of $c= \sum_{i=1}^{2} (k_{i}w_{i}\alpha_{i}-w_i(k_i+L_i)\beta_{i})\geq 0$, $w_{1}\beta_{1}+w_{2}\beta_{2}$,
$w_{1}\beta_{1}$, and $w_{2}\beta_{2}$. 
This suggests writing $f(L_{1},L_{2}) = \max_{u \in \{0,1\}} f_{u}(L_{1},L_{2})$ where
\begin{align}
f_{0}(L_{1},L_{2}) 
& := \sup_{  \stackrel{\beta_i,\alpha_i : 0\leq \beta_{i} \leq \alpha_i, \beta_1+\beta_2 >0}{\sum_{i=1}^{2} k_{i}w_{i}\alpha_{i} \geq \sum_{i=1}^{2} (k_i+L_i+1)w_{i}\beta_{i}}} 
f(L_{1},L_{2},\alpha_{1},\alpha_{2},\beta_{1},\beta_{2})\\
 f_1(L_{1},L_{2})  
& := \sup_{  \stackrel{\beta_i,\alpha_i :  0\leq \beta_{i} \leq \alpha_i, \beta_1+\beta_2 >0}{
\sum_{i=1}^{2} (k_i+L_i)w_{i}\beta_{i} \leq 
\sum_{i=1}^{2} k_{i}w_{i}\alpha_{i} < \sum_{i=1}^{2} (k_i+L_i+1)w_{i}\beta_{i} }} 
f(L_{1},L_{2},\alpha_{1},\alpha_{2},\beta_{1},\beta_{2}).
\end{align}
To express $f_{0}(L_{1},L_{2})$ and bound $f_{1}(L_{1},L_{2})$, we 
use the functions $g_{m}$, $m \in \{1,2\}$, from \Cref{lem:bound_gv}.\\
{\bf Expressing and bounding $f_{0}$:} if $\sum_{i=1}^{2} k_{i}w_{i}\alpha_{i} \geq \sum_{i=1}^{2} (k_i+L_i+1)w_{i}\beta_{i}$ then
$c \geq w_{1}\beta_{1}+w_{2}\beta_{2}$ hence \Cref{lem:max_theta}, case \ref{it:maxtheta_four} yields
\begin{align}\label{eq:subBL2_levels_bound1}
f(L_{1},L_{2},\alpha_{1},\alpha_{2},\beta_{1},\beta_{2}) 
& 
=
\frac{\sum_{i=1}^{2}(L_i+1) \beta_i^2}{\sum_{i=1}^{2}k_{i}(\alpha_{i}^{2}+\beta_{i}^{2})}\\
f_{0}(L_{1},L_{2}) &= 
\sup_{  \stackrel{0\leq \beta_{i} \leq \alpha_i, \beta_1+\beta_2 >0}{\sum_{i=1}^{2} k_{i}w_{i}\alpha_{i} \geq \sum_{i=1}^{2} (k_i+L_i+1)w_{i}\beta_{i}}} \frac{\sum_{i=1}^{2}(L_i+1) \beta_i^2}{\sum_{i=1}^{2}k_{i}(\alpha_{i}^{2}+\beta_{i}^{2})}\\
&\stackrel{\text{\Cref{lem:intermediate_level_0}}}{=} 
\sup_{  \stackrel{0\leq \beta_{i} \leq \alpha_i, \beta_1+\beta_2 >0}{\sum_{i=1}^{2} k_{i}w_{i}\alpha_{i} = \sum_{i=1}^{2} (k_i+L_i+1)w_{i}\beta_{i}}} \frac{\sum_{i=1}^{2}(L_i+1) \beta_i^2}{\sum_{i=1}^{2}k_{i}(\alpha_{i}^{2}+\beta_{i}^{2})}
= B^{L_1+1,L_{2}+1}(w) .
\end{align}
As a result
\begin{align}\label{eq:subBL2_levels_bound00}
f_{0}(L_{1},L_{2}) \leq \max_{0 \leq L_{i} \leq n_{i}-2k_{i}-1} B^{L_{1}+1,L_{2}+1}(w) \leq
\max_{0 \leq L'_{i} \leq n_{i}-2k_{i}} B^{L_{1}',L_{2}'}(w) 
\end{align}
{\bf Bounding $f_{1}$:}
we denote $(\ell,r) \in \{(1,2),(2,1)\}$ a pair such that $w_{\ell}\beta_{\ell} = \min_{i}w_{i}\beta_{i} \leq \max_{i}w_{i}\beta_{i} = w_{r}\beta_{r}$.
 
When $\sum_{i=1}^{2} (k_i+L_i)w_{i}\beta_{i} \leq \sum_{i=1}^{2} k_{i}w_{i}\alpha_{i} < \sum_{i=1}^{2} (k_i+L_i+1)w_{i}\beta_{i}$ we can distinguish three cases.
\begin{enumerate}
\item if $(k_\ell+L_\ell)w_{\ell}\beta_{\ell}+(k_r+L_r+1)w_{r}\beta_{r} \leq \sum_{i=1}^{2} k_{i}w_{i}\alpha_{i} < \sum_{i=1}^{2} (k_i+L_i+1)w_{i}\beta_{i}$  then 
$\max(w_{1}\beta_{1},w_{2}\beta_{2}) = w_{r}\beta_{r} \leq c <  w_{1}\beta_{1}+w_{2}\beta_{2}$ hence \Cref{lem:max_theta}, case~\ref{it:maxtheta_three} yields
\begin{align}\label{eq:subBL2_levels_bound2}
f(L_{1},L_{2},\alpha_{1},\alpha_{2},\beta_{1},\beta_{2}) 
&=
\max_{(u,v) \in \{(1,2),(2,1)\}}
\underbrace{\frac{
(L_{u}+1)\beta_{u}^{2}+L_{v}\beta_{v}^{2}+[(c- w_u\beta_{u})/w_{v}]^{2}
}{\sum_{i=1}^{2}k_{i}(\alpha_{i}^{2}+\beta_{i}^{2})}
}_{=g_{v}(L'_1,L'_2,\alpha_{1},\alpha_{2},\beta_{1},\beta_{2}),\ L'_{u}=L_{u}+1,L'_{v}=L_{v}}.
\end{align}

\item  if $ (k_\ell+L_\ell+1)w_{\ell}\beta_{\ell}+(k_r+L_r)w_{r}\beta_{r} \leq \sum_{i=1}^{2} k_{i}w_{i}\alpha_{i} < (k_\ell+L_\ell)w_{\ell}\beta_{\ell}+(k_r+L_r+1)w_{r}\beta_{r} $  then $\min(w_{1}\beta_{1},w_{2}\beta_{2}) = w_{\ell}\beta_{\ell} \leq c < w_{r}\beta_{r}  = \max(w_{1}\beta_{1},w_{2}\beta_{2}) $ hence \Cref{lem:max_theta}, case~\ref{it:maxtheta_two} yields
\begin{align}\label{eq:subBL2_levels_bound4}
f(L_{1},L_{2},\alpha_{1},\alpha_{2},\beta_{1},\beta_{2}) 
&=
\max\left(
\underbrace{\frac{
L_1\beta_1^{2}+L_{2}\beta_{2}^{2}+(c/w_{r})^{2}
}{\sum_{i=1}^{2}k_{i}(\alpha_{i}^{2}+\beta_{i}^{2})}}_{g_{r}(L_{1},L_{2},\alpha_{1},\alpha_{2},\beta_{1},\beta_{2})},
\underbrace{\frac{
(L_{\ell}+1)\beta_{\ell}^{2}+L_{r}\beta_{r}^{2}+[(c- w_\ell\beta_{\ell})/w_{r}]^{2}
}{\sum_{i=1}^{2}k_{i}(\alpha_{i}^{2}+\beta_{i}^{2})}}_{g_{r}(L'_{1},L'_{2},\alpha_{1},\alpha_{2},\beta_{1},\beta_{2}),\ L'_{\ell}=L_{\ell}+1,L'_{r}=L_{r}}
\right).
\end{align}

\item otherwise, $\sum_{i=1}^{2} (k_i+L_i)w_{i}\beta_{i} \leq \sum_{i=1}^{2} k_{i}w_{i}\alpha_{i} < (k_\ell+L_\ell+1)w_{\ell}\beta_{\ell}+(k_r+L_r)w_{r}\beta_{r}$, hence
$c < \min(w_{1}\beta_{1},w_{2}\beta_{2})$ and by \Cref{lem:max_theta}, case~\ref{it:maxtheta_one}

\begin{align}\label{eq:subBL2_levels_bound3}
f(L_{1},L_{2},\alpha_{1},\alpha_{2},\beta_{1},\beta_{2}) 
&=
\max\left(
\underbrace{
\frac{
L_{1}\beta_{1}^{2}+L_{2}\beta_{2}^{2}+(c/w_{1})^{2}
}{\sum_{i=1}^{2}k_{i}(\alpha_{i}^{2}+\beta_{i}^{2})}
}_{g_1(L_{1},L_{2},\alpha_{1},\alpha_{2},\beta_{1},\beta_{2})},
\underbrace{
\frac{
L_{1}\beta_{1}^{2}+L_{2}\beta_{2}^{2}+(c/w_{2})^{2}
}{\sum_{i=1}^{2}k_{i}(\alpha_{i}^{2}+\beta_{i}^{2})}
}_{g_2(L_{1},L_{2},\alpha_{1},\alpha_{2},\beta_{1},\beta_{2})}
\right) .
\end{align}
\end{enumerate}

Thus, in the range of $\alpha_{i},\beta_{i}$ involved in the definition of $f_{1}(L_{1},L_{2})$ as a supremum, there are integers $0 \leq L'_{i} \leq n_{i}-2k_{i}$ and $v \in \{1,2\}$ such that 
$f(L_{1},L_{2},\alpha_1,\alpha_{2},\beta_{1},\beta_{2})  = g_{v}(L'_{1},L'_{2},\alpha_1,\alpha_{2},\beta_{1},\beta_{2})$. 
We will shortly prove that given the relations between $L'_{i}$ and the considered range of $\alpha_{i},\beta_{i}$ we have
 
\begin{align}\label{eq:MainMissingStep}
\sum_{i=1}^{2} (k_i+L'_i)w_{i}\beta_{i} \leq  \sum_{i=1}^{2} k_{i}w_{i}\alpha_{i}.
\end{align}
hence using~\Cref{lem:bound_gv} we obtain $g_{v}(L'_1,L'_2,\alpha_1,\alpha_2,\beta_1,\beta_2)\leq  B^{L'_1,L'_2}(w)$.

This implies 
\[
f_{1}(L_{1},L_{2})  \leq \max_{0 \leq L'_{i} \leq n_{i}-2k_{i}} B^{L'_{1},L'_{2}}(w)
\]
and, combined with~\eqref{eq:DefFL1L20}-\eqref{eq:subBL2_levels_bound00}, yields the upper bound 

\begin{equation}\label{eq:BL1L2upper}
B_\Sigma(\|\cdot\|_w) =  \max_{0\leq L_i\leq n_i-2k_i} \max( f_0(L_1,L_2),f_1(L_1,L_2)) \leq \max_{0\leq L'_i\leq n_i-2k_i} B^{L'_1,L'_2}(w).
\end{equation}

{\bf Proof of~\eqref{eq:MainMissingStep}.} 
We treat separately the three cases respectively associated to~\eqref{eq:subBL2_levels_bound2},~\eqref{eq:subBL2_levels_bound4},~\eqref{eq:subBL2_levels_bound3}.
\begin{enumerate}
\item When $\sum_{i=1}^{2} (k_i+L_i)w_{i}\beta_{i} \leq  \sum_{i=1}^{2} k_{i}w_{i}\alpha_{i} < (k_\ell+L_\ell+1)w_{\ell}\beta_{\ell}+(k_r+L_r)w_{r}\beta_{r}$, by~\eqref{eq:subBL2_levels_bound3} there is $v \in \{1,2\}$ such that $f(L_{1},L_{2},\alpha_1,\alpha_{2},\beta_{1},\beta_{2})  = g_{v}(L'_{1},L'_{2},\alpha_1,\alpha_{2},\beta_{1},\beta_{2})$ with $(L'_{1},L'_{2})=(L_{1},L_{2})$.
We observe that
$\sum_{i=1}^{2} (k_i+L'_i)w_{i}\beta_{i} = \sum_{i=1}^{2} (k_i+L_i)w_{i}\beta_{i} \leq  \sum_{i=1}^{2} k_{i}w_{i}\alpha_{i}$. 
\item When $(k_\ell+L_\ell)w_{\ell}\beta_{\ell}+(k_r+L_r+1)w_{r}\beta_{r} \leq \sum_{i=1}^{2} k_{i}w_{i}\alpha_{i} < \sum_{i=1}^{2} (k_i+L_i+1)w_{i}\beta_{i}$, by~\eqref{eq:subBL2_levels_bound2}, we have
$f(L_{1},L_{2},\alpha_1,\alpha_{2},\beta_{1},\beta_{2})  = g_{v}(L'_{1},L'_{2},\alpha_1,\alpha_{2},\beta_{1},\beta_{2})$ where $(L_1',L_2',v) \in \{(L_1+1,L_2,2),(L_1,L_2+1,1)\}$. 
If $(L_\ell',L_r') =(L_\ell,L_r+1)$ then $\sum_{i=1}^{2} (k_i+L_i')w_{i}\beta_{i}=(k_\ell+L_\ell)w_{\ell}\beta_{\ell}+(k_r+L_r+1)w_{r}\beta_{r}$. Otherwise we have $(L_\ell',L_r') =(L_\ell+1,L_r)$, hence $ \sum_{i=1}^{2} (k_i+L_i')w_{i}\beta_{i} =(k_\ell+L_\ell+1)w_{\ell}\beta_{\ell}+(k_r+L_r)w_{r}\beta_{r} \leq (k_\ell+L_\ell)w_{\ell}\beta_{\ell}+(k_r+L_r+1)w_{r}\beta_{r} $ 
since $w_{\ell}\beta_{\ell} \leq w_{r}\beta_{r}$ by definition of $r,\ell$. 
In both cases we get
\(
\sum_{i=1}^{2} (k_i+L_i')w_{i}\beta_{i} \leq \sum_{i=1}^{2} k_{i}w_{i}\alpha_{i}.
\)

\item When
$(k_\ell+L_\ell+1)w_{\ell}\beta_{\ell}+(k_r+L_r)w_{r}\beta_{r} \leq \sum_{i=1}^{2} k_{i}w_{i}\alpha_{i} < (k_\ell+L_\ell)w_{\ell}\beta_{\ell}+(k_r+L_r+1)w_{r}\beta_{r}$,~\eqref{eq:subBL2_levels_bound4} yields $f(L_{1},L_{2},\alpha_1,\alpha_{2},\beta_{1},\beta_{2})  = g_{r}(L'_{1},L'_{2},\alpha_1,\alpha_{2},\beta_{1},\beta_{2})$
with $(L'_{\ell},L'_r) \in \{(L_{\ell},L_{r}), (L_{\ell}+1,L_{r})\}$, 
hence we have
$\sum_{i=1}^{2}(k_{i}+L'_{i})w_{i}\beta_{i} \leq
(k_\ell+L_\ell+1)w_{\ell}\beta_{\ell}+(k_r+L_r)w_{r}\beta_{r} \leq \sum_{i=1}^{2} k_{i}w_{i}\alpha_{i}$.

\end{enumerate}
As these three cases cover all possibilities, we deduce bound~\eqref{eq:MainMissingStep} as claimed.

To conclude, we obtain a lower bound on $B_\Sigma(\|\cdot\|_w)$. Consider any integers $0 \leq L_i \leq n_{i}-2k_{i}$ and any scalars $\alpha_i,\beta_i$ such that $0\leq \beta_{i} \leq \alpha_i$, $\beta_1+\beta_2>0$ and $\sum_{i=1}^{2} (k_i+L_i)w_{i}\beta_{i} = \sum_{i=1}^{2} k_{i}w_{i}\alpha_{i}$, and let $z = (z_{1},z_{2})$ where 
\[
  z_{i}= (\underbrace{\alpha_i,\ldots,\alpha_i}_{k_i}, \underbrace{\beta_i,\ldots,\beta_i}_{k_i+L_i},\underbrace{0,\ldots,0}_{n_{i}-(2k_i+L_{i})} ).\]
We have $\|z_T\|_w = k_{1}w_{1}\alpha_{1}+k_{2}w_{2}\alpha_{2} = (k_{1}+L_{1})w_{1}\beta_{1}+(k_{2}+L_{2})w_{2}\beta_{2} = \|z_{T^c}\|_w$ hence, by \Cref{lem:opt_support_wl1_level} and the definition of $B_{\Sigma}(\|\cdot\|_{w})$,
\begin{equation}
B_\Sigma(\|\cdot\|_w) 
\geq \frac{\|z_{T_2^c}\|_2^2}{\|z_{T_2}\|_2^2} 
= \frac{\sum_{i=1}^{2}L_i \beta_i^2 }{\sum_{i=1}^{2}k_{i}(\alpha_{i}^{2}+\beta_{i}^{2})}.
\end{equation}
Taking the supremum over $\alpha_{i},\beta_{i}$ under the considered constraints yields $B_\Sigma(\|\cdot\|_w) 
 \geq B^{L_1,L_2}(w)$.
We deduce 
\[
B_\Sigma(\|\cdot\|_w) \geq  \max_{0\leq L_i\leq n_i-2k_i} B^{L_1,L_2}(w).
\]

\end{proof}

We give a characterization/lower bound (depending on $w$) of the intermediate $B^{L_1,L_2}(w)$.

\begin{lemma} \label{lem:charact_supBL2_wl1_level}
Consider $w=(w_1,w_2)$, $0 \leq L_i \leq n_i -2k $, and $B^{L_1,L_2}(w)$  defined as in \Cref{lem:bound_gv}. 
We have
\begin{equation}\label{eq:charact_supBL2_wl1_level}
 \max_{(i,j) \in \{(1,2),(2,1)\}}\frac{L_i/k_i} {\frac{\nu_i}{1-\nu_i}(L_i/k_i)^2 +2}
 \leq B^{L_1,L_2}(w)
 \leq \max_{(i,j) \in \{(1,2),(2,1)\}}\frac{L_i/k_i} { \nu_i (L_i/k_i+1)^2 +1}
\end{equation}
with $\nu_i= \frac{1 }{1+k_j/(k_{i}\mu_i^2)}$ and $\mu_i = \frac{w_i}{w_j}$ for $(i,j) \in \{(1,2),(2,1)\}$. 
The rhs is an equality if $\nu_{i} \geq \frac{k_{i}}{k_{i}+L_{i}}$, $\forall i\in \{1,2 \}$.
\end{lemma}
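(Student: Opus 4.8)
The plan is to establish the two inequalities separately and then characterize when they coincide. I would carry out the generic case $L_1,L_2>0$ in detail; the cases $L_i=0$ are trivial ($B^{0,0}(w)=0$ and both bounds vanish, while if only one $L_i$ vanishes the supremum in \eqref{eq:DefBL1L2} is reached with $\beta_j=0$ and the same formulas drop out of a one-variable optimization). All the manipulations below are degree-$0$ homogeneous in $(\alpha_1,\alpha_2,\beta_1,\beta_2)$, so scale normalizations are free.

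\emph{Upper bound.} Given any admissible $(\alpha_1,\alpha_2,\beta_1,\beta_2)$ in \eqref{eq:DefBL1L2}, I would set $\lambda := w_1(k_1+L_1)\beta_1+w_2(k_2+L_2)\beta_2$; the equality constraint reads $k_1w_1\alpha_1+k_2w_2\alpha_2=\lambda$, so Cauchy--Schwarz yields $k_1\alpha_1^2+k_2\alpha_2^2\geq \lambda^2/(k_1w_1^2+k_2w_2^2)$. Since this quantity occurs in the denominator of the objective and the numerator is nonnegative, lowering it to $\lambda^2/(k_1w_1^2+k_2w_2^2)$ only increases the ratio, hence
\[
B^{L_1,L_2}(w)\leq \sup_{\beta_1,\beta_2\geq 0,\ \beta_1+\beta_2>0}\frac{L_1\beta_1^2+L_2\beta_2^2}{\tfrac{\lambda^2}{k_1w_1^2+k_2w_2^2}+k_1\beta_1^2+k_2\beta_2^2},\qquad \lambda=w_1(k_1+L_1)\beta_1+w_2(k_2+L_2)\beta_2 .
\]
The right-hand side is invariant under rescaling $(\beta_1,\beta_2)$, so I would impose $\lambda=1$ and apply \Cref{lem:intermediate_level_2} with $\rho=1/(k_1w_1^2+k_2w_2^2)$, obtaining that this supremum equals $\max_{i}L_i/(\rho w_i^2(k_i+L_i)^2+k_i)$. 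From the definition of $\nu_i$ one has $\nu_i=k_iw_i^2/(k_1w_1^2+k_2w_2^2)$, i.e. $\rho w_i^2=\nu_i/k_i$; substituting and dividing numerator and denominator by $k_i$ turns the bound into $\max_i\frac{L_i/k_i}{\nu_i(1+L_i/k_i)^2+1}$, the claimed upper bound.

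\emph{Lower bound.} For each $(i,j)\in\{(1,2),(2,1)\}$ I would test the point $\beta_i=\alpha_i=1$, $\beta_j=0$, $\alpha_j=L_iw_i/(k_jw_j)$; it is admissible in \eqref{eq:DefBL1L2} since $\alpha_m\geq\beta_m\geq 0$, $\beta_1+\beta_2=1>0$, and $\sum_m\bigl(k_mw_m\alpha_m-w_m(k_m+L_m)\beta_m\bigr)=-L_iw_i+L_iw_i=0$. Its objective value is $\tfrac{L_i}{2k_i+L_i^2w_i^2/(k_jw_j^2)}$, and dividing by $k_i$ while using $L_i^2w_i^2/(k_ik_jw_j^2)=(L_i/k_i)^2\,k_i\mu_i^2/k_j=(L_i/k_i)^2\,\nu_i/(1-\nu_i)$ (again from the definition of $\nu_i$), this equals $\tfrac{L_i/k_i}{\frac{\nu_i}{1-\nu_i}(L_i/k_i)^2+2}$. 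Taking the larger of the two values over $(i,j)$ gives the lower bound.

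\emph{Equality case and main obstacle.} If $\nu_i\geq k_i/(k_i+L_i)$ for both $i$, I would let $i^{\star}$ realize the maximum in the upper bound and $j^{\star}$ be the other index, normalize $\lambda=1$, and test $\beta_{i^{\star}}=1/(w_{i^{\star}}(k_{i^{\star}}+L_{i^{\star}}))$, $\beta_{j^{\star}}=0$, $\alpha_m=w_m/(k_1w_1^2+k_2w_2^2)$ for $m=1,2$. The linear constraint holds because $k_1w_1\alpha_1+k_2w_2\alpha_2=1=(k_{i^{\star}}+L_{i^{\star}})w_{i^{\star}}\beta_{i^{\star}}$; $\alpha_{j^{\star}}\geq 0=\beta_{j^{\star}}$ is clear; and $\alpha_{i^{\star}}\geq\beta_{i^{\star}}$ is precisely $w_{i^{\star}}^2(k_{i^{\star}}+L_{i^{\star}})\geq k_1w_1^2+k_2w_2^2$, i.e. $\nu_{i^{\star}}\geq k_{i^{\star}}/(k_{i^{\star}}+L_{i^{\star}})$, which is assumed. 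At this point $k_1\alpha_1^2+k_2\alpha_2^2=1/(k_1w_1^2+k_2w_2^2)$, so the objective simplifies, via $w_{i^{\star}}^2/(k_1w_1^2+k_2w_2^2)=\nu_{i^{\star}}/k_{i^{\star}}$, to $\tfrac{L_{i^{\star}}/k_{i^{\star}}}{\nu_{i^{\star}}(1+L_{i^{\star}}/k_{i^{\star}})^2+1}$, the right-hand side, and together with the upper bound this forces equality. The one delicate step is the upper bound: one must notice that relaxing the minimization over $(\alpha_1,\alpha_2)$ to retain only the linear constraint gives, by Cauchy--Schwarz, a \emph{lower} bound on $k_1\alpha_1^2+k_2\alpha_2^2$ and hence — since it sits in a denominator — a valid \emph{upper} bound on $B^{L_1,L_2}(w)$, after which \Cref{lem:intermediate_level_2} finishes the argument; the hypothesis $\nu_i\geq k_i/(k_i+L_i)$ is imposed for both $i$ only to avoid tracking which index attains the maximum, since only the case $i=i^{\star}$ is used.
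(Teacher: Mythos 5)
Your proof is correct and follows essentially the same route as the paper: lower-bounding the denominator via the linear constraint (you use Cauchy--Schwarz directly where the paper invokes \Cref{lem:intermediate_level_0}), applying \Cref{lem:intermediate_level_2} for the upper bound, testing the same one-sided points $\beta_j=0$, $\alpha_i=\beta_i$ for the lower bound, and exhibiting the same witness for the equality case under $\nu_i\geq k_i/(k_i+L_i)$. The only differences are cosmetic (explicit normalization of $\lambda$ and a direct feasibility check of the equality witness instead of citing the optimizers of \Cref{lem:intermediate_level_2}).
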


\begin{proof}
For $L_1,L_2$ such that $L_1+L_2 >0$, we rewrite $B^{L_1,L_2}$ defined in~\eqref{eq:DefBL1L2} as
\begin{equation}\label{eq:TmpBoundLevels0}
 \begin{split}
B^{L_1,L_2}(w)
= &  
\sup_{\lambda>0}
\sup_{ \stackrel{\beta_i :\beta_1,\beta_2\geq 0}{\sum_{i=1}^2 w_i (k_i+L_i)\beta_i =\lambda}} \sup_{\stackrel{\alpha_i: \alpha_i \geq \beta_i}{\sum_{i=1}^2} k_iw_i\alpha_i = \lambda}\frac{ L_1\beta_1^2 +L_2 \beta_2^2}{ k_1 \alpha_1^2+k_2\alpha_2^2 +k_1 \beta_1^2+k_2\beta_2^2}.
\end{split}
\end{equation}
For fixed $\lambda>0$ and $\beta_1,\beta_{2}$ such that $\sum_{i=1}^2 w_{i}(k_{i}+L_{i})\beta_{i}=\lambda$, we have $\lambda>\sum_{i=1}^2 w_{i}k_{i}\beta_{i}$ hence, by \Cref{lem:intermediate_level_0}, 
\begin{equation}
 \begin{split}
B^{L_1,L_2}(w)
\leq &  
\sup_{\lambda>0}
\sup_{ \stackrel{\beta_1\geq 0,\beta_2\geq 0}{w_1 (k_1+L_1)\beta_1  + w_2 (k_2 +L_2)\beta_2 =\lambda}}\frac{ L_1\beta_1^2 +L_2 \beta_2^2}{ \frac{ \lambda^2 }{k_1w_1^2+k_2w_2^2}+k_1 \beta_1^2+k_2\beta_2^2}.
\end{split}
\end{equation}
 with equality if the maximizers $\hat{\lambda}, \hat{\beta}_i$ of the right side satisfy the constraints $\hat{\lambda} \geq (k_1w_1^2 + k_1w_2^2)\max(\hat{\beta}_1/w_1,\hat{\beta}_{2}/w_{2})$.

Consider $(i,j) \in \{(1,2),(2,1)\}$.  Since $\nu_i :=  \frac{1}{1+ k_jw_j^2/(k_iw_i^2)} = \frac{k_{i}w_{i}^{2}}{k_{i}w_{i}^{2}+k_{j}w_{j}^{2}}$, we obtain by \Cref{lem:intermediate_level_2}
\begin{equation}\label{eq:TmpBoundLevels}
 \begin{split}
B^{L_1,L_2}(w)
\leq &  
\sup_{\lambda>0}\max_{i \in \{1,2\}}\frac{L_i \lambda^2} { \frac{ \lambda^2 }{k_1w_1^2+k_2w_2^2} w_i^2 (k_i +L_i)^2 +k_i\lambda^2 }
=  \max_{(i,j) \in \{(1,2),(2,1)\}}\frac{L_i/k_i} {\nu_i  (L_i/k_i+1)^2 +1}.
\end{split}
\end{equation}
This establishes the upper bound in~\eqref{eq:charact_supBL2_wl1_level}. Denoting $(i^*,j^*)$ maximizing the right-hand-side expression above,  and using the optimal values from \Cref{lem:intermediate_level_2}, $\hat{\beta}_{i^*}= \frac{\hat{\lambda}}{w_{i^*}(k_{i^*}+L_{i^*})}$ (with $\hat{\beta}_{j^*} =0$ and an arbitrary $\hat{\lambda}>0$), we have 
$\max(\hat{\beta}_{1}/w_{1},\hat{\beta}_{2}/w_{2}) = \hat{\beta}_{i^{*}}/w_{i^{*}} = \frac{\hat{\lambda}}{w_{i^*}^{2}(k_{i^*}+L_{i^*})}$ hence equality holds in~\eqref{eq:TmpBoundLevels} if the following inequality is satisfied
\[
(k_1w_1^2 + k_1w_2^2)\frac{1}{w_{i^*}^2(k_{i^*}+L_{i^*})}  \leq 1,
\]
or equivalently if $\frac{k_{i^*}}{\nu_{i^{*}} (k_{i^*}+L_{i^*})} \leq 1$. This is guaranteed as soon as  $\nu_{\ell} \geq \frac{k_{\ell}}{k_{\ell}+L_{\ell}} $ for every $\ell \in \{1,2\}$. This establishes the equality case in the rhs of~\eqref{eq:charact_supBL2_wl1_level}.

We now treat the lower bound in~\eqref{eq:charact_supBL2_wl1_level}.
For fixed $\beta_i \geq 0$ and $\lambda>0$ such that  $(k_{1}+L_{1})w_{1}\beta_{1}+(k_{2}+L_{2})w_{2}\beta_{2} = \lambda$, we still have $\lambda > k_1w_1\beta_1+k_2w_2\beta_2$.  By \Cref{lem:intermediate_level_0}, letting
\begin{equation}
 \begin{split}
 V = \min_{_{\stackrel{\alpha_i: \alpha_i \geq \beta_i}{\sum_{i=1}^2} k_iw_i\alpha_i = \lambda}} k_1 \alpha_1^2 +k_2 \alpha_2^2  ,
\end{split}
\end{equation}
we either have 
\begin{itemize}
 \item $V= \min\left( k_1\beta_1^2 +k_2(\frac{\lambda -k_1w_1\beta_1}{k_2w_2})^2, k_2\beta_2^2 +k_1(\frac{\lambda -k_2w_2\beta_2}{k_1w_1})^2 \right)$;
 \item or
 \begin{align}
   V= \lambda^2/(k_1w_1^2+k_2w_2^2) &= \min_{_{\stackrel{\alpha_i: \alpha_i \geq 0}{\sum_{i=1}^2} k_i\alpha_i = \lambda}}  k_1 \alpha_1^2 +k_2 \alpha_2^2 \nonumber  \\
   &\leq \min\left( k_1\beta_1^2 +k_2(\frac{\lambda -k_1w_1\beta_1}{k_2w_2})^2, k_2\beta_2^2 +k_1(\frac{\lambda -k_2w_2\beta_2}{k_1w_1})^2 \right)
 \end{align}
where the last inequality was obtained by evaluating $k_1\alpha_1^2+k_2 \alpha_2^2$ at $\alpha_1 = \beta_1$ (resp. at $\alpha_2=\beta_2$) with $\alpha_{2}$ (resp. $\alpha_{1}$) tuned so that $k_{1}w_{1}\alpha_{1}+k_{2}w_{2}\alpha_{2}=\lambda$.
\end{itemize}
We deduce that  $V \leq  \min\left( k_1\beta_1^2 +k_2(\frac{\lambda -k_1w_1\beta_1}{k_2w_2})^2, k_2\beta_2^2 +k_1(\frac{\lambda -k_2w_2\beta_2}{k_1w_1})^2 \right)$ 
and it follows using~\eqref{eq:TmpBoundLevels0} that

\begin{equation}
 \begin{split}
B^{L_1,L_2}(w)
\geq &  
\sup_{\lambda>0}
\sup_{ \stackrel{\beta_i :\beta_1,\beta_2\geq 0}{\sum_{i=1}^2 w_i (k_i+L_i)\beta_i =\lambda}} \frac{ L_1\beta_1^2 +L_2 \beta_2^2}{ \min\left( k_1\beta_1^2 +k_2(\frac{\lambda -k_1w_1\beta_1}{k_2w_2})^2, k_2\beta_2^2 +k_1(\frac{\lambda -k_2w_2\beta_2}{k_1w_1})^2 \right)+k_1 \beta_1^2+k_2\beta_2^2} \\
= &
\sup_{ \stackrel{\beta_i :\beta_1,\beta_2\geq 0}{\sum_{i=1}^2 w_i (k_i+L_i)\beta_i =1}} \frac{ L_1\beta_1^2 +L_2 \beta_2^2}{ \min\left( k_1\beta_1^2 +k_2(\frac{1 -k_1w_1\beta_1}{k_2w_2})^2, k_2\beta_2^2 +k_1(\frac{1 -k_2w_2\beta_2}{k_1w_1})^2 \right)+k_1 \beta_1^2+k_2\beta_2^2} \\
= &
\sup_{ \stackrel{\beta_i :\beta_1,\beta_2\geq 0}{\sum_{i=1}^2 w_i (k_i+L_i)\beta_i =1}} \max \left( \frac{ L_1\beta_1^2 +L_2 \beta_2^2}{ 2k_1\beta_1^2 +k_2 \beta_2^2 +k_2(\frac{1 -k_1w_1\beta_1}{k_2w_2})^2}, \frac{ L_1\beta_1^2 +L_2 \beta_2^2}{ k_1\beta_1^2 +2k_2 \beta_2^2 +k_1(\frac{1 -k_2w_2\beta_2}{k_1w_1})^2}  \right) .\\
\end{split}
\end{equation}
For $(i,j) \in \{(1,2),(2,1) \}$, using the values $\tilde{\beta}_i = \frac{1}{w_{i}(k_{i}+L_{i})}, \tilde{\beta}_j = 0$, we have

\begin{equation}
 \begin{split}
B^{L_1,L_2}(w)
\geq &   \frac{ L_1\tilde{\beta}_i^2 }{ 2k_i\tilde{\beta}_i^2  +k_j(\frac{1 -k_iw_i\tilde{\beta}_i}{k_jw_j})^2}.
\end{split}
\end{equation}

Since
 $1-k_{i}w_{i}\tilde{\beta}_{i} = w_{i}(k_{i}+L_{i})\tilde{\beta}_{i}-k_{i}w_{i}\tilde{\beta}_{i} = w_{i}L_{i}\tilde{\beta}_{i}$, 
we have
\[
\frac{ L_i\tilde{\beta}_{i}^2}{  2k_i\tilde{\beta}_{i}^2 +k_j(\frac{1 -k_iw_i\tilde{\beta}_{i}}{k_jw_j})^2}
=
\frac{ L_i\tilde{\beta}_{i}^2}{  2k_i\tilde{\beta}_{i}^2  +k_j(\frac{w_iL_{i}}{k_jw_j})^2\beta_{i}^{2}}
=
\frac{ L_i}{  2k_i  +k_j(\frac{w_iL_{i}}{k_jw_j})^2}.
\]
Since $\nu_{i} = \frac{1}{1+ k_jw_j^2/(k_iw_i^2)}$, we have $(1-\nu_{i})/\nu_{i} = 1/\nu_{i}-1 = k_{j} w_{j}^{2}/k_{i}w_{i}^{2}$.
 We deduce
\begin{equation}
 \begin{split}
B^{L_1,L_2}(w)
 &\geq
 \frac{ L_i}{2k_i +\frac{ (w_iL_i)^2}{k_jw_j^2}} 
  =
 \frac{ L_i/k_i}{ \frac{k_iw_i^2}{k_jw_j^2}(L_i/k_i)^2 +2 } 
=
 \frac{ L_i/k_i}{ \frac{\nu_i}{1-\nu_i}(L_i/k_i)^2 +2 }.\\
\end{split}
\end{equation}

\end{proof} 

The following function study will be used to deal with the optimization of the $B^{L_1,L_2}(w)$.

\begin{lemma}\label{lem:TechnicalGH}
Consider $a$ such that $0 < a \leq 1$. The function
\begin{align}
g_{1}: u \geq 0  \mapsto g_1(u;a ):= \frac{u} {a ( u+1)^2 +1} &\qquad
\end{align}
is maximized at $u_1^* =\sqrt{1+1/a}$, increasing for $u \leq u_1^*$, decreasing for $u \geq u^{*}_{1}$ and
\begin{equation}
\begin{split}
g_1(u_1^* ; a)&= \frac{1}{2}(\sqrt{1+1/a}-1).
\end{split}
\end{equation}
\end{lemma}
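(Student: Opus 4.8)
The statement is a one-variable calculus exercise: study $g_1(u;a) = u/(a(u+1)^2+1)$ on $u \geq 0$ for a fixed parameter $a \in (0,1]$, locate its unique interior maximum, and evaluate the maximal value. My plan is to compute the derivative, show its sign changes exactly once from $+$ to $-$, and then substitute the critical point back into $g_1$.

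First I would differentiate with respect to $u$. Writing $D(u) := a(u+1)^2+1$, we have
\[
g_1'(u;a) = \frac{D(u) - u \cdot D'(u)}{D(u)^2} = \frac{a(u+1)^2+1 - 2au(u+1)}{D(u)^2}.
\]
The numerator simplifies: $a(u+1)^2 - 2au(u+1) = a(u+1)\big[(u+1)-2u\big] = a(u+1)(1-u)$, so the numerator is $a(1-u)(1+u)+1 = a(1-u^2)+1 = (a+1) - a u^2$. Hence
\[
g_1'(u;a) = \frac{(a+1) - a u^{2}}{\big(a(u+1)^2+1\big)^{2}},
\]
whose sign is that of $(a+1) - a u^2$. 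This is positive for $0 \leq u < \sqrt{(a+1)/a} = \sqrt{1+1/a} =: u_1^*$ and negative for $u > u_1^*$, so $g_1$ is increasing on $[0,u_1^*]$, decreasing on $[u_1^*,\infty)$, and attains its maximum at $u_1^*$ as claimed (note $u_1^* \geq \sqrt{2} > 0$ since $a \leq 1$, so the critical point is genuinely interior).

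Finally I would evaluate $g_1(u_1^*;a)$. With $u_1^* = \sqrt{1+1/a}$ we have $a(u_1^*)^2 = a(1+1/a) = a+1$, so the denominator is
\[
a(u_1^*+1)^2 + 1 = a(u_1^*)^2 + 2au_1^* + a + 1 = (a+1) + 2au_1^* + (a+1) = 2(a+1) + 2a u_1^*.
\]
Therefore
\[
g_1(u_1^*;a) = \frac{u_1^*}{2(a+1)+2au_1^*} = \frac{u_1^*}{2a\big((a+1)/a + u_1^*\big)} = \frac{u_1^*}{2a\big((u_1^*)^2 + u_1^*\big)} = \frac{1}{2a(u_1^*+1)}.
\]
To match the stated form $\tfrac12(\sqrt{1+1/a}-1) = \tfrac12(u_1^*-1)$, it suffices to check $\frac{1}{a(u_1^*+1)} = u_1^*-1$, i.e. $a(u_1^*+1)(u_1^*-1) = 1$, i.e. $a((u_1^*)^2-1) = 1$; but $(u_1^*)^2 - 1 = (1+1/a)-1 = 1/a$, so indeed $a \cdot (1/a) = 1$. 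This gives $g_1(u_1^*;a) = \tfrac12(\sqrt{1+1/a}-1)$, completing the proof. There is no real obstacle here — the only mild care needed is the algebraic simplification of the numerator of $g_1'$, which is where a sign error would most easily creep in, and the back-substitution identity $a((u_1^*)^2-1)=1$ that converts the first closed form into the one stated.
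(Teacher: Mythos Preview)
Your proof is correct and follows essentially the same route as the paper: compute $g_1'(u;a)$, observe its numerator is $(a+1)-au^2$, read off the critical point $u_1^*=\sqrt{1+1/a}$ and the monotonicity, then evaluate $g_1(u_1^*;a)$ using the identity $a((u_1^*)^2-1)=1$. The only cosmetic difference is that the paper substitutes $a=[(u_1^*)^2-1]^{-1}$ directly into $g_1$ and simplifies the resulting rational expression in $u_1^*$, whereas you first reach $\tfrac{1}{2a(u_1^*+1)}$ and then invoke the same identity; both amount to the same computation.
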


\begin{proof}
Since $g_1'(u ; a) =  \frac{- a u^2 +a +1}{( a  ( u+1)^2 +1 )^2}
$, the equality $g_1'(u_1^*;a)=0$ implies $a (u_1^*)^2 = a +1$ and $u_1^* =\sqrt{1+1/a}$. 
Given the sign of $g'_{1}(u;a)$, $g_1( \cdot ;a)$ is increasing for $u \leq u_1^*$ and decreasing for $u\geq u_1^*$. As $a = [(u_{1}^{*})^{2}-1]^{-1}$ we get
\begin{equation}
\begin{split}
f_1(a):=g_1(u^*_{1},a) &= \frac{u_{1}^{*}((u_{1}^{*})^2-1)}{(u_{1}^{*}+1)^2+(u_{1}^{*})^2-1}
 =  \frac{u_{1}^{*}((u_{1}^{*})^2-1)}{2(u_{1}^{*})^2 +2u_{1}^{*} } = \frac{1}{2}(u_{1}^{*}-1) = \frac{1}{2}(\sqrt{1+1/a}-1)
\end{split}
\end{equation}
 which is decreasing with respect to $a$.
\end{proof}

\begin{lemma}\label{lem:prophlevels}
Consider $0\leq a \leq 1$, $g_1(u;a):= \frac{u} {  a ( u+1)^2 +1 }$  and $g_2(u;a):= \frac{u} { \frac{a}{1-a}  u^2 +2 }$ 
and define 
\begin{align}\label{eq:DefHFunTh7}
h_i(u,v;a) & := \max (g_i(u ; a),g_i(v;1-a)), i \in \{1,2\}.
\end{align}

\begin{enumerate}
\item \label{it:PropH1b} For $a = 1/2$ we have $h_{1}(u,v;1/2) \leq \frac{\sqrt{3}-1}{2}$
for every $u,v \geq 0$.
\item \label{it:PropH1} 
If $a \notin [\tilde{a},1-\tilde{a}]$, where $\tilde{a} := 2\sqrt{3}-3 \approx 0.46$ then $h_{2}(2,2,a) > \frac{\sqrt{3}-1}{2}$.
\end{enumerate}
Consider integers such that $k_i \geq 2$, $1 \leq 4k_{i} \leq n_{i}$ for $i \in \{1,2\}$ and 
\begin{equation}\label{eq:def_H1}
H_{1}(a) := \max_{0 \leq L_{i} \leq n_{i}-2k_{i}} h_{1}(L_{1}/k_{1},L_{2}/k_{2};a) .
\end{equation}
\begin{enumerate}[resume]
\item \label{it:PropH2m}  There exists $a^{*} \in  [\tilde{a},1-\tilde{a}]$ such that $H_{1}(a^{*}) = \min_{a \in  [\tilde{a},1-\tilde{a}]} H_{1}(a)$.

\item \label{it:PropH2} Consider $a \in [\tilde{a},1-\tilde{a}]$ then

\[
H_1(a) = \max\left(
\max_{L_1 \in \{ \lfloor k_1\sqrt{1+1/a}\rfloor; \lceil k_1\sqrt{1+1/a}\rceil\}} g_{1}(L_{1}/k_{1};a),
\max_{L_2 \in \{ \lfloor k_2\sqrt{1+1/(1-a)}\rfloor; \lceil k_2\sqrt{1+1/(1-a)}\rceil\}} g_{1}(L_{2}/k_{2};1-a)
 \right) .
 \]
where $\lfloor\cdot \rfloor$ and $\lceil \cdot\rceil$ denote the lower and upper integer part. Moreover, the  $L_i^*$ maximizing the above expression are also maximizing~\eqref{eq:def_H1} and are such that $L_1^*/k_1\geq 1/a -1$ and $L_2^*/k_2\geq 1/(1-a) -1$.
 \end{enumerate}
\end{lemma}
\begin{proof}

{\bf \Cref{it:PropH1b}.} By \Cref{lem:TechnicalGH},
with $a=1/2$ and any $u,v \geq 0$ we have 
$g_{1}(u;a) \leq g_{1}(u_{1}^{*};a) = \frac{1}{2}(\sqrt{1+1/a}-1) = (\sqrt{3}-1)/2$ and similarly $g_{1}(v;1-a) = (\sqrt{3}-1)/2$ hence $h_{1}(u,v;a) = (\sqrt{3}-1)/2$.

{\bf \Cref{it:PropH1}.} 
We prove the inequality for $a < \tilde{a}$. Since $h_{2}(2,2;1-a) = h_{2}(2,2;a)$ by definition of $h_{2}$, the same inequality holds if $a > 1-\tilde{a}$. For $a < \tilde{a}$ since $a  < 1/2$ we have $a/(1-a) < (1-a)/a$ hence using the definition of $g_{2}$ we have $g_{2}(2,a) > g_{2}(2,1-a)$. By monotonicity of $a \mapsto (1-a)/(1+a)$ we get
\[
h_{2}\left(2,2; a \right) = \max(g_2(2,a),g_2(2,1-a)) = g_2(2,a) = \frac{2}{4\frac{a}{1-a}+2} = \frac{1-a}{1+a} > \frac{1-\tilde{a}}{1+\tilde{a}} =  g_{2}(2,\tilde{a}).
\]
Finally, we compute
\[
g_{2}(2,\tilde{a}) = \frac{1-(2\sqrt{3}-3)}{1+2\sqrt{3}-3} = 
\frac{4-2\sqrt{3}}{2\sqrt{3}-2}
=\frac{(4-2\sqrt{3})(2\sqrt{3}+2)}{(2\sqrt{3})^{2}-2^{2}}
= \frac{8\sqrt{3}-12+8-4\sqrt{3}}{8}
= \frac{4\sqrt{3}-4}{8} = \frac{\sqrt{3}-1}{2}.
\]

{\bf \Cref{it:PropH2m}.}
 The function $H_{1}$ is defined as the maximum of a finite number of continuous functions of $a$. By continuity of the maximum, $H_1$ is continuous and its minimum on the compact set $ [\tilde{a},1-\tilde{a}]$ is reached.

{\bf \Cref{it:PropH2}.} 
Consider $a \in [\tilde{a},1- \tilde{a}] $. 
By~\Cref{lem:TechnicalGH}
the function $u \mapsto g_1(u ; a)$  is maximized at a 
$u_{1}^{*} = \sqrt{1+1/a}$. Similarly, $v \mapsto g_{1}(v;1-a)$ is maximized at $u_{2}^{*} = \sqrt{1+1/(1-a)}$.
 Since $1/3 \leq \tilde{a} \leq 1/2$, with $\nu_{1}=a$, $\nu_{2}=1-a$ we have $\nu_{i} \geq 1/3$ hence $u_{i}^{*} = \sqrt{1+1/\nu_{i}} \leq 2$. Moreover,
since we assume $n_i \geq 4k_i$, we have $ n_i-2k_i \geq 2k_i \geq k_i \sqrt{1 + 1/\nu_i} = k_{i}u_{i}^{*}$.
It follows that for $i \in \{1,2\}$ we have
\[
\max_{0\leq L_i\leq n_i-2k_i} g_1(L_i/k_i;\nu_{i}) 
= \max_{L_i \in\{ \lfloor k_iu_i^*\rfloor; \lceil k_iu_i^*\rceil\}}g_1(L_i/k_i;\nu_{i}).
\]
As a result, the maximizers $L_i^*$ of both sides are identical, and  we have $H_1(a) = \max( g_{1}(L_{1}^{*}/k_{1};a),g_{1}(L_{2}^{*}/k_{2};1-a))$
with
\begin{align*}
L_1^* & \in \arg\max_{L_{1} \in \{ \lfloor k_1u_1^*\rfloor; \lceil k_1u_1^*\rceil\}} g_1(L_1/k_1;a)\\
L_2^* & \in \arg\max_{L_{2} \in \{ \lfloor k_2u_2^*\rfloor; \lceil k_2u_2^*\rceil\}} g_1(L_2/k_2;1-a) .
\end{align*}

There remains to show that $L_{i}^{*}/k_{i} \geq 1/\nu_{1}-1$. For this, we first observe that since $k_{i} \geq 2$ we have
\[
L_i^*/k_i \geq \lfloor k_iu_i^*\rfloor/k_i \geq (k_{i}u_i^* -1)/k_i = u_{i}^{*} -1/k_i \geq u_{i}^{*}-1/2 = \sqrt{1+1/\nu_{i}} -1/2. 
\]
The derivative of $ x \to \sqrt{1+x} -1/2 -(x-1) =   \sqrt{1+x}  -x+1/2$ at any $x \geq 0$ is $1/(2\sqrt{1+x})-1 \leq -1/2$ hence this function is monotonically decreasing. Since $a \in [\tilde{a},1-\tilde{a}]$ and $\nu_{1}=a$, $\nu_{2}=1-a$ we have $\nu_{i} \geq \tilde{a}$ hence $1/\nu_{i} \leq 1/\tilde{a}$ for $i \in \{1,2\}$, hence
\[
 \sqrt{1+1/\nu_{i}} -1/2 -(1/\nu_{i} -1) \geq  \sqrt{1+1/\tilde{a}} -1/2 -(1/\tilde{a} -1) \approx 0.12 >0.   
 \]
 We deduce that $L_i^*/k_i \geq 1/\nu_{i}-1$ as claimed.

\end{proof}

We can conclude with the proof of \Cref{th:opt_in_levels}.

\begin{proof}[Proof of \Cref{th:opt_in_levels}]
The proof starts from the fact (\Cref{cor:RCnecUoS}) that 
\begin{equation}
\arg \max_{R\in\sC'} \delta_\Sigma^{\mathtt{nec}}(R)=   \arg\min_{R \in \sC'} B_{\Sigma}(R)\qquad 
\end{equation}
with $\sC' = \{R(\cdot) =  \|\cdot\|_w : w =(w_1,w_2), w_1 > 0,w_2 > 0  \}$.
Using \Cref{lem:charact_supBL2_wl1_level-1}, for each $w$ we have 
\begin{equation}\label{eq:proofthlevels1}
\begin{split}
 B_\Sigma(\|\cdot\|_w) &=  \max_{
0\leq L_i \leq n-2k_i, i \in \{1,2\}} B^{L_1,L_2}(w).
 \end{split}
\end{equation}

With the notations of~\Cref{lem:charact_supBL2_wl1_level} we have $\mu_{1}=w_{1}/w_{2}$ and $\mu_{2}=w_{2}/w_{1}$ hence $\mu_1 =1/\mu_2$,  and one can check that $\nu_{1}+\nu_{2}=1$ where $\nu_{1} = \nu_{1}(w) := (1+k_2w_{2}^{2}/(k_{1}w_1^2))^{-1}$. 
Hence,  by \Cref{lem:charact_supBL2_wl1_level} (taking $u =L_1/k_1, v = L_2/k_2, a = \nu_1$ and using~\eqref{eq:DefHFunTh7}) and with the notation of \Cref{lem:prophlevels}, for   all integers $0 \leq L_{i} \leq n_{i}-2k_{i}$ we have
\begin{equation}\label{eq:proofthlevels1bis}
h_{2}(L_{1}/k_{1},L_{2}/k_{2};\nu_{1}) \leq
B^{L_1,L_2}(w)
\leq
h_{1}(L_1/k_1,L_2/k_2;\nu_1)
\end{equation}
with equality in the right hand s if for each $i \in \{1,2\}$ we have $\nu_i \geq L_i/(k_i+L_i)$, i.e., $L_{i}/k_{i} \geq 1/\nu_{i}-1$. Using~\eqref{eq:proofthlevels1} we get
\begin{equation}\label{eq:proofthlevels2}
\max_{0\leq L_i \leq n-2k_i, i \in \{1,2\}} 
h_{2}(L_1/k_1,L_2/k_2;\nu_1)
\leq
B_{\Sigma}(\|\cdot\|_{w})
\leq 
 \max_{
0\leq L_i \leq n-2k_i, i \in \{1,2\}}h_{1}(L_1/k_1,L_2/k_2;\nu_1),
\end{equation}
and if the maximizers $L_{i}^{*}$ of the right-hand side of~\eqref{eq:proofthlevels2} satisfy $L_i^{*}/k_i \geq 1/\nu_i -1$ for each $i \in \{1,2\}$
then in fact 
\begin{equation}\label{eq:proofthlevels2bis}
B_{\Sigma}(\|\cdot\|_{w})
= H_1(\nu_1) :=
h_{1}(L_1^{*}/k_1,L_2^{*}/k_2;\nu_1)
\end{equation}
where $H_1$ is defined as the maximum of $h_{1}$ over the $L_i/k_i$ (\Cref{lem:prophlevels}). In particular for $\nu_i = 1/2$, this is verified if $L_i^* \geq k_i$.
Next we proceed in three steps. We set $\tilde{a} := 2\sqrt{3}-3 \approx 0.46$.

{\bf Step 1.} We show that if $w',w''$ are such that $\nu_{1}(w') \notin [\tilde{a},1-\tilde{a}]$ and $\nu_{1}(w'') = 1/2 \in [\tilde{a},1-\tilde{a}]$ then
\[
B_{\Sigma}(\|\cdot\|_{w'}) > \frac{\sqrt{3}-1}{2} \geq B_{\Sigma}(\|\cdot\|_{w''}).
\]
A first consequence is to establish~\eqref{eq:MainThmLevelsBound1}, using \Cref{cor:RCnecUoS} to convert the bound on $B_{\Sigma}(\|\cdot\|_{w})$, for $w \in \{w^{*},w_{0}\}$, to a bound on $\delta_{\Sigma}^{\mathtt{nec}}(\|\cdot\|_{w})$. Indeed, since $\nu_{1}(w'') = k_{1}(w''_{1})^{2}/(k_{1}(w''_{1})^{2}+k_{2}(w''_{2})^{2})$, the fact that $\nu_{1}(w'') = 1/2$ corresponds to $w'' \propto (1/\sqrt{k_{1}},1/\sqrt{k_{2}}) =: w_{0}(k_{1},k_{2})$, hence $B_{\Sigma}^{*} \leq B_{\Sigma}(\|\cdot\|_{w_{0}}) \leq (\sqrt{3}-1)/2$.

A second consequence is that
 the optimization of $w = (w_{1},w_{2})$ can be restricted to a range corresponding to $\nu_{1} = \nu_{1}(w) \in [\tilde{a},1-\tilde{a}]$. 

Indeed, on the one hand,  for $\nu_{1}(w'') = 
 1/2$,  by \Cref{lem:prophlevels}-\Cref{it:PropH2} we have ,
\[
H_1(1/2)  = \max\left(
\max_{L_1 \in \{ \lfloor k_1\sqrt{3}\rfloor; \lceil k_1\sqrt{3}\rceil\}} g_{1}(L_{1}/k_{1};1/2),
\max_{L_2 \in \{ \lfloor k_2\sqrt{3}\rfloor; \lceil k_2\sqrt{3}\rceil\}} g_{1}(L_{2}/k_{2};1/2)
 \right)
 \]
 where $g_1$ is defined in \Cref{lem:prophlevels}.

Hence, $L_i^* \geq k_i$ so that~\eqref{eq:proofthlevels2bis} holds, and we deduce that $B_{\Sigma}(\|\cdot\|_{w''}) =H_1(1/2)$.

From \Cref{lem:prophlevels}-\Cref{it:PropH1b}, we have
\[
H_1(1/2)
\leq \frac{\sqrt{3}-1}{2} ,
 \]
 and we obtain $B_{\Sigma}(\|\cdot\|_{w}) \leq (\sqrt{3}-1)/2$ as claimed.

 We can also establish the conclusion of the theorem \eqref{eq:MainThmLevelsBound2} by  considering $\sup_{k_1',k_2' \geq 1,n_1'\geq 4k_1',n_2'\geq 4k_2'}B_{\Sigma}(\|\cdot\|_{w''})$. Using the expression of $H(1/2)$ we have that
 \begin{equation}
 \begin{split}
  \sup_{k_1',k_2' \geq 1,n_1'\geq 4k_1',n_2'\geq 4k_2'}B_{\Sigma}(\|\cdot\|_{w''}) &= \sup_{k_1'\geq 1,n_1'\geq 4k_1'} \max_{L_1 \in \{ \lfloor k_1'\sqrt{3}\rfloor; \lceil k_1'\sqrt{3}\rceil\}} g_{1}(L_{1}/k_{1}';1/2) \\
  &=  \sup_{k\geq 1} \max_{L \in \{ \lfloor k\sqrt{3}\rfloor; \lceil k\sqrt{3}\rceil\}} g_{1}(L/k;1/2) \\
  \end{split}
 \end{equation}
Using \Cref{lem:TechnicalGH}, as $g_1$ is continuous and $\lfloor k\sqrt{3}\rfloor/k \to_{k\to \infty} \sqrt{3} =u_1^*$ the maximizer of $g_1(\cdot;1/2)$  (because $  (k\sqrt{3}-1)/k\leq \lfloor k\sqrt{3}\rfloor/k \leq \sqrt{3}$), we have $ \sup_{k} \max_{L \in \{ \lfloor k\sqrt{3}\rfloor; \lceil k\sqrt{3}\rceil\}} g_{1}(L/k;1/2) = g_{1}(u_1^*;1/2) = (\sqrt{3}-1)/2$. Again using  \Cref{cor:RCnecUoS} to link $\delta_\Sigma^{\mathtt{nec}}$ and $B_\Sigma$ yields \eqref{eq:MainThmLevelsBound2}.

On the other hand, if $\nu_{1} \notin [\tilde{a},1-\tilde{a}]$ then by \Cref{lem:prophlevels}-\Cref{it:PropH1} we have $h_{2}(2,2;\nu_1) > (\sqrt{3}-1)/2$. Since $n_{i} \geq 4k_{i}$,  the integers $L_{i} := 2k_{i}$, $i \in \{1,2\}$ satisfy $0 < L_{i} \leq n_{i}-2k_{i}$ hence, by the left-hand side in \eqref{eq:proofthlevels2},
\[
B_{\Sigma}(\|\cdot\|_{w'}) \geq h_{2}(L_{1}/k_{1},L_{2}/k_{2},\nu_{1}) = h_{2}(2,2;\nu_1) > \frac{\sqrt{3}-1}{2}.
\]  

{\bf Step 2.} We show that if $w$ satisfies  $\nu_1=\nu_{1}(w) \in [\tilde{a},1-\tilde{a}]$ then $B_{\Sigma}(\|\cdot\|_{w}) = H_1(\nu_1(w))$.

Since $k_{i} \geq 2$ and $n_{i} \geq 4k_{i}$, by \Cref{lem:prophlevels}-\Cref{it:PropH2}, we have the equality $H_1(\nu_{1})=h_1(L_1^*/k_1,L_2^*/k_2,\nu_{1})$ where 
$L_1^* \in \{ \lfloor k_1\sqrt{1+1/\nu_1}\rfloor; \lceil k_1\sqrt{1+1/\nu_1}\rceil\}$, $L_2^* \in \{ \lfloor k_2\sqrt{1+1/(1-\nu_1)}\rfloor; \lceil k_2\sqrt{1+1/(1-\nu_1)}\rceil\}$ and $L_i^*/k_i \geq 1/\nu_i-1$.  By~\eqref{eq:proofthlevels2}-
\eqref{eq:proofthlevels2bis} we deduce that the equality $B_{\Sigma}(\|\cdot\|_{w}) = H_1(\nu_1(w))$ holds.

{\bf Step 3.} By \Cref{lem:prophlevels}-\Cref{it:PropH2m}, there is $a^{*} \in [\tilde{a},1-\tilde{a}]$ such that $H_{1}(a^{*}) = \min_{\tilde{a} \leq a \leq 1-\tilde{a}} H_{1}(a)$. In light of Steps 1 and 2, the infimum over $w$ of $B_{\Sigma}(\|\cdot\|_{w})$ is thus achieved, and a weight vector $w^{*}$ satisfies
\begin{equation}
B_{\Sigma}(\|\cdot\|_{w^{*}}) = \min_{w} B_{\Sigma}(\|\cdot\|_{w}) ) = H_{1}(a^{*})
\end{equation}
if, and only if $H_{1}(\nu_{1}(w^{*})) = H_{1}(a^{*})$. 
 Since $\nu_{1}(w) = \left(1+\frac{k_{2}}{k_{1}}(w_{2}/w_{1})^{2}\right)^{-1}$, combining all the above yields
 \[
 \frac{w_{2}^{*}}{w_{1}^{*}} = \sqrt{\frac{k_{1}}{k_{2}}(1/\nu_{1}^{*}-1)}
 \]
 where $\nu_{1}^{*}$ is an optimum of
\begin{equation*}
\begin{split}
 B_{\Sigma}(\|\cdot\|_{w^{*}})
&=  \min_{\nu_1 \in  [\tilde{a},1-\tilde{a}], \nu_2 =1-\nu_1} \max_{i\in \{1,2\}}\max_{x_i \in \{ \lfloor k_i\sqrt{1+1/\nu_i}\rfloor; \lceil k_i\sqrt{1+1/\nu_i}\rceil\}} g_1(x_i/k_i ; \nu_i) . \\
\end{split}
\end{equation*}

\end{proof}

The following Lemma is needed for the proof of \Cref{th:RIP_suff_in_levels}.

\begin{lemma}\label{lem:Case3Knorm}
Consider integers $n \geq k \geq 1$, a nonzero vector $z \in \mathbb{R}^{n}$, $S$ a set of the $k$ largest entries of $z$.
 There exists $0 \leq r \leq k-1$,
$\beta \in \mathbb{R}^{n}$, $\gamma \geq 0$ 
such that
\begin{align}
\label{eq:BetaL0}
\|\beta\|_{0} &= k-r-1 \leq L :=  |\supp(z_{S^{c}})| \leq n-1\\
\label{eq:BetaLinf}
\|\beta\|_{\infty} & \leq \min_{l \in S} |z(l)|\\
\|z_{S^c}\|_1 &= \|\beta\|_1 +  \gamma 
 \label{eq:IneqThetaBetaCase3KNorm2}\\
\|z_{S^{c}}\|_{\Sigma_{k}}^{2}
&=
\|\beta\|_2^2+\frac{1}{r+1}
\gamma^{2}
\label{eq:IneqCase3KNorm}
\end{align}
Moreover if $k-r-1 \geq 1$ then
\begin{align}
\gamma < (r+1) \min_{l \in \supp(\beta)} |\beta(l)|\; ;
\label{eq:IneqThetaBetaCase3KNorm}
\end{align}

\end{lemma}

\begin{proof}
We use the fact that for any integer $k$ the norm $\|\cdot\|_{\Sigma_k}$ coincides with the so-called $k$-support norm \cite[Definition 2.1]{Argyriou_2012}, that is invariant by permutation of the coordinates and has the following expression for each $y \in \mathbb{R}^{n}$ sorted in descending order:
\begin{equation}\label{eq:ExplicitExpressionKNorm}
\|y\|_{\Sigma_k}^2 = \sum_{l=1}^{k-r-1} |y(l)|^2 + \frac{1}{r+1} \left(\sum_{l=k-r}^{n} |y(l)|\right)^2
\end{equation}
 where $r$ is the unique integer in $ \{0, \ldots,k-1\}$ such that
 \begin{equation}\label{eq:DefRiFromYi}
 |y(k-r-1)| > \frac{1}{r+1} \sum_{l=k-r}^{n} |y(l)| \geq |y(k-r)| ,
 \end{equation}
with the convention $|y(0)|=+\infty$, see \cite[Proposition 2.1]{Argyriou_2012}.

We apply the above characterization with $y \in \mathbb{R}^{n}$ the sorting of $z_{S^c}$ by descending order of absolute values.
 Notice that $k-r-1 \leq L := |\supp(z_{S^{c}})| = |\supp(y_{i})| \leq n-1$ (since $z \neq 0$, $|S| \geq 1$), which establishes the rhs inequality in~\eqref{eq:BetaL0}: otherwise we would have $y(k-r-1) = y(k-r) = 0$ which would contradict~\eqref{eq:DefRiFromYi}.

We define $\beta \in \mathbb{R}^{n}$ and $\gamma \geq 0$ as
\begin{align}
\beta(l) & := |y(l)|,\  1 \leq l \leq k-r-1;\quad \beta(l):=0,\ k-r \leq l \leq n;\label{eq:DefBetai}\\
\gamma & :=\sum_{l=k-r}^{n} |y(l))|
 \label{eq:DefThetai}
\end{align}
Since $\beta(l)$ is non-increasing with $l$, with $\beta(k-r-1) = |y(k-r-1)| > 0$ and $\beta(k-r)=0$ we have $\|\beta\|_{0}=k-r-1$ hence~\eqref{eq:BetaL0} holds. Moreover, by definition of $S$, we also have $\min_{l \in S}|z(l)| \geq \|z_{S^{c}}\|_{\infty} = \|y\|_{\infty} = \|\beta\|_{\infty}$ hence~\eqref{eq:BetaLinf} holds.
We also obviously have $\|z_{S^c}\|_1  = \|y\|_1= \|\beta\|_1 + \gamma
$, i.e. the required identity \eqref{eq:IneqThetaBetaCase3KNorm2}.

Since $y$ is a decreasing rearrangement of $z_{S^{c}}$ and $\|\cdot\|_{\Sigma_{k}}$ is invariant by permutation we have
\[
\|z_{{S}^{c}}\|_{\Sigma_{k}}^{2}
=
\|y\|_{\Sigma_{k}}^2
\stackrel{\eqref{eq:ExplicitExpressionKNorm}+\eqref{eq:DefBetai}+\eqref{eq:DefThetai}}{=}
\|\beta\|_2^2 +\frac{1}{r+1} \gamma^{2}.
\]
This establishes~\eqref{eq:IneqCase3KNorm}.
Finally  when $k-r-1 \geq 1$ we have $\min_{l \in \supp(\beta)}\beta(l) = \beta(k-r-1) = |y(k-r-1)|$ hence \eqref{eq:IneqThetaBetaCase3KNorm} is a direct consequence of \eqref{eq:DefRiFromYi}.

\end{proof}

We now give the proof of \Cref{th:RIP_suff_in_levels}.

\begin{proof}[Proof of \Cref{th:RIP_suff_in_levels}]
The assumptions of \Cref{lem:charac_suff_RIP} hold, so we can rely on expression~\eqref{eq:RIPsuffVsRIPsuff2}: to lower bound $ \delta_{\Sigma_{k_1,k_2}}^{\mathtt{suff}} ( \|\cdot\|_w)$ by $1/\sqrt{3}$, we thus upper bound
$\|z-P_\Sigma(z)\|_\Sigma^2$ by $2\|P_\Sigma(z)\|_{2}^2$ for  every $z\in \sT_\Sigma(\|\cdot\|_w)$. First we characterize $P_{\Sigma}(z)$ for any $z$.
With $T=T(z) = (S_{1},S_{2})$ defined as in the beginning of \Cref{sec:proofslevels} we have $P_{\Sigma}(z) = z_{T}$
because for $z = (z_{1},z_{2}) \in \sH$ we have
\[
\min_{y \in \Sigma} \|z -y\|_2^{2} =  \min_{y_{1} \in \Sigma_{k_{1}}, y_2 \in \Sigma_{k_{2}}}
\big(\|z_1 -y_1\|_{2}^{2} +  \|z_2 -y_2\|_2^{2}\big) = \|z_{1}-(z_{1})_{S_{1}}\|^{2}+\|z_{2}-(z_{2})_{S_{2}}\|_{2}^{2} = \|z-z_{T}\|_2^{2}.
\]

We will use that, by \Cref{lem:opt_support_wl1_level}, we have
\begin{equation}\label{eq:TmpIneqLastProof2}
z\in \sT_\Sigma(\|\cdot\|_w) \Leftrightarrow
\sum_{i=1}^2
\|(z_{i})_{S_{i}^{c}
}\|_{1}/\sqrt{k_i}
\leq \sum_{i=1}^2\|(z_{i})_{S_{i}
}\|_{1}/\sqrt{k_i} .
\end{equation}

Now, using the fact that
$\|(u_1,u_2)\|_\Sigma^2 = \|u_1\|_{\Sigma_{k_{1}}}^2 + \|u_2\|_{\Sigma_{k_{2}}}^2$ (from \cite[Lemma 4.2]{Traonmilin_2016})  we obtain
 \begin{equation}\label{eq:TmpIneqLastProof1}
\|z-P_\Sigma(z)\|_\Sigma^2
 = \|z_{T^c}\|_\Sigma^2
 = \sum_{i=1}^2
 \|(z_{i})_{S_i^c
 }\|_{\Sigma_{k_i}}^2
 \end{equation}

With Lemma~\ref{lem:Case3Knorm}, we obtain an explicit expression of the  ratio $\|z-P_\Sigma(z)\|_\Sigma^2 / \|P_\Sigma(z)\|_2^2$. For $i \in \{1,2 \}$, let $L_i = |\supp((z_{i})_{S_i^c})|$. There exists $0 \leq r_{i} \leq k_{i}-1$  and
$\beta_{i} \in \mathbb{R}^{n_{i}}$, $\gamma_{i} \geq 0$
such that
$\|\beta_{i}\|_{\infty} \leq \min_{l \in S_{i}}|z_{i}(l)|$,
$\|\beta_{i}\|_{0} = k_{i}-r_{i}-1 \leq L_{i}$ and
\begin{align}
\|(z_{i})_{S_{i}}\|_{\Sigma_{k_{i}}}^{2}
&=
\|\beta_i\|_2^2+\frac{1}{r_{i}+1}\gamma_{i}^{2}
\\
\|(z_{i})_{S_{i}^c}\|_1 &= \|\beta_{i}\|_1 + \gamma_{i}. \label{eq:IneqThetaBetaCase3KNormInit}
\end{align}
where,
if $k_{i}-r_{i}-1 \geq 1$, we further have
\begin{equation}\label{eq:IneqThetaBetaCase3KNormFirstExpression}
\gamma_{i} < (r_{i}+1) \min_{l \in \supp(\beta_{i})} |\beta_{i}(l)|.
\end{equation}

Consider $i \in \{1,2\}$. Depending on the value of $ k_{i}-r_i-1$, we have the following properties
\begin{itemize}
\item[$\bullet$] If $k_{i}-r_{i}-1=0$ then $\|\beta_i\|_0 =0$ and $\beta_{i}=0$. Hence by \eqref{eq:IneqThetaBetaCase3KNormInit} we have $\gamma = \|(z_{i})_{S_{i}^{c}}\|_{1}$ and
\begin{equation}
\|\beta_{i}\|_2^{2}+\frac{1}{r_{i}+1}\gamma_{i}^{2}
= \frac{\|(z_{i})_{S_{i}^{c}}\|_{1}^{2}}{k_{i}}. \label{eq:boundbetanum2}
\end{equation}
\item[$\bullet$] If $k_{i}-r_i-1 \geq 1$ then, since $k_{i}\|\beta_{i}\|_{\infty}^{2} \leq
k_{i} \min_{l \in S} |z_{i}(l)|^{2} \leq \|(z_{i})_{S_{i}}\|_{2}^{2}
$ and $\|\beta_{i}\|_{0} = k_{i}-r_{i}-1$, we have
\begin{eqnarray}
\|\beta_{i}\|_2^{2}+\frac{1}{r_{i}+1}
\gamma_{i}^{2}
 \stackrel{\eqref{eq:IneqThetaBetaCase3KNormFirstExpression}}{\leq} &
\|\beta_{i}\|_2^{2} + (r_{i}+1) \cdot \left( \min_{l \in \supp(\beta_{i})} |\beta_{i}(l)|\right)^{2}\notag\\
 \leq &
[(k_{i}-r_{i}-1)+(r_{i}+1)] \cdot \|\beta_{i}\|_\infty^{2}
=
k_{i}\|\beta_{i}\|_\infty^{2} \leq  \|(z_{i})_{S_{i}}\|_2^2. \label{eq:boundbetanum1}
\end{eqnarray}

\end{itemize}

Thanks to these properties, we distinguish two easy cases to bound $\|z-P_\Sigma(z)\|_\Sigma^2 / \|P_\Sigma(z)\|_2^2$.
\begin{enumerate}
 \item If $k_{i}-r_i-1 \geq 1$ for each $i\in \{1,2\}$ then
\begin{align*}
\|z-P_\Sigma(z)\|_\Sigma^2
&\stackrel{\eqref{eq:IneqCase3KNorm}}{=}
\sum_{i=1}^{2} \left(\|\beta_{i}\|_2^{2}+\frac{1}{r_{i}+1}\gamma_{i}^{2}
\right)
\stackrel{\eqref{eq:boundbetanum1} }{\leq}
\sum_{i=1}^{2} \|(z_{i})_{S_{i}}\|_2^2
=\|P_\Sigma(z)\|_2^{2}.
\end{align*}
\item If $k_{i}-r_{i}-1=0$ for each $i\in \{1,2\}$ then
\begin{align*}
\|z-P_\Sigma(z)\|_\Sigma^2
\stackrel{\eqref{eq:IneqCase3KNorm}}{=}
\sum_{i=1}^{2} \left(\|\beta_{i}\|_2^{2}+\frac{1}{r_{i}+1}\gamma_{i}^{2}
\right)
\stackrel{\eqref{eq:boundbetanum2}}{=}&
\sum_i\|z_{S_i^c}\|_1^2/k_i
\\
\stackrel{|a|^2+|b|^2\leq (|a|+|b|)^2}{\leq}&
(\sum_i\|z_{S_i^c}\|_1/\sqrt{k_i})^2
\\
\stackrel{\eqref{eq:TmpIneqLastProof2}}{\leq}&
(\sum_i\|z_{S_i}\|_1/\sqrt{k_i})^2
\\
\leq&
(\sum_i\|z_{S_i}\|_2)^2
\\
\stackrel{ (|a|+|b|)^2\leq 2(|a|^2+|b|^2)}{\leq}&
2
\sum_i\|z_{S_i}\|_2^2
=2\|P_\Sigma(z)\|_2^{2}.
\end{align*}
\end{enumerate}
In both cases we obtain $\|z-P_\Sigma(z)\|_\Sigma^2/\|P_\Sigma(z)\|_2^{2}\leq 2$.

When these easy cases do not hold we have e.g. $0 = k_{1}-r_1-1$ and $k_{2}-r_2-1 \geq 1$ (the same reasoning holds if $k_{2}-r_2-1 =0$ and $k_{1}-r_1-1 \geq 1$), $\|z_{S_{1}^c}\|_{\Sigma_{k_1}}^2 = \|z_{S_{1}^c}\|_{1}^2/k_1$ and $\|z_{S_{2}^c}\|_{\Sigma_{k_2}}^2 = \|\beta_2\|_{2}^2+\gamma^2 /(r+1) $ .

This gives
\begin{align*}
 \|z-P_\Sigma(z)\|_\Sigma^2
&=
  \|z_{S_{1}^c
  }\|_{1}^2/k_1 +\|\beta_2\|_{2}^2+\gamma^2 /(r+1) \\
  &
\stackrel{\eqref{eq:TmpIneqLastProof2}}{\leq} (\|z_{S_{1}
  }\|_{1}/\sqrt{k_1} + \|z_{S_{2}
  }\|_{1}/\sqrt{k_1}- \|z_{S_2^c}\|_1/\sqrt{k_2})^2+\|\beta_2\|_{2}^2+\gamma^2 /(r+1) \\
  &
\stackrel{\eqref{eq:IneqThetaBetaCase3KNormInit}}{=}(\|z_{S_{1}
  }\|_{1}/\sqrt{k_1} + \|z_{S_{2}
  }\|_{1}/\sqrt{k_1}- ( \|\beta_2\|_1 +\gamma)/\sqrt{k_2})^2+\|\beta_2\|_{2}^2+\gamma^2 /(r+1) .\\
   \end{align*}

We have $0 \leq \|z_{S_{1}
  }\|_{1}/\sqrt{k_1} + \|z_{S_{2}
  }\|_{1}/\sqrt{k_1}- ( \|\beta_2\|_1 +\gamma)/\sqrt{k_2})^2\leq \|z_{S_{1}
  }\|_{2} + \|z_{S_{2}
  }\|_{2}- ( \|\beta_2\|_1 +\gamma)/\sqrt{k_2})^2$ and

  \begin{align}
\|z-P_\Sigma(z)\|_\Sigma^2
  &\leq(\|z_{S_{1}
  }\|_{2} + \|z_{S_{2}
  }\|_{2}- ( \|\beta_2\|_1 +\gamma)/\sqrt{k_2})^2+\|\beta_2\|_{2}^2+\gamma^2 /(r+1) .
   \end{align}

   As this a quadratic function of $\gamma \geq 0$ with positive leading coefficient it is maximized, at either bound of the range of $\gamma$, i.e $\gamma=0$ or  $\gamma \to (r_2+1) \min_{l \in \supp(\beta_2)} |\beta_2(l)| =: (r_2+1) \tilde{\beta}_2$.

   For the case, $\gamma \to (r_2+1) \tilde{\beta}_2$,
\begin{align}
 \|z-P_\Sigma(z)\|_\Sigma^2
&\leq (\|z_{S_{1}
  }\|_{2} + \|z_{S_{2}
  }\|_{2}-  (\|\beta_2\|_1 + (r_2+1) \tilde{\beta}_2)/\sqrt{k_2})^2+\|\beta_2\|_{2}^2 +(r_2+1) \tilde{\beta}_2^2.
   \end{align}

 Let us call $f(\beta_2)$ the numerator. For  fixed $\tilde{\beta}_2,\|\beta_2\|_\infty$, consider  $\beta_2^*$ the maximizer of $f$ under the constraint $ \tilde{\beta}_2\leq |\beta_2(l)|\leq \|\beta_2\|_\infty$.   We remark that given $l \in \supp(\beta_2)$ , we have that $ f( \beta_2 )$ where we fixed $\beta_2(l') = \beta_2^*(l')$ for $l'  \in \supp(\beta_2) \setminus \{ l\}$ is a quadratic function of  $|\beta_2(l)|$ with positive leading coefficient. Under the constraint $ \tilde{\beta}_2\leq |\beta_2(l)|\leq \|\beta_2\|_\infty$, it is maximized at either of the two bounds on $|\beta_2(l)|$, we deduce that $\beta_2^*(l) = \tilde{\beta}_2$ or $\beta_2^*(l) = \|\beta_2\|_\infty$.

   This implies that there is (an integer -- but we relax this constraint ) $0\leq s \leq \|\beta_{2}\|_{0} = k_2-r_2-1$ such that
   \begin{align}
  \|z-P_\Sigma(z)\|_\Sigma^2&\leq (\|z_{S_{1}
  }\|_{2} + \|z_{S_{2}
  }\|_{2}-  (s\|\beta_2\|_\infty + (k_2-r_2-1-s) \tilde{\beta}_2
  + (r_2+1) \tilde{\beta}_2 )/\sqrt{k_2})^2\notag\\
  &+ s\|\beta_2\|_{\infty}^2 +(k_2-r_2-1-s) \tilde{\beta}_2^2 +(r_2+1) \tilde{\beta}_2 ^2.
   \end{align}
 Again the right side is  a quadratic function of $\tilde{\beta}_2$ under the constraint $ 0 <  \tilde{\beta}_2 \leq \|\beta\|_\infty$, and bounded at either $   \tilde{\beta}_2 \to 0$ or
$ \tilde{\beta}_2 = \|\beta_2\|_\infty$:

 \begin{align}
  &\|z-P_\Sigma(z)\|_\Sigma^2\notag\\
&\leq \max \left( (\|z_{S_{1}
  }\|_{2} + \|z_{S_{2}
  }\|_{2}-  s\|\beta_2\|_\infty /\sqrt{k_2})^2+ s\|\beta_2\|_{\infty}^2 , (\|z_{S_{1}
  }\|_{2} + \|z_{S_{2}
  }\|_{2}-  \sqrt{k_2}\|\beta_2\|_\infty ))^2+ k_2\|\beta_2\|_{\infty}^2  \right)\notag\\
&\leq \max_{0 \leq s' \leq k_{2}} \left( (\|z_{S_{1}
  }\|_{2} + \|z_{S_{2}
  }\|_{2}-  s'\|\beta_2\|_\infty /\sqrt{k_2})^2+ s'\|\beta_2\|_{\infty}^2\right).
   \end{align}
For each $0\leq s' \leq k_{2}$ the denominator in the last line is a quadratic function of
of   $\|\beta_2\|_\infty $
   with $0 < \|\beta_2\|_\infty \leq \min_{l \in S_2}|z(l)|$ (from Lemma~\ref{lem:Case3Knorm}) hence
 \begin{align}
  &\|z-P_\Sigma(z)\|_\Sigma^2\notag\\
&\leq
\max_{0 \leq s'\leq k_{2}}  \max \left( (\|z_{S_{1}
  }\|_{2} + \|z_{S_{2}
  }\|_{2})^2 , (\|z_{S_{1}
  }\|_{2} + \|z_{S_{2}
  }\|_{2}-   s'\min_{l \in S_2}|z(l)|]/\sqrt{k_2})^2+s' [\min_{l \in S_2}|z(l)| ]^2  \right)\\
  &
  =
  \max_{0 \leq s'\leq k_{2}}\left(
 (\|z_{S_{1}
  }\|_{2} + \|z_{S_{2}
  }\|_{2}-   s'\min_{l \in S_2}|z(l)|]/\sqrt{k_2})^2+s' [\min_{l \in S_2}|z(l)| ]^2 \right) .
\label{eq:boundzPz3rdcase}
  \end{align}

Similarly, for $\gamma = 0 $
\begin{align}
 \|z-P_\Sigma(z)\|_\Sigma^2
&\leq (\|z_{S_{1}
  }\|_{2} + \|z_{S_{2}
  }\|_{2}-  \|\beta_2\|_1 /\sqrt{k_2})^2+\|\beta_2\|_{2}^2
\end{align}
which implies using the same argument used to obtain ~\eqref{eq:boundzPz3rdcase}
 \begin{align}
 \|z-P_\Sigma(z)\|_\Sigma^2
&\leq \max_{0 \leq s'\leq k_{2}} 
(\|z_{S_{1}
  }\|_{2} + \|z_{S_{2}
  }\|_{2}-   s'\min_{l \in S}|z(l)|]/\sqrt{k_2})^2+s' [\min_{l \in S}|z(l)| ]^2 .
   \end{align}

  Still using the same argument about quadratic functions the right side of the maximum is bounded by either of the cases $s'=0$ or $s' = k_2 $. For $s' = 0 $

 \begin{align*}
 \|z-P_\Sigma(z)\|_\Sigma^2
&\leq
(\|z_{S_{1}
  }\|_{2} + \|z_{S_{2}
  }\|_{2})^2  
    \leq 2(\|z_{S_{1}
  }\|_{2}^2 + \|z_{S_{2}
  }\|_{2}^2)  = 2 \|P_\Sigma(z)\|_2^2.
   \end{align*}
For $s' = k_2 $
\begin{align*}
  \frac{\|z-P_\Sigma(z)\|_\Sigma^2}{\|P_\Sigma(z)\|_2^2}
&\leq V:= \frac{(\|z_{S_{1}
  }\|_{2} + \|z_{S_{2}
  }\|_{2}-   \sqrt{k_2}\min_{l \in S}|z(l)|])^2+ k_2 [\min_{l \in S}|z(l)| ]^2  }{ \sum_{i=1}^{2} \|z_{S_i}\|_2^2} .
  \end{align*}

  Denote $c= 1 -  \sqrt{k_2}\frac{[\min_{l \in S}|z(l)|]}{\|z_{S_2}\|_2} \in [0,1)$ and consider $r >0$, $\theta \in [0,\pi/2]$ such that $\|z_{S_{1}
  }\|_{2} = r \cos\theta$, $\|z_{S_{2}
  }\|_{2} = r \sin\theta$
  \begin{align*}
 V &= \frac{(r \cos \theta+cr \sin\theta)^2 +(1-c)^2r^2\sin^2\theta}{r^2 \cos^2 \theta+r^2 \sin^2\theta}\\
 &=\frac{ \cos^2 \theta+c^2 \sin^2\theta +(1-c)^2\sin^2\theta + 2c \sin  \theta \cos\theta}{ \cos^2 \theta+ \sin^2\theta}\\
 &=1+(2c^2-2c) \sin^2\theta  + 2c \sin  \theta \cos\theta.\\
 \end{align*}
This is a quadratic function of $c$ with positive leading coefficient hence it is maximized at $c =0$ or $c =1 $ . Hence
\begin{align*}
 V&\leq \max (1, 1+ 2 \sin  \theta \cos\theta)\\
 &= \max (1,  1+\sin(2\theta))\\
 &\leq 2.\\
 \end{align*}

\end{proof}

\begin{proof}[Proof of \Cref{th:sparse_LR}]

We follow the same proof structure (cf \Cref{sec:proofslevels})
as for Theorem~\ref{th:opt_in_levels}, with analog definitions of $P_1(z)=z_{1}$ and $P_2(z)=z_{2}$ for $z =(z_1,s_2) \in \sH$. We also denote $T=(S_1,S_2) = T(z)$ where $S_1$ denotes a support of the $k$ largest coordinates in absolute values and $S_2 =\{1,\ldots, r\}$ the set indexing the first $r$ (largest) eigenvalues collected in vector $\mathtt{eig}(u)$
(this index set was denoted $T$ in Appendix~\ref{sec:proofLRnex}). We modify accordingly the notation  $T_2 $ for the $2k$ (resp. $2r$) largest coordinates (resp eigenvalues).

Remark that Lemma~\ref{lem:opt_support_wl1_level} is still valid with $\|\cdot\|_w = w_1\|P_{1}(\cdot)\|_1 +w_2  \|P_{2}(\cdot)\|_*$.

This permits in turns to obtain  the expression (with $T' = T_2 \setminus T$)
\begin{equation}\label{eq:TmpOptSparseLR}
B_\Sigma(\|\cdot\|_w) = \sup_{z: z \neq 0, \|z_{T_2^c}\|_w + \|z_{T'}\|_w \leq \|z_{T}\|_w} \frac{\|z_{T_2^c}\|_\sH^2}{\|z_{T_2}\|_\sH^2}
\end{equation}
from the proof of Lemma~\ref{lem:charact_supBL2_wl1_level-1}.
Now remark that this is exactly the same expression as in the sparsity in levels case using the vector of ordered eigenvalues for the part in $\sH_p$: $\|(u_1,u_2)\|_\sH^2 = \|u_1\|_2^2 + \|\eig (u_2)\|_2^2$ and  $\|(u_1,u_2)\|_w = w_1\|u_1\|_1 + w_2\| \eig  (u_2)\|_1$.  This in turns show that (using the same proof as Lemma~\ref{lem:charact_supBL2_wl1_level-1})
\begin{equation}\label{eq:TmpOptSparseLR2}
B_\Sigma(\|\cdot\|_w) =  \max_{
0\leq L_1 \leq n-2k, 0\leq L_2 \leq p-2r} B^{L_1,L_2}(w)
\end{equation}
where $B^{L_{1},L_{2}}(w)$ is defined in \eqref{eq:DefBL1L2}.  This is the first  step of Theorem~\ref{th:opt_in_levels}.

The rest of the proof then exactly matches the next steps of the proof of  Theorem~\ref{th:opt_in_levels}.

\end{proof}

\end{document}